\newcounter{theorem}
\renewcommand{\thetheorem}{\arabic{section}.\arabic{theorem}}
\newenvironment{thm}[1]{\par
\begin{sloppypar}\refstepcounter{theorem}%
\noindent{\bf #1 \thetheorem.}\it{}}{\end{sloppypar}}
\newenvironment{theorem}{\begin{thm}{Theorem}}{\end{thm}}
\newenvironment{proposition}{\begin{thm}{Proposition}}{\end{thm}}
\newenvironment{corollary}{\begin{thm}{Corollary}}{\end{thm}}
\newenvironment{lemma}{\begin{thm}{Lemma}}{\end{thm}}
\newenvironment{defi}[1]{\par
\begin{sloppypar}\refstepcounter{theorem}%
\noindent{\bf #1 \thetheorem.}\rm{}}{\end{sloppypar}}
\newenvironment{definition}{\begin{defi}{Definition}}{\end{defi}}
\newenvironment{remark}{\begin{defi}{Remark}}{\end{defi}}
\newenvironment{hypothesis}{\begin{defi}{Hypothesis}}{\end{defi}}
\newcommand{\eh}{\hfill}\newlength{\sperr}
\newenvironment{proof}{{\settowidth{\sperr}{\rm Proof}
\par\addvspace{0.3cm}\noindent\parbox[t]{1.3\sperr}{\rm P\eh r\eh o\eh o\eh
f\eh.}}}{\nopagebreak\mbox{}\hfill $\blacksquare $\par\addvspace{0.25cm}}
\newenvironment{proofP32}{{\settowidth{\sperr}{\rm Proof of Proposition a.b}
\par\addvspace{0.3cm}\noindent\parbox[t]{1.3\sperr}{\rm P\eh r\eh o\eh o\eh
f\eh\eh o\eh f\eh\eh P\eh r\eh o\eh p\eh o\eh s\eh i\eh t\eh i\eh o\eh n\eh\eh\ref{P3.2}.}}}{\nopagebreak\mbox{}\hfill $\blacksquare $\par\addvspace{0.25cm}}
\numberwithin{equation}{section}
\numberwithin{theorem}{section}
\def\R{{\rm I\kern-.2em R}}
\def\X{\mathcal X}
\def\Rd{\mathbb{R}^d}
\def\Rn{\mathbb{R}^n}
\def\Rm{\mathbb{R}^m}
\def\d{{\rm d}}
\def\dbar{\;\;\bar{}\!\!\!d}
\def\Tr{\text{\sf T\!r}}
\begin{document}

\title{Magnetic Fourier Integral Operators\\
\textit{extended version}}

\date{\today}

\author{Viorel Iftimie\footnote{Institute
of Mathematics Simion Stoilow of the Romanian Academy, P.O.  Box
1-764, Bucharest, Romania.} \hspace{2pt}and Radu
Purice\footnote{Institute
of Mathematics Simion Stoilow of the Romanian Academy, P.O.  Box
1-764, Bucharest, Romania.}
\footnote{Laboratoire Europ\'een Associ\'e CNRS Franco-Roumain {\it Math-Mode}}}

\maketitle

\begin{abstract}
In some previous papers we have defined and studied a 'magnetic' pseudodifferential calculus as a gauge covariant generalization of the Weyl calculus when a magnetic field is present. In this paper we extend the standard Fourier Integral Operators Theory to the case with a magnetic field, proving composition theorems, continuity theorems in 'magnetic' Sobolev spaces and Egorov type theorems. The main application is the representation of the evolution group generated by a 1-st order 'magnetic' pseudodifferential operator (in particular the relativistic Schr\"{o}dinger operator with magnetic field) as such a 'magnetic' Fourier Integral Operator. As a consequence of this representation we obtain some estimations for the distribution kernel of this evolution group and a result on the propagation of singularities.
\end{abstract}

%...........................................................................................
\section{Introduction}

In a series of papers \cite{KO1,KO2,Mu,MP} a gauge covariant formalism has been proposed for associating to any \textit{classical observable} $a$ (a certain function on the {\it phase space} $\mathbb{R}^{2d}$, usually a standard symbol from $S^m(\mathbb{R}^d)$) and any magnetic field $B=dA$ (with $A$ a 1-form on $\mathbb{R}^d$ with $C^\infty(\mathbb{R}^d)$ coefficients) a \textit{quantum observable} $\mathfrak{Op}^A(a)$, defined as a 'magnetic' pseudodifferential operator acting on $\mathcal{S}(\mathbb{R}^d)$ in the following way
\begin{equation}\label{0.1}
 \left[\mathfrak{Op}^A(a)u\right](x):=\int_{\mathbb{R}^{2d}}e^{i<x-y,\eta>}e^{-i\int_{[x,y]}A}a\left(\frac{x+y}{2},\eta\right)u(y)\,dy\,\dbar\eta,\quad\forall u\in\mathcal{S}(\mathbb{R}^d),\,\forall x\in\mathbb{R}^d,
\end{equation}
with $\dbar\eta:=(2\pi)^{-d}d\eta$.
The properties of operators of the form \eqref{0.1} have been studied in \cite{IMP1}, where we also gave some applications. In particular we have proved that for any real elliptic symbol $a\in S^1(\mathbb{R}^d)$, the operator $\mathfrak{Op}^A(a)$ has a self-adjoint extension $P$ on $L^2(\mathbb{R}^d)$. In particular, the operator $\mathfrak{Op}^A(<\xi>)$ with $<\xi>:=\sqrt{1+|\xi|^2}$ for any $\xi\in\mathbb{R}^d$ can be considered as the relativistic Schr\"{o}dinger operator with magnetic field. Let us remark that $\mathfrak{Op}^A(<\xi>^2)-1$ is the usual non-relativistic Schr\"{o}dinger operator with magnetic field. Let us also remark that the usual choice for the relativistic Schr\"{o}dinger operator with magnetic field $\sqrt{\mathfrak{Op}^A(<\xi>^2)}$, although gauge covariant, has not a reasonable classical counterpart, its symbol being rather complicated.

An important problem is the study of the unitary group $\{e^{-itP}\}_{t\in\mathbb{R}}$ generated by the self-adjoint operator $P$, and for such a study an integral representation may be very useful. Having in mind other versions of pseudodifferential operators we expect that at least for $|t|$ small the operator $e^{-itP}$ should be a 'Fourier Integral Operator', where the definition of such an object should depend on the magnetic field $B$ and should be gauge covariant. In fact such a representation has been obtained in a former paper \cite{IMP2}, where we have proved that  $e^{-itP}$ verifies the hypothesis of a very implicit definition of Fourier Integral Operators based on commutation properties (in the spirit of Bony \cite{Bo}). For applications, an explicit integral representation is needed and this is the object of the present paper.

Let us notice first that Proposition \ref{P1.14} allows us to replace $a\left(\frac{x+y}{2},\eta\right)$ in \eqref{0.1} with $b(x,\eta)$ with $b$ another symbol of the same order as $a$. Taking this fact into account and the standard definition of Fourier Integral Operators, it is natural that to a triple $(a,B,\Phi)$ with $a\in S^m(\mathbb{R}^d)$, $B=dA$ a magnetic field and $\Phi:\mathbb{T}^*\mathbb{R}^d\rightarrow \mathbb{T}^*\mathbb{R}^d$ a symplectomorphism with generating function $U$, we shall associate a linear operator
\begin{equation}\label{0.2}
\left[\mathfrak{Op}^A_\Phi(a)u\right](x):=\int_{\mathbb{R}^d}e^{i(U(x,\eta)-<y,\eta>}e^{-i\int_{[x,y]}A}\ a\left(x,\eta\right)u(y)\,dy\,\dbar\eta,\quad\forall u\in\mathcal{S}(\mathbb{R}^d),\,\forall x\in\mathbb{R}^d,
\end{equation}
that we shall call a magnetic Fourier Integral Operator.

Section 2 brings only a few new results, as for example Proposition \ref{P1.14} mentioned above. Our aim here is to recall some notations, definitions and results concerning the symbol spaces that we use, the magnetic field and the 'magnetic' pseudodifferential operators ($\Psi$DO), facts that have been discussed in \cite{IMP1,IMP2}. Important for what follows are the canonical systems and the Hamilton-Jacobi equation that we briefly recall here following \cite{Ro,DG}. Some consequences of this theory will be essential in the last two sections of the paper. Section 3 contains the precise definition of 'magnetic' Fourier Integral Operators (FIO) and some of their elementary properties. Sections 4-7 are essential from the technical point of view and contain a number of results concerning the composition of 'magnetic' FIO and $\Psi$DO. To some extent these results should be compared with \cite{Ku}, but the technical details due to the presence of the magnetic field make them rather different from those. In Section 4 we prove that by composing a $\Psi$DO with a FIO one obtains a FIO for which the first two terms of the asymptotic development of the symbol are computed. In Section 5 we prove that by composing a FIO with a $\Psi$DO one obtains a FIO and we compute its principal symbol. In sections 6 and 7 we compose a FIO with an adjoint of FIO and respectively in the reversed order, prove that they are $\Psi$DO and compute their principal symbol. Section 8 contains the main result of the paper and is based on sections 2-5. We prove that for $a\in S^1(\mathbb{R}^d)$ a real elliptic symbol with principal part $a_0$ positive homogeneous of degree 1, at least for $|t|$ small (but sometimes for $t\in\mathbb{R}$), if $P$ is the self-adjoint operator in $L^2(\mathbb{R}^d)$ associated to $\mathfrak{Op}^A(a)$, then $e^{-itP}$ is a 'magnetic' FIO associated to the symplectomorphism $\Phi_t$ defined as the Hamiltonian flow of $a_0$ (truncated near the zero section of $\mathbb{T}^*\mathbb{R}^d$); its symbol belongs to $S^0(\mathbb{R}^d)$ and is the asymptotic sum of a series of symbols that verify a system of transport equations. The last section is devoted to a number of applications of the results obtained in the previous sections. As consequence of the composition theorems proved in sections 4,5 and 7 we obtain a continuity property for FIO acting between the 'magnetic' Sobolev spaces defined in \cite{IMP1}. The Egorov type Theorem in Section 9.b is based on the results of sections 5,6 and 8. Under the assumptions made in section 8 we obtain some estimations for the distribution kernel of $e^{-itP}$, on the domain on which this kernel is $C^\infty$. Finally in section 9.d we prove a theorem on the propagation of singularities saying that 
$$
WF(e^{-itP}u)=\Phi^0_t(WFu),\quad\forall u\in\mathcal{S}^\prime(\mathbb{R}^d),\,\forall t\in\mathbb{R},
$$
where $\Phi^0_t:\mathbb{T}^*\mathbb{R}^d\setminus0\rightarrow\mathbb{T}^*\mathbb{R}^d\setminus0$ is the homogeneous canonical transformation defined by the Hamiltonian system associated to the homogeneous (non-truncated) principal symbol $a_0$.

%............................................................................................
\section{Preliminaries}
%...........................................................................................
\subsection{Notations}
The configuration space is $\X=\Rd$ with generic points $x=\left(\begin{array}{c}x_1\\\ldots\\\ldots\\\\x_d\end{array}\right)$. 

\noindent The scalar product on $\X$ is denoted by $<\cdot,\cdot>$.

\noindent For any $f:\Rd\rightarrow\Rn$ of class $C^1$ we write
$$
f(x)=\left(\begin{array}{c}f_1(x)\\\ldots\\\ldots\\\\f_n(x)\end{array}\right);
$$
$$
\partial f\equiv\d_xf:=\left(\partial_1f,\ldots,\partial_df\right)\equiv\left(\begin{array}{ccc}
\partial_1f_1&\ldots&\partial_df_1\\
\ldots&\ldots&\ldots\\
\partial_1f_n&\ldots&\partial_df_n
\end{array}
\right);
$$
$$
\nabla f\equiv\nabla_xf:=\left(\begin{array}{c}\nabla_1f\\\ldots\\\nabla_df\end{array}\right)\equiv\left(\begin{array}{ccc}
\partial_1f_1&\ldots&\partial_1f_n\\
\ldots&\ldots&\ldots\\
\partial_df_1&\ldots&\partial_df_n
\end{array}
\right).
$$
For $f:\Rd\rightarrow\Rn$ and $g:\Rm\rightarrow\Rd$, with $h:=f\circ g:\Rm\rightarrow\Rn$ we have
\begin{equation}\label{1.1}
\partial(f\circ g)(x)=\left(\partial f\right)\big(g(x)\big)\cdot\left(\partial g\right)(x),\quad\forall x\in\Rm.
\end{equation}
For $f:\Rd\rightarrow\Rd$ of class $C^1$ invertible, if we denote $g=f^{-1}$ we have thus
\begin{equation}\label{1.2}
\big(\partial g\big)(x)=\left[\big(\partial f\big)\big(g(x)\big)\right]^{-1},\quad\forall x\in\Rd.
\end{equation}
If $S:\Rd\rightarrow\mathbb{R}$ is of class $C^2$, then
$$
S_{x,x}^{\prime\prime}\equiv\nabla_{x,x}^2S:=\nabla_x\big(\nabla_xS\big)=\left(\begin{array}{ccc}
\partial_1^2S&\ldots&\partial_1\partial_dS\\
\ldots&\ldots&\ldots\\
\partial_d\partial_1S&\ldots&\partial_d^2S
\end{array}
\right).
$$
If $S:\Rm\times\Rn\rightarrow\mathbb{R}$ is of class $C^2$, then
$$
S_{x,\xi}^{\prime\prime}\equiv\nabla_{x,\xi}^2S:=\nabla_x\big(\nabla_\xi S\big)=\left(\begin{array}{ccc}
\partial_{x_1}\partial_{\xi_1}S&\ldots&\partial_{x_1}\partial_{\xi_n}S\\
\ldots&\ldots&\ldots\\
\partial_{x_m}\partial_{\xi_1}S&\ldots&\partial_{x_m}\partial_{\xi_n}S
\end{array}
\right).
$$

\subsection{Symbols}
\begin{definition}\label{D1.1}
 Let $n\in\mathbb{N}$, $d_j\in\mathbb{N}^*$ for $j\in\{0,1,\ldots,n\}$ and $m_k\in\mathbb{R}$ for $k\in\{1,\ldots,n\}$. For a function $f\in C^\infty(\mathbb{R}^{d_0}\times\mathbb{R}^{d_1}\times\ldots\times\mathbb{R}^{d_n})$ we define
$$
\rvert f\rvert_{\alpha_0,\ldots,\alpha_n}:=\underset{(u,v_1,\ldots v_n)\in\mathbb{R}^{d_0}\times\mathbb{R}^{d_1}\times\ldots\times\mathbb{R}^{d_n}}{\sup}<v_1>^{|\alpha_1|-m_1}\ldots<v_n>^{|\alpha_n|-m_n}\left|\partial_u^{\alpha_0}\partial_{v_1}^{\alpha_1}\ldots\partial_{v_n}^{\alpha_n}f(u,v_1,\ldots v_n)\right|,\ \forall\alpha_j\in\mathbb{N}^{d_j}
$$
and
$$
S^{m_1,\ldots,m_n}(\mathbb{R}^{d_0}\times\mathbb{R}^{d_1}\times\ldots\times\mathbb{R}^{d_n}):=\left\{f\in C^\infty(\mathbb{R}^{d_0}\times\mathbb{R}^{d_1}\times\ldots\times\mathbb{R}^{d_n})\mid\ \rvert f\rvert_{\alpha_0,\ldots,\alpha_n}<\infty,\quad\forall\alpha_j\in\mathbb{N}^{d_j}, 0\leq j\leq n\right\}.
$$
\end{definition}
Then $\rvert f\rvert_{\alpha_0,\ldots,\alpha_n}$ are semi-norms on $S^{m_1,\ldots,m_n}(\mathbb{R}^{d_0}\times\mathbb{R}^{d_1}\times\ldots\times\mathbb{R}^{d_n})$ and they define a Frechet structure on it.

We shall use the notations:
$$
S^m\equiv S^m(\Rd):=S^m(\Rd\times\Rd),\quad S^{-\infty}:=\underset{m\in\mathbb{R}}{\bigcap}S^m,
$$
$$
\mathfrak{p}:\Rd\times\Rd\rightarrow\mathbb{R},\ \mathfrak{p}(x,\xi):=<\xi>\equiv\sqrt{1+|\xi|^2},
$$
$$
S^m_0:=\left\{f\in C^\infty(\Rd\times\Rd)\,\mid\,\mathfrak{p}^{-m}f\in BC^\infty(\Rd\times\Rd)\right\},\quad S^{-\infty}_0:=\underset{m\in\mathbb{R}}{\bigcap}S^m_0,
$$
$$
S^+:=\left\{f\in C^\infty(\Rd\times\Rd)\,\mid\,\partial_{x_j}f\in S^1,\ \partial_{\xi_j}f\in S^0,\ 1\leq j\leq d\right\}.
$$
\begin{remark}\label{R1.2}
 All these are Fr\'{e}chet spaces for their natural topology. Evidently: $S^{-\infty}=S^{-\infty}_0$.
\end{remark}
We recall Proposition 18.1.3 from \cite{Ho}
\begin{lemma}\label{L1.3}
 Suppose given $a_j\in S^{m_j}$ for $j\in\mathbb{N}$ such that $m_j\searrow-\infty$ for $j\rightarrow\infty$. Then there exists a symbol $a\in S^{m_0}$, unique modulo $S^{-\infty}$, such that for any $k\geq1$
$$
a-\sum_{j=0}^{k-1}a_j\in S^{m_k}.
$$
\end{lemma}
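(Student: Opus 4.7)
The statement is the standard Borel-type summation lemma for symbols, and the proof I propose follows the classical truncation strategy (as in Hörmander, Proposition 18.1.3).

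\textbf{Setup.} First I would pick a cutoff $\chi\in C^\infty(\Rd)$ with $\chi(\xi)=0$ for $|\xi|\leq 1$ and $\chi(\xi)=1$ for $|\xi|\geq 2$, and write $\chi_t(\xi):=\chi(\xi/t)$. The plan is to define
\begin{equation*}
a(x,\xi)\,:=\,\sum_{j=0}^{\infty}\chi_{t_j}(\xi)\,a_j(x,\xi),
\end{equation*}
for a sequence $1\leq t_0\leq t_1\leq\ldots$ tending to $+\infty$ fast enough that the series converges in a suitable $S^m$-topology. Since on $\mathrm{supp}\,\chi_{t_j}$ one has $|\xi|\geq t_j$, and since $\partial^\alpha_\xi\chi_{t_j}$ is $O(t_j^{-|\alpha|})$ uniformly, the Leibniz rule gives, for any multi-indices $\alpha,\beta$ and any $k\leq j$,
\begin{equation*}
\langle\xi\rangle^{|\alpha|-m_k}\bigl|\partial^\alpha_\xi\partial^\beta_x\bigl(\chi_{t_j}(\xi)a_j(x,\xi)\bigr)\bigr|\,\leq\,C_{\alpha,\beta}(a_j)\,\langle\xi\rangle^{m_j-m_k}\,\mathbf{1}_{\{|\xi|\geq t_j\}}\,\leq\,C_{\alpha,\beta}(a_j)\,t_j^{m_j-m_k},
\end{equation*}
using $m_j\leq m_k$ and $\langle\xi\rangle\gtrsim t_j$ on the support. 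Since $m_j-m_k\to-\infty$ as $j\to\infty$, I can choose $t_j$ so large that the $j$-th term has its first $j$ seminorms in $S^{m_{j-1}}$ bounded by $2^{-j}$.

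\textbf{Convergence and membership in $S^{m_0}$.} With this choice the tail $\sum_{j\geq k}\chi_{t_j}a_j$ is absolutely summable in every seminorm of $S^{m_{k-1}}$ (for $k=1$ this gives $S^{m_0}$), so the series converges termwise in $C^\infty$ and its sum $a$ lies in $S^{m_0}$. Moreover, for each fixed $k\geq 1$ the same estimate shows $\sum_{j\geq k}\chi_{t_j}a_j\in S^{m_k}$.

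\textbf{Asymptotic identity and uniqueness.} Writing, for each $k\geq 1$,
\begin{equation*}
a-\sum_{j=0}^{k-1}a_j\,=\,\sum_{j=0}^{k-1}\bigl(\chi_{t_j}-1\bigr)a_j\,+\,\sum_{j=k}^{\infty}\chi_{t_j}a_j,
\end{equation*}
the first sum is a finite sum of symbols each compactly supported in $\xi$, hence in $S^{-\infty}$; the second lies in $S^{m_k}$ by the previous step. Thus $a-\sum_{j=0}^{k-1}a_j\in S^{m_k}$ as required. For uniqueness, if $a'$ is another such symbol, then $a-a'\in S^{m_k}$ for every $k\geq 1$; since $m_k\to-\infty$, we obtain $a-a'\in\bigcap_{m\in\mathbb{R}}S^m=S^{-\infty}$.

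\textbf{Main obstacle.} The only non-routine point is to track how rapidly $t_j$ must grow: the Leibniz expansion produces $C_{\alpha,\beta}(a_j)$ factors that depend on higher and higher $S^{m_j}$-seminorms of $a_j$ as $|\alpha|+|\beta|$ grows, so one needs a diagonal choice of $t_j$ dominating those seminorms together with the shortfall $t_j^{m_j-m_{j-1}}$. Standard bookkeeping (e.g.\ bounding only the first $j$ seminorms at step $j$) dispatches this; no new idea beyond the classical argument is needed, and there is no interaction with the magnetic field at this stage.
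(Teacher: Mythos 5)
Your proof is correct and follows exactly the classical truncation construction that the paper itself sketches immediately after the lemma (the ansatz $a(x,\xi)=\sum_j\chi(\xi/t_j)a_j(x,\xi)$, recalled from H\"{o}rmander, Proposition 18.1.3); you have simply filled in the standard diagonal choice of $t_j$ and the seminorm estimates. There is no divergence from the paper's intended argument.
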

Such a symbol $a$ as in the above statement is called {\it the asymptotic sum of the series} $\sum_ja_j$ and is usually denoted by $a\sim\sum_{j=0}^\infty a_j$.

Let us also recall a usual way to construct such an {\it asymptotic sum}. We choose a function $\chi\in C^\infty(\Rd)$ verifying $\chi(\xi)=0$ for $|\xi|\leq 1$ and $\chi(\xi)=1$ for $|\xi|\geq 2$ and a sequence $\{t_j\}_{j\in\mathbb{N}}$ verifying
$$
1\leq t_0<t_1<\ldots<tk<t_{k+1},\ t_j\rightarrow\infty\ \text{for}\ j\rightarrow\infty\ \text{rapidly enough}.
$$
Then one can define
\begin{equation}\label{1.4}
a(x,\xi):=\sum_{j=0}^\infty\chi(\xi/t_j)a_j(x,\xi),\quad(x,\xi)\in\Rd\times\Rd.
\end{equation}
\begin{definition}\label{D1.4}
 A symbol $a\in S^m$ is said to be {\it elliptic} when there exist two positive constants $R$ and $C$ such that
$$
\left|a(x,\xi)\right|\geq C<\xi>^m,\quad\forall(x,\xi)\in\Rd\times\Rd,\text{ with }|\xi|\geq R.
$$
\end{definition}

Here are two technical results needed in the forthcoming arguments.
\begin{lemma}\label{L1.5}
 Let $a\in C^\infty(\mathbb{R}^{2d}\times\Rd)$ be a real function such that $\partial_{y_j}a\in S^1(\mathbb{R}^{2d}\times\Rd)$. Suppose also that there exists $\delta\in[0,1)$ such that
\begin{equation}\label{1.5}
 \left\|\left(\nabla^2_{\xi,y}a\right)(x,y,\xi)\right\|\leq\delta,\quad\forall(x,y)\in\mathbb{R}^{2d}\ \forall\xi\in\Rd.
\end{equation}
If we denote by $\Lambda(x,y,\xi):=\xi+\left(\nabla_{y}a\right)(x,y,\xi)$, then we have:
\begin{enumerate}
 \item $\Lambda\in S^1(\mathbb{R}^{2d}\times\Rd)$ and for any $(x,y)\in\mathbb{R}^{2d}$ the map
$
\Rd\ni\xi\mapsto\Lambda(x,y,\xi)\in\Rd
$
is a $C^\infty$-diffeomorphism. We denote by $\Rd\ni\eta\mapsto\lambda(x,y,\eta)\in\Rd$ its inverse.
\item There exists a strictly positive constant $C$ such that
\begin{equation}\label{1.6}
 C^{-1}<\xi>\leq\left<\Lambda(x,y,\xi)\right>\leq C<\xi>,\quad C^{-1}<\eta>\leq\left<\lambda(x,y,\eta)\right>\leq C<\eta>,\quad\forall x,y,\xi,\eta\in\mathbb{R}^{d}.
\end{equation}
\item The map $\lambda:\Rd\rightarrow\Rd$ has components of class $S^1(\mathbb{R}^{2d}\times\Rd)$.
\item For any $(x,y)\in\mathbb{R}^{2d}$ let us denote by $D(x,y,\eta)$ the functional determinant of the map $\Rd\ni\eta\mapsto\lambda(x,y,\eta)\in\Rd$. Then $D\in S^0(\mathbb{R}^{2d}\times\Rd)$.
\end{enumerate}
\end{lemma}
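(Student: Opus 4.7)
The plan is to treat the four assertions in order, using the small-Lipschitz hypothesis \eqref{1.5} as the single source of global invertibility and the assumption $\partial_{y_j}a\in S^1(\mathbb{R}^{2d}\times\mathbb{R}^d)$ to control behaviour in $\xi$. For (1), $\Lambda\in S^1$ is immediate from its definition, since the linear term $\xi$ carries the correct bounds and $\nabla_y a\in S^1$ by hypothesis. For the diffeomorphism property I would apply a Hadamard-type global inverse argument: \eqref{1.5} says exactly that $\xi\mapsto(\nabla_y a)(x,y,\xi)$ is $\delta$-Lipschitz, so
\[
|\Lambda(x,y,\xi)-\Lambda(x,y,\xi')|\ge (1-\delta)|\xi-\xi'|,
\]
yielding injectivity and properness in $\xi$, while $\nabla_\xi\Lambda=I+\nabla^2_{\xi,y}a$ is everywhere invertible with uniformly bounded inverse (Neumann series). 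Hadamard's theorem then produces a smooth global inverse $\lambda(x,y,\cdot)$. For (2), the upper bound $\langle\Lambda\rangle\le C\langle\xi\rangle$ is immediate from $|\nabla_y a(x,y,\xi)|\le C\langle\xi\rangle$, and the lower bound follows from the Lipschitz inequality evaluated at $\xi'=0$, giving $|\Lambda(x,y,\xi)|\ge(1-\delta)|\xi|-C_0$ and hence $\langle\Lambda\rangle\ge c\langle\xi\rangle$ for $|\xi|$ large; continuity extends it to all $\xi$. Substituting $\xi=\lambda(x,y,\eta)$ transfers the bounds to $\lambda$.

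For (3) I would proceed by induction on the total order of differentiation of $\lambda$, exploiting the identity $\Lambda(x,y,\lambda(x,y,\eta))=\eta$. Differentiation in $\eta$ produces
\[
\partial_\eta\lambda=\bigl[(\nabla_\xi\Lambda)(x,y,\lambda)\bigr]^{-1};
\]
since $\nabla_\xi\Lambda-I=\nabla^2_{\xi,y}a$ belongs to $S^0(\mathbb{R}^{2d}\times\mathbb{R}^d)$ with sup-norm at most $\delta<1$, the matrix inverse is an $S^0$ function of $(x,y,\xi)$, and composing with $\lambda$ (using the two-sided bounds from (2)) gives $\partial_\eta\lambda\in S^0$. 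Differentiation in $x_j$ yields $\partial_{x_j}\lambda=-\bigl[(\nabla_\xi\Lambda)(x,y,\lambda)\bigr]^{-1}(\partial_{x_j}\Lambda)(x,y,\lambda)$, an $S^1$-function in $\eta$ because $\partial_{x_j}\Lambda\in S^1$. Higher-order derivatives follow by repeated application of the chain rule to these formulas: each additional $\partial_\eta$ contributes a factor of order $S^{-1}$ while each $\partial_x$ or $\partial_y$ preserves the $\eta$-order, so induction yields the required $S^1$ estimates for all mixed derivatives.

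For (4), the previous computation shows that
\[
D(x,y,\eta)=\det\bigl[(\nabla_\xi\Lambda)(x,y,\lambda(x,y,\eta))\bigr]^{-1};
\]
since $\nabla_\xi\Lambda=I+\nabla^2_{\xi,y}a$ has $S^0$ entries and determinant bounded between $(1-\delta)^d$ and $(1+\delta)^d$, the reciprocal determinant lies in $S^0(\mathbb{R}^{2d}\times\mathbb{R}^d)$ in the variables $(x,y,\xi)$; composing with $\lambda$ and using $\langle\lambda\rangle\sim\langle\eta\rangle$ together with (3) preserves all $S^0$ estimates. The main obstacle I anticipate is the bookkeeping in (3): one must check carefully that iterated implicit differentiation produces only $(x,y,\xi)$-derivatives of $\Lambda$ and previously-controlled derivatives of $\lambda$, and that the accumulated powers of $\langle\eta\rangle$ match the $S^1$ prescription. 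Once this induction is complete, the composition arguments in (2) and (4) are essentially mechanical.
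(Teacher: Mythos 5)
Your proposal is correct and follows essentially the same route as the paper: global inverse function theorem (the paper cites Schwartz, you invoke Hadamard — same content under the hypothesis $\|\nabla_\xi\Lambda - I\|\le\delta<1$), a fundamental-theorem-of-calculus/Lipschitz estimate for the two-sided bounds \eqref{1.6}, and implicit differentiation of $\Lambda(x,y,\lambda(x,y,\eta))=\eta$ followed by induction to get $\lambda\in S^1$ and $D\in S^0$. One small imprecision: in (2) you say the lower bound for small $|\xi|$ follows by ``continuity'' — what is actually used is that $\langle\Lambda\rangle\ge 1$ always while $\langle\xi\rangle$ is bounded on the complementary region, so the comparison is immediate without any compactness or continuity argument; the phrasing is misleading but the conclusion is trivially true.
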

\begin{proof}

 \noindent
{\it 1.} As $\nabla_\xi\Lambda=1_d+\nabla^2_{\xi,y}a$ we deduce from the hypothesis of the Lemma that
$$
\left\|\left(\nabla_\xi\Lambda\right)(x,y,\xi)\right\|\geq1-\delta.
$$
As it is evident by definition that $\Lambda\in S^1(\mathbb{R}^{2d}\times\Rd)$, the statement results from the Schwartz global inversion theorem.

 \noindent
{\it 2.} We evidently can write
$$
\Lambda(x,y,\xi)=\xi+\left(\nabla_{y}a\right)(x,y,0)+\left[\int_0^1\left(\nabla^2_{y,\xi}a\right)(x,y,t\xi)dt\right]\cdot\xi.
$$
But $\left\|\nabla^2_{\xi,y}a(x,y,\xi)\right\|=\left\|\nabla^2_{y,\xi}a(x,y,\xi)\right\|$ and thus the first two inequalities in statement (\ref{1.6}) follow. For the other two inequalities we just take $\xi=\lambda(x,y,\eta)$ in the first two and obtain them.

 \noindent
{\it 3.} From \eqref{1.6} we deduce that $|\lambda(x,y,\eta)|\leq C<\eta>$. By differentiating the equality $\eta=\Lambda(x,y,\lambda(x,y,\eta))$ and using the fact that
$$
\left\|\big(\d_\xi\Lambda\big)(x,y,\xi)\right\|=\left\|\big(\nabla_\xi\Lambda\big)(x,y,\xi)\right\|\geq 1-\delta,
$$
we obtain (as matrices $d\times d$)
$$
\begin{array}{rcl}
\big(\d_\eta\lambda\big)(x,y,\eta)&=&\left[\big(\d_\xi\Lambda\big)(x,y,\lambda(x,y,\eta))\right]^{-1}\\
\big(\d_x\lambda\big)(x,y,\eta)&=&-\left[\big(\d_\xi\Lambda\big)(x,y,\lambda(x,y,\eta))\right]^{-1}\cdot\big(\d_x\Lambda\big)(x,y,\lambda(x,y,\eta))\\
\big(\d_y\lambda\big)(x,y,\eta)&=&-\left[\big(\d_\xi\Lambda\big)(x,y,\lambda(x,y,\eta))\right]^{-1}\cdot\big(\d_y\Lambda\big)(x,y,\lambda(x,y,\eta)).
\end{array}
$$
An induction on the differentiation order, the fact that $\Lambda\in S^1(\mathbb{R}^{2d}\times\Rd)$ and (\ref{1.6}) imply that $\lambda\in S^1(\mathbb{R}^{2d}\times\Rd)$.

 \noindent
{\it 4.} Results from point 3.
\end{proof}

\begin{lemma}\label{L1.6}
 Let $a:\mathbb{R}^{4d}\rightarrow\Rd$ be a map with components of class $BC^\infty(\mathbb{R}^{4d})$. Suppose that there exists a constant $\delta\in[0,1)$ such that
\begin{equation}\label{1.7}
 \left\|\big(\nabla_ya\big)(x,y,\xi,\eta)\right\|\leq\delta,\quad\forall(x,y,\xi,\eta)\in\mathbb{R}^{4d}.
\end{equation}
Then:
\begin{enumerate}
 \item For any $(x,\xi,\eta)\in\mathbb{R}^{3d}$, the map $\Rd\ni y\mapsto y+a(x,y,\xi,\eta)\in\Rd$ is a $C^\infty$-diffeomorphism.
\item Let us denote by $g(x,z,\xi,\eta)$ the inverse of the diffeomorphism of point (1) and by $D(x,z,\xi,\eta)$ the functional determinant of the map $\Rd\ni z\mapsto g(x,z,\xi,\eta)\in\Rd$. Then $g(x,z,\xi,\eta)-z$ and $D(x,z,\xi,\eta)$ are of class $BC^\infty(\mathbb{R}^{4d})$.
\item Suppose also that there exists some $\epsilon>0$ such that on the set
$$
\mathcal{M}_\epsilon:=\left\{(x,y,\xi,\eta)\in\mathbb{R}^{4d}\,\mid\,|\eta|\leq\epsilon|\xi|\right\}
$$
the map $a$ verifies the inequalities:
\begin{equation}\label{1.8}
 \left|\left(\partial^\alpha_{(x,y)}\partial^\beta_{(\xi,\eta)}a\right)(x,y,\xi,\eta)\right|\leq C_{\alpha\beta}<\xi>^{-|\beta|},\quad\forall(\alpha,\beta)\in\mathbb{N}^{2d}\times\mathbb{N}^{2d},
\end{equation}
where $C_{\alpha\beta}$ are positive constants. Then similar inequalities are verified by $g(x,z,\xi,\eta)-z$ and $D(x,z,\xi,\eta)$ for $(x,z,\xi,\eta)\in\mathcal{M}_\epsilon$.
\end{enumerate}
\end{lemma}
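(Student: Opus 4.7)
The plan is to prove the three parts in order, each building on the previous.

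For part (1), I would apply a global inverse function theorem. The Jacobian of $F(y):=y+a(x,y,\xi,\eta)$ is $I_d+\nabla_y a$, and since $\|\nabla_y a\|\leq\delta<1$ the eigenvalues of $I_d+\nabla_y a$ lie in $[1-\delta,1+\delta]$, so $F$ is a local diffeomorphism with uniformly invertible differential. Global bijectivity follows from the Banach fixed point theorem applied to the map $T_z:y\mapsto z-a(x,y,\xi,\eta)$, which is a $\delta$-contraction on $\Rd$ uniformly in the parameters $(x,\xi,\eta)$; hence for each $z$ there is a unique $y=:g(x,z,\xi,\eta)$ with $F(y)=z$.

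For part (2), I would proceed by induction on the total order of differentiation. The fixed point relation $g(x,z,\xi,\eta)-z=-a\bigl(x,g(x,z,\xi,\eta),\xi,\eta\bigr)$ immediately gives $g-z\in BC^0$. Differentiating this identity with respect to $z$ yields $\partial_z g=[I_d+\nabla_y a(x,g,\xi,\eta)]^{-1}$, and the norm of this inverse is bounded by $(1-\delta)^{-1}$. Differentiating with respect to $x$, $\xi$, or $\eta$ expresses $\partial g$ as a product of such a bounded inverse with a bounded derivative of $a$. Higher-order derivatives of $g$ are obtained by further differentiation, producing polynomial combinations of bounded functions of the form $(\partial_{(x,y,\xi,\eta)}^\gamma a)(x,g,\xi,\eta)$ and of components of $(I_d+\nabla_y a)^{-1}$, all composed with $g$. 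An induction on $|\alpha|+|\beta|$ then gives $g-z\in BC^\infty(\mathbb{R}^{4d})$. The determinant $D(x,z,\xi,\eta)=\det(\partial_z g)$ is a polynomial in the entries of the bounded smooth matrix $(I_d+\nabla_y a(x,g,\xi,\eta))^{-1}$, so $D\in BC^\infty$ as well.

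For part (3), I would rerun the same induction on $\mathcal{M}_\epsilon$ while tracking the order $|\beta|$ of derivatives falling on $(\xi,\eta)$. The crucial structural observation is that $\mathcal{M}_\epsilon$ is independent of the $y$-variable, so replacing the second argument by $g(x,z,\xi,\eta)$ leaves the membership in $\mathcal{M}_\epsilon$ unchanged. Thus (1.8) gives $|a(x,g,\xi,\eta)|\leq C_{00}$ and for $|\beta|\geq 1$ a factor $\langle\xi\rangle^{-|\beta|}$ on the corresponding derivatives of $a$ composed with $g$. The base step of the induction is the inequality for $\alpha=\beta=0$, which follows directly from $g-z=-a(x,g,\xi,\eta)$. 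For the inductive step I differentiate the relation and use that each application of $\partial_{\xi_j}$ or $\partial_{\eta_j}$ to a composed factor either lands on $\partial_{(\xi,\eta)}a$ (gaining one $\langle\xi\rangle^{-1}$ by (1.8)) or produces a $\partial_{(\xi,\eta)}g$ (which has already gained one $\langle\xi\rangle^{-1}$ by the inductive hypothesis), while chain-rule factors $\nabla_y a$ composed with $g$ are harmless since they carry $\langle\xi\rangle^0$ bounds. Multilinearity of the determinant then transfers these estimates to $D(x,z,\xi,\eta)$.

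The main technical obstacle is the careful bookkeeping in step (3): showing that no intermediate term in the iterated chain-rule expansion for $\partial_{(x,z)}^\alpha\partial_{(\xi,\eta)}^\beta g$ can lose the $\langle\xi\rangle^{-|\beta|}$ factor. This requires structuring the induction so that derivatives of the inverse matrix $[I_d+\nabla_y a(x,g,\xi,\eta)]^{-1}$ are themselves controlled through Faà di Bruno formulas combined with the inductive hypothesis on $g$, and verifying that each factor picks up the expected number of $\langle\xi\rangle^{-1}$ from the symbol-type estimate (1.8).
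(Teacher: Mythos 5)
Your proposal is correct and follows essentially the same plan as the paper's proof: establish global invertibility of $y\mapsto y+a$ (the paper cites the Schwartz global inversion theorem where you use the Banach contraction principle, but under the hypothesis $\|\nabla_y a\|\leq\delta<1$ these give the same conclusion), differentiate the implicit relation $g+a(x,g,\xi,\eta)=z$ and induct on derivative order to get $g-z\in BC^\infty$, and rerun the induction on $\mathcal{M}_\epsilon$ while tracking the $\langle\xi\rangle^{-|\beta|}$ gain. Your explicit remark that $\mathcal{M}_\epsilon$ is independent of the $y$-variable, so that the substitution $y\mapsto g(x,z,\xi,\eta)$ preserves membership in $\mathcal{M}_\epsilon$, is exactly the point the paper's terse ``follows in a similar way'' leaves to the reader.
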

\begin{proof}
 
 \noindent
{\it 1.} This results in a similar way with the proof of (1) of Lemma \ref{L1.5} by using the global inversion Theorem by Schwartz and the fact that
\begin{equation}\label{1.9}
 \left\|\boldsymbol{1}_d+\left(\nabla_ya\right)(x,y,\xi,\eta)\right\|\geq1-\delta.
\end{equation}

 \noindent
{\it 2.} Let us observe that
\begin{equation}\label{1.10}
 g(x,z,\xi,\eta)+a(x,g(x,z,\xi,\eta),\xi,\eta)=z.
\end{equation}
This equality implies the boundedness of $g(x,z,\xi,\eta)-z$ and by an induction and successive differentiation of \eqref{1.10} (using also \eqref{1.9}) the boundedness of the derivatives also follows.

 \noindent
{\it 3.} This follows in a similar way as point (2) by using the inequalities \eqref{1.8}.
\end{proof}
\begin{definition}\label{D1.7}
 We say that a symbol $a\in S^m(\Rd)$ has {\it a homogeneous principal part} $a_0\in C^\infty\big(\Rd\times(\Rd\setminus\{0\})\big)$ when the following facts hold:
\begin{itemize}
 \item $a_0(x,\lambda\xi)=\lambda^ma_0(x,\xi)$ for any $(x,\xi)\in\Rd\times(\Rd\setminus\{0\})$ and any $\lambda>0$;
\item $\partial^\alpha_x\partial^\beta_\xi a_0$ are bounded on the set $\Rd\times\{\xi\in\Rd\,|\,|\xi|=1\}$ for any $(\alpha,\beta)\in\mathbb{N}^{2d}$;
\item for any cut-off function $\chi\in C^\infty(\Rd)$ with $\chi(\xi)=0$ for $|\xi|\leq1$ and $\chi(\xi)=1$ for $|\xi|\geq2$, we have $$a-\chi a_0\in S^{m-1}(\Rd).$$
\end{itemize}
\end{definition}

\noindent
{\bf Example} The homogeneous principal part of the symbol $a(\xi)=<\xi>$ (of class $S^1(\Rd)$) is $a_0(\xi)=|\xi|$.

\subsection{The Magnetic Fields}

\begin{definition}\label{D1.9}
 A magnetic field on $\mathbb{R}^d$ is a closed 2-form 
$$
B:=\frac{1}{2}\sum_{j,k=1}^d B_{jk}dx_j\wedge dx_k
$$
where $B_{jk}=-B_{kj}\in BC^\infty(\mathbb{R}^d)$ (thus we have  $dB=0$).
\end{definition}
\begin{remark}\label{R1.10}
 Using the transverse gauge we can define a vector potential $A:=\sum_{j=1}^d A_jdx_j$ with $A_j\in C^\infty_{\text{\sf pol}}(\mathbb{R}^d)$ real functions such that $B=dA$. In fact we can define:
$$
A_j(x):=-\sum_{k=1}^d\int_0^1ds\,B_{jk}(sx)sx_k,\quad\forall x\in\mathbb{R}^d.
$$
Any other such vector potential is of the form $A^\prime=A+\d\varphi$ with $\varphi\in C^\infty_{\text{\sf pol}}(\mathbb{R}^d)$ a real function.
\end{remark}

We shall use the following notations:
\begin{equation}\label{GammaA}
 \Gamma^A(x,y):=\int_0^1ds\,A\big((1-s)x+sy\big);\qquad\int_{[x,y]}A:=-\left<x-y,\Gamma^A(x,y)\right>,
\end{equation}
\begin{equation}\label{omegaA}
 \omega^A(x,y):=\exp\left\{-i\int_{[x,y]}A\right\}
\end{equation}
\begin{equation}\label{FB}
 F^B(x,y,z)\equiv\Gamma^B(<x,y,z>):=\int_{<x,y,z>}B,
\end{equation}
\begin{equation}\label{OmegaB}
 \Omega^B(x,y,z):=\exp\left\{-i\int_{<x,y,z>}B\right\}.
\end{equation}
The next Lemma summarizes some elementary properties of these functions (see \cite{IMP1}).
\begin{lemma}\label{L1.11}
 
\begin{enumerate}
 \item $\omega^A(x,y)\omega^A(y,x)=1.$
\item $\omega^A(x,y)\omega^A(y,z)\omega^A(z,x)=\Omega^B(x,y,z).$
\item If we denote by
\begin{equation}\label{1.12}
 c_{jk}(x,y,z):=\int_0^1ds\int_0^sdt\,B_{jk}\big(tx+(s-t)y+(1-s)z\big),\quad\forall(x,y,z)\in\mathbb{R}^{3d},\,1\leq j,k\leq d
\end{equation}
$$
C(x,y,z):=\big(c_{jk}(x,y,z)\big)_{1\leq j,k\leq d},
$$
then we have that $c_{jk}\in BC^\infty(\mathbb{R}^{3d})$, for each $(x,y,z)\in\mathbb{R}^{3d}$ the matrix $C(x,y,z)$ is antisymetric and
\begin{equation}\label{1.13}
 F(x,y,z)=-\left<C(x,y,z)(x-y),(x-z)\right>.
\end{equation}
\item All the vectors of the form $\nabla_xF(x,y,z)$, $\nabla_yF(x,y,z)$ and $\nabla_zF(x,y,z)$ are of the form
$$
D(x,y,z)(x-y)\,+\,E(x,y,z)(x-z)
$$
with $D$ and $E$ $d\times d$ matrices with entries from $BC^\infty(\mathbb{R}^{3d})$.
\end{enumerate}

\end{lemma}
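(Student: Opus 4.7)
The statement splits into four assertions. Items (1)--(3) are elementary; the real work is in (4). I sketch each in turn, flagging the main difficulty.

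For (1), the change of variable $s\leftrightarrow 1-s$ in the defining integral of $\Gamma^A(y,x)$ gives $\Gamma^A(y,x)=\Gamma^A(x,y)$, hence $\int_{[x,y]}A=-\int_{[y,x]}A$, and exponentiating yields the claim. Item (2) is Stokes' theorem: since $B=dA$, $\int_{\langle x,y,z\rangle}B=\int_{\partial\langle x,y,z\rangle}A=\int_{[x,y]}A+\int_{[y,z]}A+\int_{[z,x]}A$, and exponentiating gives (2). For (3), I would parameterize the simplex $\langle x,y,z\rangle$ by $\phi(s,t):=tx+(s-t)y+(1-s)z$ for $(s,t)\in\{0\le t\le s\le 1\}$. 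Using $\partial_s\phi=y-z$, $\partial_t\phi=x-y$, pulling back $B$ and using antisymmetry of $B_{jk}$, one obtains $F^B(x,y,z)=\sum_{j,k}(y_j-z_j)(x_k-y_k)c_{jk}(x,y,z)$. A short algebraic rearrangement (write $x_k-y_k=(x_k-z_k)-(y_k-z_k)$ and drop the piece that is antisymmetric times symmetric) converts this into (\ref{1.13}). Antisymmetry $c_{jk}=-c_{kj}$ is immediate from $B_{jk}=-B_{kj}$, and $c_{jk}\in BC^\infty(\mathbb{R}^{3d})$ follows by differentiating under the integral sign, using $B_{jk}\in BC^\infty(\Rd)$.

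Item (4) is the crucial step. A naive differentiation of (\ref{1.13}) produces, from the product rule, a linear-in-differences piece of the form $C(x-y)-C(x-z)$, which is already of the desired shape since $C\in BC^\infty$, together with a bilinear residue $-\sum_{j,k}(\partial_{x_i}c_{jk})(x_j-y_j)(x_k-z_k)$. A generic bilinear form $\sum M_{jk}u_jv_k$ with bounded $M$ cannot be rewritten as $Du+Ev$ with uniformly bounded $D,E$, so additional structure is needed. The essential ingredient is the closedness $dB=0$, which in components is the Jacobi-type identity $\partial_iB_{jk}+\partial_jB_{ki}+\partial_kB_{ij}=0$. Combined with $\partial_s\phi=y-z$ and $\partial_t\phi=x-y$, it allows one to rewrite the integrand $t\sum_{j,k}(\partial_iB_{jk})(\phi)(y_j-z_j)(x_k-y_k)$ as a combination of total $s$- and $t$-derivatives of $B_{i\cdot}(\phi)$. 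After integrating by parts in $(s,t)$ over the simplex, the boundary contributions are weighted averages of $B$ along the edges $[x,y]$, $[y,z]$ and $[x,z]$---each a $BC^\infty$ function of the corresponding pair of variables---linearly contracted with a single difference $x-y$ or $x-z$. Collecting these terms and absorbing the linear product-rule piece yields $\nabla_xF=D(x,y,z)(x-y)+E(x,y,z)(x-z)$ with $D,E\in BC^\infty(\mathbb{R}^{3d})$. The cases of $\nabla_yF$ and $\nabla_zF$ are completely analogous, or follow by cyclic symmetry with bookkeeping of signs.

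The principal obstacle is precisely the bilinear residue in (4): without invoking closedness of $B$ via the Bianchi-type identity and the resulting integration by parts, there is no way to split that residue into $Du+Ev$ with uniformly bounded coefficients, and the assertion genuinely fails.
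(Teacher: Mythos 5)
The paper does not prove Lemma \ref{L1.11}; it refers to \cite{IMP1}, so there is no in-text argument to compare against, and I can only judge your proof on its own merits. It is correct. Parts (1)--(3) are standard and your computations check out: $s\mapsto1-s$ gives (1), Stokes gives (2), and pulling back $B$ through $\phi(s,t)=tx+(s-t)y+(1-s)z$ on $\{0\le t\le s\le1\}$ (with $\partial_t\phi=x-y$, $\partial_s\phi=y-z$) gives $F=\sum_{j,k}c_{jk}(y_j-z_j)(x_k-y_k)$, which rearranges to \eqref{1.13} via the antisymmetry of $C$.

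For (4) you have correctly identified both the obstacle and the remedy. The bilinear residue $\sum_{j,k}(\partial_{x_i}c_{jk})(y_j-z_j)(x_k-y_k)$ in $\partial_{x_i}F$ has $(s,t)$-integrand $t\sum_{j,k}(\partial_iB_{jk})(\phi)(\partial_s\phi)_j(\partial_t\phi)_k$, and since the lemma requires $D,E$ to be bounded in $(x,y,z)$ alone (not in the differences), no algebraic rearrangement can reduce this to $D(x-y)+E(x-z)$. Closedness in the form $\partial_iB_{jk}=\partial_jB_{ik}+\partial_kB_{ji}$ converts the contraction against $\partial_t\phi\otimes\partial_s\phi$ into $\partial_t\big[B_{ik}(\phi)\big](\partial_s\phi)_k+\partial_s\big[B_{ji}(\phi)\big](\partial_t\phi)_j$ (sum over repeated indices). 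Integrating by parts over the simplex then produces, besides a bulk term of the form $-(\partial_s\phi)_k\int_0^1\!\!\int_0^s B_{ik}(\phi)\,dt\,ds$, boundary terms supported on the edges $\phi(s,s)=sx+(1-s)z$ and $\phi(1,t)=tx+(1-t)y$ only, the $t=0$ edge dropping out because of the weight $t$. Every coefficient so obtained is $BC^\infty(\mathbb{R}^{3d})$, contracted linearly against $x-y$ or against $y-z=(x-z)-(x-y)$, which together with the algebraic pieces $C(x-y)-C(x-z)$ gives exactly the asserted shape. The gradients $\nabla_yF$ and $\nabla_zF$ are handled identically, the only change being that $\partial_{y_i}c_{jk}$ and $\partial_{z_i}c_{jk}$ carry the weights $s-t$ and $1-s$ in place of $t$, so the boundary at $t=0$ (the edge $[y,z]$) also contributes.
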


\subsection{The Magnetic Pseudo-differential Operators}

\begin{definition}\label{D1.12} (\cite{Mu,MP,KO1,KO2}).
 We define the \textit{magnetic pseudodifferential operator} associated to the symbol $a\in S^m(\mathbb{R}^d)$ and to the magnetic field $B=dA$ as the linear continuous map $\mathfrak{Op}^A(a):\mathcal{S}(\mathbb{R}^d)\rightarrow\mathcal{S}(\mathbb{R}^d)$ given by the following oscillating integral:
\begin{equation}\label{1.14}
\left[\mathfrak{Op}^A(a)u\right](x):=\int_{\mathbb{R}^d}\int_{\mathbb{R}^d}e^{i<x-y,\eta>}\omega^A(x,y)a\left(\frac{x+y}{2},\eta\right)u(y)\,dy\,\dbar\eta.
\end{equation}
\end{definition}

Evidently there exists a unique linear and continuous extension $\mathfrak{Op}^A(a):\mathcal{S}^\prime(\mathbb{R}^d)\rightarrow\mathcal{S}^\prime(\mathbb{R}^d)$.

\begin{remark}\label{R1.13}

 \begin{enumerate}
  \item This operator is in fact determined by $B$ due to the gauge covariance of the formula in its definition: if $A^\prime=A+\d\varphi$ with $\varphi\in C^\infty_{\text{\sf pol}}(\mathbb{R}^d)$ a real function, then $\mathfrak{Op}^{A^\prime}(a)=e^{i\varphi}\mathfrak{Op}^A(a)e^{-i\varphi}$.
\item One can define pseudodifferential operators associated to some other symbol classes, as for example $S^m_0(\mathbb{R}^d)$ (see \cite{IMP1}).
 \end{enumerate}

The above {\it quantization procedure} defined in Definition \ref{D1.12} is a 'magnetic' analog of the Weyl quantization. We shall also need the 'magnetic' equivalent of the Kohn-Nirenberg quantization (or the 'left quantization'):
\begin{equation}\label{1.15}
 \left[b^A(x,D)u\right](x):=\int_{\mathbb{R}^d}\int_{\mathbb{R}^d}e^{i<x-y,\eta>}\omega^A(x,y)b(x,\eta)u(y)\,dy\,\dbar\eta,\quad\forall x\in\mathbb{R}^d,\,\forall u\in\mathcal{S}(\mathbb{R}^d),
\end{equation}
for any symbol $b\in S^m(\mathbb{R}^d)$. 
\end{remark}

The equivalence of these two 'quantizations' results from the following Proposition.
\begin{proposition}\label{P1.14}
 
\begin{enumerate}
 \item Given any $a\in S^m(\mathbb{R}^d)$ there exists a symbol $b\in S^m(\mathbb{R}^d)$ such that $\mathfrak{Op}^A(a)=b^A(x,D)$ and moreover we have that
\begin{equation}\label{1.16}
 b(x,\eta)\,\sim\,\sum_{\alpha\in\mathbb{N}^d}\frac{1}{\alpha!}\left(\frac{1}{2}\right)^{|\alpha|}\left(D^\alpha_x\partial^\alpha_\eta a\right)(x,\eta).
\end{equation}
\item Reciprocally, for any $b\in S^m(\mathbb{R}^d)$ there exists a symbol $a\in S^m(\mathbb{R}^d)$ such that $b^A(x,D)=\mathfrak{Op}^A(a)$ and moreover we have that
\begin{equation}\label{1.17}
 a(x,\eta)\,\sim\,\sum_{\alpha\in\mathbb{N}^d}\frac{1}{\alpha!}\left(-\frac{1}{2}\right)^{|\alpha|}\left(D^\alpha_x\partial^\alpha_\eta b\right)(x,\eta).
\end{equation}
\end{enumerate}
\end{proposition}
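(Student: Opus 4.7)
The plan is to prove part (1) in detail and then note that part (2) follows by an entirely mirror argument. The pivotal observation is that the gauge factor $\omega^A(x,y)$ appears identically in the two quantizations \eqref{1.14} and \eqref{1.15} and is independent of $\eta$; it will therefore be an inert bystander throughout the computation, which then reduces essentially to the non-magnetic Weyl-to-left conversion. The two workhorse ingredients I would use are Taylor expansion of the Weyl symbol $a\bigl(\frac{x+y}{2},\eta\bigr)$ in its first argument around $y=x$, together with the standard identity $(x-y)^\alpha e^{i\langle x-y,\eta\rangle}=D_\eta^\alpha e^{i\langle x-y,\eta\rangle}$, followed by integration by parts in $\eta$.

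Concretely, for each fixed $N\geq 1$ I would write
$$
a\!\left(\frac{x+y}{2},\eta\right)\,=\,\sum_{|\alpha|<N}\frac{1}{\alpha!}\!\left(\frac{y-x}{2}\right)^{\!\alpha}\!(\partial_x^\alpha a)(x,\eta)\,+\,R_N(x,y,\eta)
$$
and substitute into \eqref{1.14}. In each Taylor term I would replace $(y-x)^\alpha$ by $(-1)^{|\alpha|}D_\eta^\alpha$ acting on the oscillatory factor and then integrate by parts in $\eta$; the two resulting signs cancel and the $\alpha$-th term becomes the left magnetic quantization of
$$
b_\alpha(x,\eta)\,:=\,\frac{1}{\alpha!}\!\left(\frac{1}{2}\right)^{\!|\alpha|}\!(D_\eta^\alpha\partial_x^\alpha a)(x,\eta)\,=\,\frac{1}{\alpha!}\!\left(\frac{1}{2}\right)^{\!|\alpha|}\!(D_x^\alpha\partial_\eta^\alpha a)(x,\eta)\,\in\,S^{m-|\alpha|}(\mathbb{R}^d),
$$
the second form holding because mixed partials commute. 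Applying Lemma \ref{L1.3} to the sequence $(b_\alpha)_\alpha$ would then produce a symbol $b\in S^m$ with $b\sim\sum_\alpha b_\alpha$, which is precisely \eqref{1.16}. Part (2) will be handled symmetrically by writing $b(x,\eta)=b\bigl(\frac{x+y}{2}+\frac{x-y}{2},\eta\bigr)$ and Taylor-expanding in the first variable around $\frac{x+y}{2}$; now $(x-y)^\alpha=D_\eta^\alpha$ carries no sign, so only the single $(-1)^{|\alpha|}$ from integration by parts survives, producing the factor $(-1/2)^{|\alpha|}$ appearing in \eqref{1.17}.

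The main obstacle will be controlling the remainder $R_N$. After the same substitution and integration by parts, it generates a magnetic oscillatory integral whose ``amplitude'' $r_N(x,y,\eta)$ depends on all three variables and is of order $m-N$ in $\eta$. I would then need to argue that any such three-variable amplitude operator coincides with the left quantization $c^A(x,D)$ of a genuine two-variable symbol $c\in S^{m-N}$ modulo errors that fit inside the asymptotic freedom of Lemma \ref{L1.3}. This amplitude-to-symbol reduction is the technical heart of the argument; it is classical in the non-magnetic Weyl calculus and should carry over here precisely because the common factor $\omega^A$ plays no active role. The oscillatory-integral interpretation, slightly delicate since $A$ grows only polynomially, is supplied by the magnetic $\Psi$DO calculus developed in \cite{IMP1}. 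Once this is in place, the residual difference $\mathfrak{Op}^A(a)-b^A(x,D)$ is a left magnetic operator with an $S^{-\infty}$ symbol, which I would absorb into a redefinition of $b$ modulo $S^{-\infty}$ to obtain the exact equality asserted in the proposition.
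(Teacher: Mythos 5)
Your outline is essentially sound and the sign bookkeeping checks out (the $(-1)^{|\alpha|}$ from $(y-x)^\alpha=(-1)^{|\alpha|}(x-y)^\alpha$ cancels the $(-1)^{|\alpha|}$ from integrating $D_\eta^\alpha$ by parts, giving the $(1/2)^{|\alpha|}$ of \eqref{1.16}; in part (2) only the integration-by-parts sign appears, giving $(-1/2)^{|\alpha|}$). However, you take a genuinely different route from the paper. You Taylor-expand $a\bigl(\tfrac{x+y}{2},\eta\bigr)$ in the \emph{spatial} variable around $y=x$, construct $b$ abstractly as the asymptotic sum of the resulting left symbols $b_\alpha$ via Lemma \ref{L1.3}, and then must argue that the remainder --- which after the substitution $(y-x)^\alpha\to D_\eta^\alpha$ is a magnetic operator with a genuine three-variable amplitude $r_N(x,y,\eta)$ of order $m-N$ --- is again the left quantization of some two-variable symbol of order $m-N$. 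That amplitude-to-symbol reduction is an extra lemma you would have to prove; you correctly identify it as the technical heart, but it is not supplied here, and in fact it is a result of the same nature as (and roughly equivalent in difficulty to) the proposition itself.

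The paper avoids this step entirely. It equates the two distribution kernels, cancels the common gauge factor $\omega^A$, and observes that equality is an identity of partial Fourier transforms; inverting and changing variables $y\mapsto z=x-y$ gives the \emph{exact} closed formula $b(x,\eta)=\int\!\!\int e^{i\langle z,\zeta\rangle}a(x-z/2,\eta+\zeta)\,dz\,\dbar\zeta$. The paper then Taylor-expands in the \emph{dual} variable $\zeta$ (not the spatial one), so the leading terms collapse to $\frac{1}{\alpha!}(1/2)^{|\alpha|}(D_x^\alpha\partial_\eta^\alpha a)(x,\eta)$ directly under the $\delta(z)$, and the remainder is a concrete double oscillatory integral in $(z,\zeta)$ which is estimated by the usual $(1-\Delta_\zeta)^{N_1}$, $(1-\Delta_z)^{N_2}$ integrations by parts. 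No asymptotic-sum construction, no amplitude reduction, no final ``redefine $b$ modulo $S^{-\infty}$'' patch: $b$ is produced explicitly from the start. Your route is the conventional Kohn--Nirenberg-to-Weyl conversion via Taylor expansion of the amplitude and is correct in principle, but if you follow it you should state and prove the amplitude-to-symbol reduction for magnetic $\Psi$DO (which does hold here, as $\omega^A$ is $\eta$-independent and has modulus one), whereas the paper's Fourier-inversion argument is shorter and self-contained.
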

\begin{proof}
 The distribution kernels of the operators defined by \eqref{1.14} and respectively \eqref{1.15} are
\begin{equation}\label{1.18}
 K^w_a(x,y):=\omega^A(x,y)\int_{\mathbb{R}^d}e^{i<x-y,\eta>}a\left(\frac{x+y}{2},\eta\right)\,\dbar\eta,
\end{equation}
\begin{equation}\label{1.19}
 K_b(x,y):=\omega^A(x,y)\int_{\mathbb{R}^d}e^{i<x-y,\eta>}b(x,\eta)\,\dbar\eta,
\end{equation}
where the integrals are oscillating and define distributions in $\mathcal{S}^\prime(\mathbb{R}^d\times\mathbb{R}^d)$. The equality we want to prove will follow from the equality $K^w_a(x,y)=K_b(x,y)$ in $\mathcal{S}^\prime(\mathbb{R}^d\times\mathbb{R}^d)$ that is equivalent to
$$
\left[\mathcal{F}_\eta^{-1}\left(a\left(\frac{x+y}{2},\cdot\right)\right)\right](x-y)=\left[\mathcal{F}_\eta^{-1}\left(b(x,\cdot)\right)\right](x-y).
$$
The change of variables $y\mapsto z=x-y$ gives
$$
b(x,\eta)=\int_{\mathbb{R}^d}\int_{\mathbb{R}^d}e^{i<z,\zeta-\eta>}a\left(x-z/2,\zeta\right)dz\,\dbar\zeta=\int_{\mathbb{R}^d}\int_{\mathbb{R}^d}e^{i<z,\zeta>}a\left(x-z/2,\eta+\zeta\right)dz\,\dbar\zeta.
$$
We use a Taylor development (with $N\geq1$):
$$
a\left(x-z/2,\eta+\zeta\right)=\sum_{|\alpha|< N}\frac{\zeta^\alpha}{\alpha!}\left(\partial^\alpha_\eta a\right)\left(x-z/2,\eta\right)+\sum_{|\alpha|=N}\frac{\zeta^\alpha}{(N-1)!}\int_0^1(1-t)^{N-1}dt\left(\partial^\alpha_\eta a\right)\left(x-z/2,\eta+t\zeta\right),
$$
so that we get
$$
b(x,\eta)=\sum_{|\alpha|< N}\frac{1}{\alpha!}\left(\frac{1}{2}\right)^{|\alpha|}\left(D^\alpha_x\partial^\alpha_\eta a\right)(x,\eta)+r_N(x,\eta)
$$
with
$$
r_N(x,\eta):=\sum_{|\alpha|=N}\frac{1}{(N-1)!}\left(\frac{1}{2}\right)^{N}\int_0^1(1-t)^{N-1}dt\times
$$
$$
\times\int_{\mathbb{R}^d}\int_{\mathbb{R}^d}e^{i<z,\zeta>}<z>^{-2N_1}(1-\Delta_\zeta)^{N_1}\left[<\zeta>^{-2N_2}(1-\Delta_z)^{N_2}\left(D^\alpha_x\partial^\alpha_\eta a\right)\left(x-z/2,\eta+t\zeta\right)dz\,\dbar\zeta\right],
$$
with $N_1$ and $N_2$ large enough. It follows that one can find two positive constants $C$ and $C^\prime$ such that
$$
\left|r_N(x,\eta)\right|\leq C^\prime\underset{0\leq t\leq1}{\sup}\int_{\mathbb{R}^d}\int_{\mathbb{R}^d}<z>^{-2N_1}<\zeta>^{-2N_2}<\eta+t\zeta>^{m-N}dz\,\dbar\zeta\leq C<\eta>^{m-N}.
$$
The derivatives of $r_N$ are estimated in the same way and we obtain that $r_N\in S^{m-N}(\mathbb{R}^d)$ and the first conclusion of the Proposition follows. The proof of the second statement is quite similar.
\end{proof}

\begin{definition}\label{D1.15}
 We call {\it principal symbol} of the operator $\mathfrak{Op}^A(a)$, $a\in S^m(\mathbb{R}^d)$ (respectively $b^A(x,D)$, $b\in S^m(\mathbb{R}^d)$) any representative of the equivalence class of $a$ (respectively $b$) in $S^m(\mathbb{R}^d)/S^{m-1}(\mathbb{R}^d)$.
\end{definition}

Thus our Proposition \ref{P1.14} means that the principal symbol of a magnetic $\Psi$DO is the same in the Weyl and in the 'left' quantizations.

\begin{lemma}\label{L1.16}
 Let $a\in S^m(\mathbb{R}^d)$ and $K^w_a(x,y)$ the distribution kernel of $\mathfrak{Op}^A(a)$. Then $K^w_a(x,y)=\omega^A(x,y)L^w_a(x,y)$ where $L^w_a\in C^\infty(\mathbb{R}^d\times\mathbb{R}^d\setminus\Delta)$ with $\Delta$ the diagonal set of $\mathbb{R}^d\times\mathbb{R}^d$. Moreover we have that 
$$
(x-y)^\gamma\partial_x^\alpha\partial_y^\beta L^w_a\in BC(\mathbb{R}^d\times\mathbb{R}^d),\quad\forall(\alpha,\beta,\gamma)\in[\mathbb{N}^d]^3\ \text{with}\ |\gamma|>m+d+|\alpha+\beta|.
$$
For $m=-\infty$ we have $L^w_a\in BC^\infty(\mathbb{R}^d\times\mathbb{R}^d)$ and $(x-y)^\gamma\partial_x^\alpha\partial_y^\beta L^w_a\in L^\infty(\mathbb{R}^d\times\mathbb{R}^d),\quad\forall(\alpha,\beta,\gamma)\in[\mathbb{N}^d]^3$.
\end{lemma}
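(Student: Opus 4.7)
The plan is to read off the formula for $L^w_a$ directly from the kernel representation \eqref{1.18}, namely
\[
L^w_a(x,y) := \int_{\mathbb{R}^d} e^{i\langle x-y,\eta\rangle}\, a\!\left(\tfrac{x+y}{2},\eta\right)\,\dbar\eta,
\]
understood as an oscillatory integral. Everything then reduces to a standard symbol-calculus exercise on this integral, with the magnetic factor $\omega^A(x,y)$ playing no role.

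First I would differentiate under the oscillatory-integral sign. Leibniz applied to the product $e^{i\langle x-y,\eta\rangle}\,a((x+y)/2,\eta)$ gives
\[
\partial_x^\alpha \partial_y^\beta L^w_a(x,y) = \sum_{\alpha_1+\alpha_2=\alpha,\ \beta_1+\beta_2=\beta} c_{\alpha,\beta,\alpha_1,\beta_1} \int_{\mathbb{R}^d} e^{i\langle x-y,\eta\rangle}\, \eta^{\alpha_1+\beta_1} \bigl(\partial_u^{\alpha_2+\beta_2} a\bigr)\!\left(\tfrac{x+y}{2},\eta\right)\,\dbar\eta,
\]
with appropriate constants coming from the $1/2$ in $(x+y)/2$ and signs from $\partial_y e^{i\langle x-y,\eta\rangle}$. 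The amplitude in each term belongs to $S^{m+|\alpha+\beta|}$ in $\eta$, uniformly in $x,y$.

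Next I would multiply by $(x-y)^\gamma$ and integrate by parts in $\eta$ using the identity $(x-y)^\gamma e^{i\langle x-y,\eta\rangle} = (-i)^{|\gamma|}\partial_\eta^\gamma e^{i\langle x-y,\eta\rangle}$. This converts $(x-y)^\gamma\partial_x^\alpha\partial_y^\beta L^w_a(x,y)$ into a finite sum of integrals whose amplitudes are $\partial_\eta^\gamma$ of the above, hence of class $S^{m+|\alpha+\beta|-|\gamma|}$ in $\eta$, with bounds uniform in $x,y$. The integrals are absolutely convergent as soon as $m+|\alpha+\beta|-|\gamma| < -d$, i.e.\ $|\gamma|>m+d+|\alpha+\beta|$, and the estimate is uniform on $\mathbb{R}^d\times\mathbb{R}^d$. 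This gives exactly the stated bound.

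To obtain smoothness of $L^w_a$ on $\mathbb{R}^d\times\mathbb{R}^d\setminus\Delta$, I would use the trick $e^{i\langle x-y,\eta\rangle} = |x-y|^{-2N}(-\Delta_\eta)^N e^{i\langle x-y,\eta\rangle}$, valid for $x\neq y$, and integrate by parts $N$ times. For $N$ chosen so that $2N>m+d+|\alpha+\beta|$, this represents $\partial_x^\alpha \partial_y^\beta L^w_a(x,y)$ by an absolutely convergent integral on any compact subset of the complement of $\Delta$, and the continuity of the result in $(x,y)$ is immediate from dominated convergence. Hence $L^w_a\in C^\infty(\mathbb{R}^d\times\mathbb{R}^d\setminus\Delta)$. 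The case $m=-\infty$ is even simpler: in this case the amplitude of every integral appearing above decays faster than any polynomial, so no integration by parts is needed, $L^w_a\in BC^\infty(\mathbb{R}^d\times\mathbb{R}^d)$, and the weighted bound with arbitrary $\gamma$ follows with no restriction.

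I do not anticipate a serious obstacle: the only delicate point is justifying the interchanges between $\partial_{x,y}$ and the oscillatory integral, and the integration by parts in $\eta$; both are handled by the standard regularization procedure (inserting a cutoff $\chi(\eta/R)$, performing the manipulations in the absolutely convergent regime, and letting $R\to\infty$), exactly as for classical H\"ormander symbols. The magnetic factor $\omega^A$ is entirely inert throughout, since it appears only as a prefactor in the kernel.
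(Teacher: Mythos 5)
Your proposal is correct and takes essentially the same approach as the paper: read off $L^w_a$ from the kernel formula \eqref{1.18} (with $\omega^A$ inert), integrate by parts in $\eta$ to convert the weight $(x-y)^\gamma$ into $\eta$-derivatives of the amplitude, and note the absolute convergence and uniform bounds once the order drops below $-d$. The paper's proof carries out the first integration-by-parts step explicitly and then declares the remaining details "straightforward"; you have simply filled those in.
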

\begin{proof}
 Due to \eqref{1.18} we have that
$$
L^w_a(x,y)=\int_{\mathbb{R}^d}e^{i<x-y,\eta>}a\left(\frac{x+y}{2},\eta\right)\dbar\eta,
$$
so that for any $\gamma\in\mathbb{N}^d$ we have that
$$
(x-y)^\gamma L^w_a(x,y)=\int_{\mathbb{R}^d}e^{i<x-y,\eta>}\left[\left(-D_\eta\right)^\gamma a\right]\left(\frac{x+y}{2},\eta\right)\dbar\eta.
$$
All the statements in the Lemma follow now by straightforward arguments.
\end{proof}

Let us recall now the 'magnetic' Sobolev spaces introduced in \cite{IMP1,IMP2}.
\begin{definition}\label{D1.17}
 Let us fix $s\geq0$, and let $p_s(x,\xi):=<\xi>^s$ and $P_s:=\mathfrak{Op}^A(p_s)$.
\begin{enumerate}
 \item {\it The magnetic Sobolev space of order $s\geq0$} is
$$
\mathcal{H}^s_A:=\left\{u\in L^2(\mathbb{R}^d)\mid P_su\in L^2(\mathbb{R}^d)\right\},
$$
that is a Hilbert space for the norm: $\|u\|_{s,A}:=\sqrt{\|u\|^2+\|P_su\|^2}$.
\item {\it The magnetic Sobolev space of negative order $-s$} is the dual space of $\mathcal{H}^s_A$ endowed with the natural norm.
\end{enumerate}
\end{definition}
The following result has been proved in \cite{IMP1}.
\begin{lemma}\label{L1.18}
 Suppose $a\in S^m(\mathbb{R}^d)$ and $s\in\mathbb{R}$.
\begin{enumerate}
 \item $\mathfrak{Op}^A(a)$ belongs to $\mathbb{B}\big(\mathcal{H}^s_A;\mathcal{H}^{s-m}_A\big)$.
\item If $m>0$ and $a$ is real and elliptic, then $\mathfrak{Op}^A(a)$ defines a self-adjoint operator in $L^2(\mathbb{R}^d)$ with domain $\mathcal{H}^m_A$.
\end{enumerate}
\end{lemma}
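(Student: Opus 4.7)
The plan is to reduce everything to two fundamental facts about the magnetic Weyl calculus that are established in \cite{IMP1}: the composition calculus $\mathfrak{Op}^A(a)\mathfrak{Op}^A(b)=\mathfrak{Op}^A(a\dB b)$ with $a\dB b\in S^{m_1+m_2}$ when $a\in S^{m_1}$, $b\in S^{m_2}$, and the magnetic Calder\'on--Vaillancourt theorem asserting $L^2$-boundedness of $\mathfrak{Op}^A(c)$ for $c\in S^0$. To this I will add the existence of a parametrix for the elliptic symbols $p_s(x,\xi)=\langle\xi\rangle^s$: since $p_s\in S^s$ is elliptic, one can iteratively construct $q_s\in S^{-s}$ such that $p_s\dB q_s\sim 1$ and $q_s\dB p_s\sim 1$, whence $Q_sP_s=I+R_s$ and $P_sQ_s=I+R_s'$ with $R_s,R_s'$ having symbols in $S^{-\infty}$. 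Lemma \ref{L1.16} shows such smoothing operators have $BC^\infty$ kernels of rapid decay off the diagonal, so they map $\mathcal{S}'(\mathbb{R}^d)$ continuously to every $\mathcal{H}^{\sigma}_A$.

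For part (1), I first handle $s\geq 0$. For $u\in\mathcal{H}^s_A$ one writes $u=Q_s(P_su)-R_su$, so it is enough to see that $\mathfrak{Op}^A(a)Q_s$ sends $L^2$ into $\mathcal{H}^{s-m}_A$ (the $R_s$-term is harmless by the smoothing remark). But $\mathfrak{Op}^A(a)Q_s=\mathfrak{Op}^A(c)$ with $c\in S^{m-s}$, hence $P_{s-m}\mathfrak{Op}^A(c)=\mathfrak{Op}^A(d)$ with $d\in S^{0}$, which is $L^2$-bounded by Calder\'on--Vaillancourt; combining this with the definition of the $\mathcal{H}^{s-m}_A$-norm yields the desired estimate. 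To recover negative-order targets, I pass to the Hilbert adjoint: $\mathfrak{Op}^A(a)^*=\mathfrak{Op}^A(\bar a)\in\mathbb{B}(\mathcal{H}^{m-s}_A;\mathcal{H}^{-s}_A)$ by the already established case, and then duality gives $\mathfrak{Op}^A(a)\in\mathbb{B}(\mathcal{H}^{s}_A;\mathcal{H}^{s-m}_A)$ for every $s\in\mathbb{R}$.

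For part (2), I first note that for real $a$ the kernel $K^w_a(x,y)$ computed in \eqref{1.18} satisfies $K^w_a(y,x)=\overline{K^w_a(x,y)}$, since the symbol is evaluated at the symmetric midpoint and $\omega^A(y,x)=\overline{\omega^A(x,y)}$; this makes $\mathfrak{Op}^A(a)$ symmetric on $\mathcal{S}(\mathbb{R}^d)$ and by density on $\mathcal{H}^m_A$. The operator with domain $\mathcal{H}^m_A$ is closed, being the restriction to $L^2$ of the continuous map $\mathcal{H}^m_A\to L^2$ supplied by part (1). It remains to identify its adjoint's domain. The key input is \emph{elliptic regularity}: since $a$ is elliptic of order $m$, by the standard parametrix construction within the magnetic calculus there exists $b\in S^{-m}$ with $\mathfrak{Op}^A(b)\mathfrak{Op}^A(a)=I+R$, $R$ smoothing. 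Hence if $u,\mathfrak{Op}^A(a)u\in L^2$, then $u=\mathfrak{Op}^A(b)\mathfrak{Op}^A(a)u-Ru$, where the first term lies in $\mathcal{H}^m_A$ by part (1) applied to $b\in S^{-m}$ and the second in $\mathcal{H}^\infty_A$; therefore $u\in\mathcal{H}^m_A$. This shows $\mathrm{Dom}\bigl(\mathfrak{Op}^A(a)^*\bigr)\subseteq\mathcal{H}^m_A$, and the reverse inclusion is immediate from symmetry, so the operator is self-adjoint.

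The main obstacle I anticipate is building the parametrix for $p_s$ and for the elliptic $a$ with full control of the error in $S^{-\infty}$; this requires the full force of the composition asymptotic $a\dB b\sim\sum_\alpha\cdots$ together with Lemma \ref{L1.3}, and making sure that the remainders, although non-local (because of the magnetic factors $\omega^A$ and $\Omega^B$), genuinely act as smoothing operators between every pair of magnetic Sobolev spaces, which is precisely the content provided by Lemma \ref{L1.16}.
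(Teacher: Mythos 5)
The paper itself does not prove Lemma \ref{L1.18}; it is cited from \cite{IMP1}, so there is no internal proof to compare against, but your parametrix-plus-Calder\'on--Vaillancourt strategy is the natural one. There is nevertheless a genuine gap in part (1) when $s<0$. Your direct argument via $u=Q_s(P_su)-R_su$ is only available for $s\geq 0$ (for $s<0$ the norm on $\mathcal{H}^s_A$ is a dual norm, not $\|P_su\|$), and your duality step requires $\mathfrak{Op}^A(\bar a)\in\mathbb{B}(\mathcal{H}^{m-s}_A;\mathcal{H}^{-s}_A)$ to already be known, which the direct case provides only for source order $m-s\geq 0$, i.e.\ $s\leq m$. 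Together these cover $\{s\geq 0\}\cup\{s\leq m\}$; when $m\geq 0$ that is all of $\mathbb{R}$, but when $m<0$ the interval $m<s<0$ is left uncovered, and neither of your two cases applies there. This window genuinely needs a separate argument --- for instance, test $\mathfrak{Op}^A(a)u$ and $P_{s-m}\mathfrak{Op}^A(a)u$ against $\varphi\in\mathcal{S}(\mathbb{R}^d)$, move the operators onto $\varphi$, and note that $\mathfrak{Op}^A(\bar a)$ and $\mathfrak{Op}^A(\bar a\dB p_{s-m})$ both lie in $\mathbb{B}\bigl(L^2(\mathbb{R}^d);\mathcal{H}^{-s}_A\bigr)$ by the already settled $s\geq0$ case, since $\bar a\in S^m$ with $m\leq s<0$ and $\bar a\dB p_{s-m}\in S^{s}$ with $s<0$.

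Two secondary points. The assertion that a smoothing magnetic $\Psi$DO maps $\mathcal{S}'(\mathbb{R}^d)$ continuously into every $\mathcal{H}^\sigma_A$ is false: with $u\equiv 1$ the output is a bounded $C^\infty$ function that is generically not in $L^2$. What Lemma \ref{L1.16} together with Calder\'on--Vaillancourt actually yields is boundedness $L^2\to\mathcal{H}^\sigma_A$ for every $\sigma\geq0$, and then $\mathcal{H}^{\sigma}_A\to\mathcal{H}^{\sigma'}_A$ for all $\sigma,\sigma'$ after one duality step; used in this corrected form it still disposes of the remainder terms $R_s$, $R_s'$. In part (2), the claim that the operator with domain $\mathcal{H}^m_A$ is ``closed, being the restriction of a continuous map'' is unjustified: part (1) gives $\|\mathfrak{Op}^A(a)u\|_{L^2}\lesssim\|u\|_{\mathcal{H}^m_A}$, while closedness amounts to the converse a priori estimate, which itself comes from ellipticity via the parametrix. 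Fortunately you do not need closedness as a separate step: once you have symmetry together with the elliptic-regularity identification of the adjoint's domain with $\mathcal{H}^m_A$, self-adjointness --- and therefore closedness --- follows automatically.
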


We shall also need a criterion in terms of commutators for an operator in $L^2(\mathbb{R}^d)$ to be a magnetic $\Psi$DO with symbol of class $S^m_0(\mathbb{R}^d)$ (see Theorem 5.20 in \cite{IMP2}). Let us recall that in fact any tempered distribution $a\in\mathcal{S}^\prime(\mathbb{R}^d\times\mathbb{R}^d)$ may be considered the 'symbol' of a magnetic pseudodifferential operator $\mathfrak{Op}^A(a)\in\mathbb{B}\big(\mathcal{S}(\mathbb{R}^d);\mathcal{S}^\prime(\mathbb{R}^d)\big)$, so that the map $\mathfrak{Op}^A:\mathcal{S}^\prime(\mathbb{R}^d\times\mathbb{R}^d)\rightarrow\mathbb{B}\big(\mathcal{S}(\mathbb{R}^d);\mathcal{S}^\prime(\mathbb{R}^d)\big)$ to become a topological isomorphism. We shall denote by 
$$
\mathcal{B}^B:=\left[\mathfrak{Op}^A\right]^{-1}\big[\mathbb{B}\big(L^2(\mathbb{R}^d)\big)
$$
and endow it with the norm $\|a\|_{\mathcal{B}^B}:=\left\|\mathfrak{Op}^A(a)\right\|_{\mathbb{B}\big(L^2(\mathbb{R}^d)\big)}$.

Suppose given $a$ and $b$ in $\mathcal{S}^\prime(\mathbb{R}^d\times\mathbb{R}^d)$ such that $\mathfrak{Op}^A(b)$ belongs to $\mathbb{B}\big(\mathcal{S}(\mathbb{R}^d)\big)\bigcap\mathbb{B}\big(\mathcal{S}^\prime(\mathbb{R}^d)\big)$. Then the operator $\mathfrak{Op}^A(b)\mathfrak{Op}^A(a)$ and the commutator $\mathfrak{ad}_{\mathfrak{Op}^A(b)}\big(\mathfrak{Op}^A(a)\big)=\mathfrak{Op}^A(b)\mathfrak{Op}^A(a)-\mathfrak{Op}^A(a)\mathfrak{Op}^A(b)$ are well defined, the first having the symbol $b\sharp^B a$ and the second having the symbol ${\rm ad}^B_b(a):=b\sharp^B a-a\sharp^B b$. In particular, if $X=(x,\xi)\in\mathbb{R}^d\times\mathbb{R}^d$ and $l_X:\mathbb{R}^d\rightarrow\mathbb{R}$ is given by $l_X(Y):=<\xi,y>-<\eta,x>$ for any $Y=(y,\eta)\in\mathbb{R}^d\times\mathbb{R}^d$, we denote by $\text{\rm ad}^B_X[a]:=\text{\rm ad}^B_{l_X}(a)$, $a\in \mathcal{S}^\prime(\mathbb{R}^d\times\mathbb{R}^d)$. The following theorem is proved in \cite{IMP2}.
\begin{theorem}\label{T1.19}
 Suppose $a\in \mathcal{S}^\prime(\mathbb{R}^d\times\mathbb{R}^d)$ and let us fix $m\leq0$. Then $a\in S^m_0(\mathbb{R}^d)$ if and only if for any $N\in\mathbb{N}$ and for any set of $N$ points $X_1,\ldots,X_N$ in $\mathbb{R}^d\times\mathbb{R}^d$ we have that
$$
p_{-m}\sharp^B\left(\text{\rm ad}^B_{X_1}\cdots\text{\rm ad}^B_{X_N}[a]\right)\in\mathcal{B}^B.
$$Moreover, the following two families of seminorms:
\begin{itemize}
 \item $\left\{\left\|p_{-m}\partial_x^\alpha\partial_\xi^\beta a\right\|_\infty,\ (\alpha,\beta)\in[\mathbb{N}^d]^2\right\}$;
\item $\|a\|_{m,X_1,\ldots,X_N}:=\underset{0\leq k\leq N}{\max}\underset{\{j_1,\ldots,j_k\}\subset\{1,2,\ldots,N\}}{\max}\left\|p_{-m}\sharp^B\left(\text{\rm ad}^B_{X_{j_1}}\cdots\text{\rm ad}^B_{X_{j_k}}[a]\right)\right\|_{\mathcal{B}^B}$ indexed by $N\in\mathbb{N}$ and $X_1,\ldots,X_N\in\mathbb{R}^d\times\mathbb{R}^d$,
\end{itemize}
define equivalent topologies on $S^m_0(\mathbb{R}^d)$.
\end{theorem}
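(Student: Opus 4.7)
The plan is to adapt the Beals-type commutator characterization of pseudodifferential operators to the magnetic Weyl calculus. I would handle the two implications separately and then deduce the equivalence of seminorms.

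For the forward direction, the starting point is an explicit computation of the magnetic Moyal bracket of a linear form $l_X$ with an arbitrary symbol $a$. Since $l_X$ is affine in $(y,\eta)$, the oscillatory-integral formula for $\sharp^B$ reduces to a finite Taylor expansion and yields an identity of the schematic form
$$
\text{ad}^B_X[a] \;=\; -i\{l_X, a\} + R^B_X[a],
$$
where $R^B_X[a]$ is a finite sum of products of first-order derivatives of $a$ with components of $B$ and of its derivatives. Because $B_{jk}\in BC^\infty(\mathbb{R}^d)$, the operator $\text{ad}^B_X$ preserves the class $S^m_0(\mathbb{R}^d)$. Iterating, $\text{ad}^B_{X_1}\cdots\text{ad}^B_{X_N}[a]\in S^m_0$, so composing with $p_{-m}$ by $\sharp^B$ lands in $S^0_0$; the magnetic Calderón--Vaillancourt theorem established in \cite{IMP1} then shows that the corresponding operator belongs to $\mathbb{B}(L^2(\mathbb{R}^d))$, i.e.\ the symbol belongs to $\mathcal{B}^B$.

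The converse is the substantive step. The strategy is an induction on $N$ recovering pointwise bounds on $p_{-m}\,\partial_x^\alpha\partial_\xi^\beta a$ from operator-norm bounds on iterated commutators. Using the formula above, one decomposes
$$
\text{ad}^B_{X_1}\cdots\text{ad}^B_{X_N}[a] \;=\; (-i)^N\{l_{X_1},\{\ldots,\{l_{X_N},a\}\ldots\}\} + \text{(terms of lower commutator-depth)},
$$
so that the leading part is a linear combination of derivatives of $a$, while the correction terms are already controlled by the inductive hypothesis (they involve fewer copies of $\text{ad}^B$, times bounded derivatives of $B$). To pass from an operator-norm bound to a pointwise bound on a symbol, I would test against a family of coherent states (Gaussian wave packets) localized at an arbitrary phase-space point $X_0$; a standard calculation shows $|a(X_0)|$ is dominated by the operator norm of $\mathfrak{Op}^A(a)$ up to a local average that can be absorbed into seminorms already controlled inductively. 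Combined with the Calderón--Vaillancourt-type inequalities, this delivers the required estimates $\big\|p_{-m}\partial_x^\alpha\partial_\xi^\beta a\big\|_\infty \leq C_{\alpha,\beta}$.

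The main obstacle is precisely this inductive step: in the case $B=0$, the identity $i[\mathrm{Op}^w(l_X),\mathrm{Op}^w(a)]=\mathrm{Op}^w(\{l_X,a\})$ is exact and the recovery of derivatives is essentially algebraic; in the magnetic setting the perturbation $R^B_X[a]$ must be threaded through the induction, and one must ensure that the bounds on iterated commutators from the hypothesis are strong enough to absorb both the principal symbol estimates and these lower-order magnetic corrections simultaneously. Once both inclusions are established, the equivalence of the two families of seminorms follows from the closed graph theorem, since both systems endow $S^m_0(\mathbb{R}^d)$ with a Fréchet structure and the inclusions are continuous in both directions.
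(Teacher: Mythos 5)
The paper does not prove Theorem~\ref{T1.19}: it is stated with the explicit note that it ``is proved in \cite{IMP2},'' so there is no in-house argument to compare against. What can be said is whether your outline is a plausible reconstruction of a proof in the Bony/Beals spirit, which is the approach \cite{IMP2} indeed takes (the paper itself mentions that \cite{IMP2} works ``in the spirit of Bony \cite{Bo}'').

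Your forward direction is essentially correct: for $l_X$ affine the magnetic Moyal product $l_X\,\sharp^B\,a$ truncates, $\text{\rm ad}^B_X$ maps $S^m_0$ into itself (with the magnetic corrections controlled because $B_{jk}\in BC^\infty$), and then magnetic Calder\'on--Vaillancourt gives boundedness. The converse, however, is where the real content lies, and your sketch leaves the hardest step vague. Testing $\mathfrak{Op}^A(a)$ against Gaussian coherent states gives you control over a \emph{local phase-space average} of $a$, not $a(X_0)$ itself; upgrading an averaged bound to a pointwise bound on $a$ and on every derivative $\partial_x^\alpha\partial_\xi^\beta a$ is exactly the delicate part, and you assert it rather than explain it. Also, the statement involves $p_{-m}\,\sharp^B(\cdot)$ with $m\le0$, so $p_{-m}$ is in general unbounded of positive order; you need to control composition with this weight on the left at every stage, which does not simply commute with the coherent-state localization. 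A further subtlety is gauge: the coherent states and the operator $\mathfrak{Op}^A$ both depend on the chosen vector potential $A$, while the conclusion is a gauge-invariant bound on the symbol; one must check that the testing argument is stable under $A\mapsto A+\d\varphi$. Finally, invoking the closed graph theorem to get equivalence of seminorms presupposes that $S^m_0$ is complete for the commutator seminorm family --- that is itself part of what must be established, not a free consequence. None of these are fatal objections, but as written the converse direction is a plan rather than a proof, and the specific mechanism by which iterated commutator norms isolate each individual derivative of $a$ is left entirely implicit.
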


\subsection{The Generating Function of a Canonical Transformation}

Let $\Xi\equiv\mathbb{T}^*\mathcal{X}\equiv\Rd\times\Rd$ with points denoted using the convention $X=(x,\xi), Y=(y,\eta),\ldots$, and with the canonical 2-form $\sigma_X:=\underset{1\leq j\leq d}{\sum}d\xi_j\wedge dx_j$. We shall some times use the same notation for the canonical symplectic form raised on $\mathbb{T}^*\mathbb{R}^{2d}$, coresponding to the canonical projection $\mathbb{R}^{2d}\rightarrow\Rd$.

\begin{definition}\label{D1.20}
 
\begin{itemize}
 \item We call {\it a canonical relation} from $\mathbb{T}^*\Rd$ to $\mathbb{T}^*\Rd$, any $C^\infty$-submanifold $\Lambda\subset\mathbb{T}^*\mathbb{R}^{2d}$ of dimension $2d$ on which $\sigma_X-\sigma_Y=0$.
\item If the submanifold $\Lambda$ of the previous point is also a conic submanifold of $\left(\mathbb{T}^*\Rd\setminus0\right)\times\left(\mathbb{T}^*\Rd\setminus0\right)$ (with $0$ the zero section of $\mathbb{T}^*\Rd$ and conic meaning invariant under the multiplication with strictly positive scalars on the fibres of $\mathbb{T}^*\mathbb{R}^{2d}$), we shall call it {\it a homogeneous canonical relation} from $\mathbb{T}^*\Rd\setminus0$ to $\mathbb{T}^*\Rd\setminus0$.
\end{itemize}
\end{definition}
Let us make some coments about the previous definition. First we observe that we can define the following canonical projections
$$
\Pi_1:\mathbb{T}^*\mathbb{R}^{2d}\ni(X,Y)\mapsto X\in\mathbb{T}^*\Rd,\quad\Pi_2:\mathbb{T}^*\mathbb{R}^{2d}\ni(X,Y)\mapsto Y\in\mathbb{T}^*\Rd,
$$
so that in fact $\sigma_X-\sigma_Y\equiv\Pi_1^*\sigma-\Pi_2^*\sigma$. Moreover we remark that our definition is equivalent with asking that 
\begin{equation}\label{1.prime}
\Lambda^\prime:=\left\{(x,\xi,y,\eta)\in\mathbb{T}^*\mathbb{R}^{2d}\,\mid\,(x,\xi,y,-\eta)\in\Lambda\right\}
\end{equation}
is a Lagrangean submanifold of $\mathbb{T}^*\mathbb{R}^{2d}$.

\paragraph{Examples} 

\begin{enumerate}
 \item Suppose $\Phi:\left(\mathbb{T}^*\Rd,\sigma\right)\rightarrow\left(\mathbb{T}^*\Rd,\sigma\right)$ is a symplectomorphism (i.e. a $C^\infty$-diffeomorphism between the given symplectic spaces verifying $\Phi^*\big(\sigma_X\big)=\sigma_{\Phi^{-1}(X)}$). Then $\Lambda:=\text{\sf graph}\Phi$ is a canonical relation from $\mathbb{T}^*\Rd$ to $\mathbb{T}^*\Rd$. (We write $\text{\sf graph}\Phi=\left\{(\Phi(Y),Y)\mid Y\in\mathbb{T}^*\Rd\right\}$).
\item Suppose now that $\Phi:\mathbb{T}^*\Rd\setminus0\rightarrow\mathbb{T}^*\Rd\setminus0$ is a $C^\infty$-diffeomorphism such that $\Phi^*\big(\sigma_X\big)=\sigma_{\Phi^{-1}(X)}$ and homogeneous (i.e. commuting with the natural action of $\mathbb{R}^*_+$ on the fibres of $\mathbb{T}^*\Rd$, that means that for $X=\Phi(Y)$ with $x=\Phi_1(y,\eta)$ and $\xi=\Phi_2(y,\eta)$ we have that for any $(y,\eta)\in\mathbb{T}^*\Rd\setminus0$ and any $\lambda\in\mathbb{R}^*_+$, $\Phi_1(y,\lambda\eta)=\Phi_1(y,\eta)$ and $\Phi_2(y,\lambda\eta)=\lambda\Phi_2(y,\eta)$), then $\Lambda:=\text{\sf graph}\Phi$ is a homogeneous canonical relation from $\mathbb{T}^*\Rd\setminus0$ to $\mathbb{T}^*\Rd\setminus0$.
\item Let $U$ be a real valued map in $C^\infty(\Rd\times\Rd)$. Then the set
\begin{equation}\label{1.22}
 \Lambda:=\left\{\left(x,\big(\nabla_xU\big)(x,\eta),\big(\nabla_\eta U\big)(x,\eta),\eta\right)\,\mid\,x\in\Rd,\eta\in\Rd\right\}
\end{equation}
is a canonical relation from $\mathbb{T}^*\Rd$ to $\mathbb{T}^*\Rd$. The function $U$ is called {\it the generating function} of $\Lambda$. It is evident that any other generating function of $\Lambda$ differs of $U$ only by an additive constant.
\item Let $U$ be a real valued map in $C^\infty\big(\Rd\times\left(\Rd\setminus\{0\}\right)\big)$ that is homogeneous of order 1 in the second variable (i.e. for any $x\in\Rd$, $\eta\in\Rd\setminus\{0\}$ and $\lambda\in\mathbb{R}^*_+$, we have $U(x,\lambda\eta)=\lambda U(x,\eta)$) and such that the matrix $\left(\nabla^2_{x,\eta}U\right)(x,\eta)$ is non-singular for any $x\in\Rd$ and $\eta\in\Rd\setminus\{0\}$; then
\begin{equation}\label{1.23}
 \Lambda:=\left\{\left(x,\big(\nabla_xU\big)(x,\eta),\big(\nabla_\eta U\big)(x,\eta),\eta\right)\,\mid\,x\in\Rd,\eta\in\Rd\setminus\{0\}\right\}
\end{equation}
is a homogeneous canonical relation from $\mathbb{T}^*\Rd\setminus0$ to $\mathbb{T}^*\Rd\setminus0$. Such a function $U$ is called {\it a homogeneous generating function} for $\Lambda$ and it is unique.
\end{enumerate}
Let us remark that in Example 4, the submanifold $\Lambda$ is in fact contained in $\mathbb{T}^*\Rd\setminus0\times\mathbb{T}^*\Rd\setminus0$; suppose $x_0\in\Rd$, $\eta_0\in\Rd\setminus\{0\}$, then $\big(\nabla_xU\big)(x_0,\eta_0)=0$ would imply $\left(\nabla^2_{x,\eta}U\right)(x_0,\eta_0)\eta_0=0$ ($U$ being homogeneous of order 1 in the second variable) that is false by hypothesis.

\begin{definition}\label{D1.23}
 If a $C^\infty$-submanifold $\Lambda\subset\mathbb{T}^*\mathbb{R}^{2d}$ verifies simultaneously the conditions in Examples 1 and 3 (resp. 2 and 4) we say that $U$ is the {\it generating function of the symplectomorphism} $\Phi$.
\end{definition}

\begin{lemma}\label{L1.24}
 
\begin{enumerate}
 \item Suppose $U:\Rd\times\Rd\rightarrow\mathbb{R}$ is a $C^\infty$ function such that 
$$
\underset{x\in\Rd,\eta\in\Rd}{\inf}\left\|\left(\nabla^2_{x,\eta}U\right)(x,\eta)\right\|>0.
$$
Then $U$ is the generating function of a symplectomorphism $\Phi\equiv(\Phi_1,\Phi_2):\mathbb{T}^*\Rd\rightarrow\mathbb{T}^*\Rd$.
\item Reciprocally, if $\Phi\equiv(\Phi_1,\Phi_2):\mathbb{T}^*\Rd\rightarrow\mathbb{T}^*\Rd$ is a symplectomorphism verifying
$$
\underset{y\in\Rd,\eta\in\Rd}{\inf}\left\|\left(\nabla_{y}\Phi_1\right)(y,\eta)\right\|>0
$$
then $\Phi$ admits a generating function.
\end{enumerate}
\end{lemma}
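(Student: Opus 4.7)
I would construct $\Phi$ by inverting the equation $y = (\nabla_\eta U)(x,\eta)$ for $x$ as a function of $(y,\eta)$. For each fixed $\eta$, the map $F_\eta : x \mapsto (\nabla_\eta U)(x,\eta)$ has Jacobian $\bigl(\nabla^2_{x,\eta}U\bigr)^{\top}$, which is uniformly invertible with uniformly bounded inverse by the hypothesis (interpreted, as in the proof of Lemma~\ref{L1.5}, as a uniform lower bound on the smallest singular value). The Hadamard--L\'evy global inversion theorem --- the same ``Schwartz global inversion'' invoked in Lemma~\ref{L1.5} --- then shows that $F_\eta$ is a $C^\infty$-diffeomorphism of $\mathbb{R}^d$, with inverse $X(y,\eta)$ jointly smooth in $(y,\eta)$ by the implicit function theorem. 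I would then set
$$
\Phi_1(y,\eta) := X(y,\eta), \qquad \Phi_2(y,\eta) := (\nabla_x U)\bigl(X(y,\eta),\eta\bigr).
$$
A symmetric argument, inverting instead $\xi = (\nabla_x U)(x,\eta)$ in $\eta$, produces a smooth inverse to $\Phi$, so $\Phi$ is a $C^\infty$-diffeomorphism of $\mathbb{T}^*\mathbb{R}^d$. To see $\Phi$ is symplectic it suffices, in view of the definition via \eqref{1.prime}, to verify $\sigma_X - \sigma_Y = 0$ on $\Lambda := \text{graph}(\Phi)$. Parametrizing $\Lambda$ by $(x,\eta)$ via \eqref{1.22} gives
$$
\sigma_X\big|_\Lambda - \sigma_Y\big|_\Lambda
= d(\nabla_x U)\wedge dx - d\eta\wedge d(\nabla_\eta U)
= d\bigl((\nabla_x U)\,dx + (\nabla_\eta U)\,d\eta\bigr)
= d(dU) = 0,
$$
which is exactly $\Phi^*\sigma = \sigma$.

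\textbf{Part 2.} By hypothesis $\nabla_y\Phi_1$ is uniformly invertible, so for each fixed $\eta$ the same global inversion theorem gives a $C^\infty$-diffeomorphism $y = Y(x,\eta)$ inverse to $y\mapsto\Phi_1(y,\eta)$, smooth in $(x,\eta)$. Setting $\Xi(x,\eta) := \Phi_2(Y(x,\eta),\eta)$, a generating function $U$ for $\Phi$ is by definition any smooth function on $\mathbb{R}^{2d}$ with $\nabla_x U = \Xi$ and $\nabla_\eta U = Y$. Such a $U$ exists globally iff the 1-form
$$
\alpha := \Xi(x,\eta)\,dx + Y(x,\eta)\,d\eta
$$
on the contractible space $\mathbb{R}^{2d}$ is closed, by the Poincar\'e lemma. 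Now $d\alpha = d\Xi\wedge dx - d\eta\wedge dY$ is precisely the pull-back of $\sigma_X - \sigma_Y$ to $\Lambda = \text{graph}(\Phi)$ via the parametrization $(x,\eta)\mapsto(x,\Xi(x,\eta),Y(x,\eta),\eta)$. This pull-back vanishes because $\Phi^*\sigma = \sigma$ is equivalent, after changing coordinates on $\Lambda$ from $(y,\eta)$ to $(x,\eta)$ via the diffeomorphism above, to $\sigma_X - \sigma_Y = 0$ on $\Lambda$.

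\textbf{Main obstacle.} The only non-formal step is the passage from local to global inversion in both parts: local inversion is immediate from the hypothesis on the relevant Jacobian, but producing a single $C^\infty$-diffeomorphism of $\mathbb{R}^d$ onto $\mathbb{R}^d$ requires the \emph{uniform} bound on the inverse Jacobian, which is what Hadamard--L\'evy exploits. Once that is in hand the remainder is essentially bookkeeping: the symmetry of $\nabla^2 U$ delivers $d(dU)=0$ in part~(1), while in part~(2) the symplectic condition $\Phi^*\sigma=\sigma$ is precisely the closedness of $\alpha$, after which the Poincar\'e lemma on $\mathbb{R}^{2d}$ concludes the proof.
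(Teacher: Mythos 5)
Your proposal is correct and follows essentially the same route as the paper: both parts invoke the Schwartz (Hadamard--L\'evy) global inversion theorem to solve the relevant implicit equation, define $\Phi_1,\Phi_2$ (resp. $f,g$) as the paper does, verify the symplectic condition in Part~1 by observing $\sigma_X-\sigma_Y=d(dU)=0$ on $\text{\sf graph}\,\Phi$, and in Part~2 reduce to closedness of the 1-form $\Xi\,dx+Y\,d\eta$ followed by the Poincar\'e lemma. The one small addition you make --- explicitly constructing $\Phi^{-1}$ by inverting $\xi=(\nabla_xU)(x,\eta)$ in $\eta$, which is licit since $\nabla^2_{\eta,x}U=(\nabla^2_{x,\eta}U)^\top$ is also uniformly invertible --- fills in a step the paper leaves as ``it is easy to prove that $\Phi$ is a $C^\infty$-diffeomorphism,'' but does not change the strategy.
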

\begin{proof}
 
\noindent
{\it 1.} Using the Schwartz global inversion theorem, the equality $y=\big(\nabla_\eta U\big)(x,\eta)$ defines a unique implicit function $x=\Phi_1(y,\eta)$, $\Phi_1\in C^\infty(\mathbb{T}^*\Rd)$ such that $y=\big(\nabla_\eta U\big)(\Phi_1(y,\eta),\eta)$.

Then let us define $\Phi_2(y,\eta):=\big(\nabla_xU\big)(\Phi_1(y,\eta),\eta)$, that will have components of class $C^\infty(\mathbb{T}^*\Rd)$. Due to our hypothesis it is easy to prove that $\Phi:=(\Phi_1,\Phi_2):\mathbb{T}^*\Rd\rightarrow\mathbb{T}^*\Rd$ is a $C^\infty$-diffeomorphism. Moreover with these definitions we have
$$
\left\{\big(\Phi_1(y,\eta),\Phi_2(y,\eta),y,\eta\big)\right\}=\left\{(x,\big(\nabla_xU\big)(x,\eta),\big(\nabla_\eta U\big)(x,\eta),\eta)\right\}=\Lambda
$$
of the example 3. Thus $\Lambda$ is a canonical relation and this implies that $\Phi$ is a symplectomorphism.
 
\noindent
{\it 2.} Using once again the global inversion theorem, we deduce that the equality $\Phi_1(y,\eta)=x$ defines an implicit $C^\infty$ function $y=f(x,\eta)$. Let us denote $g(x,\eta):=\Phi_2\big(f(x,\eta),\eta\big)$, that will have components of class $C^\infty$.
We have to find a real function $U\in C^\infty(\Rd\times\Rd)$ verifying
$$
\nabla_xU=g,\quad\nabla_\eta U=f,\quad\text{on}\ \Rd\times\Rd,
$$
or equivalently
$$
\d U=\sum_{1\leq j \leq d}\left\{g_jdx_j+f_jd\eta_j\right\},\quad\text{on}\ \Rd\times\Rd.
$$
But $\Phi$ being a symplectomorphism, $\text{\sf graph}\Phi=\left\{\left(x,g(x,\eta),f(x,\eta),\eta\right)\,\mid\,x\in\Rd,\eta\in\Rd\right\}
$ is a canonical relation, so that the right-hand side above is a closed 1-form and thus the Poincar\'{e} Lemma finishes the proof of the existence of $U$. In fact, the above manifold being a canonical relation we have
$$
0=\sum_{1\leq j \leq d}\left[\d g_j\wedge dx_j-\eta_j\wedge\d f_j\right],
$$
and thus
$$
\d\left\{\sum_{1\leq j \leq d}\left[g_jdx_j+f_jd\eta_j\right]\right\}=\sum_{1\leq j \leq d}\left[\d g_j\wedge dx_j+\d f_j\wedge \eta_j\right]=\sum_{1\leq j \leq d}\left[\d g_j\wedge dx_j-\eta_j\wedge\d f_j\right]=0.
$$
\end{proof}

\subsubsection{The Hamiltonian system}

Let us start with a real elliptic symbol $a\in S^m(\mathbb{R}^d)$, for $0<m\leq1$ to which we associate the following Hamiltonian system:
\begin{equation}\label{1.24}
 \left\{
\begin{array}{rcl}
 \dot{X}(t)&=&H_a\big(X(t)\big),\quad\forall t\in\mathbb{R},\\
&&\\
 X(0)&=&Y\in\mathbb{T}^*\mathbb{R}^d,
\end{array}
\right.
\end{equation}
where we have used the notations $Y:=(y,\eta)$, $X(t):=\big(x(t),\xi(t)\big)$, and $H_a:=\big(\nabla_\xi a,-\nabla_x a\big)$ (the Hamiltonian field associated to $a$ by the symplectic form $\sigma$). It is wellknown that for any $Y\in\mathbb{T}^*\mathbb{R}^d$ the system \eqref{1.24} has a unique global solution $X(t;Y)$ and the map 
$$
\mathbb{R}\times\mathbb{T}^*\mathbb{R}^d\ni(t,Y)\mapsto X(t;Y)\in\mathbb{T}^*\mathbb{R}^d
$$
is of class $C^\infty$ (see for example \cite{Ro,DG,IMP2}). The Hamiltonian flow of $a$ defined as:
$$
\Phi_t:\mathbb{T}^*\mathbb{R}^d\rightarrow\mathbb{T}^*\mathbb{R}^d,\quad\Phi_t(Y):=X(t;Y),
$$
is a symplectomorphism for any $t\in\mathbb{R}$.

\begin{lemma}\label{L1.25}
 The solution $X(t;Y)$ above has the properties:
\begin{enumerate}
 \item $a\big(X(t;Y)\big)=a(Y)$ for any $t\in\mathbb{R}$ and any $Y\in\mathbb{T}^*\mathbb{R}^d$.
\item There exists a constant $C>0$ such that $\left(<\xi(t;Y)>/<\eta>\right)^{\pm1}\leq C$, for any $t\in\mathbb{R}$ and any $Y\in\mathbb{T}^*\mathbb{R}^d$.
\item there exists two functions $b\in C^\infty\big(\mathbb{R};S^0(\mathbb{R}^d)\big)$ and $c\in C^\infty\big(\mathbb{R};S^1(\mathbb{R}^d)\big)$ such that
$$
x(t;Y)=y+tb(t;Y),\ \xi(t;Y)=\eta+tc(t;Y),\ \forall t\in\mathbb{R},\ \forall Y\in\mathbb{T}^*\mathbb{R}^d.
$$
\end{enumerate}
\end{lemma}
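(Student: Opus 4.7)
My plan is to establish the three items in order, with (1) feeding into (2) and both feeding into (3). For (1), the direct computation $\frac{d}{dt} a(X(t;Y)) = \langle \nabla_x a, \nabla_\xi a\rangle - \langle \nabla_\xi a, \nabla_x a\rangle = 0$ shows that $a$ is a first integral of the flow of $H_a$, so $a(X(t;Y)) \equiv a(Y)$. For (2), I would combine (1) with the two-sided control coming from $a \in S^m$ and its ellipticity: the upper bound $|a(X)| \leq C_1 \langle \xi \rangle^m$ and the lower bound $|a(X)| \geq C_2 \langle \xi \rangle^m$ (valid for $|\xi| \geq R$), applied to both sides of $a(X(t;Y)) = a(Y)$, force $\langle \xi(t;Y) \rangle$ and $\langle \eta\rangle$ to be comparable as soon as one of them exceeds $R$; the complementary compact region is absorbed into the constant using $\langle \cdot \rangle \geq 1$ together with the $C^\infty$ dependence of the flow on initial data.

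For (3), I would integrate the Hamilton equations $\dot x = \nabla_\xi a$, $\dot\xi = -\nabla_x a$ and rescale the integration variable via $s = t\sigma$ to define
\[
b(t;Y) := \int_0^1 (\nabla_\xi a)(X(t\sigma;Y))\,d\sigma, \qquad c(t;Y) := -\int_0^1 (\nabla_x a)(X(t\sigma;Y))\,d\sigma,
\]
so that $x(t;Y) = y + tb(t;Y)$ and $\xi(t;Y) = \eta + tc(t;Y)$, with $b,c$ jointly $C^\infty$ in $(t,Y)$ by the smooth dependence of the flow on initial data. Since $m \leq 1$, the integrands lie in $S^{m-1} \subset S^0$ and $S^m \subset S^1$ respectively, and invoking (2) yields the zeroth-order bounds $|b(t;Y)| \leq C$ and $|c(t;Y)| \leq C\langle\eta\rangle$ immediately.

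The hard part will be obtaining the symbol seminorm estimates on all higher $Y$-derivatives of $b$ and $c$, uniformly for $t$ in compact sets. I would prove by induction on $|\alpha| + |\beta|$ the bounds
\[
\bigl|\partial_y^\alpha \partial_\eta^\beta (x(t;Y) - y)\bigr| \leq C_{\alpha\beta}\langle\eta\rangle^{-|\beta|}, \qquad \bigl|\partial_y^\alpha \partial_\eta^\beta (\xi(t;Y) - \eta)\bigr| \leq C_{\alpha\beta}\langle\eta\rangle^{1-|\beta|}.
\]
The base step comes from the variational ODE $\dot\Psi = M(t;Y)\Psi$, $\Psi(0) = I_{2d}$, satisfied by the Jacobian $\Psi(t;Y) = \partial_Y X(t;Y)$, whose coefficient matrix $M$ has entries controlled by the second derivatives of $a$; these symbol estimates combined with (2) yield the desired block-structured bounds on $\Psi$ via a Gronwall argument. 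Higher derivatives solve analogous linear systems with source terms polynomial in lower-order derivatives, so the induction closes. Substituting these estimates into the multivariable chain rule expansions of $\partial_y^\alpha \partial_\eta^\beta b$ and $\partial_y^\alpha \partial_\eta^\beta c$, together with the symbol bounds on the higher derivatives of $\nabla_\xi a$ and $\nabla_x a$, delivers the required seminorm bounds, while smoothness in $t$ follows by differentiating under the integral sign.
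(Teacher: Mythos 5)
Your proposal is correct and follows essentially the same route as the paper: the rescaling $s = t\sigma$ in your proof of (3) is exactly the paper's key identity $\int_0^t\omega(s;Y)\,ds = t\int_0^1\omega(ts;Y)\,ds$, and the block-structured derivative bounds you propose to prove inductively via the variational equation and Gronwall are precisely the inequality \eqref{1.26} that the paper simply cites from \cite{IMP2} (as it does for points (1) and (2)). One small caveat in your (2): smooth dependence on initial data does not by itself give bounds uniform over all $t\in\mathbb{R}$ in the compact region; the correct fix is to reapply conservation of $a$ together with ellipticity to show that boundedness of $<\eta>$ forces $<\xi(t;Y)>$ to stay bounded for all $t$, after which $<\cdot>\geq 1$ finishes.
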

\begin{proof}
 The first two properties are very easy and have been proved in \cite{IMP2}. For the third property we use another result of \cite{IMP2}, saying that:
$$
\forall T>0,\ \forall(\alpha,\beta)\in[\mathbb{N}^d]^2,\text{ with }|\alpha+\beta|\geq1,\ \exists C(T,\alpha,\beta)>0\text{ such that:}
$$
\begin{equation}\label{1.26}
 \left|\left(\partial^\alpha_y\partial^\beta_\eta x\right)(t;Y)\right|+<\eta>^{-1}\left|\left(\partial^\alpha_y\partial^\beta_\eta\xi\right)(t;Y)\right|\leq C(T,\alpha,\beta)<\eta>^{-|\beta|},\quad\forall Y\in\mathbb{T}^*\mathbb{R}^d,\ \forall t\in[-T,T].
\end{equation}
Taking into account point (2) of this Lemma, the above inequality \eqref{1.26} and the following identities:
\begin{equation}\label{1.27}
 \left\{
\begin{array}{rcl}
 x(t;Y)&=&y+\int_0^t\big(\nabla_\xi a\big)\big(x(s;Y),\xi(s;Y)\big)ds,\\
&&\\
\xi(t;Y)&=&\eta-\int_0^t\big(\nabla_x a\big)\big(x(s;Y),\xi(s;Y)\big)ds,
\end{array}
\right.
\end{equation}
the point (3) in our Lemma follows from the evident identity: $\int_0^t\omega(s;Y)ds=t\int_0^1\omega(ts;Y)ds$ valid for any $\omega\in \mathcal{C}^\infty(\mathbb{R};S^m(\mathbb{R}^d))$, $m\in\mathbb{R}$.
\end{proof}

\begin{remark}\label{R1.26}
 In the case when $a$ does not depend on $x$, the solution of the Hamiltonian system \eqref{1.24} is
\begin{equation}\label{1.28}
x(t;Y)=y+t\big(\nabla_\xi a\big)(\eta),\quad\xi(t;Y)=\eta.
\end{equation}
\end{remark}

\begin{hypothesis}\label{H1.27}
We shall consider now the case of a homogeneous symbol of degree 1. More precisely we shall suppose that $a\in C^\infty\big(\mathbb{R}^d\times\big(\mathbb{R}^d\setminus\{0\}\big)\big)$ is real and such that
$$
a(x,\lambda\xi)=\lambda a(x,\xi)\ \forall\lambda>0,\ \forall(x,\xi)\in \mathbb{R}^d\times\big(\mathbb{R}^d\setminus\{0\}\big),
$$
$$
\underset{x\in\mathbb{R}^d}{\inf}\underset{|\xi|=1}{\inf}a(x,\xi)>0,
$$
$$
\underset{x\in\mathbb{R}^d}{\sup}\underset{|\xi|=1}{\sup}\left|\left(\partial^\alpha_x\partial^\beta_\xi a\right)(x,\xi)\right|<\infty,\quad\forall(\alpha,\beta)\in [\mathbb{N}^d]^2.
$$
\end{hypothesis}
\begin{lemma}\label{L1.27}
 Under the above assumptions of Hypothesis \ref{H1.27} on the symbol $a$, we have that:
\begin{enumerate}
 \item For any point $Y\in\mathbb{T}^*\mathbb{R}^d\setminus0$ the Hamiltonian system \eqref{1.24} has a unique global solution $X(t;Y)=\big(x(t;Y),\xi(t,Y)\big)$ such that $\xi(t;Y)\ne0$ and the map 
$$
\mathbb{R}\times\mathbb{T}^*\mathbb{R}^d\setminus0\ni(t,Y)\mapsto X(t;Y)\in\mathbb{T}^*\mathbb{R}^d\setminus0
$$
is of class $C^\infty$ 
\item The map $X(t;Y)$ commutes with the action of $\mathbb{R}^*_+$ on the fibres of $\mathbb{T}^*\mathbb{R}^d\setminus0$, i.e. 
$$
x(t;y,\lambda\eta)=x(t;y,\eta),\ \xi(t;y,\lambda\eta)=\lambda\xi(t;y,\eta),\quad\forall\lambda>0,\ \forall(x,\xi)\in\mathbb{T}^*\mathbb{R}^d\setminus0.
$$
\item The map $\Phi_t:\mathbb{T}^*\mathbb{R}^d\setminus0\rightarrow\mathbb{T}^*\mathbb{R}^d\setminus0$ given by $\Phi_t(Y):=X(t;Y)$ is a symplectomorphism for any $t\in\mathbb{R}$.
\item $\forall T>0$, $\forall k\in\mathbb{N}$ and $\forall \alpha\in\mathbb{N}^{2d}$, the map $\partial_t^k\partial_Y^\alpha\left[X(t;Y)-Y\right]$ is bounded on $[-T,T]\times\mathbb{R}^d\times\{|\eta|=1\}$.
\end{enumerate}
\end{lemma}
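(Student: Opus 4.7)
The plan is to prove parts (1)--(3) by direct arguments using homogeneity and conservation of $a$ along the flow, and then to reduce part (4) to Lemma \ref{L1.25} via a cut-off construction that produces a genuine $S^1$-elliptic symbol agreeing with $a$ along the trajectories of interest. For part (1), the key observation is that Hypothesis \ref{H1.27} together with homogeneity of degree $1$ in $\xi$ yields a two-sided bound $c_0|\xi|\leq a(x,\xi)\leq C_0|\xi|$ uniformly in $x$, with $c_0,C_0>0$. Local existence of solutions to \eqref{1.24} starting from $Y\in\mathbb{T}^*\mathbb{R}^d\setminus0$ is standard since $H_a=(\nabla_\xi a,-\nabla_x a)$ is $C^\infty$ on the open set $\mathbb{T}^*\mathbb{R}^d\setminus0$. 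Conservation of $a$ along trajectories (since $\dot a(X(t))=\sigma(H_a,H_a)=0$) together with the two-sided bound gives $(c_0/C_0)|\eta|\leq|\xi(t;Y)|\leq(C_0/c_0)|\eta|$, so the trajectory stays in a bounded annulus away from the zero section, and the first factor $\nabla_\xi a$ (homogeneous of degree $0$, uniformly bounded on $\mathbb{R}^d\times\{|\xi|=1\}$ by the hypothesis) yields $|x(t;Y)-y|\leq C|t|$. This excludes finite-time blow-up and gives a unique global $C^\infty$ flow.

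For part (2), I would set $\tilde x(t):=x(t;y,\eta)$ and $\tilde\xi(t):=\lambda\xi(t;y,\eta)$ for $\lambda>0$, and check, using that $\nabla_\xi a$ is homogeneous of degree $0$ and $\nabla_x a$ of degree $1$ in $\xi$, that $(\tilde x,\tilde\xi)$ solves \eqref{1.24} with initial data $(y,\lambda\eta)$; uniqueness from (1) then identifies this pair with $(x(t;y,\lambda\eta),\xi(t;y,\lambda\eta))$. Part (3) is the standard fact that a Hamiltonian flow preserves $\sigma$: since $\mathcal{L}_{H_a}\sigma=d(\iota_{H_a}\sigma)=-d(da)=0$, one has $\Phi_t^*\sigma=\sigma$ on $\mathbb{T}^*\mathbb{R}^d\setminus0$.

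For part (4), I would fix $T>0$ and pick $\chi\in C^\infty(\mathbb{R}^d)$ with $\chi(\xi)=0$ for $|\xi|\leq c_0/(4C_0)$ and $\chi(\xi)=1$ for $|\xi|\geq c_0/(2C_0)$, and define $\tilde a(x,\xi):=\chi(\xi)a(x,\xi)$, extended by $0$ at $\xi=0$. From Hypothesis \ref{H1.27} and homogeneity one has $|\partial_x^\alpha\partial_\xi^\beta a(x,\xi)|\leq C_{\alpha\beta}|\xi|^{1-|\beta|}$ for $|\xi|$ away from zero, so Leibniz combined with the compact support of $\nabla\chi$ gives $\tilde a\in S^1(\mathbb{R}^d)$; for $|\xi|\geq 1$ we have $\tilde a=a\geq c_0|\xi|\geq(c_0/\sqrt{2})\langle\xi\rangle$, hence $\tilde a$ is real and elliptic. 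Lemma \ref{L1.25} then provides $\tilde b\in C^\infty(\mathbb{R};S^0)$ and $\tilde c\in C^\infty(\mathbb{R};S^1)$ with $\tilde X(t;Y)-Y=t(\tilde b(t;Y),\tilde c(t;Y))$. For any $Y$ with $|\eta|=1$, the a priori bound from (1) gives $|\xi(t;Y)|\geq c_0/C_0>c_0/(2C_0)$ throughout the trajectory, so $H_{\tilde a}=H_a$ in a neighborhood of the trajectory; by uniqueness of ODE solutions $\tilde X(t;Y)=X(t;Y)$ on $[-T,T]\times\mathbb{R}^d\times\{|\eta|=1\}$. Since the $S^0$- and $S^1$-seminorms of $\partial_t^k\tilde b(t;\cdot)$ and $\partial_t^k\tilde c(t;\cdot)$ depend continuously on $t$, hence are bounded on $[-T,T]$, and $\langle\eta\rangle$ is bounded on $\{|\eta|=1\}$, any $\partial_t^k\partial_Y^\alpha[X(t;Y)-Y]$ is bounded on the required set.

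The main technical obstacle is the cut-off reduction in part (4): one must verify that $\tilde a\in S^1(\mathbb{R}^d)$ is elliptic (which uses the uniform estimates of Hypothesis \ref{H1.27} combined with homogeneity) and, more crucially, that the trajectories of $a$ issued from $\{|\eta|=1\}$ stay strictly inside the region $\{\chi\equiv 1\}$ so that the flows of $a$ and $\tilde a$ coincide there. Both points rest on the two-sided homogeneity bound $c_0|\xi|\leq a(x,\xi)\leq C_0|\xi|$, which is the structural input that makes the whole reduction work.
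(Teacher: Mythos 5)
Parts 1--3 of your argument match the paper's proof in both idea and execution: conservation of $a$ along the flow plus the homogeneity bound $c_0|\xi|\le a(x,\xi)\le C_0|\xi|$ confines trajectories to a compact annulus in $\xi$, the degree-zero homogeneity of $\nabla_\xi a$ keeps $x$ bounded on finite time intervals, the rescaled-solution trick establishes Part~2, and Part~3 is standard.

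Part~4 is where you take a genuinely different route. The paper works directly from the integrated Hamilton equations \eqref{1.27}: boundedness of $\xi(t;Y)-\eta$ via a Gronwall inequality, then an induction on the derivative order. You instead truncate $\tilde a:=\chi\,a$ with $\chi$ supported in $\{|\xi|\ge c_0/(4C_0)\}$ and $\equiv 1$ on $\{|\xi|\ge c_0/(2C_0)\}$, check that $\tilde a$ is a real elliptic symbol of class $S^1(\mathbb{R}^d)$, and then import the bounds on all $t$- and $Y$-derivatives from part~(3) of Lemma~\ref{L1.25} applied to $\tilde a$, using the a priori estimate $|\xi(t;Y)|\ge(c_0/C_0)|\eta|$ to ensure that the flows of $a$ and $\tilde a$ coincide. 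This is exactly the type of cut-off comparison the paper itself carries out later in Lemma~\ref{L1.34} (there truncating at $|\xi|\ge 1$ and comparing at $|\eta|\ge R$); by choosing a threshold below $c_0/(4C_0)$ you arrange for the comparison to already hold on $|\eta|=1$, which lets you cite Lemma~\ref{L1.25} instead of re-running Gronwall plus induction. The trade-off is that you must verify the $S^1$ estimates and ellipticity of $\tilde a$ on the support of $\chi$, which you did. One point to tighten: to justify taking $\partial_Y^\alpha$ at $|\eta|=1$ you need the identity $X(t;Y)=\tilde X(t;Y)$ to hold on a full open neighborhood of $\{|\eta|=1\}$, not just on the sphere itself. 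This follows from the same estimate applied at nearby $\eta$: for $|\eta|>1/2$ one still has $|\xi(t;Y)|\ge(c_0/C_0)|\eta|>c_0/(2C_0)$, so the two flows agree on the open slab $\{|\eta|>1/2\}$, where the $\eta$-derivatives are well defined. Your phrase ``in a neighborhood of the trajectory'' hints at this, but your stated conclusion restricts to $\{|\eta|=1\}$, which on its own would not license differentiating in $\eta$.
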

\begin{proof}
\noindent{\sf Point 1.} For any $Y\in\mathbb{T}^*\mathbb{R}^d\setminus0$ there exist $a_Y<0$ and $b_Y>0$ such that the Cauchy problem \eqref{1.24} has a maximal solution $X\in C^\infty\big((a_Y,b_Y)\times\big(\mathbb{T}^*\mathbb{R}^d\setminus0\big)\big)$ that stays inside $\mathbb{T}^*\mathbb{R}^d\setminus0$. We have to prove that $a_Y=-\infty$ and $b_Y=\infty$. Let us suppose for example that $b_Y<\infty$; in order to get a contradiction it will be sufficient to prove that the set $\{X(t;Y);t\in[0,b_Y)\}$ is included in a compact subset of $\mathbb{T}^*\mathbb{R}^d\setminus0$, because in this case the solution can be still continued past $b_Y$ and thus it is not a maximal solution. Let us denote by
$$
m:=\underset{x\in\mathbb{R}^d}{\inf}\underset{|\xi|=1}{\inf}a(x,\xi)>0,\quad M:=\underset{x\in\mathbb{R}^d}{\sup}\underset{|\xi|=1}{\sup}a(x,\xi)<\infty.
$$
Using the homogeneity of $a$ and the fact that $a\big(x(t;Y),\xi(t;Y)\big)=a(Y)$ for any $t\in(a_Y,b_Y)$ we deduce that
$$
\frac{m}{M}|\eta|\,\leq\,|\xi(t;Y)|\,\leq\,\frac{M}{m}|\eta|,\quad\forall t\in(a_Y,b_Y).
$$
On the other side we have that
$$
x(t;Y)\,=\,y\,+\,\int_0^t\big(\nabla_\xi a\big)\big(X(s;Y)\big)ds,
$$
so that it exists a constant $C(Y,b_Y)$ such that $|x(t;Y)|\leq C(Y,b_Y)$ for any $t\in[0,b_Y)$. We conclude thus that $\{X(t;Y);t\in[0,b_Y)\}$ is included in a compact subset of $\mathbb{T}^*\mathbb{R}^d\setminus0$.

\noindent{\sf Point 2.} Let $Y\in\mathbb{T}^*\mathbb{R}^d\setminus0$, $\lambda>0$ and let $X(t;Y)=\big(x(t;Y),\xi(t;Y)\big)$ be the unique solution of the Cauchy problem \eqref{1.24}. Then the function $\big(x(t;y,\lambda\eta),\lambda^{-1}\xi(t;y,\lambda\eta)\big)$ will also be a solution of the same problem due to the homogeneity of $a$. In conclusion we get
$$
x(t;y,\lambda\eta)=x(t;y,\eta),\quad \lambda^{-1}\xi(t;y,\lambda\eta)=\xi(t;y,\eta).
$$

\noindent{\sf Point 3.} is trivial.

\noindent{\sf Point 4.} We use the equalities \eqref{1.27}. The first implies that $x(t;Y)-y$ is a bounded function on $[-T,T]\times\mathbb{R}^d\times\{|\eta|=1\}$ and from the second equality in \eqref{1.27} we deduce the existence of two positive constants $C_1$ and $C_2$ such that
$$
\left|\xi(t;Y)\right|\,\leq\,C_1\,+\,C_2\left|\int_0^t\left|\xi(s;Y)\right|ds\right|,\quad\forall t\in\mathbb{R},\ \forall Y\in\mathbb{T}^*\mathbb{R}^d\setminus0,\ |\eta|=1.
$$
Let us suppose now that $t>0$ and use the Gronwall Lemma in order to conclude that also the function $\xi(t;Y)-\eta$ is bounded on $[-T,T]\times\mathbb{R}^d\times\{|\eta=1\}$. For the derivatives we differentiate the equalities in \eqref{1.27} and proceed by induction.
\end{proof}

\subsection{The Hamilton-Jacobi equation}\label{1.f}

Let us take $a\in S^1(\mathbb{R}^d)$ a real, elliptic symbol.
\begin{proposition}\label{P1.28}
 There exists $T>0$ such that the following Cauchy problem
\begin{equation}\label{1.29}
 \left\{
\begin{array}{rcl}
 U&\in&C^\infty\big((-T,T)\times\mathbb{R}^{2d}\big),\text{ real function},\\
&&\\
\partial_tU(t;x,\eta)&=&-a\left(x,\nabla_xU(t;x,\eta)\right),\ \forall t\in(-T,T),\ \forall(x,\eta)\in\mathbb{R}^{2d},\\
&&\\
U(0;x,\eta)&=&<x,\eta>,
\end{array}
\right.
\end{equation}
has a unique solution. Moreover, for any $t\in(-T,T)$ and for any $(x,\eta)\in\mathbb{R}^{2d}$ the following equalities are true:
\begin{equation}\label{1.30}
 \nabla_xU(t;x(t;Y),\eta)=\xi(t;Y),
\end{equation}
\begin{equation}\label{1.31}
 \nabla_\eta U(t;x(t;Y),\eta)=y,
\end{equation}
where $\big(x(t;Y),\xi(t;Y)\big)$ is the unique solution of the Cauchy problem \eqref{1.24}.
\end{proposition}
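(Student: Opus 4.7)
The plan is to implement the classical method of characteristics associated with the Hamiltonian flow of $a$ already constructed in Lemma~\ref{L1.25}, with care taken to make all the estimates uniform in $(x,\eta)$. The first (and most technical) step is to show that for $|t|$ sufficiently small, the map $y \mapsto x(t;y,\eta)$ is a $C^\infty$-diffeomorphism of $\mathbb{R}^d$ onto itself, uniformly in $\eta$. Here I would exploit the structural formula from Lemma~\ref{L1.25}(3), namely $x(t;y,\eta) = y + t\,b(t;y,\eta)$ with $b(t;\cdot) \in S^0(\mathbb{R}^d)$. Since $\|\nabla_y(tb)(t;y,\eta)\|\leq C|t|$ uniformly in $(y,\eta)$, choosing $T>0$ so that $CT<1$ puts us in the setting of Lemma~\ref{L1.6}(1)--(2) (applied with $\xi,\eta$ parameters playing no active role in the inversion). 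This produces a $C^\infty$ inverse $y(t;x,\eta)$ such that $y(t;x,\eta)-x$ and all its derivatives are bounded uniformly for $|t|\leq T$.

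With the inverse in hand, I would define the candidate solution by the classical action integral
\begin{equation*}
U(t;x,\eta):=\langle y(t;x,\eta),\eta\rangle+\int_0^t\Bigl[\langle\xi(s;y(t;x,\eta),\eta),(\nabla_\xi a)(X(s;y(t;x,\eta),\eta))\rangle - a(X(s;y(t;x,\eta),\eta))\Bigr]ds,
\end{equation*}
which is manifestly $C^\infty$ on $(-T,T)\times\mathbb{R}^{2d}$ and satisfies $U(0;x,\eta)=\langle x,\eta\rangle$. To establish \eqref{1.30} and \eqref{1.31} I would argue geometrically: since $\Phi_t$ is a symplectomorphism, the 1-form $\xi\,dx-\eta\,dy-a\,dt$ pulled back to $\mathbb{R}\times\mathbb{R}^{2d}$ along the flow is closed (the spatial part because $\sigma_X-\sigma_Y$ vanishes on $\mathrm{graph}(\Phi_t)$, the temporal part by Hamilton's equations), and integrating it from $0$ to $t$ along characteristics is precisely $U-\langle x,\eta\rangle$ after the change of variables $y\leftrightarrow x$. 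A direct differentiation argument in the spirit of the proof of Lemma~\ref{L1.24}(2) then gives $\nabla_x U(t;x(t;Y),\eta)=\xi(t;Y)$ and $\nabla_\eta U(t;x(t;Y),\eta)=y$. Finally, differentiating the identity $U(t;x(t;Y),\eta)=\langle y,\eta\rangle+\int_0^t[\langle\xi,\nabla_\xi a\rangle-a]\,ds$ in $t$ and comparing with the chain rule expansion $\partial_tU+\langle\nabla_xU,\dot x\rangle=\partial_tU+\langle\xi,\nabla_\xi a\rangle$ yields $\partial_tU=-a(x,\nabla_xU)$ on the graph of $\Phi_t$; as this graph sweeps out all of $\mathbb{R}^{2d}$ (for each fixed $\eta$) as $Y$ varies, \eqref{1.29} holds pointwise.

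For uniqueness, if $U_1,U_2$ were two solutions, each would generate (via its graph $\{(x,\nabla_xU_i,\nabla_\eta U_i,\eta)\}$) a Lagrangian submanifold that is forced by the PDE to coincide with $\Phi_t(\{(y,\eta)\})$ at each time $t$; matching $U_1=U_2$ at $t=0$ then propagates along characteristics. I expect the principal obstacle to be not the formal manipulations above but the uniformity in $(x,\eta)$ throughout: choosing $T$ independently of $\eta$ depends on the $S^1$-nature of $a$ via Lemma~\ref{L1.25}(2)--(3), and verifying that $y(t;x,\eta)-x$ (and hence the resulting action integral) stays bounded with all derivatives requires the uniform bounds in Lemma~\ref{L1.6} rather than a pointwise implicit function theorem. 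The compatibility between these symbol-class estimates and the integral defining $U$ is the delicate point, but once the diffeomorphism property is established uniformly, the rest reduces to standard symplectic computations.
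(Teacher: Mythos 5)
Your proposal follows essentially the same route as the paper: both construct the solution via the method of characteristics (the Hamiltonian flow of Lemma \ref{L1.25}), express it as the classical action integral composed with the inverse of $y\mapsto x(t;y,\eta)$, and obtain the uniform-in-$\eta$ choice of $T$ from the estimate $\|1_d-\nabla_yx(t;Y)\|\leq C|t|$ provided by Lemma \ref{L1.25}(3). The one genuine difference is in the verification step. Where the paper proves \eqref{1.30} by computing $\partial_t\nabla_yQ$ directly, integrating to get $\nabla_yQ(t;y,\eta)=\nabla_yx(t;Y)\cdot\xi(t;Y)$, and then inverting via the chain rule, you invoke the Poincar\'{e}--Cartan argument (closedness of $\xi\,dx-\eta\,dy-a\,dt$ on the extended graph of the flow). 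That argument is correct, buys some conceptual clarity, and avoids the explicit identity $\partial_t\nabla_yQ=\partial_t[\nabla_yx\cdot\xi]$; but it packages the same computations, and to make it airtight you would still need to spell out that the closed form has the primitive $U-\langle y,\eta\rangle$ (note: you wrote $U-\langle x,\eta\rangle$; the correct primitive in mixed coordinates $(t,x,\eta)$ with $y=y(t;x,\eta)$ is $U-\langle y(t;x,\eta),\eta\rangle$, since $dU-\eta\,dy-y\,d\eta=-a\,dt+\xi\,dx-\eta\,dy$ once $\eta\,dy$ is tracked through the change of variables). Likewise your uniqueness argument via Lagrangian graphs is a geometric rephrasing of the paper's derivation that any solution forces the characteristic system \eqref{1.36}, hence agrees with the Hamiltonian flow; both are equally valid. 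Net assessment: correct, same architecture, with a cleaner geometric gloss on the verification and one small sign-of-coordinates typo.
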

\begin{proof}
 We follow the standard proof (see for example \cite{Ro,DG}). First one verifies the unicity of the solution of the Cauchy problem \eqref{1.29} obtaining also an 'ansatz' for the solution.

So let us suppose that $U$ is a solution of \eqref{1.29}. We consider now the following Cauchy problem with a parameter $Y\in\mathbb{T}^*\mathbb{R}^d$:
\begin{equation}\label{1.32}
 \left\{
\begin{array}{rcl}
 z&\in&C^\infty\big((-T,T)\times\mathbb{T}^*\mathbb{R}^d\big),\\
&&\\
\dot{z}(t;Y)&=&\big(\nabla_\xi a\big)\left(z(t;Y),\nabla_xU(t;z(t;Y),\eta)\right),\\
&&\\
z(0;Y)&=&y.
\end{array}
\right.
\end{equation}
Due to the fact that $\nabla_\xi a$ is a bounded function it follows that the Cauchy problem \eqref{1.32} has a unique global solution. We denote by $\zeta(t;Y):=\big(\nabla_xU\big)(t,z(t;Y),\eta)$. Then $\zeta\in C^\infty\big((-T,T)\times\mathbb{T}^*\mathbb{R}^d\big)$ and by differentiation (and taking into account \eqref{1.32}) we get
\begin{equation}\label{1.33}
 \dot{\zeta}(t;Y)=\big(\partial_t\nabla_xU\big)(t,z(t;Y),\eta)+\big(\nabla^2_{x,x}U\big)(t,z(t;Y),\eta)\cdot\big(\nabla_\xi a\big)\left(z(t;Y),\nabla_xU(t;z(t;Y),\eta)\right).
\end{equation}
By differentiating \eqref{1.29} with respect to $x$ we get
\begin{equation}\label{1.34}
 \big(\partial_t\nabla_xU\big)(t;x,\eta)+\big(\nabla_x a\big)\left(x,\nabla_xU(t;x,\eta)\right)+\big(\nabla^2_{x,x}U\big)(t;x,\eta)\cdot\big(\nabla_\xi a\big)\left(x,\nabla_xU(t;x,\eta)\right)=0.
\end{equation}
Comparing these last two equations we get
\begin{equation}\label{1.35}
 \dot{\zeta}(t;Y)=-\big(\nabla_x a\big)\big(z(t;Y),\zeta(t;Y)\big).
\end{equation}
Using \eqref{1.32} and \eqref{1.35} and the initial condition in \eqref{1.29} we conclude that the pair of functions $\big(z(t;Y),\zeta(t;Y)\big)$ verifies the following Hamiltonian system:
\begin{equation}\label{1.36}
 \left\{
\begin{array}{rcl}
 \dot{z}(t;Y)&=&\big(\nabla_\xi a\big)\left(z(t;Y),\zeta(t;Y)\right),\\
&&\\
\dot{\zeta}(t;Y)&=&-\big(\nabla_x a\big)\big(z(t;Y),\zeta(t;Y)\big),\\
&&\\
z(0;Y)=y,&&\zeta(0;Y)=\big(\nabla_xU\big)(0;y,\eta)=\eta.
\end{array}
\right.
\end{equation}
Comparing with \eqref{1.24} and taking into account the unicity of the solution we conclude that
\begin{equation}\label{1.37}
 z(t;Y)=x(t;Y),\quad\zeta(t;Y)=\xi(t;Y).
\end{equation}
In particular, taking into account the definition of the function $\zeta$ we deduce that for any solution $U$ of problem \eqref{1.29} the following equality is verified:
\begin{equation}\label{1.38}
 \big(\nabla_xU\big)(t;x(t;Y),\eta)=\xi(t;Y),
\end{equation}
and thus \eqref{1.30} is true.

Let us compute now
$$
\frac{d}{dt}\left[U\big(t;x(t;Y),\eta\big)\right]=\big(\partial_tU\big)\big(t;x(t;Y),\eta\big)+\left<\big(\nabla_xU\big)(t;x(t;Y),\eta),\dot{x}(t;Y)\right>.
$$
Using \eqref{1.29}, \eqref{1.38} and \eqref{1.24} we deduce that
\begin{equation}\label{1.39}
 \frac{d}{dt}\left[U\big(t;x(t;Y),\eta\big)\right]=-a\big(x(t;Y),\xi(t;Y)\big)+\left<\xi(t;Y),\big(\nabla_\xi a\big)\big(x(t;Y),\xi(t;Y)\big)\right>,
\end{equation}
or by integration
\begin{equation}\label{1.40}
 U\big(t;x(t;Y),\eta\big)=<y,\eta>+\int_0^t\left[\left<\xi(s;Y),\big(\nabla_\xi a\big)\big(x(s;Y),\xi(s;Y)\big)\right>-a\big(x(s;Y),\xi(s;Y)\big)\right]ds.
\end{equation}
We denote the right hand side of the equality \eqref{1.40} by $Q(t;y,\eta)$. Using (3) of Lemma \ref{L1.25} we get that
$$
\forall\delta\in[0,1),\ \exists T>0\text{ such that }\left\|\boldsymbol{1}_d-\big(\nabla_yx\big)(t;Y)\right\|\leq\delta\ \forall t\in(-T,T),\ \forall Y\in\mathbb{T}^*\mathbb{R}^d.
$$
Using the Schwartz global inverse theorem we deduce that the equation $x(t;Y)=x$ has a unique solution $y=f(t;x,\eta)$ for any $t\in(-T,T)$ and any $(x,\eta)\in\mathbb{T}^*\mathbb{R}^d$ and this solution belongs to $C^\infty\big((-T,T)\times\mathbb{R}^{2d}\big)$. Due to \eqref{1.40}, for $T>0$ given above, the problem \eqref{1.29} may have at most one solution and this must be given by
\begin{equation}\label{1.41}
 U(t;x,\eta)=Q\big(t;f(t;x,\eta),\eta\big).
\end{equation}

Let us verify now that the function $U$ given by \eqref{1.41} is indeed a solution of the problem \eqref{1.29}. The condition $U\in C^\infty\big((-T,T)\times\mathbb{R}^{2d}\big)$ is evidently satisfied. We also have that
$$
\partial_tQ(t;y,\eta)=\left<\xi(t;Y),\big(\nabla_\xi a\big)\big(x(t;Y),\xi(t;Y)\big)\right>-a\big(x(t;Y),\xi(t;Y)\big).
$$
Differentiating with respect to $y$ and taking into account the first equation in \eqref{1.24} we get
$$
\partial_t\nabla_yQ(t;y,\eta)=-\nabla_yx(t;Y)\cdot\big(\nabla_x a\big)\big(x(t;Y),\xi(t;Y)\big)-\nabla_y\xi(t;Y)\cdot\big(\nabla_\xi a\big)\big(x(t;Y),\xi(t;Y)\big)+
$$
$$
+\nabla_y\xi(t;Y)\cdot\big(\nabla_\xi a\big)\big(x(t;Y),\xi(t;Y)\big)+\nabla_y\dot{x}(t;Y)\cdot\xi(t;Y).
$$
Taking into account \eqref{1.24} and the identity $\nabla_y\dot{x}(t;Y)=\partial_t\big(\nabla_y x\big)$, we get
$$
\partial_t\nabla_yQ(t;y,\eta)=\partial_t\left[\nabla_yx(t;Y)\cdot\xi(t;Y)\right].
$$
Moreover we evidently have that
$$
\nabla_yQ(0;y,\eta)=\eta=\nabla_yx(0;Y)\cdot\xi(0;Y),
$$
so that
\begin{equation}\label{1.42}
 \nabla_yQ(t;y,\eta)=\nabla_yx(t;Y)\cdot\xi(t;Y).
\end{equation}
Puting now together \eqref{1.41} and \eqref{1.42} we obtain that:
$$
\nabla_xU(t;x,\eta)=\big(\nabla_x f\big)(t;x,\eta)\cdot\big(\nabla_{y}Q\big)\big(t;f(t;x,\eta),\eta\big)=
$$
$$
=\big(\nabla_x f\big)(t;x,\eta)\cdot\big(\nabla_{y}x\big)\big(t;f(t;x,\eta),\eta\big)\,\xi\big(t;f(t;x,\eta),\eta\big).
$$
By the definition of the function $f$ we have that $x\big(t;f(t;x,\eta),\eta\big)=x$ and thus 
$$
\big(\nabla_x f\big)(t;x,\eta)\cdot\big(\nabla_yx\big)\big(t;f(t;x,\eta),\eta\big)=1_d
$$
and thus
\begin{equation}\label{1.43}
\nabla_xU\big(t;x(t;Y),\eta\big)=\xi(t;Y).
\end{equation}
In conclusion the function $U$ defined by \eqref{1.41} verifies \eqref{1.30}. The initial condition of \eqref{1.29} follows imediatly from \eqref{1.41} and the definition of $Q$ above. In order to also verify the equation we notice that
\begin{equation}\label{1.44}
 \partial_tQ(t;y,\eta)=-a\big(x(t;Y),\xi(t;Y)\big)+\left<\xi(t;Y),\dot{x}(t;Y)\right>.
\end{equation}
But we also have that $Q(t;y,\eta)=U\big(t;x(t;Y),\eta\big)$ so that using \eqref{1.43} we deduce that
$$
\partial_tQ(t;y,\eta)=\big( \partial_tU\big)\big(t;x(t;Y),\eta\big)+\left<\big( \nabla_xU\big)\big(t;x(t;Y),\eta\big),\dot{x}(t;Y)\right>=
$$
$$
=\big( \partial_tU\big)\big(t;x(t;Y),\eta\big)+\left<\xi(t;Y),\dot{x}(t;Y)\right>.
$$
Comparing this equality with \eqref{1.44} and using again \eqref{1.43} we conclude that
$$
\big( \partial_tU\big)\big(t;x(t;Y),\eta\big)=-a\big(x(t;Y),\xi(t;Y)\big)=-a\big(x(t;Y),\big( \nabla_xU\big)\big(t;x(t;Y),\eta\big)\big)
$$
and this implies now \eqref{1.29}.

Considering the equality\eqref{1.31}, it is verified for $t=0$ because we have $\nabla_\eta U(0;y,\eta)=y$. It remains to prove that
\begin{equation}\label{1.45}
 \frac{d}{dt}\left[\big(\nabla_\eta U\big)\big(t;x(t;Y),\eta\big)\right]=0.
\end{equation}
We develop the left hand side of the above equality obtaining:
$$
\big(\partial_t\nabla_\eta U\big)\big(t;x(t;Y),\eta\big)+\big(\nabla^2_{\eta,x} U\big)\big(t;x(t;Y),\eta\big)\cdot\dot{x}(t;Y).
$$
From \eqref{1.43} and \eqref{1.24} it follows that:
$$
\dot{x}(t;Y)=\big(\nabla_\xi a\big)\left(x(t;Y),\big( \nabla_xU\big)\big(t;x(t;Y),\eta\big)\right).
$$
Differentiating \eqref{1.29} with respect to $\eta$ we obtain
$$
\big(\partial_t\nabla_\eta U\big)\big(t;x,\eta\big)+\big(\nabla^2_{\eta,x} U\big)\big(t;x,\eta\big)\cdot\big(\nabla_\xi a\big)\left(x,\big( \nabla_xU\big)\big(t;x,\eta\big)\right)=0.
$$
The last three equalities imply \eqref{1.45}.
\end{proof}

\begin{remark}\label{R1.29}
 From the proof of Proposition \ref{P1.28} it follows that the parameter $T$ in the statement of Proposition \ref{P1.28} is chosen in order to verify the following inequality
\begin{equation}\label{1.46}
\underset{t\in(-T,T)}{\sup}\,\underset{Y\in\mathbb{T}^*\mathbb{R}^d}{\sup}\left\|1_d-\big(\nabla_y x\big)(t;Y)\right\|\leq1,
\end{equation}
where $\big(x(t;Y),\xi(t;Y)\big)$ is the solution of the canonical system \eqref{1.24}. In particular, if $a$ does not depend on the variable $x$, as in Remark \ref{R1.26}, we have $x(t;Y)=y+t\big(\nabla_\xi a\big)(\eta)$ and $\nabla_y x(t;Y)=1_d$, and thus in this case we can take $T=\infty$. In fact one can easily notice that in this case the solution of problem \eqref{1.29} is given by
\begin{equation}\label{1.47}
 U(t;x,\eta)=<x,\eta>-ta(\eta).
\end{equation}
\end{remark}

\begin{remark}\label{R1.30}
 If we fix $t\in(-T,T)$ with $T$ chosen in order to satisfy \eqref{1.46}, the equalities \eqref{1.30} and \eqref{1.31} mean that $U(t;x,\eta)$ is generating function for the symplectomorphism $\Phi_t$ defined by the canonical system \eqref{1.24}.
\end{remark}

In fact, a representation of $U$ of the form \eqref{1.47} may also be obtained in the general case.

\begin{lemma}\label{L1.31}
 Under the assumptions of Proposition \ref{P1.28}, there exists a function $d\in C^\infty\big((-T,T);S^1(\mathbb{R}^d)\big)$ such that
\begin{equation}\label{1.48}
 U(t;x,\eta)=<x,\eta>+t\,d(t;x,\eta),\quad\text{on }(-T,T)\times\mathbb{R}^{2d}.
\end{equation}
\end{lemma}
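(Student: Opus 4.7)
The plan is to use the fundamental theorem of calculus together with the Hamilton--Jacobi equation \eqref{1.29}. Since $U(0;x,\eta)=\langle x,\eta\rangle$, we can write
$$
U(t;x,\eta)-\langle x,\eta\rangle=\int_0^t(\partial_sU)(s;x,\eta)\,ds=-\int_0^t a\bigl(x,\nabla_xU(s;x,\eta)\bigr)\,ds,
$$
and the change of variable $s=\tau t$ gives
$$
U(t;x,\eta)-\langle x,\eta\rangle=-t\int_0^1 a\bigl(x,\nabla_xU(\tau t;x,\eta)\bigr)\,d\tau.
$$
This suggests defining
$$
d(t;x,\eta):=-\int_0^1 a\bigl(x,\nabla_xU(\tau t;x,\eta)\bigr)\,d\tau,
$$
so the representation \eqref{1.48} holds by construction. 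What remains is to verify that $d\in C^\infty\bigl((-T,T);S^1(\mathbb{R}^d)\bigr)$.

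The main technical step is to show that $\nabla_xU(t;\cdot,\cdot)\in S^1(\mathbb{R}^d)$, smoothly in $t\in(-T,T)$. For this I will use the explicit expression obtained in the proof of Proposition \ref{P1.28}, namely $\nabla_xU(t;x,\eta)=\xi\bigl(t;f(t;x,\eta),\eta\bigr)$, where $f(t;\cdot,\eta)$ is the inverse of $y\mapsto x(t;y,\eta)$ (which exists globally for $|t|<T$ by \eqref{1.46} and the Schwartz inversion theorem). By Lemma \ref{L1.25}(3) we have the decompositions $x(t;y,\eta)=y+tb(t;y,\eta)$ with $b\in C^\infty(\mathbb{R};S^0)$ and $\xi(t;y,\eta)=\eta+tc(t;y,\eta)$ with $c\in C^\infty(\mathbb{R};S^1)$. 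The implicit equation $f(t;x,\eta)=x-tb\bigl(t;f(t;x,\eta),\eta\bigr)$, combined with an argument in the spirit of Lemma \ref{L1.6} (shrinking $T$ further if necessary so that $\|t\partial_yb\|<1$) and an induction on the order of differentiation, yields $f(t;x,\eta)-x\in C^\infty\bigl((-T,T);S^0(\mathbb{R}^d)\bigr)$. Substituting back gives
$$
\nabla_xU(t;x,\eta)=\eta+tc\bigl(t;f(t;x,\eta),\eta\bigr),
$$
from which $\nabla_xU(t;\cdot,\cdot)\in S^1$ by the chain rule and the symbol properties of $c$ and $f$; in particular Lemma \ref{L1.25}(2) ensures the lower and upper bounds $\langle\nabla_xU(t;x,\eta)\rangle\asymp\langle\eta\rangle$.

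Once $\nabla_xU(\tau t;\cdot,\cdot)\in S^1$ is established with seminorms uniform on compact $\tau t$-intervals, the composition $a\bigl(x,\nabla_xU(\tau t;x,\eta)\bigr)$ belongs to $C^\infty\bigl((-T,T)\times[0,1];S^1(\mathbb{R}^d)\bigr)$ by the standard symbol-composition argument (chain rule plus the equivalence $\langle\nabla_xU\rangle\asymp\langle\eta\rangle$), and integrating in $\tau\in[0,1]$ preserves membership in $S^1$ and smoothness in $t$. I expect the main obstacle to lie in the inductive verification of the symbol estimates for $f(t;x,\eta)-x$ and then for $\nabla_xU$: one must control derivatives of implicit functions uniformly in $\eta$, which requires keeping careful track of how powers of $\langle\eta\rangle$ propagate through repeated differentiation of the defining identities, but all the needed tools (Lemmas \ref{L1.5}, \ref{L1.6}, \ref{L1.25}) are already available.
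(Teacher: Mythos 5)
Your proposal is correct and structurally parallel to the paper's proof: same starting point (fundamental theorem of calculus plus the Hamilton--Jacobi equation, and the change of variables $s=\tau t$), same intermediate object (the implicit function $f(t;x,\eta)$ inverting $y\mapsto x(t;y,\eta)$), and the same technical backbone (Lemma \ref{L1.25}(3) plus an induction on derivatives of $f$ through identities like \eqref{1.51}). The one place where the paper takes a genuinely different turn is that it invokes the conservation law $a\big(X(t;Y)\big)=a(Y)$ from Lemma \ref{L1.25}(1) to simplify the integrand: since $x=x\big(t;f(t;x,\eta),\eta\big)$, one has
$$
-a\big(x,\nabla_xU(t;x,\eta)\big)\,=\,-a\big(x,\xi\big(t;f(t;x,\eta),\eta\big)\big)\,=\,-a\big(f(t;x,\eta),\eta\big),
$$
which collapses the integrand to a composition with the very simple argument $(f(t;x,\eta),\eta)$. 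You instead keep the integrand in the form $-a\big(x,\nabla_xU(\tau t;x,\eta)\big)$ and argue that $\nabla_xU(t;x,\eta)=\eta+t\,c\big(t;f(t;x,\eta),\eta\big)$ is an $S^1$ map with $\langle\nabla_xU\rangle\asymp\langle\eta\rangle$ before composing $a$ with it. Both conclusions rest on the same input, namely the symbol estimates on $f-x$, so your version is sound; the paper's conservation-law shortcut is simply a little cleaner, avoiding one extra layer of composition and the separate verification that $\nabla_xU\in S^1$. Your version does have the minor virtue of not using the Hamiltonian structure beyond what the phase function alone already encodes.
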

\begin{proof}
 We shall use the notations introduced in the proof of Proposition \ref{P1.28} and denote by
$$
e(t;x,\eta):=-a\left(x,\xi\big(t;f(t;x,\eta),\eta\big)\right).
$$
Following \eqref{1.29} and \eqref{1.30} we have that
$$
\partial_tU(t;x,\eta)=e(t;x,\eta)
$$
so that we obtain
\begin{equation}\label{1.49}
 U(t;x,\eta)=<x,\eta>+\int_0^t\,e(s;x,\eta)\,ds.
\end{equation}
As in the final part of the proof of Lemma \ref{L1.25}, the conclusion of this Lemma follows once we have proved that $e\in C^\infty\big((-T,T);S^1(\mathbb{R}^d)\big)$.

From the first point of Lemma \ref{L1.25} we deduce that
$$
e\big(t;x(t;Y),\eta\big)=-a\big(x(t;Y),\xi(t;Y)\big)=-a(Y),
$$
so that
\begin{equation}\label{1.50}
e\big(t;x,\eta\big)=-a\big(f(t;x,\eta),\eta\big).
\end{equation}
Differentiating the equality $x\big(t;f(t;x,\eta),\eta\big)=x$ we obtain the equalities:
\begin{equation}\label{1.51}
 \left\{
\begin{array}{rcl}
\big( \nabla_xf\big)(t;x,\eta)&=&\left[\big(\nabla_yx\big)\big(t;f(t;x,\eta),\eta\big)\right]^{-1},\\
&&\\
\big( \nabla_\eta f\big)(t;x,\eta)&=&-\big(\nabla_\eta x\big)\big(t;f(t;x,\eta),\eta\big)\cdot\left[\big(\nabla_yx\big)\big(t;f(t;x,\eta),\eta\big)\right]^{-1}.
\end{array}
\right.
\end{equation}
Differentiating further the equalities \eqref{1.51} and proceeding by induction using also point (3) of Lemma \ref{L1.25} one easily proves that 
$$
\partial_x^\alpha\partial_\xi^\beta f\in C^\infty\big((-T,T);S^{-|\beta|}(\mathbb{R}^d)\big),\quad\forall(\alpha,\beta)\in[\mathbb{N}^d]^2,\text{ with }|\alpha+\beta|\geq1.
$$
The equality \eqref{1.50} means then that $e\in C^\infty\big((-T,T);S^1(\mathbb{R}^d)\big)$.
\end{proof}

Let us consider now the case of a {\it homogeneous symbol} of degree 1 and having the properties assumed in Lemma \ref{L1.27}. Some evident modifications of the above arguments allow to prove the following Proposition.

\begin{proposition}\label{P1.32}
 Under the assumptions of Hypothesis \ref{H1.27} one has the following:
\begin{enumerate}
 \item There exists $T>0$ and a real function $U\in C^\infty\big((-T,T)\times\mathbb{R}^d\times(\mathbb{R}^d\setminus\{0\})\big)$ such that
\begin{equation}\label{1.52}
 \left\{
\begin{array}{rcl}
 \big(\partial_tU\big)(t;x,\eta)&=&-a\left(x,\big(\nabla_xU\big)(t;x,\eta)\right),\quad \forall t\in(-T,T),\ \forall(x,\eta)\in\mathbb{T}^*\mathbb{R}^d,\ \eta\ne0,\\
&&\\
U(0;x,\eta)&=&<x,\eta>.
\end{array}
\right.
\end{equation}
\item $U(t;x,\lambda\eta)=\lambda U(t;x,\eta),\quad\forall\lambda>0,\ \forall t\in(-T,T),\ \forall(x,\eta)\in\mathbb{T}^*\mathbb{R}^d,\ \eta\ne0$.
\item $\big(\nabla_xU\big)\big(t;x(t;Y),\eta\big)=\xi(t;Y),\quad\forall t\in(-T,T),\ \forall(x,\eta)\in\mathbb{T}^*\mathbb{R}^d,\ \eta\ne0$, where $\Phi_t(Y)\equiv\big(x(t;Y),(\xi(t;Y)\big)$ is the solution of the canonical system obtained in Lemma \ref{L1.27}.
\item $\big(\nabla_\eta U\big)\big(t;x(t;Y),\eta\big)=y,\quad\forall t\in(-T,T),\ \forall(x,\eta)\in\mathbb{T}^*\mathbb{R}^d,\ \eta\ne0$.
\item $U(t;x,\eta)=<x,\eta>+t\,d(t;x,\eta),\quad\forall t\in(-T,T),\ \forall(x,\eta)\in\mathbb{T}^*\mathbb{R}^d,\ \eta\ne0$, with $d$ a function of class $C^\infty$ on its domain of definition, positive homogeneous of degree 1 in the variable $\eta$, and for any $T_0\in(0,T)$, for any $k\in\mathbb{N}$ and for any $(\alpha,\beta)\in[\mathbb{N}^d]^2$, the function $\partial_t^k\partial_x^\alpha\partial_\xi^\beta d$ is bounded on $[-T_0,T_0]\times\mathbb{R}^d\times\{|\eta|=1\}$.
\item For any $t\in(-T,T)$, $U(t;x,\eta)$ is a generating function for the symplectomorphism $\Phi_t$.
\end{enumerate}
\end{proposition}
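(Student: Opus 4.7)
The strategy is to transpose the arguments used for Proposition \ref{P1.28} and Lemma \ref{L1.31} to the homogeneous setting, taking into account that (i)~$a$ is only defined for $\eta\neq 0$, (ii)~the Hamiltonian flow $\Phi_t$ of Lemma \ref{L1.27} lives on $\mathbb{T}^*\mathbb{R}^d\setminus 0$ and is homogeneous of degree $0$ in the base and of degree $1$ in the fibre, and (iii)~by Lemma \ref{L1.27}(4) the derivatives of $X(t;Y)-Y$ are bounded on $[-T,T]\times\mathbb{R}^d\times\{|\eta|=1\}$, which is the substitute for the global $S^1$-estimates of Lemma \ref{L1.25}(3) used previously. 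In particular, combining Lemma \ref{L1.27}(2) with (4), the matrix $\nabla_y x(t;Y)$ is homogeneous of degree $0$ in $\eta$, so the inequality
$$
\underset{Y\in\mathbb{T}^*\mathbb{R}^d\setminus 0}{\sup}\,\bigl\|1_d-\bigl(\nabla_y x\bigr)(t;Y)\bigr\|\leq 1/2
$$
follows at once for $|t|<T$, with $T>0$ chosen small enough, by testing on $|\eta|=1$.

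\textbf{Construction of $U$ (points 1, 3, 4).} For such a $T$, Schwartz's global inversion theorem applied pointwise in $(t,\eta)$ with $\eta\neq 0$ produces a unique $C^\infty$ solution $y=f(t;x,\eta)$ of the equation $x(t;y,\eta)=x$ on $(-T,T)\times\mathbb{R}^d\times(\mathbb{R}^d\setminus\{0\})$. We then set, in perfect analogy with \eqref{1.40}--\eqref{1.41},
$$
Q(t;y,\eta):=<y,\eta>+\int_0^t\Bigl[\bigl<\xi(s;Y),\bigl(\nabla_\xi a\bigr)\bigl(X(s;Y)\bigr)\bigr>-a\bigl(X(s;Y)\bigr)\Bigr]ds,\qquad U(t;x,\eta):=Q\bigl(t;f(t;x,\eta),\eta\bigr).
$$
The verifications that $U$ solves \eqref{1.52} and that \eqref{1.30}--\eqref{1.31} hold (i.e.\ points 3 and 4) are the same algebraic manipulations as in the proof of Proposition \ref{P1.28}; the only thing to check is that the argument $\bigl(x,\nabla_x U(t;x,\eta)\bigr)$ stays inside the domain of $a$, but by point 3, $\nabla_x U\bigl(t;x(t;Y),\eta\bigr)=\xi(t;Y)$ and Lemma \ref{L1.27}(1) (together with the conservation $a(X(t;Y))=a(Y)$ and the estimates on $|\xi(t;Y)|$) show that $\xi(t;Y)\neq 0$ whenever $\eta\neq 0$.

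\textbf{Homogeneity (point 2).} This is the genuinely new ingredient. By Lemma \ref{L1.27}(2), $x(t;y,\lambda\eta)=x(t;y,\eta)$ and $\xi(t;y,\lambda\eta)=\lambda\xi(t;y,\eta)$; hence $f(t;x,\lambda\eta)=f(t;x,\eta)$. Since $\nabla_\xi a$ is homogeneous of degree $0$ and $a$ of degree $1$, the integrand in the definition of $Q$ is positively homogeneous of degree $1$ in $\eta$, as is the term $<y,\eta>$. Therefore $Q(t;y,\lambda\eta)=\lambda Q(t;y,\eta)$ and
$$
U(t;x,\lambda\eta)=Q\bigl(t;f(t;x,\lambda\eta),\lambda\eta\bigr)=\lambda Q\bigl(t;f(t;x,\eta),\eta\bigr)=\lambda U(t;x,\eta).
$$
Point 6 is then a tautological restatement of points 3 and 4 together with Remark \ref{R1.30}.

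\textbf{The representation $U=<x,\eta>+t\,d$ (point 5).} As in Lemma \ref{L1.31}, conservation of $a$ along the flow gives
$$
\partial_t U(t;x,\eta)=-a\bigl(x,\nabla_x U(t;x,\eta)\bigr)=-a\bigl(f(t;x,\eta),\eta\bigr)=:e(t;x,\eta),
$$
so that $U(t;x,\eta)=<x,\eta>+\int_0^t e(s;x,\eta)\,ds$ and one sets $d(t;x,\eta):=\int_0^1 e(ts;x,\eta)\,ds$. Homogeneity of degree $1$ in $\eta$ follows from point 2 and from $<x,\eta>$ being of degree $1$. The required boundedness of $\partial_t^k\partial_x^\alpha\partial_\xi^\beta d$ on $[-T_0,T_0]\times\mathbb{R}^d\times\{|\eta|=1\}$ for $T_0<T$ is the main technical point: differentiating the identity $x(t;f(t;x,\eta),\eta)=x$ in the manner of \eqref{1.51} and iterating, one expresses all derivatives of $f$ as polynomial expressions in derivatives of $x(t;\cdot,\cdot)$ and of its inverse Jacobian, all controlled on $\{|\eta|=1\}$ by Lemma \ref{L1.27}(4); plugging into $e(t;x,\eta)=-a(f(t;x,\eta),\eta)$ and using the boundedness of derivatives of $a$ on $\mathbb{R}^d\times\{|\eta|=1\}$ from Hypothesis \ref{H1.27} finishes the argument. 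The main obstacle is precisely this last verification: the $S^1$-type bounds from Lemma \ref{L1.25} no longer hold verbatim (the symbol is only defined off the zero section), and one must replace them by homogeneity-based estimates on the unit sphere together with the degree-$0$ homogeneity of $\nabla_y x$.
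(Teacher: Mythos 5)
Your proof is correct and follows the same route the paper indicates: transpose the construction of Proposition \ref{P1.28} and Lemma \ref{L1.31} to the homogeneous setting, using Lemma \ref{L1.27}(2) to obtain the homogeneity of $f$ and of $Q$ (hence of $U$), and Lemma \ref{L1.27}(4) restricted to $\{|\eta|=1\}$ as the substitute for the global $S^1$-estimates from Lemma \ref{L1.25}(3). The paper simply states that ``evident modifications'' of the earlier arguments give the result, and your write-up is a faithful and correct unpacking of exactly those modifications.
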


\begin{remark}\label{R1.33}
 Evidently the Remarks \ref{R1.26} and \ref{R1.29} remain true in the homogeneous case: if $a(x,\xi)=a(\xi)$ is a real, strictly positive  function of class $C^\infty(\mathbb{R}^d\setminus\{0\})$, that is homogeneous of degree 1, then $T=\infty$ and
\begin{equation}\label{1.53}
 x(t;Y)=y+t\big(\nabla_\xi a\big)(\eta),\ \xi(t;Y)=\eta,\ U(t;x,\eta)=<x,\eta>-t\,a(\eta),\eta\ne0.
\end{equation}
\end{remark}

\begin{lemma}\label{L1.34}
 Suppose the symbol $a$ verifies Hypothesis \ref{H1.27} and let $\chi\in C^\infty(\mathbb{R}^d)$ be a real function with $\chi(\xi)=0$ for $|\xi|\leq1$ and with $\chi(\xi)=1$ for $|\xi|\geq2$. We denote by $b(x,\xi):=\chi(\xi)a(x,\xi)$. We denote by $U_b$, respectively by $U_a$, the solutions of the Hamilton-Jacobi equations \eqref{1.29} (for $b$), respectively \eqref{1.52}. Then there exist $R>0$ and $T>0$ such that $U_a(t;x,\eta)$ and $U_b(t;x,\eta)$ exist both and are equal for any $t$ with $|t|<T$ and for any $(x,\eta)\in\mathbb{R}^{2d}$ with $|\eta|\geq R$.
\end{lemma}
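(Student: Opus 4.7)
The plan is to exploit the simple observation that $b$ and $a$ coincide together with all their derivatives on the region $\{|\xi|\geq 2\}$. If I can show that for $|\eta|$ above some threshold $R$ and $|t|$ below some threshold $T$, the Hamiltonian flows of $a$ and of $b$ starting from $Y=(y,\eta)$ both remain in $\{|\xi|\geq 2\}$, then by uniqueness of ODE solutions the two flows coincide there, and by the representation formula \eqref{1.41} the two generating functions must agree as well.

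First I would verify that the hypotheses of Proposition \ref{P1.28} apply to $b$: by Hypothesis \ref{H1.27} and the homogeneity of degree one of $a$, the product $b=\chi a$ belongs to $S^1(\mathbb{R}^d)$; moreover $b$ is elliptic since $b(x,\xi)=a(x,\xi)$ for $|\xi|\geq 2$ and $|a(x,\xi)|\geq m|\xi|$ there, with $m:=\inf_{x,|\xi|=1}a(x,\xi)>0$. Proposition \ref{P1.28} then supplies $U_b$ on some $(-T_1,T_1)$, while Proposition \ref{P1.32} supplies $U_a$ on some $(-T_2,T_2)\times \mathbb{R}^d\times(\mathbb{R}^d\setminus\{0\})$.

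The dynamical step is the core of the argument. With $M:=\sup_{x,|\xi|=1}a(x,\xi)$, Lemma \ref{L1.27} (energy conservation combined with homogeneity) gives for the solution $X^a(t;Y)=(x^a(t;Y),\xi^a(t;Y))$ of \eqref{1.24} the bound
$$
\tfrac{m}{M}|\eta|\,\leq\,|\xi^a(t;Y)|\,\leq\,\tfrac{M}{m}|\eta|,\qquad\forall t\in\mathbb{R},\ Y=(y,\eta)\in\mathbb{T}^*\mathbb{R}^d\setminus 0.
$$
Setting $R:=2M/m$, the lower bound forces $|\xi^a(t;Y)|\geq 2$ for all $t\in\mathbb{R}$ whenever $|\eta|\geq R$. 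Since $H_a=H_b$ on $\{|\xi|\geq 2\}$, the curve $t\mapsto X^a(t;Y)$ also solves the Hamiltonian system associated to $b$; by uniqueness of solutions, $X^b(t;Y)=X^a(t;Y)$ for every $t\in\mathbb{R}$, provided $|\eta|\geq R$.

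Finally I would take $T:=\min(T_1,T_2)$ (shrunk, if necessary, so that the Schwartz global inversion used in the proof of Proposition \ref{P1.28} yielding the inverse $f(t;x,\eta)$ of $y\mapsto x(t;y,\eta)$ is valid on $(-T,T)$ for both $a$ and $b$). Applying \eqref{1.41} to each and using, for $|\eta|\geq R$, that (i) the two flows coincide, (ii) consequently the inverses $f_a$ and $f_b$ coincide (they invert the same function), and (iii) $a$, $b$, $\nabla_\xi a$ and $\nabla_\xi b$ agree on the common trajectory (which stays in $\{|\xi|\geq 2\}$), all terms appearing in the formula \eqref{1.41} match, yielding $U_a(t;x,\eta)=U_b(t;x,\eta)$ on $(-T,T)\times\{(x,\eta)\,:\,|\eta|\geq R\}$. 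The only mildly delicate point is the bookkeeping around the inversion: $f_b$ is a priori constructed from the full $b$-flow and $f_a$ only from the homogeneous $a$-flow on $\eta\neq 0$, but the identity $x^b=x^a$ in the range $|\eta|\geq R$ forces $f_a=f_b$ there, which is the only place the two constructions need to be reconciled.
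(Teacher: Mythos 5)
Your proposal is correct and follows essentially the same route as the paper: show via energy conservation and homogeneity that the $a$-flow stays in $\{|\xi|\geq 2\}$ once $|\eta|\geq R:=2M/m$ (the paper writes this as $R=2/\epsilon$ with the same $\epsilon=m/M$), deduce coincidence of the two Hamiltonian flows by uniqueness, and then transfer the coincidence to $U_a=U_b$ through the construction of Proposition \ref{P1.28}. Your final step simply makes explicit the comparison of the formulas \eqref{1.40}--\eqref{1.41} that the paper leaves as "following the construction of $U$."
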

\begin{proof}
 Let us denote by $X_a=(x_a,\xi_a)$, respectively by $X_b=(x_b,\xi_b)$, the solutions of the canonical system \eqref{1.24} associated to $a$, and respectively to $b$, with initial data in $\mathbb{T}^*\mathbb{R}^d\setminus0$. In the proof of Lemma \ref{L1.27} we proved that it exists $\epsilon\in(0,1]$ such that $|\xi_a(t;Y)|\geq\epsilon|\eta|$ for any $t\in\mathbb{R}$ and for any $Y\in\mathbb{T}^*\mathbb{R}^d\setminus0$. Let us choose $R:=2/\epsilon\geq2$; we have $|\xi_a(t;Y)|\geq2$ for any $t\in\mathbb{R}$ and for any $Y\in\mathbb{T}^*\mathbb{R}^d\setminus0$ with $|\eta|\geq R$. But for any $X\in\mathbb{T}^*\mathbb{R}^d\setminus0$ with $|\xi|\geq2$ we have $a(X)=b(X)$ and thus $H_a(X)=H_b(X)$. Then any integral curve of $H_a$, passing at $t=0$ through $Y=(y,\eta)\in\mathbb{T}^*\mathbb{R}^d$ with $|\eta|\geq R$ remains in the domain $\{|\xi|\geq2\}$ and will thus also be integral curve for $H_b$. It follows that $X_a(t;Y)=X_b(t;Y)$ for any $t\in\mathbb{R}$ and any $Y\in\mathbb{T}^*\mathbb{R}^d$ with $|\eta|\geq R$, due to the fact that they coincide for $t=0$. The proof is finished by following the construction of $U$ in Proposition \ref{P1.28}, choosing $T>0$ such that
$$
\underset{|t|<T}{\sup}\,\underset{y\in\mathbb{R}^d}{\sup}\,\underset{|\eta|=1}{\sup}\left\|1_d-\big(\nabla_y x_a\big)(t;Y)\right\|=\underset{|t|<T}{\sup}\,\underset{y\in\mathbb{R}^d}{\sup}\,\underset{|\eta|\geq2}{\sup}\left\|1_d-\big(\nabla_y x_b\big)(t;Y)\right\|< 1.
$$
\end{proof}

\section{Magnetic Fourier Integral Operators: definition and general properties}

\begin{hypothesis}\label{Hyp-U}
 We shall always work with a real function $U\in C^\infty(\mathbb{R}^d\times\mathbb{R}^d)$ verifying the following properties:
\begin{enumerate}
 \item $U(x,\eta)=<x,\eta>+\,d(x,\eta),\quad\text{with }d\in S^+\text{ real},\ \forall(x,\eta)\in\mathbb{R}^d\times\mathbb{R}^d$,
\item $\exists\delta\in[0,1)\text{ such that }\left\|\big(\nabla^2_{x,\eta}d\big)(x,\eta)\right\|\leq\delta,\ \forall(x,\eta)\in\mathbb{R}^d\times\mathbb{R}^d$.
\end{enumerate}
\end{hypothesis}
\begin{remark}\label{R2.1a}
 The properties (1) and (2) in the Hypothesis \ref{Hyp-U} above imply the inequality:
\begin{equation}\label{2.1}
 \left\|\big(\nabla^2_{x,\eta}U\big)(x,\eta)\right\|\geq1-\delta>0,\ \forall(x,\eta)\in\mathbb{R}^d\times\mathbb{R}^d,
\end{equation}
so that point (1) in Lemma \ref{L1.24} implies that $U$ is a generating function for a symplectomorphism $\Phi:\mathbb{T}^*\mathbb{R}^d\rightarrow\mathbb{T}^*\mathbb{R}^d$, being uniquely determined (modulo an additive constant) by this one.
\end{remark}

\begin{remark}\label{R2.1b}
 If we consider the phase function $\phi:\mathbb{R}^{3d}\rightarrow\mathbb{R}$ given by $\phi(x,y,\eta):=U(x,\eta)-<y,\eta>$ for all $(x,y,\eta)\in\big[\mathbb{R}^d\big]^3$, then the canonical relation $\Lambda_\phi$ defined by the phase function $\phi$ coincides with ${\sf graph}\Phi$. In fact we have that:
$$
\Lambda_\phi:=\left\{\left(x,\big(\nabla_x\phi\big)(x,y,\eta),y,-\big(\nabla_y\phi\big)(x,y,\eta)\right);\ \big(\nabla_\eta\phi\big)(x,y,\eta)=0,\ \forall(x,y,\eta)\in\big[\mathbb{R}^d\big]^3\right\}=
$$
$$
=\left\{\left(x,\big(\nabla_x U\big)(x,\eta),\big(\nabla_\eta U\big)(x,\eta),\eta\right);\  \forall(x,\eta)\in\big[\mathbb{R}^d\big]^2\right\}={\sf graph}\Phi.
$$

Suppose given a magnetic field $B$ on $\mathbb{R}^d$ with components of class $BC^\infty(\mathbb{R}^d)$ to which we associate a vector potential $A$ with components of class $C^\infty_{\text{\sf pol}}(\mathbb{R}^d)$ and the function $\omega^A\in C^\infty_{\text{\sf pol}}(\mathbb{R}^{2d})$ defined as in \eqref{omegaA}.
\end{remark}

\begin{lemma}\label{L2.2}
 Let us choose $\Phi$ and $\phi$ as in the Remarks \ref{R2.1a} and \ref{R2.1b} and $a\in S^m(\mathbb{R}^d)$ for some $m\in\mathbb{R}$. Then for any $u\in\mathcal{S}(\mathbb{R}^d)$ and for any $x\in\mathbb{R}^d$ the following oscillating integral
\begin{equation}\label{2.2}
 \left[\mathfrak{Op}^A_\Phi(a)u\right](x):=\int_{\mathbb{R}^d}\int_{\mathbb{R}^d}e^{i\phi(x,y,\eta)}\omega^A(x,y)a(x,\eta)u(y)\,dy\,\dbar\eta
\end{equation}
 is well defined, $\mathfrak{Op}^A_\Phi(a)u\in\mathcal{S}(\mathbb{R}^d)$ and the map
$$
\mathcal{S}(\mathbb{R}^d)\ni u\mapsto\mathfrak{Op}^A_\Phi(a)u\in\mathcal{S}(\mathbb{R}^d)
$$
is linear and continuous.
\end{lemma}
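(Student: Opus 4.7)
The plan is to establish \eqref{2.2} as an oscillating integral by the standard integration-by-parts device, adapted to the presence of the magnetic phase factor $\omega^A(x,y)$ whose modulus is $1$ but whose $(x,y)$-derivatives grow polynomially because $A\in C^\infty_{\text{\sf pol}}(\mathbb{R}^d)$. First I would regularize by picking $\chi\in C_c^\infty(\mathbb{R}^d)$ with $\chi(0)=1$ and replacing $e^{i\phi}$ by $\chi(\epsilon\eta)\,e^{i\phi}$, which makes the integrand absolutely integrable for each fixed $\epsilon>0$. Since $\nabla_y\phi=-\eta$, the second-order operator
\[
L:=\langle\eta\rangle^{-2}(1-\Delta_y)
\]
is formally self-adjoint and satisfies $L\,e^{i\phi}=e^{i\phi}$. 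After $N$ integrations by parts in $y$, the amplitude $\omega^A(x,y)u(y)$ is replaced by $\langle\eta\rangle^{-2N}(1-\Delta_y)^N\bigl(\omega^A(x,y)u(y)\bigr)$, yielding an integrand bounded, at each fixed $x$, by $\langle\eta\rangle^{m-2N}$ times a polynomially growing function of $y$ times rapidly decaying factors from $u$ and its derivatives. Choosing $2N>m+d$ gives absolute convergence in $(y,\eta)$; the limit $\epsilon\to 0$ then exists and is independent of $\chi$, so $(\mathfrak{Op}^A_\Phi(a)u)(x)$ is defined unambiguously.

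Next I would show $\mathfrak{Op}^A_\Phi(a)u\in\mathcal{S}(\mathbb{R}^d)$. Differentiating under the integral, $\partial_x^\alpha(\mathfrak{Op}^A_\Phi(a)u)(x)$ is a finite sum of oscillating integrals of the same form in which the amplitude becomes $\omega^A(x,y)\,p_\alpha(x,\eta)\,q_\alpha(x,y)$, where $p_\alpha\in S^{m+|\alpha|}(\mathbb{R}^d)$ absorbs derivatives of $e^{iU}$ (using $\nabla_xU=\eta+\nabla_xd\in[S^1]^d$) and of $a$, while $q_\alpha(x,y)$ is a polynomially bounded function of $(x,y)$ coming from $\partial_x^\alpha\omega^A$. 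To extract decay in $x$ I would use the identity
\[
(x_j-y_j)\,e^{i\phi}=\bigl(-i\partial_{\eta_j}-(\partial_{\eta_j}d)(x,\eta)\bigr)\,e^{i\phi},
\]
which follows at once from $\nabla_\eta\phi=(x-y)+\nabla_\eta d$ and $\nabla_\eta d\in[S^0]^d$. Expanding $x^\beta\,q_\alpha(x,y)$ as a finite sum of monomials $c\,(x-y)^{\gamma'}y^{\delta'}$, repeated integration by parts in $\eta$ converts each factor $(x-y)^{\gamma'}$ into $\eta$-derivatives of $p_\alpha$ (which remain in $S^{m+|\alpha|}$, modulo multiplicative factors in $S^0$), while $y^{\delta'}u(y)$ stays Schwartz. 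One further application of $L^{N'}$ for $N'$ large enough then gives
\[
\bigl|x^\beta\partial_x^\alpha(\mathfrak{Op}^A_\Phi(a)u)(x)\bigr|\leq C_{\alpha,\beta},\qquad\forall(\alpha,\beta)\in[\mathbb{N}^d]^2,
\]
which proves $\mathfrak{Op}^A_\Phi(a)u\in\mathcal{S}(\mathbb{R}^d)$.

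The constants $C_{\alpha,\beta}$ produced in this way are controlled by finitely many Schwartz seminorms of $u$, with coefficients depending on finitely many seminorms of $a$ in $S^m(\mathbb{R}^d)$, of $d$ in $S^+$, and on polynomial bounds on the components of $A$; this gives the continuity of $\mathfrak{Op}^A_\Phi(a):\mathcal{S}(\mathbb{R}^d)\to\mathcal{S}(\mathbb{R}^d)$. The main technical delicacy is handling the polynomial growth in $x$ produced by $\partial_x^\alpha\omega^A(x,y)$: the trick of writing $x^\beta=((x-y)+y)^\beta$ and routing the $(x-y)$-factors through $\eta$-integration by parts is exactly what makes the argument succeed, and the bounded multiplicative factors $\partial_\eta^{\ast}d\in S^0$ that appear along the way do not affect the symbol class. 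This is the only place where the magnetic phase $\omega^A$ interacts non-trivially with the FIO phase $U$, and the interaction is mild because $\omega^A$ does not depend on $\eta$.
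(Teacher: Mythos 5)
Your proof is correct and follows essentially the same strategy as the paper: regularize, integrate by parts in $y$ via $\langle\eta\rangle^{-2}(1-\Delta_y)$ to gain $\langle\eta\rangle^{-2N}$, integrate by parts in $\eta$ to gain decay in $x-y$, and absorb the residual polynomial growth in $x$ by the decomposition $x=(x-y)+y$ together with the Schwartz decay of $u$. The only cosmetic difference is that the paper factors $e^{i\phi}=e^{i\langle x-y,\eta\rangle}e^{id(x,\eta)}$ and applies $(1-\Delta_\eta)^M$ to the pure oscillatory factor to produce $\langle x-y\rangle^{-2M}$, pushing $e^{id}$ into the amplitude (harmless since $\nabla_\eta d\in S^0$), whereas you use the first-order operator arising from $(x_j-y_j)e^{i\phi}=(-i\partial_{\eta_j}-\partial_{\eta_j}d)e^{i\phi}$ directly on the full phase; these are equivalent. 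You are more explicit than the paper about the Schwartz-seminorm estimates (the paper dispatches them as ``standard''), and your explanation of why the polynomial growth of $\partial_x^\alpha\omega^A$ is benign is a useful supplement.
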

\begin{proof}
 Let us fix some function $\psi\in C^\infty_0(\mathbb{R}^d)$ such that $\psi(\eta)=1$ for $|\eta|\leq1$ and some parameter $\epsilon\in(0,1]$. Let us define:
\begin{equation}\label{2.3}
\big( I_\epsilon u\big)(x):=\int_{\mathbb{R}^d}\int_{\mathbb{R}^d}e^{i\phi(x,y,\eta)}\omega^A(x,y)\psi(\epsilon\eta)a(x,\eta)u(y)\,dy\,\dbar\eta.
\end{equation}
We integrate by parts in the above integral, using the identities:
$$
(1-\Delta_y)^Ne^{i\phi(x,y,\eta)}=<\eta>^{2N}e^{i\phi(x,y,\eta)},\qquad(1-\Delta_\eta)^Me^{i<x-y,\eta>}=<x-y>^{2M}e^{i<x-y,\eta>},
$$
for some $N\in\mathbb{N}$ and $M\in\mathbb{N}$, in order to obtain
$$
\big( I_\epsilon u\big)(x)=
$$
$$
=\int_{\mathbb{R}^d}\int_{\mathbb{R}^d}e^{i<x-y,\eta>}<x-y>^{-2M}(1-\Delta_\eta)^M\left\{e^{id(x,\eta)}<\eta>^{-2N}(1-\Delta_y)^N\left[\omega^A(x,y)\psi(\epsilon\eta)a(x,\eta)u(y)\right]\right\}\,dy\,\dbar\eta.
$$
We choose first $N$ such that $2N>m+d$, then we choose $M$ depending on the value of $N$ and taking into account that the derivatives of $d(x,\eta)$ of order $\geq1$ with respect to $\eta$ are bounded, we apply the Dominated Convergence Theorem in order to conclude that
$$
\underset{\epsilon\searrow0}{\lim}\big(I_\epsilon u\big)(x)=\big(I_0 u\big)(x).
$$
But by definition the value of the oscillating integral \eqref{2.2} is $\big(I_0 u\big)(x)$. The properties stated in the Lemma follow in a standard way by chosing $N$ and $M$ large enough.
\end{proof}

\begin{definition}\label{D2.3}
 The operator $\mathfrak{Op}^A_\Phi(a)\in\mathbb{B}\big(\mathcal{S}(\mathbb{R}^d)\big)$ is called {\it the magnetic Fourier Integral Operator} associated to the symplectomorphism $\Phi$, the magnetic field $B$ and the symbol $a\in S^m(\mathbb{R}^d)$.
\end{definition}
\begin{lemma}\label{L2.4}
 The map
$
S^m(\mathbb{R}^d)\ni a\mapsto \mathfrak{Op}^A_\Phi(a)\in\mathbb{B}\big(\mathcal{S}(\mathbb{R}^d)\big)
$
is injective.
\end{lemma}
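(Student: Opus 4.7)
The plan is to fix $x\in\mathbb{R}^d$ and rewrite the action of $\mathfrak{Op}^A_\Phi(a)$ on a test function as an $\eta$-integral that pairs $e^{iU(x,\eta)}a(x,\eta)$ against an arbitrary element of $\mathcal{S}(\mathbb{R}^d_\eta)$, so that Fourier duality forces $a(x,\cdot)$ to vanish.

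First I would perform a regularization exactly as in the proof of Lemma~\ref{L2.2}: insert a cut-off $\psi(\epsilon\eta)$ with $\psi\in C^\infty_0(\mathbb{R}^d)$, $\psi\equiv1$ near $0$, so that the integrand becomes absolutely integrable and Fubini applies. After interchanging the $y$ and $\eta$ integrals, one obtains for every $u\in\mathcal{S}(\mathbb{R}^d)$ and $x\in\mathbb{R}^d$
\begin{equation*}
[I_\epsilon u](x)=\int_{\mathbb{R}^d}e^{iU(x,\eta)}\psi(\epsilon\eta)a(x,\eta)\,\widehat{w_x}(\eta)\,\dbar\eta,\qquad w_x(y):=\omega^A(x,y)u(y),
\end{equation*}
where $\widehat{w_x}$ denotes the Fourier transform (in $y$) of $w_x$. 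Because $\omega^A(x,\cdot)\in C^\infty_{\sf pol}$, we have $w_x\in\mathcal{S}(\mathbb{R}^d_y)$, hence $\widehat{w_x}\in\mathcal{S}(\mathbb{R}^d_\eta)$; together with $|a(x,\eta)|\lesssim\langle\eta\rangle^m$, this gives an integrable dominant, so dominated convergence as $\epsilon\searrow0$ and the identification $[\mathfrak{Op}^A_\Phi(a)u](x)=\lim_{\epsilon\to0}[I_\epsilon u](x)$ from Lemma~\ref{L2.2} yield the key identity
\begin{equation*}
[\mathfrak{Op}^A_\Phi(a)u](x)=\int_{\mathbb{R}^d}e^{iU(x,\eta)}a(x,\eta)\,\widehat{w_x}(\eta)\,\dbar\eta.
\end{equation*}

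Next I would observe that, for each fixed $x$, the map $u\mapsto w_x=\omega^A(x,\cdot)u$ is a bijection $\mathcal{S}(\mathbb{R}^d)\to\mathcal{S}(\mathbb{R}^d)$, since $|\omega^A|=1$ and its inverse $\overline{\omega^A(x,\cdot)}$ is again $C^\infty_{\sf pol}$; composing with the Fourier transform, which is a topological automorphism of $\mathcal{S}(\mathbb{R}^d_\eta)$, one concludes that as $u$ runs through $\mathcal{S}(\mathbb{R}^d)$, the function $\widehat{w_x}$ runs through all of $\mathcal{S}(\mathbb{R}^d_\eta)$.

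Finally, assuming $\mathfrak{Op}^A_\Phi(a)=0$ in $\mathbb{B}(\mathcal{S}(\mathbb{R}^d))$, the key identity above vanishes for every $u$ and every $x$; by the surjectivity just established, this forces the smooth, polynomially bounded function $\eta\mapsto e^{iU(x,\eta)}a(x,\eta)$ to annihilate every element of $\mathcal{S}(\mathbb{R}^d_\eta)$, hence to vanish identically. Since $|e^{iU(x,\eta)}|=1$, we conclude $a(x,\eta)=0$ for all $(x,\eta)$, proving injectivity. The only non-routine point is the Fubini/limit interchange in the first paragraph, which is, however, already handled by the regularization procedure built into Lemma~\ref{L2.2}; everything else is duality on $\mathcal{S}$.
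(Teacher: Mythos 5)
Your argument is correct and is essentially the paper's proof cast in dual form: both fix $x$, strip off the factor $\omega^A(x,\cdot)$, and reduce to the injectivity of the Fourier transform on polynomially bounded smooth functions of $\eta$. The paper makes this explicit via an inversion formula recovering $a(x,\cdot)$ from the distribution kernel $K^A_{\phi,a}(x,\cdot)$, while you test against arbitrary $u\in\mathcal{S}(\mathbb{R}^d)$ and use the surjectivity of $u\mapsto\widehat{\omega^A(x,\cdot)u}$ onto $\mathcal{S}(\mathbb{R}^d_\eta)$.
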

\begin{proof}
 The distribution kernel $K^A_{\phi,a}(x,y)$ of the operator $\mathfrak{Op}^A_\Phi(a)$ is explicitely given by
\begin{equation}\label{2.4}
 K^A_{\phi,a}(x,y):=\omega^A(x,y)\int_{\mathbb{R}^d}e^{i\phi(x,y,\eta)}a(x,\eta)\dbar\eta,
\end{equation}
as an oscillating integral with values in $\mathcal{S}^\prime(\mathbb{R}^d)$. Thus we have that
$$
\omega^A(y,x)K^A_{\phi,a}(x,y)=\mathcal{F}^{-1}_\eta\left[e^{iU(x,\cdot)}a(x,\cdot)\right](-y),
$$
and thus
\begin{equation}\label{2.5}
 a(x,\eta)=e^{-iU(x,\eta)}\mathcal{F}_y\left[\omega^A(-.,x)K^A_{\phi,a}(x,-.)\right](\eta)=\int_{\mathbb{R}^d}e^{-i\phi(x,y,\eta)}\omega^A(y,x)K^A_{\phi,a}(x,y)\,dy,
\end{equation}
with the equality in the sense of tempered distributions.
\end{proof}

\begin{definition}\label{D2.5}
We call {\it the principal symbol} of $\mathfrak{Op}^A_\Phi(a)$ any representative of the class of $a\in S^m(\mathbb{R}^d)$ in the quotient $S^m(\mathbb{R}^d)/S^{m-1}(\mathbb{R}^d)$. The number $m\in\mathbb{R}$ is called {\it the order} of $\mathfrak{Op}^A_\Phi(a)$.
\end{definition}
\begin{remark}\label{R2.6}
 If $\Phi={\sf Id}$, the identity map on $\mathbb{T}^*\mathbb{R}^d$, then $U(x,\eta)=<x,\eta>$ (modulo an additive constant) and thus we have that $\mathfrak{Op}^A_{\sf Id}(a)=a^A(x,D)$, which due to Proposition \ref{P1.14} is a magnetic $\Psi$DO with principal symbol $a$.
\end{remark}
\begin{remark}\label{R2.7}
 The definition of the operator $\mathfrak{Op}^A_\Phi(a)$ is gauge covariant, in the sense that for any real function $\psi\in C^\infty_{\text{\sf pol}}(\mathbb{R}^d)$, if $A^\prime=A+\d\psi$ then $\mathfrak{Op}^{A^\prime}_\Phi(a)=e^{i\psi}\mathfrak{Op}^A_\Phi(a)e^{-i\psi}$.
\end{remark}

\begin{lemma}\label{L2.8}
 Under the assumptions of Lemma \ref{L2.2}, for any $v\in\mathcal{S}(\mathbb{R}^d)$ and for any $y\in\mathbb{R}^d$ the following oscillating integral
\begin{equation}\label{2.6}
 \big(Sv\big)(y):=\int_{\mathbb{R}^d}\int_{\mathbb{R}^d}e^{-i\phi(x,y,\eta)}\omega^A(y,x)\overline{a(x,\eta)}v(x)\,dx\,\dbar\eta
\end{equation}
is well defined, we have that $Sv\in\mathcal{S}(\mathbb{R}^d)$ and the map
$$
\mathcal{S}(\mathbb{R}^d)\ni v\mapsto Sv\in\mathcal{S}(\mathbb{R}^d)
$$
is linear and continuous.

Moreover we have that $S=\left[\mathfrak{Op}^A_\Phi(a)\right]^*$ (the formal adjoint of the operator $\mathfrak{Op}^A_\Phi(a)$); thus $\mathfrak{Op}^A_\Phi(a)\in\mathbb{B}\big(\mathcal{S}^\prime(\mathbb{R}^d)\big)$.
\end{lemma}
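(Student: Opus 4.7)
My plan is to split the argument into two essentially independent steps: (i) showing that the oscillating integral \eqref{2.6} is well defined and that $S:\mathcal{S}(\mathbb{R}^d)\to\mathcal{S}(\mathbb{R}^d)$ is continuous, and (ii) identifying $S$ with the formal adjoint of $\mathfrak{Op}^A_\Phi(a)$ and concluding the extension to $\mathcal{S}'(\mathbb{R}^d)$ by duality.

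For step (i), I would mimic the regularization used in the proof of Lemma \ref{L2.2}. Insert a cut-off $\psi(\epsilon\eta)$ with $\psi\in C^\infty_0(\mathbb{R}^d)$ and $\psi\equiv1$ near $0$, and consider the resulting absolutely convergent approximation $(I_\epsilon v)(y)$. The task is then to show that $I_\epsilon v\to Sv$ in $\mathcal{S}$ as $\epsilon\searrow0$. Writing the phase as $-\phi(x,y,\eta)=\langle y-x,\eta\rangle-d(x,\eta)$, I would use $(1-\Delta_\eta)^M$ applied to $e^{i\langle y-x,\eta\rangle}$ to gain a weight $\langle y-x\rangle^{-2M}$ and then a first-order differential operator in $x$ built from the full phase gradient $\nabla_x\phi=\eta+\nabla_xd$ to gain $\langle\eta\rangle^{-N}$. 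Inequality \eqref{2.1}, a consequence of Hypothesis \ref{Hyp-U}(2), gives $|\eta+\nabla_xd(x,\eta)|\geq(1-\delta)|\eta|-C$ for large $|\eta|$, which is exactly what makes the non-stationary-phase integration by parts in $x$ effective. The fact that $\partial_\eta d\in S^0$ (since $d\in S^+$) guarantees that $\eta$-derivatives of $e^{-id(x,\eta)}$ are uniformly bounded and do not spoil the $\langle y-x\rangle^{-2M}$ gain, and the polynomial growth of $\omega^A(y,x)$ together with the symbol estimates on $a\in S^m$ cause no trouble once $M,N$ are taken large enough. Dominated convergence then gives $I_\epsilon v\to Sv$ pointwise, while the same estimates, differentiated in $y$, provide the seminorm bounds needed for continuity $\mathcal{S}\to\mathcal{S}$.

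For step (ii), I would start from the explicit kernel of $\mathfrak{Op}^A_\Phi(a)$ computed in \eqref{2.4}, namely $K^A_{\phi,a}(x,y)=\omega^A(x,y)\int e^{i\phi(x,y,\eta)}a(x,\eta)\,\dbar\eta$. Since $A$ is real, $\omega^A(x,y)$ is unimodular, and by Lemma \ref{L1.11}(1) its complex conjugate equals $\omega^A(y,x)$. A direct conjugation therefore yields
\begin{equation*}
\overline{K^A_{\phi,a}(x,y)}=\omega^A(y,x)\int_{\mathbb{R}^d}e^{-i\phi(x,y,\eta)}\overline{a(x,\eta)}\,\dbar\eta,
\end{equation*}
which is precisely the kernel of $S$ read off from \eqref{2.6}. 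Testing against $u,v\in\mathcal{S}(\mathbb{R}^d)$, applying Fubini on the regularized (cut-off) double integrals, and passing to the limit using the estimates of step (i) gives $\langle\mathfrak{Op}^A_\Phi(a)u,v\rangle=\langle u,Sv\rangle$, so $S=[\mathfrak{Op}^A_\Phi(a)]^*$. Transposition of the continuous map $S:\mathcal{S}\to\mathcal{S}$ then extends $\mathfrak{Op}^A_\Phi(a)$ to a continuous operator on $\mathcal{S}'(\mathbb{R}^d)$.

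The most delicate point will be the integration by parts in $x$ in step (i). Unlike in Lemma \ref{L2.2}, where the integration variable $y$ entered the phase only linearly and $(1-\Delta_y)^N$ produced a clean $\langle\eta\rangle^{-2N}$ gain, here $x$ appears nonlinearly through $U(x,\eta)$. A naive application of $(1-\Delta_x)^N$ to the factor $e^{i\langle y-x,\eta\rangle}$ alone would lose the $\langle\eta\rangle^{-2N}$ gain via Leibniz's rule hitting $e^{-id(x,\eta)}$, because $\partial_xd\in S^1$ grows linearly in $\eta$. The correct operator must therefore be adapted to the full phase $\phi$, and the whole construction rests on the lower bound \eqref{2.1}, i.e.\ on the contraction estimate $\|\nabla^2_{x,\eta}d\|\leq\delta<1$ of Hypothesis \ref{Hyp-U}.
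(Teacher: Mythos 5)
Your proposal matches the paper's argument in both structure and substance: the paper also replaces the $(1-\Delta_y)^N$ trick of Lemma~\ref{L2.2} by a first-order operator $L_\phi=\langle\nabla_x\phi\rangle^{-2}(1+i\langle\nabla_x\phi,\nabla_x\rangle)$ built from the full phase gradient $\nabla_x\phi=\eta+\nabla_x d$, uses the equivalence $C^{-1}\langle\eta\rangle\le\langle\nabla_x\phi(x,y,\eta)\rangle\le C\langle\eta\rangle$ (a consequence of Hypothesis~\ref{Hyp-U}(2) via Lemma~\ref{L1.5}), applies $(1-\Delta_\eta)^M$ for the $\langle y-x\rangle^{-2M}$ gain while exploiting $\nabla_\eta d\in S^0$, and then identifies $S=[\mathfrak{Op}^A_\Phi(a)]^*$ by the direct $L^2$-pairing computation. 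Your observation that a naive $(1-\Delta_x)^N$ would fail because $\nabla_xd\in S^1$ is precisely the point the paper's choice of $L_\phi$ is designed to circumvent, so you have identified the one genuinely delicate step correctly.
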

\begin{proof}
 For the first part of the statement we proceed like in the proof of Lemma \ref{L2.2} changing one of the operators used in the integration by parts; we define
$$
L_\phi:=\left<\big(\nabla_x\phi\big)(x,y,\eta)\right>^{-2}\left(1\,+\,i\left<\big(\nabla_x\phi\big)(x,y,\eta),\nabla_x\right>\right).
$$
In fact this operator satisfies the identity $L_\phi e^{-i\phi}=e^{-i\phi}$. We continue by noticing that 
$$
\big(\nabla_x\phi\big)(x,y,\eta)=\eta+\big(\nabla_x d\big)(x,y,\eta)
$$
and using the second property in Hypothesis \ref{Hyp-U} and Lemma \ref{L1.5} we deduce that $\nabla_x\phi\in S^1(\mathbb{R}^d)$ and that there exists a constant $C>0$ such that
\begin{equation}\label{2.7}
 C^{-1}<\eta>\leq\left<\big(\nabla_x\phi\big)(x,y,\eta)\right>\leq C<\eta>,\quad\forall(x,y,\eta)\in\big[\mathbb{R}^d\big]^3.
\end{equation}

The formal adjoint of $L_\phi$ is given by
\begin{equation}\label{2.8}
 ^tL_\phi=\left<\big(\nabla_x\phi\big)(x,y,\eta)\right>^{-1}\left[\alpha\,+\,\left<\beta,\nabla_x\right>\right],\text{ with }(\alpha,\beta)\in\left[BC^\infty(\mathbb{R}^{2d})\right]^2.
\end{equation}
Through the usual integration by parts procedure one obtains the following repesentation of $Sv$:
$$
Sv(y)=\int_{\mathbb{R}^d}\int_{\mathbb{R}^d}e^{i<y-x,\eta>}<y-x>^{-2M}(1-\Delta_\eta)^{M}\left\{e^{-id(x,\eta)}\left(^tL_\phi\right)^N\left[\omega^A(y,x)\overline{a(x,\eta)}v(x)\right]\right\}\,dx\,\dbar\eta,
$$
where we choose $N\in\mathbb{N}$ and $M\in\mathbb{N}$ sufficiently large. The first properties in the statement follow in a straightforward way. Considering the last statement, a direct computation shows that:
$$
\left(\mathfrak{Op}^A_\Phi(a)u,v\right)_{L^2(\mathbb{R}^d)}=\left(u,Sv\right)_{L^2(\mathbb{R}^d)},\qquad\forall(u,v)\in\big[\mathcal{S}(\mathbb{R}^d)\big]^2.
$$
\end{proof}
\begin{lemma}\label{L2.9}
 An operator $W\in\mathbb{B}\big(\mathcal{S}(\mathbb{R}^d);\mathcal{S}^\prime(\mathbb{R}^d)\big)$ is of the form $\mathfrak{Op}^A_\Phi(a)$ with a symbol $a\in S^{-\infty}(\mathbb{R}^d)$ if and only if there exists a symbol $b\in S^{-\infty}(\mathbb{R}^d)$ such that $W=b^A(x,D)$.
\end{lemma}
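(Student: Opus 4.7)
The plan is to exhibit an explicit pointwise bijection between the two symbol classes that produce the same operator. Set $d(x,\eta):=U(x,\eta)-\langle x,\eta\rangle$, which by Hypothesis \ref{Hyp-U} belongs to $S^+$. The candidate correspondence is
$$
b(x,\eta)\,=\,e^{id(x,\eta)}\,a(x,\eta),\qquad a(x,\eta)\,=\,e^{-id(x,\eta)}\,b(x,\eta).
$$
Thus the lemma reduces to two statements: (i) if one of $a,b$ lies in $S^{-\infty}$ and the other is defined by the above formula, then the other also lies in $S^{-\infty}$; and (ii) the resulting operators $\mathfrak{Op}^A_\Phi(a)$ and $b^A(x,D)$ coincide.

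For (ii), recall from \eqref{2.4} and \eqref{1.19} that the kernels are
$$
K^A_{\phi,a}(x,y)\,=\,\omega^A(x,y)\!\int_{\mathbb{R}^d}\!e^{i(U(x,\eta)-\langle y,\eta\rangle)}a(x,\eta)\,\dbar\eta,\quad K_b(x,y)\,=\,\omega^A(x,y)\!\int_{\mathbb{R}^d}\!e^{i\langle x-y,\eta\rangle}b(x,\eta)\,\dbar\eta.
$$
For $a\in S^{-\infty}$ the function $a(x,\cdot)$ is Schwartz in $\eta$, so these integrals are absolutely convergent, and substituting $b=e^{id}a$ inside the second integral (with $\langle x,\eta\rangle+d=U$) makes the two integrands identical. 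Thus the two kernels coincide pointwise, hence the operators coincide as elements of $\mathbb{B}\!\bigl(\mathcal{S}(\mathbb{R}^d);\mathcal{S}'(\mathbb{R}^d)\bigr)$. Uniqueness of the symbol in each representation is guaranteed by Lemma \ref{L2.4} (and its $\mathfrak{Op}^A={\sf Id}^A$-variant).

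For (i), the key is to control derivatives of $e^{id}$. Since $d\in S^+$, iteration of the defining conditions $\partial_{x_j}d\in S^1$ and $\partial_{\eta_j}d\in S^0$ yields
$$
\bigl|\partial_x^{\alpha}\partial_\eta^{\beta}d(x,\eta)\bigr|\,\leq\,C_{\alpha\beta}\langle\eta\rangle^{1-|\beta|},\qquad |\alpha|+|\beta|\geq 1.
$$
Faà di Bruno's formula expresses $\partial_x^{\alpha}\partial_\eta^{\beta}e^{id}$ as $e^{id}$ times a finite sum of products $\prod_j\partial_x^{\alpha_j}\partial_\eta^{\beta_j}d$ with $\sum\alpha_j=\alpha$, $\sum\beta_j=\beta$ and each $|\alpha_j|+|\beta_j|\geq 1$. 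Since $|e^{id}|=1$ and the number of factors is at most $|\alpha|+|\beta|$, one obtains the estimate
$$
\bigl|\partial_x^{\alpha}\partial_\eta^{\beta}e^{id(x,\eta)}\bigr|\,\leq\,C'_{\alpha\beta}\langle\eta\rangle^{|\alpha|}.
$$
The Leibniz rule, combined with the defining estimates $|\partial_x^{\alpha'}\partial_\eta^{\beta'}a|\leq C_{\alpha'\beta' N}\langle\eta\rangle^{-N}$ for every $N$, then gives $|\partial_x^{\alpha}\partial_\eta^{\beta}(e^{id}a)|\leq C\langle\eta\rangle^{|\alpha|-N}$ for every $N$, proving $b=e^{id}a\in S^{-\infty}$. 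The same argument applied with $-d$ in place of $d$ handles the opposite direction.

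The main technical obstacle is the derivative bookkeeping for $e^{id}$: the growth factor $\langle\eta\rangle^{|\alpha|}$ produced by differentiating the oscillatory factor has to be absorbed by the rapid decay of $a$. This is where the $S^+$ hypothesis on $d$ is essential, because it forbids worse-than-linear growth in $\eta$ of the derivatives of $d$ (and ensures that $\eta$-differentiations actually gain decay). Everything else is either a direct application of Fourier inversion at the level of kernels or the uniqueness statement of Lemma \ref{L2.4}.
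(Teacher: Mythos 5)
Your proof is correct and follows exactly the paper's route: the paper also defines $b(x,\eta):=e^{id(x,\eta)}a(x,\eta)$, observes that the two kernels/integrands coincide, and asserts $b\in S^{-\infty}$ "evidently." You have simply filled in the derivative bookkeeping (the $S^+$ estimate on $d$, Faà di Bruno, and the absorption of $\langle\eta\rangle^{|\alpha|}$ by the rapid decay of $a$) that the paper leaves implicit.
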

\begin{proof}
 If $W=\mathfrak{Op}^A_\Phi(a)$ with $a\in S^{-\infty}(\mathbb{R}^d)$, then for any $u\in\mathcal{S}(\mathbb{R}^d)$ we have that
$$
\big(Wu\big)(x)=\int_{\mathbb{R}^d}\int_{\mathbb{R}^d}e^{i\phi(x,y,\eta)}\omega^A(x,y)a(x,\eta)u(y)\,dy\,\dbar\eta=
$$
$$
=\int_{\mathbb{R}^d}\int_{\mathbb{R}^d}e^{i<x-y,\eta>}\omega^A(x,y)b(x,\eta)u(y)\,dy\,\dbar\eta=\big(b^A(x,D)u\big)(x)
$$
with $b(x,\eta):=e^{id(x,\eta)}a(x,\eta)$ so that evidently $b\in S^{-\infty}(\mathbb{R}^d)$.  The converse statement follows by the same arguments.
\end{proof}

\section{Product of a $\Psi$DO with a FIO}

Suppose given $a\in S^{m''}(\mathbb{R}^d)$ and $b\in S^{m'}(\mathbb{R}^d)$ and the associated magnetic operators: $E:=a^A(x,D)$ and $F:=\mathfrak{Op}^A_{\Phi}(b)$. Then, due to our previous analysis, the composed operator $E\circ F$ exists as a linear bounded operator from $\mathcal{S}(\mathbb{R}^d)$ to $\mathcal{S}(\mathbb{R}^d)$. We are interested to obtain some more precise properties of this operator.

First let us make explicit the form of our two magnetic operators:
\begin{equation}\label{3.1}
 \big(Eu\big)(x)=\int_{\mathbb{R}^d}\int_{\mathbb{R}^d}e^{i<x-z,\zeta>}\omega^A(x,z)a(x,\zeta)u(z)\,dz\,\dbar\zeta,\quad\forall u\in\mathcal{S}(\mathbb{R}^d),\ \forall x\in\mathbb{R}^d,
\end{equation}
\begin{equation}\label{3.2}
 \big(Fv\big)(z)=\int_{\mathbb{R}^d}\int_{\mathbb{R}^d}e^{i(U(z,\eta)-<y,\eta>}\omega^A(z,y)b(z,\eta)v(y)\,dy\,\dbar\eta,\quad\forall v\in\mathcal{S}(\mathbb{R}^d),\ \forall z\in\mathbb{R}^d.
\end{equation}

\subsection{The product is a FIO}

In order to have an explicit computation we shall approximate the two symbols by compactly supported ones. More precisely we fix some positive function $\chi\in C^\infty_0(\mathbb{R}^{2d})$ with $\chi(0,0)=1$ and for any $\epsilon\in(0,1]$ we define
$$
a_\epsilon(x,z,\zeta):=\chi(\epsilon z,\epsilon\zeta)a(x,\zeta),\quad b_\epsilon(y,z,\eta):=\chi(\epsilon y,\epsilon\eta)b(z,\eta).
$$
We denote by $E_\epsilon$ and $F_\epsilon$ the operators given respectively by \eqref{3.1} and \eqref{3.2} with the symbols $a$ and $b$ replaced by their approximants $a_\epsilon$, respectively $b_\epsilon$.

Returning to the proof of Lemma \ref{L2.2} we easily notice that for any pair of elements $u$ and $v$ from $\mathcal{S}(\mathbb{R}^d)$ we have that
$$
\underset{\epsilon\searrow0}{\lim}E_\epsilon u=Eu,\quad\underset{\epsilon\searrow0}{\lim}F_\epsilon v=Fv,\quad\text{in }\mathcal{S}(\mathbb{R}^d).
$$
Due to the fact that $E_\epsilon$ and $F_\epsilon$ are bounded operators on $\mathcal{S}(\mathbb{R}^d)$, the Banach-Steinhaus Theorem implies that 
$$
\underset{\epsilon\searrow0}{\lim}(E_\epsilon\circ F_\epsilon)v=(E\circ F)v,\quad\forall v\in\mathcal{S}(\mathbb{R}^d).
$$

A direct computation gives for any $v\in\mathcal{S}(\mathbb{R}^d)$ and any $x\in\mathbb{R}^d$
$$
\left[(E_\epsilon\circ F_\epsilon)v\right](x)=
$$
$$
=\int_{\mathbb{R}^{4d}}e^{i(<x-z,\zeta>-<y,\eta>+U(z,\eta))}\omega^A(x,y)\Omega^B(x,z,y)a_\epsilon(x,z,\zeta)b_\epsilon(y,z,\eta)v(y)\,dy\,dz\,\dbar\eta\,\dbar\zeta.
$$
Thus the distribution kernel $K_\epsilon\in\mathcal{S}(\mathbb{R}^{2d})$ of the operator $E_\epsilon\circ F_\epsilon$ is given by
$$
K_\epsilon(x,y)=\omega^A(x,y)\int_{\mathbb{R}^{3d}}e^{i(<x-z,\zeta>-<y,\eta>+U(z,\eta))}\Omega^B(x,z,y)a_\epsilon(x,z,\zeta)b_\epsilon(y,z,\eta)\,dz\,\dbar\eta\,\dbar\zeta.
$$
Using the equality \eqref{2.5} we obtain that $E_\epsilon\circ F_\epsilon=\mathfrak{Op}^A_{\Phi}(c_\epsilon)$ with $c_\epsilon\in\mathcal{S}(\mathbb{R}^{2d})$; explicitely we have
\begin{equation}\label{3.4}
 c_\epsilon(x,\xi):=\int_{\mathbb{R}^{4d}}e^{i\psi(x,y,z,\xi,\eta,\zeta)}\Omega^B(x,z,y)a_\epsilon(x,z,\zeta)b_\epsilon(y,z,\eta)\,dy\,dz\,\dbar\eta\,\dbar\zeta,
\end{equation}
with
\begin{equation}\label{3.5}
 \psi(x,y,z,\xi,\eta,\zeta):=<y,\xi-\eta>+<x-z,\zeta>+U(z,\eta)-U(x,\xi).
\end{equation}

\begin{proposition}\label{P3.2}
Suppose we have $f\in S^{m,m',m''}(\mathbb{R}^{3d}\times\mathbb{R}^d\times\mathbb{R}^d\times\mathbb{R}^d)$ and a phase function $\psi$ defined by \eqref{3.5}, then the following oscillating integral is well defined
\begin{equation}\label{3.7}
 c(x,\xi):=\int_{\mathbb{R}^{4d}}e^{i\psi(x,y,z,\xi,\eta,\zeta)}\Omega^B(x,z,y)f(x,y,z,\xi,\eta,\zeta)\,dy\,dz\,\dbar\eta\dbar\,\zeta.
\end{equation}
Moreover $c\in S^{m+m'+m''}(\mathbb{R}^d)$ and the map
$$
S^{m,m',m''}(\mathbb{R}^{3d}\times\mathbb{R}^d\times\mathbb{R}^d\times\mathbb{R}^d)\ni f\mapsto c\in S^{m+m'+m''}(\mathbb{R}^d)
$$
is continuous.
\end{proposition}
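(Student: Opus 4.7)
My plan is to first make sense of \eqref{3.7} as an oscillatory integral by inserting a family of compactly supported cutoffs $\chi_\epsilon$ in $(y,z,\eta,\zeta)$, which yields smooth functions $c_\epsilon\in\mathcal{S}(\mathbb{R}^{2d})$; then to establish $\epsilon$-uniform estimates of the form $|\partial_x^\alpha\partial_\xi^\beta c_\epsilon(x,\xi)|\leq C_{\alpha\beta}<\xi>^{m+m'+m''-|\beta|}$ via a four-fold integration by parts; and finally to verify that $(c_\epsilon)$ is Cauchy in $S^{m+m'+m''}(\mathbb{R}^d)$, so that its limit $c$ has the announced properties and the map $f\mapsto c$ is continuous between the corresponding Fr\'echet spaces.

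The IBP is driven by the four gradients of the phase from \eqref{3.5}: $\partial_y\psi = \xi-\eta$, $\partial_\zeta\psi = x-z$, $\partial_z\psi = -\zeta+\eta+\nabla_z d(z,\eta)$, $\partial_\eta\psi = -y+z+\nabla_\eta d(z,\eta)$. The first two allow iterating $(1-\Delta_y)^{N_1}$ and $(1-\Delta_\zeta)^{N_2}$, producing decay $<\xi-\eta>^{-2N_1}<x-z>^{-2N_2}$ at the cost of $y$- and $\zeta$-derivatives of $\Omega^B f$. The last two yield first-order transport operators of the form $\hat L:=(1+|\nabla\psi|^2)^{-1}(1+\nabla\psi\cdot(-i\nabla))$; their iterates produce decay $<\zeta-\eta-\nabla_z d>^{-K_1}<y-z-\nabla_\eta d>^{-K_2}$, and the coefficients of $\hat L$ together with their iterated derivatives remain in the right classes by Hypothesis \ref{Hyp-U}(2) and the techniques of Lemma \ref{L1.5} applied to $d$. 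Combining the four operators bounds the integrand by $<\xi>^m<\eta>^{m'}<\zeta>^{m''}$ multiplied by an absolutely integrable weight in $(y,z,\eta,\zeta)$. The factor $\Omega^B(x,z,y)$, of modulus one, is harmless: by Lemma \ref{L1.11}(4) its derivatives in $y$ or $z$ generate linear factors in $(x-z)$ and $(x-y)$, both absorbed by the decays $<x-z>^{-2N_2}$ and $<y-z-\nabla_\eta d>^{-K_2}$ upon writing $x-y=(x-z)-(y-z)$ and using boundedness of $\nabla_\eta d\in S^0$.

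To extract the correct symbol order on the essential support, one invokes Peetre-type inequalities: where $<\xi-\eta>^{-2N_1}$ is not rapidly decaying one has $<\eta>\sim<\xi>$, and where $<\zeta-\eta-\nabla_z d>^{-K_1}$ is not rapidly decaying, choosing $K_1>|m''|$ and using $|\nabla_z d|\lesssim<\eta>$ trades $<\zeta>^{m''}$ for $<\eta>^{m''}$ up to a bounded factor; altogether the integrand is bounded by $<\xi>^{m+m'+m''}$ times an integrable weight, giving $|c(x,\xi)|\leq C<\xi>^{m+m'+m''}$. Derivatives $\partial_x^\alpha\partial_\xi^\beta c$ are estimated by the same scheme: each $\partial_\xi$ or $\partial_x$ acts on $e^{i\psi}$, on $\Omega^B$, or on $f$; the factor $\partial_\xi\psi=y-x-\nabla_\xi d$ (respectively $\partial_x\psi=\zeta-\xi-\nabla_x d$) produced by differentiating $e^{i\psi}$ is absorbed by one additional IBP step against the existing decay weights, delivering the required $<\xi>^{-1}$ gain per $\partial_\xi$, while derivatives of $\Omega^B$ and $f$ are treated as above. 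The main obstacle I expect is the combinatorial bookkeeping: one must control how iterated derivatives of the IBP coefficients (which involve compositions of $\nabla^2 d$) combine with the decay weights so that only finitely many seminorms of $f$ enter each estimate. Once this is under control, applying the uniform bounds to $f_{\epsilon_1}-f_{\epsilon_2}$ gives the Cauchy property, and hence both the existence of $c\in S^{m+m'+m''}(\mathbb{R}^d)$ as the limit of $c_\epsilon$ and the continuity of $f\mapsto c$.
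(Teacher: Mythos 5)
Your scheme captures the four gradient directions correctly, and the cutoff-plus-uniform-estimate framework is the right shell, but there is a genuine gap in the claim that the coefficients of the transport operator $\hat L$ in the $z$-direction remain controllable. The adjoint $^t\hat L_z$ has coefficients whose $z$-derivatives involve $\nabla^2_{z,z}d(z,\eta)$. Since $d\in S^+$ only gives $\partial_{x_j}d\in S^1$, this second derivative is of order $\langle\eta\rangle$; Hypothesis \ref{Hyp-U}(2) bounds only $\nabla^2_{x,\eta}d$, and Lemma \ref{L1.5} concerns the invertibility of a near-identity map in $\eta$ --- neither controls $\nabla^2_{x,x}d$. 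Consequently each iteration of $^t\hat L_z$ can produce a factor of size $\langle\eta\rangle/\langle\partial_z\psi\rangle$, and exactly on the essential support of the integral --- where $\langle\zeta-\eta-\nabla_z d\rangle$ is bounded, $\langle\eta\rangle\sim\langle\xi\rangle$, and the decay $\langle\partial_z\psi\rangle^{-K_1}$ does not help --- these accumulate to $\langle\xi\rangle^{K_1}$. Since $K_1>d$ is already needed just for absolute integrability in $\zeta$, the naive iterate produces an uncontrollable $\langle\xi\rangle^{d}$ loss, and the estimate $\left|c(x,\xi)\right|\leq C_f\langle\xi\rangle^{m+m'+m''}$ does not follow. (The $\eta$-transport operator is fine, because its adjoint coefficients involve only $\nabla^2_{\eta,z}d$ and $\nabla^2_{\eta,\eta}d$, both bounded; the problem is specifically in the $z$-direction.)

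The paper handles this with two devices absent from your sketch. First, after recentering the variables at $(x,\xi)$, it introduces a partition of unity $1=\rho_0+\rho_\infty$ (see \eqref{3.14}) adapted to the ratio $\langle\xi-\overline\eta\rangle/\langle\tilde\zeta-\overline h_0\rangle$: on $\operatorname{supp}\rho_\infty$ that ratio is bounded by $\langle\overline\eta\rangle$, so the bad transport coefficients are absorbed by taking additional $\tilde y$-integrations by parts, and that contribution is $O(\langle\xi\rangle^{-\infty})$. Second, on $\operatorname{supp}\rho_0$ the paper performs the further change of variables $\tilde\zeta=\overline\zeta+\overline h$ exploiting the algebraic identity \eqref{3.28}, $d(x-\overline z,\xi-\overline\eta)-d(x,\xi-\overline\eta)=-\langle\overline h,\overline z\rangle$, which \emph{linearizes} the $\overline z$-dependence of the phase. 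The $\overline z$-IBP then uses the operator $\langle\overline\zeta\rangle^{-2}(1-\Delta_{\overline z})$, whose coefficient does not depend on $\overline z$, so no coefficient growth arises; the $\overline z$-derivatives of the new amplitude are in turn controlled because $\langle\overline\zeta+\xi-\overline\eta+\overline h\rangle\sim\langle\xi-\overline\eta\rangle$ on the new integration region (see \eqref{3.38}). It is precisely this change of variables --- or some equivalent device such as a stationary-phase expansion in $(z,\zeta)$ --- that your proposal needs and does not supply.
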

\begin{corollary}\label{C3.2}
 Taking $f(x,y,z,\xi,\eta,\zeta)=a_\epsilon(x,z,\zeta)b_\epsilon(y,z,\eta)$ we conclude that $c_\epsilon$ defined in \eqref{3.4} belongs to $S^{m'+m''}(\mathbb{R}^d)$ and the limit $c:=\underset{\epsilon\searrow0}{\lim}c_\epsilon$ exists in $S^{m'+m''}(\mathbb{R}^d)$.
\end{corollary}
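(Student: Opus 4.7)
The plan is to derive Corollary \ref{C3.2} as an immediate specialization of Proposition \ref{P3.2} once I verify that the amplitude $f_\epsilon(x,y,z,\xi,\eta,\zeta) := a_\epsilon(x,z,\zeta)\, b_\epsilon(y,z,\eta)$ lies in the joint symbol class $S^{0,m',m''}$ with seminorms uniform in $\epsilon \in (0,1]$, and then pass to the limit.

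First I would show that $\{f_\epsilon\}_{\epsilon \in (0,1]}$ is a bounded subset of $S^{0,m',m''}(\mathbb{R}^{3d}\times\mathbb{R}^d\times\mathbb{R}^d\times\mathbb{R}^d)$, where $\mathbb{R}^{3d}$ carries the base variables $(x,y,z)$ and the three $\mathbb{R}^d$ factors carry $\xi$, $\eta$, $\zeta$ respectively. The index $0$ in the $\xi$-slot is free since $f_\epsilon$ is $\xi$-independent. Since $a \in S^{m''}$ and $b \in S^{m'}$, the required estimates reduce to checking that multiplication by $\chi(\epsilon z, \epsilon \zeta)$ (resp.\ $\chi(\epsilon y, \epsilon \eta)$) preserves the symbol class uniformly in $\epsilon$. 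The key observation is that each derivative $\partial_\zeta^{\beta'}$ falling on $\chi(\epsilon z, \epsilon \zeta)$ yields a factor $\epsilon^{|\beta'|}$, but on the support of this cutoff we have $\epsilon|\zeta| \leq R$ (with $R$ the radius of $\mathrm{supp}\,\chi$), so $\epsilon^{|\beta'|} \leq C\, <\zeta>^{-|\beta'|}$ for $|\zeta|\geq 1$, and trivially $\epsilon^{|\beta'|}\leq 1$ elsewhere. Leibniz combined with the symbolic decay of $a$ then gives the required $<\zeta>^{m''-|\beta|}$ control uniformly in $\epsilon$; the analogous argument in the $\eta$-variable handles $b$, while the $(x,y,z)$-derivatives cause no $\epsilon$-issue.

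Once this uniform membership is established, Proposition \ref{P3.2} directly yields $c_\epsilon \in S^{0+m'+m''}(\mathbb{R}^d) = S^{m'+m''}(\mathbb{R}^d)$, with seminorms bounded uniformly in $\epsilon$ via the continuity statement. To identify the limit, I would use that $f_\epsilon \to f_0 := a(x,\zeta)\, b(z,\eta)$ as $\epsilon \searrow 0$ in the slightly weaker space $S^{0,m'+1,m''+1}$: writing $f_\epsilon - f_0 = [\chi(\epsilon z, \epsilon \zeta)\chi(\epsilon y, \epsilon \eta) - 1]\, a(x,\zeta)\, b(z,\eta)$, the pointwise convergence $\chi(\epsilon\, \cdot) \to 1$ together with the extra power of decay allowed in the weaker space gives, via dominated convergence in each seminorm, convergence in $S^{0,m'+1,m''+1}$. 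The continuity in Proposition \ref{P3.2} then yields $c_\epsilon \to c$ in $S^{m'+m''+2}(\mathbb{R}^d)$, with $c$ the oscillatory integral \eqref{3.7} associated to $f_0 \in S^{0,m',m''}$. Since the $S^{m'+m''}$-seminorms of $c_\epsilon$ are uniformly bounded from the previous step, the pointwise limit $c$ inherits those same bounds, so $c \in S^{m'+m''}(\mathbb{R}^d)$ as claimed.

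The main obstacle is the bookkeeping in the first step: verifying that the $\epsilon^{|\beta'|}$ factors produced by differentiating $\chi(\epsilon\, \cdot, \epsilon\, \cdot)$ are absorbed, via the support restriction, into the symbolic decay in $<\zeta>$ and $<\eta>$ with constants independent of $\epsilon$. Once this is clean, everything else is a routine application of Proposition \ref{P3.2} together with the pointwise-limit argument above.
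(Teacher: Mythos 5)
Your first step --- showing that $\{f_\epsilon\}_{\epsilon\in(0,1]}$ is bounded in $S^{0,m',m''}$ by absorbing the factors $\epsilon^{|\beta'|}$ coming from derivatives of the cutoff into $<\zeta>^{-|\beta'|}$ (resp. $<\eta>^{-|\beta'|}$) via the support condition $\epsilon|\zeta|\leq R$ --- is correct and is exactly what the paper relies on; combined with the continuity statement of Proposition \ref{P3.2} it gives the uniform $S^{m'+m''}$ bounds on $c_\epsilon$.

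The gap is in your identification of the limit: the claim that $f_\epsilon\to f_0$ in $S^{0,m'+1,m''+1}$ is false. The classes $S^{m,m',m''}$ impose no decay in the spatial variables $(x,y,z)$, and the cutoffs $\chi(\epsilon y,\epsilon\eta)$, $\chi(\epsilon z,\epsilon\zeta)$ truncate in $y$ and $z$ as well as in $\eta$ and $\zeta$. Fix $\eta_0,\zeta_0$ and points $x_0,z_0$ with $a(x_0,\zeta_0)b(z_0,\eta_0)\neq0$, and take $|y|=2R/\epsilon$ so that $(\epsilon y,\epsilon\eta_0)$ lies outside $\operatorname{supp}\chi$: then $f_\epsilon=0$ there while $f_0$ is not, and the weight $<\eta_0>^{-(m'+1)}<\zeta_0>^{-(m''+1)}$ is a fixed constant, so the zero-order seminorm of $f_\epsilon-f_0$ in $S^{0,m'+1,m''+1}$ stays bounded away from $0$ as $\epsilon\searrow0$. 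Lowering the orders in $\eta,\zeta$ cannot compensate for losing the amplitude at large $|y|$ or $|z|$ with $\eta,\zeta$ bounded; also note that the seminorms are suprema, so ``dominated convergence in each seminorm'' is not an available tool. The correct route (the one implicit in the paper's proof of Proposition \ref{P3.2}) is to work at the level of the regularized oscillatory integral: after the integrations by parts carried out there, the integrand is dominated uniformly in $\epsilon$ by an integrable function and converges pointwise (since $\chi(\epsilon\cdot)\to1$ and its derivatives $\to0$ pointwise with uniform bounds), so $c_\epsilon$ and all its derivatives converge pointwise; together with your uniform $S^{m'+m''}$ bounds this puts the limit in $S^{m'+m''}$ and yields convergence in every $S^{m'+m''+\delta}$, $\delta>0$, which is what Proposition \ref{P3.1} actually uses.
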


\begin{proposition}\label{P3.1}
 If $a\in S^{m''}(\mathbb{R}^d)$ and $b\in S^{m'}(\mathbb{R}^d)$, then $a^A(x,D)\circ\mathfrak{Op}^A_\Phi(b)=\mathfrak{Op}^A_\Phi(c)$ with $c\in S^{m'+m''}(\mathbb{R}^d)$. Moreover $c$ is the limit in $S^{m'+m''}(\mathbb{R}^d)$ of the symbols defined by \eqref{3.4}.
\end{proposition}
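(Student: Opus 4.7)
The plan is to assemble what has already been set up in the paragraphs preceding the statement, invoking Corollary \ref{C3.2} (i.e.\ Proposition \ref{P3.2}) as a black box. The reduction to compactly supported symbols has already produced an explicit formula: for each $\epsilon \in (0,1]$ one has $E_\epsilon \circ F_\epsilon = \mathfrak{Op}^A_\Phi(c_\epsilon)$ with $c_\epsilon$ given by \eqref{3.4}. My proof will consist in justifying two convergence statements and matching them.

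First I would record, via the argument already given using Lemma \ref{L2.2} and the Banach--Steinhaus theorem, that $E_\epsilon \to E$ and $F_\epsilon \to F$ strongly in $\mathbb{B}\bigl(\mathcal{S}(\mathbb{R}^d)\bigr)$, and hence $E_\epsilon \circ F_\epsilon \to E\circ F$ strongly on $\mathcal{S}(\mathbb{R}^d)$. On the other side, I apply Corollary \ref{C3.2}: the integrand $f_\epsilon(x,y,z,\xi,\eta,\zeta) := a_\epsilon(x,z,\zeta)\,b_\epsilon(y,z,\eta)$ is a bounded family in $S^{0,m',m''}(\mathbb{R}^{3d}\times\mathbb{R}^d\times\mathbb{R}^d\times\mathbb{R}^d)$ converging as $\epsilon\searrow 0$ to $f(x,y,z,\xi,\eta,\zeta) := a(x,\zeta)\,b(z,\eta)$, hence the continuity part of Proposition \ref{P3.2} gives $c_\epsilon \to c$ in $S^{m'+m''}(\mathbb{R}^d)$ for some $c \in S^{m'+m''}(\mathbb{R}^d)$.

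It then remains to pass to the limit on the right of $E_\epsilon \circ F_\epsilon = \mathfrak{Op}^A_\Phi(c_\epsilon)$. For this I invoke the continuity of the map $S^{m'+m''}(\mathbb{R}^d) \ni c \mapsto \mathfrak{Op}^A_\Phi(c) \in \mathbb{B}\bigl(\mathcal{S}(\mathbb{R}^d)\bigr)$, which is exactly what the proof of Lemma \ref{L2.2} establishes (the seminorm estimates used in the integration by parts there are continuous in the symbol seminorms). Applying this to any $v \in \mathcal{S}(\mathbb{R}^d)$ and any $x \in \mathbb{R}^d$, one concludes
\begin{equation*}
\bigl[\mathfrak{Op}^A_\Phi(c_\epsilon)v\bigr](x) \;\longrightarrow\; \bigl[\mathfrak{Op}^A_\Phi(c)v\bigr](x),
\end{equation*}
while the left sides converge to $\bigl[(E\circ F)v\bigr](x)$ by the Banach--Steinhaus step. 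Equality of the two limits yields $E \circ F = \mathfrak{Op}^A_\Phi(c)$, and injectivity (Lemma \ref{L2.4}) identifies the symbol uniquely.

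The main obstacle, conceptually, is purely bookkeeping: making sure the family $\{f_\epsilon\}_{\epsilon\in(0,1]}$ does lie in a \emph{bounded} subset of the Fr\'echet space $S^{0,m',m''}$ with the required convergence, so that Proposition \ref{P3.2} applies uniformly. This is where one uses that $\chi \in C^\infty_0(\mathbb{R}^{2d})$ with $\chi(0,0)=1$: for any multi-index $(\alpha,\beta)$ the factor $\chi(\epsilon z,\epsilon\zeta)$ and its derivatives in $(z,\zeta)$ pick up powers of $\epsilon$ that are harmless for an upper seminorm bound, and converge pointwise to $1$, $0$, $0,\dots$ respectively. Once this uniformity is in hand, everything else is a direct consequence of the machinery already proved.
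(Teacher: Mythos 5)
Your proposal is correct and follows essentially the same route as the paper's (very terse) proof: invoke Corollary \ref{C3.2} to get $c_\epsilon\to c$ in $S^{m'+m''}(\mathbb{R}^d)$, use continuity of the quantization map in the symbol (implicit in the estimates of Lemma \ref{L2.2}) to pass to the limit in $\mathfrak{Op}^A_\Phi(c_\epsilon)v$, and match this against the Banach--Steinhaus limit $(E_\epsilon\circ F_\epsilon)v\to(E\circ F)v$. The only extra touches you add—spelling out the boundedness/convergence bookkeeping for $f_\epsilon$ and citing Lemma \ref{L2.4} for uniqueness of the symbol—are consistent with, and slightly more explicit than, what the paper relies on.
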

\begin{proof}
 Due to Corollary \ref{C3.2} we know that $c:=\underset{\epsilon\searrow0}{\lim}c_\epsilon$ exists in $S^{m'+m''}(\mathbb{R}^d)$. It follows that $\underset{\epsilon\searrow0}{\lim}\mathfrak{Op}^A_\Phi(c_\epsilon)v=\mathfrak{Op}^A_\Phi(c)v$ for any $v\in\mathcal{S}(\mathbb{R}^d)$. This finishes the proof of the Proposition.
\end{proof}

\begin{proofP32} 
 Let us fix some positive function $\chi\in C^\infty_0(\mathbb{R}^{4d})$ with $\chi(0,0,0,0)=1$ and for any $\epsilon\in(0,1]$ let us define $f_\epsilon(x,y,z,\xi,\eta,\zeta)=\chi(\epsilon y,\epsilon z,\epsilon\eta,\epsilon\zeta)f(x,y,z,\xi,\eta,\zeta)$. We denote by $c_\epsilon$ the integral that \eqref{3.7} associates to $f_\epsilon$ and notice that it defines an element of $\mathcal{S}(\mathbb{R}^{2d})$.

The arguments given below prove that for any $(x,\xi)\in\mathbb{R}^{2d}$, $\exists\underset{\epsilon\rightarrow0}{\lim}c_\epsilon(x,\xi)$.
Thus the conclusions of the Proposition, derive from the following fact: {\it if $f$ has compact support with respect to the variables $(y,z,\eta,\zeta)$, and $c$ is defined by the integral \eqref{3.7}, then any seminorm of $c$ defining the topology of  $S^{m+m'+m''}(\mathbb{R}^d)$ can be bounded by a seminorm of $f$ defining the topology of $S^{m,m',m''}(\mathbb{R}^{3d}\times\mathbb{R}^d\times\mathbb{R}^d\times\mathbb{R}^d)$ multiplied by some constant that is ``well controlled''}. In proving this fact we shall generically denote by $C_f$ any seminorm of $f$ defining the topology of $S^{m,m',m''}(\mathbb{R}^{3d}\times\mathbb{R}^d\times\mathbb{R}^d\times\mathbb{R}^d)$ and by $C$ any constant, that may depend only on $\Phi$ and $B$. Let us prove now this last statement.

We begin with a change of variables:
\begin{equation}\label{3.8}
 y:=x-\tilde{y},\ z:=x-\overline{z},\ \eta:=\xi-\overline{\eta},\ \zeta:=\xi-\overline{\eta}+\tilde{\zeta}
\end{equation}
leading to the formula
\begin{equation}\label{3.9}
 c(x,\xi):=\int_{\mathbb{R}^{4d}}e^{i\tilde{\psi}}\tilde{\Omega^B}\tilde{f}\,d\tilde{y}\,d\overline{z}\,\dbar\overline{\eta}\,\dbar\tilde{\zeta}
\end{equation}
with
\begin{equation}\label{3.10}
 \tilde{\psi}(x,\tilde{y},\overline{z},\xi,\overline{\eta},\tilde{\zeta}):=\psi(x,x-\tilde{y},x-\overline{z},\xi,\xi-\overline{\eta},\xi-\overline{\eta}+\tilde{\zeta})=<\overline{z},\tilde{\zeta}>-<\tilde{y},\overline{\eta}>+d(x-\overline{z},\xi-\overline{\eta})-d(x,\xi),
\end{equation}
\begin{equation}\label{3.11}
 \tilde{f}(x,\tilde{y},\overline{z},\xi,\overline{\eta},\tilde{\zeta}):=f(x,x-\tilde{y},x-\overline{z},\xi,\xi-\overline{\eta},\xi-\overline{\eta}+\tilde{\zeta}),
\end{equation}
\begin{equation}\label{3.12}
 \tilde{\Omega^B}(x,\tilde{y},\overline{z}):=\Omega^B(x,x-\overline{z},x-\tilde{y}).
\end{equation}

We fix now another positive function $\rho\in C^\infty_0(\mathbb{R}^d)$ with $\rho(\xi)=1$ for $|\xi|\leq1$ and $\rho(\xi)=0$ for $|\xi|\geq2$. We shall use the following notations:
\begin{equation}\label{3.13a}
 h_0(x,\xi):=\big(\nabla_x d\big)(x,\xi),\qquad h_1(x,\xi):=\big(\nabla_\xi d\big)(x,\xi),
\end{equation}
\begin{equation}\label{3.13b}
\overline{h}_0\equiv \overline{h}_0(x,\overline{z},\xi,\overline{\eta}):=h_0(x-\overline{z},\xi-\overline{\eta}),\qquad\overline{h}_1\equiv \overline{h}_1(x,\overline{z},\xi,\overline{\eta}):=h_1(x-\overline{z},\xi-\overline{\eta}),
\end{equation}
\begin{equation}\label{3.14}
\rho_\infty(x,\tilde{y},\overline{z},\xi,\overline{\eta},\tilde{\zeta}):=\rho\left(\frac{\epsilon(\xi-\overline{\eta})}{<\overline{\eta}>+<\tilde{\zeta}-\overline{h}_0>}\right),\qquad\rho_0:=1-\rho_\infty,
\end{equation}
for some $\epsilon\in(0,\epsilon_0]$ with $\epsilon_0$ sufficiently small, to be fixed later. We notice that the derivatives of $\rho_\infty$ satisfy estimations of the following type (on $\mathbb{R}^{6d}$):
\begin{equation}\label{3.16}
 \left|\left(\partial_{(x,\overline{z})}^\alpha\partial_{(\tilde{y},\xi,\overline{\eta},\tilde{\zeta})}^\beta\rho_\infty\right)(x,\tilde{y},\overline{z},\xi,\overline{\eta},\tilde{\zeta})\right|\,\leq\,C\left[1+\left(\frac{<\xi-\overline{\eta}>}{<\tilde{\zeta}-\overline{h}_0>}\right)^{|\alpha|}\right],
\end{equation}
for any mutiindices $\alpha\in\mathbb{N}^{2d}$ and $\beta\in\mathbb{N}^{4d}$.
We make the decomposition $\tilde{f}=\rho_0\tilde{f}+\rho_\infty\tilde{f}$ in the integral \eqref{3.9} and write $c=c_0+c_\infty$.

\paragraph{\sf Point 1.} We prove that for any $p\in\mathbb{N}$ we have the estimation:
\begin{equation}\label{3.17}
 \left|c_\infty(x,\xi)\right|\,\leq\,C_f<\xi>^{-p},\quad\forall(x,\xi)\in\mathbb{T}^*\mathbb{R}^d.
\end{equation}

In order to prove this estimation we have to make a number of integration by parts  using the following operators:
\begin{equation}\label{3.18a}
 L_{\tilde{y}}:=<\overline{\eta}>^{-2}\left(1+i\left<\overline{\eta},\nabla_{\tilde{y}}\right>\right),\qquad L_{\overline{z}}:=<\tilde{\zeta}-\overline{h}_0>^{-2}\left(1-i\left<\tilde{\zeta}-\overline{h}_0,\nabla_{\overline{z}}\right>\right),
\end{equation}
\begin{equation}\label{3.18b}
 L_{\overline{\eta}}:=<\tilde{y}+\overline{h}_1>^{-2}\left(1+i\left<\tilde{y}+\overline{h}_1,\nabla_{\overline{\eta}}\right>\right),\qquad L_{\tilde{\zeta}}:=<\overline{z}>^{-2}\left(1-i\left<\overline{z},\nabla_{\tilde{\zeta}}\right>\right),
\end{equation}
that satisfy the identities:
\begin{equation}\label{3.19}
 L_{\tilde{y}}e^{i\tilde{\psi}}= L_{\overline{z}}e^{i\tilde{\psi}}= L_{\overline{\eta}} e^{i\tilde{\psi}}= L_{\tilde{\zeta}} e^{i\tilde{\psi}}=e^{i\tilde{\psi}}.
\end{equation}
For the explicit computations we shall also need the form of their formal adjoints:
\begin{equation}\label{3.20a}
 ^tL_{\tilde{y}}:=<\overline{\eta}>^{-2}\left(1-i\left<\overline{\eta},\nabla_{\tilde{y}}\right>\right),\qquad ^tL_{\overline{z}}:=<\tilde{\zeta}-\overline{h}_0>^{-1}\left(a_1+i\left<b_1,\nabla_{\overline{z}}\right>\right),
\end{equation}
\begin{equation}\label{3.20b}
^t L_{\overline{\eta}}:=<\tilde{y}+\overline{h}_1>^{-1}\left(a_2+i\left<b_2,\nabla_{\overline{\eta}}\right>\right),\qquad ^tL_{\tilde{\zeta}}:=<\overline{z}>^{-2}\left(1+i\left<\overline{z},\nabla_{\tilde{\zeta}}\right>\right),
\end{equation}
where 
\begin{equation}\label{3.21}
 \left\{
\begin{array}{lcl}
 a_2\in BC^\infty(\mathbb{R}^{6d}),&&b_2\in BC^\infty(\mathbb{R}^{6d}),\\
\left|\left(\partial_{(x,\overline{z})}^\alpha\partial_{(\tilde{y},\xi,\overline{\eta},\tilde{\zeta})}^\beta b_1\right)(x,\tilde{y},\overline{z},\xi,\overline{\eta},\tilde{\zeta})\right|&\leq&C\left[1+\left(\frac{<\xi-\overline{\eta}>}{<\tilde{\zeta}-\overline{h}_0>}\right)^{|\alpha|}\right]\quad\text{on }\mathbb{R}^{6d},\\
\left|\left(\partial_{(x,\overline{z})}^\alpha\partial_{(\tilde{y},\xi,\overline{\eta},\tilde{\zeta})}^\beta a_1\right)(x,\tilde{y},\overline{z},\xi,\overline{\eta},\tilde{\zeta})\right|&\leq&C\left[1+\left(\frac{<\xi-\overline{\eta}>}{<\tilde{\zeta}-\overline{h}_0>}\right)^{1+|\alpha|}\right]\quad\text{on }\mathbb{R}^{6d},
\end{array}
\right.
\end{equation}
for any multiindices $\alpha\in\mathbb{N}^{2d}$ and $\beta\in\mathbb{N}^{4d}$.

For any point $(x,\xi)\in\mathbb{T}^*\mathbb{R}^d$ we define the following domains in $\mathbb{R}^{4d}$ (depending on the parameter $\epsilon$):
\begin{equation}\label{3.22a}
 \mathcal{D}_\infty(x,\xi):=\left\{(\tilde{y},\overline{z},\overline{\eta},\tilde{\zeta})\in\mathbb{R}^{4d}\mid|\xi-\overline{\eta}|\leq(2/\epsilon)\big(<\overline{\eta}>+<\tilde{\zeta}-\overline{h}_0>\big)\right\},
\end{equation}
\begin{equation}\label{3.22b}
 \mathcal{D}_0(x,\xi):=\left\{(\tilde{y},\overline{z},\overline{\eta},\tilde{\zeta})\in\mathbb{R}^{4d}\mid\  <\overline{\eta}>+<\tilde{\zeta}-\overline{h}_0>\leq\epsilon|\xi-\overline{\eta}|\right\},
\end{equation}
and notice that $\text{supp}\rho_\infty(x,\cdot,\cdot,\xi,\cdot,\cdot)\subset \mathcal{D}_\infty(x,\xi)$ and $\text{supp}\rho_0(x,\cdot,\cdot,\xi,\cdot,\cdot)\subset \mathcal{D}_0(x,\xi)$.

We can write now
\begin{equation}\label{3.23}
 c_\infty(x,\xi)=\int_{\mathcal{D}_\infty(x,\xi)}e^{i\tilde{\psi}}\big(^tL_{\tilde{y}}\big)^{N_1}\big(^tL_{\overline{z}}\big)^{N_2}\big(^tL_{\overline{\eta}}\big)^{N_3}\big(^tL_{\tilde{\zeta}}\big)^{N_4}\left(\tilde{\Omega^B}\rho_\infty\tilde{f}\right)\,d\tilde{y}\,d\overline{z}\,\dbar\overline{\eta}\,\dbar\tilde{\zeta}
\end{equation}
where the natural numbers $N_j$ (with $1\leq j\leq4$) will be chosen later. First let us notice that the derivatives with respect to $\overline{\eta}$ and $\tilde{\zeta}$ commute with the function $\tilde{\Omega^B}$. Meanwhile the derivatives with respect to $\tilde{y}$ and $\overline{z}$ do not commute with $\tilde{\Omega^B}$ but produce factors that are monomials of the form $\tilde{y}^\alpha\overline{z}^\beta$ (with $(\alpha,\beta)\in\mathbb{N}^{2d}$) multiplied with some function of class $BC^\infty(\mathbb{R}^{3d})$ and with $\tilde{\Omega^B}$ itself. The monomials $\tilde{y}^\alpha\overline{z}^\beta$ are dealt with by some suplementary integrations by parts using the identities:
\begin{equation}\label{3.24}
 \tilde{y}^\alpha e^{-i<\tilde{y},\overline{\eta}>}=\big(-D_{\overline{\eta}}\big)^\alpha e^{-i<\tilde{y},\overline{\eta}>},\qquad\overline{z}^\beta e^{i<\overline{z},\tilde{\zeta}>}=D_{\tilde{\zeta}}^\beta e^{i<\overline{z},\tilde{\zeta}>}
\end{equation}
and the fact that the derivatives of $d(x-\overline{z},\xi-\overline{\eta})$ with respect to $\overline{\eta}$ are bounded. Using \eqref{3.23}, \eqref{3.16}, \eqref{3.20a} and \eqref{3.20b} one obtains that
$$
\left|c_\infty(x,\xi)\right|\,\leq\,C_f\int_{\mathcal{D}_\infty(x,\xi)}<\overline{\eta}>^{-N_1}<\tilde{\zeta}-\overline{h}_0>^{-N_2}<\tilde{y}>^{-N_3}<\overline{z}>^{-N_4}\times
$$
$$
\times\left[1+\left(\frac{<\xi-\overline{\eta}>}{<\tilde{\zeta}-\overline{h}_0>}\right)^{N_2}\right]<\xi>^m<\xi-\overline{\eta}>^{m'}<\xi-\overline{\eta}+\tilde{\zeta}>^{m''}\,d\tilde{y}\,d\overline{z}\,\dbar\overline{\eta}\,\dbar\tilde{\zeta}.
$$
Further we use the inequalities:
$$
<\xi-\overline{\eta}>\leq C<\overline{\eta}><\tilde{\zeta}-\overline{h}_0>,\text{ on }\mathcal{D}_\infty(x,\xi),
$$
$$
<\xi-\overline{\eta}+\tilde{\zeta}>^{m''}\leq C<\tilde{\zeta}-\overline{h}_0>^{m''}<\xi-\overline{\eta}>^{|m''|},
$$
$$
<\overline{\eta}>^{-p}\leq C<\xi>^{-p}<\xi-\overline{\eta}>^p,\text{ for }p\in\mathbb{N},
$$
obtaining that:
$$
\left|c_\infty(x,\xi)\right|\,\leq\,C_f<\xi>^{m-p}\int_{\mathbb{R}^{3d}}<\overline{\eta}>^{-N_1+N_2+2p+m'+|m''|}<\tilde{y}>^{-N_3}<\overline{z}>^{-N_4}\times
$$
$$
\times\left[\int_{\mathbb{R}^d}<\tilde{\zeta}-\overline{h}_0>^{-N_2+p+m'+m''+|m''|}\,\dbar\tilde{\zeta}\right]\,d\tilde{y}\,d\overline{z}\,\dbar\overline{\eta}.
$$
The inequality \eqref{3.17} follows by choosing $N_3\geq d+1$, $N_4\geq d+1$, $N_2\geq p+m'+m''+|m''|+d+1$ and $N_1\geq N_2+2p+m'+|m''|+d+1$.

\paragraph{\sf Point 2.} We shall now prove the estimation:
\begin{equation}\label{3.25}
 \left|c_0(x,\xi)\right|\,\leq\,C_f<\xi>^{m+m'+m''},\quad\forall(x,\xi)\in\mathbb{T}^*\mathbb{R}^d.
\end{equation}

We begin by the change of variables $\tilde{\zeta}=\overline{\zeta}+\overline{h}$ in the integral defining $c_0$, where:
\begin{equation}\label{3.27}
 \overline{h}\equiv \overline{h}(x,\overline{z},\xi,\overline{\eta}):=h(x,\overline{z},\xi-\overline{\eta}),\qquad h(x,\overline{z},\theta):=\int_0^1\big(\nabla_x d\big)(x-t\overline{z},\theta)\,dt.
\end{equation}
We also have to use the identity:
\begin{equation}\label{3.28}
 d(x-\overline{z},\xi-\overline{\eta})-d(x,\xi-\overline{\eta})=-<\overline{h}(x,\overline{z},\xi,\overline{\eta}),\overline{z}>,
\end{equation}
and with the notations:
\begin{equation}\label{3.29}
 \underline{\rho}_0(x,\tilde{y},\overline{z},\xi,\overline{\eta},\overline{\zeta}):=\rho_0(x,\tilde{y},\overline{z},\xi,\overline{\eta},\overline{\zeta}+\overline{h}),\quad \underline{f}(x,\tilde{y},\overline{z},\xi,\overline{\eta},\overline{\zeta}):=\tilde{f}(x,\tilde{y},\overline{z},\xi,\overline{\eta},\overline{\zeta}+\overline{h}),
\end{equation}
we obtain that:
\begin{equation}\label{3.30}
 c_0(x,\xi)=\int_{\mathbb{R}^{4d}}e^{i\big(<\overline{z},\overline{\zeta}>-<\tilde{y},\overline{\eta}>\big)}e^{i\big(d(x,\xi-\overline{\eta})-d(x,\xi)\big)}\tilde{\Omega^B}\underline{\rho}_0\underline{f}\,d\tilde{y}\,d\overline{z}\,\dbar\overline{\eta}\,\dbar\overline{\zeta}.
\end{equation}
In \eqref{3.30} we integrate by parts using the operators:
\begin{equation}\label{3.31a}
 \overline{L}_{\tilde{y}}:=<\overline{\eta}>^{-2}\left(1-\Delta_{\tilde{y}}\right),\qquad \overline{L}_{\overline{z}}:=<\overline{\zeta}>^{-2}\left(1-\Delta_{\overline{z}}\right),
\end{equation}
\begin{equation}\label{3.31b}
 \overline{L}_{\overline{\eta}}:=<\tilde{y}>^{-2}\left(1-\Delta_{\overline{\eta}}\right),\qquad \overline{L}_{\overline{\zeta}}:=<\overline{z}>^{-2}\left(1-\Delta_{\overline{\zeta}}\right),
\end{equation}
that satisfy the identities:
\begin{equation}\label{3.32}
 \overline{L}_{\tilde{y}}e^{i\overline{\psi}}= \overline{L}_{\overline{z}}e^{i\overline{\psi}}= \overline{L}_{\overline{\eta}} e^{i\overline{\psi}}= \overline{L}_{\overline{\zeta}} e^{i\overline{\psi}}=e^{i\overline{\psi}},\quad\text{with}\ \overline{\psi}(\tilde{y},\overline{z},\overline{\eta},\overline{\zeta}):=<\overline{z},\overline{\zeta}>-<\tilde{y},\overline{\eta}>.
\end{equation}
We obtain that
\begin{equation}\label{3.33}
 c_0(x,\xi)=\int_{\mathbb{R}^{4d}}e^{i\overline{\psi}}\big(\overline{L}_{\tilde{y}}\big)^{\underline{N}_1}\big(\overline{L}_{\overline{z}}\big)^{\underline{N}_2}\big(\overline{L}_{\overline{\eta}}\big)^{\underline{N}_3}\big(\overline{L}_{\overline{\zeta}}\big)^{\underline{N}_4}\left(e^{i\big(d(x,\xi-\overline{\eta})-d(x,\xi)\big)}\tilde{\Omega^B}\rho_0\underline{f}\right)\,d\tilde{y}\,d\overline{z}\,\dbar\overline{\eta}\,\dbar\overline{\zeta},
\end{equation}
where the positive integers $\underline{N}_1$, $\underline{N}_2$, $\underline{N}_3$ and $\underline{N}_4$ will be fixed later.

In order to continue the estimation of our integral, using the derivation of products formula we shall write it as a sum $c_0:=c'_0+c''_0$ with the first integral containing all the terms with $\underline{\rho}_0$ and no derivative of $\underline{\rho}_0$ and the second integral containing only the terms with derivatives of $\underline{\rho}_0$. In order to estimate $c'_0$ we notice that
\begin{equation}\label{3.35}
 \left|\partial^\alpha_{(x,\overline{z})}\partial^\beta_{(\tilde{y},\xi,\overline{\eta},\overline{\zeta})}\underline{f}\right|\leq C_f<\xi>^m<\xi-\overline{\eta}>^{m'}<\overline{\zeta}+\xi-\overline{\eta}+\overline{h}>^{m''}\left[1+\left(\frac{<\xi-\overline{\eta}>}{<\overline{\zeta}+\xi-\overline{\eta}+\overline{h}>}\right)^{|\alpha|}\right]
\end{equation}
on $\mathbb{R}^{6d}$, for any $\alpha\in\mathbb{N}^{2d}$ and any $\beta\in\mathbb{N}^{4d}$. We also notice that in fact the actual integration domain is given by:
$$
\tilde{\mathcal{D}}_0(x,\xi):=\left\{(\tilde{y},\overline{z},\overline{\eta},\overline{\zeta})\in\mathbb{R}^{4d}\mid\  <\overline{\eta}>+<\overline{\zeta}+\overline{h}-\overline{h}_0>\leq\epsilon|\xi-\overline{\eta}|\right\}.
$$
We take into account the fact that the derivatives of $d(x,\xi-\overline{\eta})$ with respect to $\overline{\eta}$ are bounded and that the monomials of the form $\tilde{y}^\alpha\overline{z}^\beta$ appearing when derivating $\tilde{\Omega^B}$ can be eliminated by the usual integration by parts (as in the case of $c_\infty$) in order to obtain the estimation:
$$
\left|c'_0(x,\xi)\right|\leq C_f\int_{\tilde{\mathcal{D}}_0(x,\xi)}<\tilde{y}>^{-2\overline{N}_3}<\overline{z}>^{-2\overline{N}_4}<\overline{\eta}>^{-2\overline{N}_1}<\overline{\zeta}>^{-2\overline{N}_2}<\xi>^m<\xi-\overline{\eta}>^{m'}<\overline{\zeta}+\xi-\overline{\eta}+\overline{h}>^{m''}\times
$$
$$
\times\left[1+\left(\frac{<\xi-\overline{\eta}>}{<\overline{\zeta}+\xi-\overline{\eta}+\overline{h}>}\right)^{2\overline{N}_2}\right]\,d\tilde{y}\,d\overline{z}\,\dbar\overline{\eta}\,\dbar\overline{\zeta}.
$$
Moreover, on $\tilde{\mathcal{D}}_0(x,\xi)$ we have the inequality:
\begin{equation}\label{3.36}
 <\overline{\eta}>+<\overline{\zeta}+\overline{h}-\overline{h}_0>\,\leq\,\epsilon|\xi-\overline{\eta}|.
\end{equation}
In particular, $|\overline{\eta}|\leq\epsilon|\xi-\overline{\eta}|$, so that for $\epsilon_0$ sufficiently small we have that
\begin{equation}\label{3.37}
 C^{-1}<\xi>\,\leq\,<\xi-\overline{\eta}>\,\leq\,C<\xi>.
\end{equation}
From \eqref{3.36} we also conclude that $|\overline{\zeta}+\overline{h}-\overline{h}_0|\leq\epsilon|\xi-\overline{\eta}|$ and using the equality $\overline{\zeta}+\xi-\overline{\eta}+\overline{h}=\overline{\zeta}+\overline{h}-\overline{h}_0+\xi-\overline{\eta}+\overline{h}_0$ and Lemma \ref{1.6} we conclude that (reducing may be the value of $\epsilon_0$):
\begin{equation}\label{3.38}
 C^{-1}<\xi-\overline{\eta}>\,\leq\,<\overline{\zeta}+\xi-\overline{\eta}+\overline{h}>\,\leq\,C<\xi-\overline{\eta}>.
\end{equation}
Choosing now $2\overline{N}_j\geq d+1$ and using \eqref{3.37} and \eqref{3.38} we conclude that
\begin{equation}\label{3.39}
 \left|c'_0(x,\xi)\right|\,\leq\,C_f<\xi>^{m+m'+m''},\qquad\forall(x,\xi)\in\mathbb{T}^*\mathbb{R}^d.
\end{equation}
In order to estimate $c''_0$ we come back to the variables $\tilde{\zeta}=\overline{\zeta}+\overline{h}$. Now the integration domain is given by $\mathcal{D}_0(x,\xi)\bigcap\mathcal{D}_\infty(x,\xi)$ due to the support of the derivatives of $\underline{\rho}_0$. Thus the arguments leading to the estimation \eqref{3.17} for $c_\infty$ may be repeated to obtain for any $p\in\mathbb{N}$,
\begin{equation}\label{3.40}
 \left|c''_0(x,\xi)\right|\,\leq\,C_f<\xi>^{-p},\qquad\forall(x,\xi)\in\mathbb{T}^*\mathbb{R}^d.
\end{equation}
Putting together \eqref{3.39} and \eqref{3.40} we obtain \eqref{3.25}.
\paragraph{\sf Conclusion.} 
\begin{equation}\label{3.41}
 \left|c(x,\xi)\right|\,\leq\,C_f<\xi>^{m+m'+m''},\qquad\forall(x,\xi)\in\mathbb{T}^*\mathbb{R}^d.
\end{equation}

The derivatives of $c$ are estimated by similar arguments; in order to illustrate the minor differences that may appear let us study $\nabla_xc$ and $\nabla_\xi c$.

\paragraph{\sf Point 3.} We prove that
\begin{equation}\label{3.42}
 \left|\big(\nabla_xc\big)(x,\xi)\right|\,\leq\,C_f<\xi>^{m+m'+m''},\qquad\forall(x,\xi)\in\mathbb{T}^*\mathbb{R}^d.
\end{equation}
We begin by noticing that differentiating the function $c$ given by \eqref{3.9} with respect to $x$ one gets
\begin{equation}\label{3.43}
 \nabla_xc\,=\,T_1+T_2+T_3+T_4
\end{equation}
where:
\begin{itemize}
 \item $T_1$ has the form \eqref{3.9} with $\nabla_x\tilde{f}$ replacing $\tilde{f}$; it satisfies thus an estimation of the form \eqref{3.41} and we conclude that
\begin{equation}\label{3.44}
 \left|T_1(x,\xi)\right|\,\leq\,C_f<\xi>^{m+m'+m''},\qquad\forall(x,\xi)\in\mathbb{T}^*\mathbb{R}^d.
\end{equation}
\item $T_2$ is also of the form \eqref{3.9} but with $\tilde{f}$ replaced by $-i\big(\nabla_x\tilde{F}\big)\tilde{f}$ with (using the notations of Lemma \ref{L1.11})
$$
\tilde{F}(x,\tilde{y},\overline{z}):=F(x,x-\overline{z},x-\tilde{y}).
$$
\item $T_3$ is also of the form \eqref{3.9} but with $\tilde{f}$ replaced by
\begin{equation}\label{3.48}
 i\left[\big(\nabla_xd\big)(x-\overline{z},\xi-\overline{\eta})-\big(\nabla_xd\big)(x,\xi-\overline{\eta})\right]\tilde{f}\,=\,-i\left[\int_0^1\big(\nabla^2_{x,x}d\big)(x-t\overline{z},\xi-\overline{\eta})dt\right]\cdot\overline{z}\tilde{f}.
\end{equation}
\item $T_4$ is also of the form \eqref{3.9} but with $\tilde{f}$ replaced by
\begin{equation}\label{3.50}
 i\left[\big(\nabla_xd\big)(x,\xi-\overline{\eta})-\big(\nabla_xd\big)(x,\xi)\right]\tilde{f}\,=\,-i\left[\int_0^1\big(\nabla^2_{x,\xi}d\big)(x,\xi-t\overline{\eta})dt\right]\cdot\overline{\eta}\tilde{f}.
\end{equation}
\end{itemize}
Let us deal with $T_2$. Using point (4) in Lemma \ref{L1.11} we get
\begin{equation}\label{3.45}
 \nabla_x\tilde{F}=D\tilde{y}+E\overline{z},\quad\text{ with $D$ and $E$ of class }BC^\infty(\mathbb{R}^{3d}).
\end{equation}
As previously we eliminate $\tilde{y}$ and $\overline{z}$ by integration by parts using the identities
\begin{equation}\label{3.46}
 \tilde{y}e^{-i<\tilde{y},\overline{\eta}>}=i\nabla_{\overline{\eta}}e^{-i<\tilde{y},\overline{\eta}>},\qquad\overline{z}e^{i<\overline{z},\tilde{\zeta}>}=-i\nabla_{\tilde{\zeta}}e^{i<\overline{z},\tilde{\zeta}>}
\end{equation}
and the fact that $\nabla_\xi d$ is a symbol of order 0. We conclude that $T_2$ may be written as a sum of integrals of the form \eqref{3.9} (with $m'$ or $m''$ maybe reduced by an unity) so that due to \eqref{3.41} we have that
\begin{equation}\label{3.47}
 \left|T_2(x,\xi)\right|\,\leq\,C_f<\xi>^{m+m'+m''},\qquad\forall(x,\xi)\in\mathbb{T}^*\mathbb{R}^d.
\end{equation}

For $T_3$ we proceed like in the case of $T_2$ using the second of the identities \eqref{3.46}; in this way the factor $\overline{z}\tilde{f}$ in \eqref{3.48} will be replaced by $i\widetilde{\big(\nabla_\zeta f\big)}$, and we notice that 
$$
\nabla_\zeta f\in S^{m,m',m''-1}\big(\mathbb{R}^{3d}\times\mathbb{R}^d\times\mathbb{R}^d\times\mathbb{R}^d\big).
$$
On the other hand, the entries of the matrix in the last paranthesis in \eqref{3.48} are of the form $\tilde{e}$ with $e\in S^{0,1,0}\big(\mathbb{R}^{3d}\times\mathbb{R}^d\times\mathbb{R}^d\times\mathbb{R}^d\big)$; thus $e\nabla_\zeta f\in S^{m,m'+1,m''-1}\big(\mathbb{R}^{3d}\times\mathbb{R}^d\times\mathbb{R}^d\times\mathbb{R}^d\big)$ and \eqref{3.41} implies that
\begin{equation}\label{3.49}
 \left|T_3(x,\xi)\right|\,\leq\,C_f<\xi>^{m+m'+m''},\qquad\forall(x,\xi)\in\mathbb{T}^*\mathbb{R}^d.
\end{equation}

In the integral defining $T_4$ we eliminate $\overline{\eta}$ by integration by parts using the identity
\begin{equation}\label{3.51}
 \overline{\eta}e^{-i<\tilde{y},\overline{\eta}>}=i\nabla_{\tilde{y}}e^{-i<\tilde{y},\overline{\eta}>},
\end{equation}
and producing factors of the form $\nabla_{\tilde{y}}\tilde{f}$ that have the right behaviour and also factors of the type $\tilde{y}\tilde{f}$ and $\overline{z}\tilde{f}$ which are dealt with like in the case of $T_2$. Due to the fact that $\nabla^2_{x,\xi}d$ are of class $BC^\infty(\mathbb{R}^{2d})$, the proof of \eqref{3.41} allow to see that
\begin{equation}\label{3.52}
 \left|T_4(x,\xi)\right|\,\leq\,C_f<\xi>^{m+m'+m''},\qquad\forall(x,\xi)\in\mathbb{T}^*\mathbb{R}^d.
\end{equation}
Thus \eqref{3.42} follows now from \eqref{3.43}, \eqref{3.44}, \eqref{3.47}, \eqref{3.49} and \eqref{3.52}.
\paragraph{\sf Point 4.} We prove that
\begin{equation}\label{3.53}
 \left|\big(\nabla_\xi c\big)(x,\xi)\right|\,\leq\,C_f<\xi>^{m+m'+m''-1},\qquad\forall(x,\xi)\in\mathbb{T}^*\mathbb{R}^d.
\end{equation}
We begin again by noticing that differentiating the function $c$ given by \eqref{3.9} with respect to $\xi$ one gets
\begin{equation}\label{3.54}
 \nabla_\xi c\,=\,S_1+S_2+S_3
\end{equation}
where:
\begin{itemize}
 \item $S_1$ has the form \eqref{3.9} with $\nabla_\xi \tilde{f}$ replacing $\tilde{f}$; but $\nabla_\xi \tilde{f}=\tilde{f}_1+\tilde{f}_2+\tilde{f}_3$ with $\tilde{f}_1$ in $S^{m-1,m',m''}\big(\mathbb{R}^{3d}\times\mathbb{R}^d\times\mathbb{R}^d\times\mathbb{R}^d\big)$,$\tilde{f}_2$ in $S^{m,m'-1,m''}\big(\mathbb{R}^{3d}\times\mathbb{R}^d\times\mathbb{R}^d\times\mathbb{R}^d\big)$ and $\tilde{f}_3$ in $S^{m,m',m''-1}\big(\mathbb{R}^{3d}\times\mathbb{R}^d\times\mathbb{R}^d\times\mathbb{R}^d\big)$, so that \eqref{3.41} implies that
\begin{equation}\label{3.55}
 \left|S_1(x,\xi)\right|\,\leq\,C_f<\xi>^{m+m'+m''-1},\qquad\forall(x,\xi)\in\mathbb{T}^*\mathbb{R}^d.
\end{equation}
\item $S_2$ is also of the form \eqref{3.9} but with $\tilde{f}$ replaced by
\begin{equation}\label{3.56}
 i\left[\big(\nabla_\xi d\big)(x-\overline{z},\xi-\overline{\eta})-\big(\nabla_\xi d\big)(x,\xi-\overline{\eta})\right]\tilde{f}\,=\,-i\left[\int_0^1\big(\nabla^2_{\xi,x}d\big)(x-t\overline{z},\xi-\overline{\eta})dt\right]\cdot\overline{z}\tilde{f}.
\end{equation}
\item $S_3$ is also of the form \eqref{3.9} but with $\tilde{f}$ replaced by
\begin{equation}\label{3.58}
 i\left[\big(\nabla_\xi d\big)(x,\xi-\overline{\eta})-\big(\nabla_\xi d\big)(x,\xi)\right]\tilde{f}\,=\,-i\left[\int_0^1\big(\nabla^2_{\xi,\xi}d\big)(x,\xi-t\overline{\eta})dt\right]\cdot\overline{\eta}\tilde{f}.
\end{equation}
\end{itemize}

In the integral defining $S_2$ we eliminate $\overline{z}$ as in the case of $T_3$ producing a factor of the form $\widetilde{\big(\nabla_\zeta f\big)}$, while the entries of the matrix in the last paranthesis in \eqref{3.56} are of the form $\tilde{e}$ with $e\in S^{0,0,0}\big(\mathbb{R}^{3d}\times\mathbb{R}^d\times\mathbb{R}^d\times\mathbb{R}^d\big)$. Thus
\begin{equation}\label{3.57}
 \left|S_2(x,\xi)\right|\,\leq\,C_f<\xi>^{m+m'+m''-1},\qquad\forall(x,\xi)\in\mathbb{T}^*\mathbb{R}^d.
\end{equation}

In the integral defining $S_3$ we eliminate $\overline{\eta}$ as in the case of $T_4$ and we continue as in the proof of \eqref{3.41}. Thus we use once again the partition of unity $1=\rho_0+\rho_\infty$ and write the associated decomposition of the integral defining $S_3$ as $S_3=S_{3,0}+S_{3,\infty}$. The entries $e_{jk}(x,\xi,\overline{\eta})$ of the matrix in the last paranthesis in \eqref{3.58} being of class $BC^\infty(\mathbb{R}^{3d})$ the same arguments that lead us to the inequality \eqref{3.17} allow us to prove that for any $p\in\mathbb{N}$
\begin{equation}\label{3.59}
 \left|S_{3,\infty}(x,\xi)\right|\,\leq\,C_f<\xi>^{-p},\qquad\forall(x,\xi)\in\mathbb{T}^*\mathbb{R}^d.
\end{equation}
In order to deal with $S_{3,0}$ we follow the proof of the inequality \eqref{3.25}. The essential remark is that for any $\alpha\in\mathbb{N}^d$ we have the inequality
\begin{equation}\label{3.60}
 \left|\big(\partial_\eta^\alpha e_{jk}\big)(x,\xi,\overline{\eta})\right|\,\leq\,C\int_0^1<\xi-t\overline{\eta}>^{-1-|\alpha|}dt.
\end{equation}
Recalling the proof of \eqref{3.37} we see that for $\epsilon_0$ small enough and $\epsilon\in(0,\epsilon]$ we have for any $\alpha\in\mathbb{N}^d$
\begin{equation}\label{3.61}
 \left|\big(\partial_\eta^\alpha e_{jk}\big)(x,\xi,\overline{\eta})\right|\,\leq\,C<\xi>^{-1-|\alpha|},\quad\text{ on }\overline{D_0}(x,\xi).
\end{equation}
In this way we have an extra power of $<\xi>^{-1}$ with respect to the estimations that lead to \eqref{3.25} and finally we get
\begin{equation}\label{3.62}
 \left|S_{3,0}(x,\xi)\right|\,\leq\,C_f<\xi>^{m+m'+m''-1},\qquad\forall(x,\xi)\in\mathbb{T}^*\mathbb{R}^d.
\end{equation}
 
The inequality \eqref{3.53} follows from \eqref{3.54}, \eqref{3.55}, \eqref{3.57}, \eqref{3.59} and \eqref{3.62}.

For the higher derivatives of $c$ we proceed similarly and obtain that for any $\alpha$ and $\beta$ in $\mathbb{N}^d$ the following estimations are true:
$$
\left|\big(\partial_x^\alpha\partial_\xi^\beta c\big)(x,\xi)\right|\,\leq\,C_f<\xi>^{m+m'+m''-|\beta|},\qquad\forall(x,\xi)\in\mathbb{T}^*\mathbb{R}^d.
$$
\end{proofP32}

\subsection{The symbol of the product}

Our aim here is to compute the symbol $c$ of the magnetic FIO $a^A(x,D)\circ\mathfrak{Op}^A_\Phi(b)$ defined under the assumptions of Proposition \ref{P3.1}, modulo $S^{m'+m''-2}(\mathbb{R}^d)$. In fact we shall compute the symbol $c$ defined in \eqref{3.7} under the assumptions of Proposition \ref{P3.2}, modulo $S^{m+m'+m''-2}(\mathbb{R}^d)$. In the rest of this subsection we shall use the notations from the proof of Proposition \ref{P3.2}.

We begin with the case $B=0$ and thus $\Omega^B=\tilde{\Omega^B}=1$ and we denote by
\begin{equation}\label{3.63}
e(x,\xi)\,:=\,\int_{\mathbb{R}^{4d}}e^{i\tilde{\Psi}}\tilde{f}\,d\tilde{y}\,d\overline{z}\,\dbar\overline{\eta}\,\dbar\tilde{\zeta}. 
\end{equation}
\begin{proposition}\label{P3.3}
 Under the assumptions of Proposition \ref{P3.2} we have the congruence
\begin{equation}\label{3.64}
 e\,\equiv\,c_0,\ \text{modulo}\ S^{m+m'+m''-2}(\mathbb{R}^d),
\end{equation}
where
\begin{equation}\label{3.65}
 c_0(x,\xi):=f\left(x,\big(\nabla_\xi U\big)(x,\xi),x,\xi,\xi,\big(\nabla_x U\big)(x,\xi)\right)-
\end{equation}
$$
-i\left[\big(\big<\nabla_y,\nabla_\eta\big>+\big<\nabla_y,\nabla_\zeta\big>+\big<\nabla_z,\nabla_\zeta\big>\big)f\right]\left(x,\big(\nabla_\xi U\big)(x,\xi),x,\xi,\xi,\big(\nabla_x U\big)(x,\xi)\right)-
$$
$$
-i\Tr\left[\big(\nabla^2_{y,\zeta}f\big)\left(x,\big(\nabla_\xi U\big)(x,\xi),x,\xi,\xi,\big(\nabla_x U\big)(x,\xi)\right)\cdot\big(\nabla^2_{x,\xi}d\big)(x,\xi)\right]-
$$
$$
-\frac{i}{2}\Tr\left[\big(\nabla^2_{y,y}f\big)\cdot\big(\nabla^2_{\xi,\xi}U\big)+\big(\nabla^2_{\zeta,\zeta}f\big)\cdot\big(\nabla^2_{x,x}U\big)\right]\left(x,\big(\nabla_\xi U\big)(x,\xi),x,\xi,\xi,\big(\nabla_x U\big)(x,\xi)\right).
$$
\end{proposition}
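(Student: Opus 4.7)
The plan is to reduce the oscillatory integral (3.63) to one with a \emph{trivial} non-degenerate phase by a sequence of changes of variables, and then apply the standard oscillatory-integral asymptotic expansion. First I would perform the change $\tilde\zeta = \overline\zeta + \overline h$ of (3.27)--(3.30), already used in the proof of Proposition \ref{P3.2}: by the identity (3.28) the phase (3.10) loses its $\overline z$-dependence through $d$ and becomes, as in (3.30),
$$\langle \overline z,\overline\zeta\rangle - \langle \tilde y, \overline\eta\rangle + d(x,\xi-\overline\eta) - d(x,\xi).$$
Then a symmetric change $\tilde y = \overline y + \overline k$, with $\overline k(x,\xi,\overline\eta) := -\int_0^1(\nabla_\xi d)(x,\xi - t\overline\eta)\,dt$, satisfies $\langle\overline k,\overline\eta\rangle = d(x,\xi-\overline\eta) - d(x,\xi)$ and reduces the phase to the trivial form $\overline\psi = \langle\overline z,\overline\zeta\rangle - \langle\overline y,\overline\eta\rangle$. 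Since $B=0$ gives $\tilde\Omega^B\equiv 1$, the integral (3.63) becomes
$$e(x,\xi) = \int_{\mathbb{R}^{4d}} e^{i\overline\psi}\,G(x,\overline y,\overline z,\xi,\overline\eta,\overline\zeta)\,d\overline y\,d\overline z\,\dbar\overline\eta\,\dbar\overline\zeta,$$
with $G := f(x,\,x-\overline y-\overline k,\,x-\overline z,\,\xi,\,\xi-\overline\eta,\,\xi-\overline\eta+\overline\zeta+\overline h)$.

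\textbf{Asymptotic expansion and explicit form of the first two terms.} For the trivial phase $\overline\psi$, Taylor-expanding $G$ in $(\overline y,\overline z,\overline\eta,\overline\zeta)$ about $0$ and using the identities (3.24) together with integration by parts, the integral formally reduces to the standard oscillatory-calculus expansion
$$e(x,\xi) \;\sim\; \sum_{N \geq 0} \frac{1}{N!}\bigl(-i\,\nabla_{\overline y}\!\cdot\!\nabla_{\overline\eta} \;+\; i\,\nabla_{\overline z}\!\cdot\!\nabla_{\overline\zeta}\bigr)^N G\,\Big|_{\overline y=\overline z=\overline\eta=\overline\zeta=0}.$$
Using $\overline k(x,\xi,0) = -(\nabla_\xi d)(x,\xi)$ and $\overline h(x,0,\xi,0) = (\nabla_x d)(x,\xi)$, and the fact that $U = \langle\cdot,\cdot\rangle + d$, the zeroth-order term equals $f(x,\nabla_\xi U(x,\xi),x,\xi,\xi,\nabla_x U(x,\xi))$, the first summand of (3.65). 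For the first-order term, beyond $G|_0$ the only ingredients needed are
$$\partial_{\overline\eta_j}\overline k_l\big|_0 = \tfrac{1}{2}\,\partial^2_{\xi_j\xi_l} d,\qquad \partial_{\overline\eta_j}\overline h_l\big|_0 = -\partial^2_{\xi_j x_l} d,\qquad \partial_{\overline z_j}\overline h_l\big|_0 = -\tfrac{1}{2}\,\partial^2_{x_j x_l} d.$$
A direct chain-rule expansion of $\partial_{\overline y_j}\partial_{\overline\eta_j}G|_0$ and $\partial_{\overline z_j}\partial_{\overline\zeta_j}G|_0$, summation on $j$, and combination with the coefficients $-i$ and $+i$ respectively, reproduces precisely the scalar group $-i(\langle\nabla_y,\nabla_\eta\rangle + \langle\nabla_y,\nabla_\zeta\rangle + \langle\nabla_z,\nabla_\zeta\rangle)f$ and the three trace terms of (3.65), after identifying $\nabla^2_{\xi\xi}U = \nabla^2_{\xi\xi}d$ and $\nabla^2_{xx}U = \nabla^2_{xx}d$.

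\textbf{Remainder control (main obstacle).} The technical core is showing that the $N=2$ Taylor remainder of the above expansion defines a symbol in $S^{m+m'+m''-2}(\mathbb{R}^d)$. Writing the Taylor formula for $G$ with integral remainder, the remainder contribution to $e$ is a sum of oscillatory integrals of the form (3.63) whose amplitudes have the type $(\overline y,\overline z,\overline\eta,\overline\zeta)^\gamma\int_0^1(1-t)\partial^\gamma G(t\,\cdot)\,dt$ with $|\gamma|=2$. Converting the polynomial prefactors into derivatives of the exponential via $\overline y^\alpha e^{-i\langle\overline y,\overline\eta\rangle} = (i\nabla_{\overline\eta})^\alpha e^{-i\langle\overline y,\overline\eta\rangle}$ and the analogous identities for $\overline z,\overline\eta,\overline\zeta$, and then integrating by parts, two additional derivatives in the conjugate momentum-type variables ($\eta$ or $\zeta$) fall on $f$, each of which lowers the $(m',m'')$-order of $f$ by one. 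The resulting amplitudes then belong to $S^{m,m'-1,m''-1}$ and analogous spaces, all of which Proposition \ref{P3.2} maps continuously into $S^{m+m'+m''-2}(\mathbb{R}^d)$. The $\rho_0/\rho_\infty$ splitting and integration-by-parts machinery set up in the proof of Proposition \ref{P3.2} transfer verbatim to each term of the remainder expansion, and this yields $e - c_0 \in S^{m+m'+m''-2}(\mathbb{R}^d)$, completing the proof.
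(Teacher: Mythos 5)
Your strategy is essentially the one the paper uses: reduce to a trivial phase by an explicit change of variables that absorbs the $d$-part of the phase, then Taylor-expand the amplitude and identify the first two terms. The elementary identity you use for the change of variables
$$d(x-\overline{z},\xi-\overline{\eta})-d(x,\xi)=-\langle\overline{h},\overline{z}\rangle+\langle\overline{k},\overline{\eta}\rangle$$
(splitting at $d(x,\xi-\overline{\eta})$) is algebraically equivalent to the paper's (3.69), which splits at $d(x-\overline{z},\xi)$; both are fine. Your explicit Taylor coefficients $\partial_{\overline{\eta}}\overline{k}|_0=\tfrac12\nabla^2_{\xi\xi}d$, $\partial_{\overline{\eta}}\overline{h}|_0=-\nabla^2_{\xi x}d$, $\partial_{\overline{z}}\overline{h}|_0=-\tfrac12\nabla^2_{xx}d$ are correct, and the chain-rule evaluation of $(-i\langle\nabla_{\overline y},\nabla_{\overline\eta}\rangle+i\langle\nabla_{\overline z},\nabla_{\overline\zeta}\rangle)G|_0$ reproduces all four displayed contributions in \eqref{3.65}, including the $\tfrac12$ factors and the sign of the $\nabla^2_{y,\zeta}f\cdot\nabla^2_{x,\xi}d$ trace. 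So the ``computable'' part of your argument is right, and it mirrors the paper's $e_{0,0}$, $e''_{0,0}$ and $e_{0,1}$ computations.

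The gap is in your remainder paragraph, and it is not cosmetic --- it is where the paper spends the majority of the proof. Three points. \emph{First}, after your change of variables the amplitude $G(x,\overline y,\overline z,\xi,\overline\eta,\overline\zeta)=f\big(x,\,x-\overline y-\overline k(x,\xi,\overline\eta),\,x-\overline z,\,\xi,\,\xi-\overline\eta,\,\xi-\overline\eta+\overline\zeta+\overline h(x,\overline z,\xi,\overline\eta)\big)$ is no longer of the form ``$f$ in class $S^{m,m',m''}$ evaluated on the natural slots,'' so Proposition \ref{P3.2} is not directly quotable for the remainder integrals; you would have to re-derive uniform estimates for composite amplitudes of this shape, which is precisely what the paper's inequalities (3.84), (3.104), (3.105) do. \emph{Second}, you drop the $\rho_0/\rho_\infty$ partition at the very start and reinstate it only as a rhetorical gesture at the end. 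That partition is not ``verbatim transferable'' to your setting: it is introduced in the paper \emph{before} the change of variables precisely because, without restricting to the region $\langle\overline\eta\rangle+\langle\overline\zeta+\overline h-\overline h_0\rangle\leq\epsilon|\xi-\overline\eta|$, the three weights $\langle\xi\rangle$, $\langle\xi-\overline\eta\rangle$, and $\langle\zeta\rangle$ are not uniformly comparable (cf. (3.37)--(3.38)), and your remainder estimate ``two $\eta$ or $\zeta$ derivatives each lower the order by one'' is then simply false on the complementary region --- there one only gets rapid decay, and a separate argument (the paper's $e_\infty$, $r''_t$, $e'_{\infty,0}$ terms) is needed. \emph{Third}, the joint second-order Taylor expansion is not innocuous: the paper deliberately expands sequentially (first in $\overline y$, equations (3.74)--(3.82), then in $\overline\zeta$, equations (3.96)--(3.101)), because the second step must be done \emph{after} the $(\overline y,\overline\eta)$ integration collapses --- otherwise the $\overline z$-dependence of $\overline h$ (or of $\overline g_1$ in the paper's version) interacts with the $\overline\eta$-expansion and the bookkeeping of derivatives of $\overline k$, $\overline h$ (some of which are only order $0$, not order $-1$) does not automatically close. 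So the remainder control is where your proposal needs genuine work: the ``transfer verbatim'' sentence is covering for two Taylor expansions, a cutoff split, and an inversion-of-weights argument that all have to be redone.
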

\begin{proof}
 We proceed as in the proof of Proposition \ref{P3.2}, write $\tilde{f}=\rho_0\tilde{f}+\rho_\infty\tilde{f}$ and associated to this we have the decomposition
\begin{equation}\label{3.66}
 e=e_0+e_\infty,
\end{equation}
and for any $p\in\mathbb{N}$ 
\begin{equation}\label{3.67}
 \left|e_\infty(x,\xi)\right|\,\leq\,C_f<\xi>^{-p},\quad\forall(x,\xi)\in\mathbb{T}^*\mathbb{R}^d.
\end{equation}
Similar estimations are satisfied by the derivatives of $e_\infty$.

In order to estimate $e_0$ we start with some more notations:
\begin{equation}\label{3.68}
\overline{g}_1\equiv\overline{g}_1(x,\overline{z},\xi):=\int_0^1\big(\nabla_x d\big)(x-t\overline{z},\xi)\,dt;\qquad\overline{g}_2\equiv\overline{g}_2(x,\overline{z},\xi,\overline{\eta}):=\int_0^1\big(\nabla_\xi d\big)(x-\overline{z},\xi-t\overline{\eta})\,dt
\end{equation}
and notice the following equality
\begin{equation}\label{3.69}
 d(x-\overline{z},\xi-\overline{\eta})-d(x,\xi)=-\big<\overline{g}_1,\overline{z}\big>-\big<\overline{g}_2,\overline{\eta}\big>.
\end{equation}
We make the change of variables:
\begin{equation}\label{3.70}
 \tilde{\zeta}:=\overline{\zeta}+\overline{g}_1(x,\overline{z},\xi),\qquad\tilde{y}:=\overline{y}-\overline{g}_2(x,\overline{z},\xi,\overline{\eta}).
\end{equation}
We obtain
\begin{equation}\label{3.71}
 e_0(x,\xi)=\int_{\mathbb{R}^{4d}}e^{i(<\overline{z},\overline{\zeta}>-<\overline{y},\overline{\eta}>)}\overline{\rho}_0\overline{f}\,d\overline{y}\,d\overline{z}\,\dbar\overline{\eta}\,\dbar\overline{\zeta}
\end{equation}
where
\begin{equation}\label{3.72}
 \overline{f}(x,\overline{y},\overline{z},\xi,\overline{\eta},\overline{\zeta}):=\tilde{f}\big(x,\overline{y}-\overline{g}_2(x,\overline{z},\xi,\overline{\eta}),\overline{z},\xi,\overline{\eta},\overline{\zeta}+\overline{g}_1(x,\overline{z},\xi)\big),
\end{equation}
\begin{equation}\label{3.73}
 \overline{\rho}_0(x,\overline{y},\overline{z},\xi,\overline{\eta},\overline{\zeta}):=\rho_0\big(x,\overline{y}-\overline{g}_2(x,\overline{z},\xi,\overline{\eta}),\overline{z},\xi,\overline{\eta},\overline{\zeta}+\overline{g}_1(x,\overline{z},\xi)\big).
\end{equation}
We consider also the function 
\begin{equation}\label{3.74}
 \overline{f}_t\equiv\overline{f}_t(x,\overline{y},\overline{z},\xi,\overline{\eta},\overline{\zeta}):= \overline{f}(x,t\overline{y},\overline{z},\xi,\overline{\eta},\overline{\zeta}),
\end{equation}
and its Taylor development of second order with respect to the variable $t\in\mathbb{R}$:
\begin{equation}\label{3.75}
 \overline{f}=\overline{f}_0+\left.\frac{d}{dt}\overline{f}_t\right|_{t=0}+\int_0^1(1-t)\left(\frac{d}{dt}\right)^2\overline{f}_t\,dt.
\end{equation}
We have the evident equalities
\begin{equation}\label{3.76}
 \left.\frac{d}{dt}\overline{f}_t\right|_{t=0}=\left<\big(\nabla_{\overline{y}}\overline{f}\big)(x,0,\overline{z},\xi,\overline{\eta},\overline{\zeta}),\overline{y}\right>,
\end{equation}
\begin{equation}\label{3.77}
 \left(\frac{d}{dt}\right)^2\overline{f}_t=\left<\big(\nabla^2_{\overline{y},\overline{y}}\overline{f}\big)(x,t\overline{y},\overline{z},\xi,\overline{\eta},\overline{\zeta})\cdot \overline{y},\overline{y}\right>.
\end{equation}
Using the equality \eqref{3.75} we obtain an associated decomposition
\begin{equation}\label{3.78}
 e_0=e_{0,0}+e_{0,1}+r.
\end{equation}
We integrate by parts in each of the integrals of type \eqref{3.63} appearing in the above relation, by using the identity 
$$
\overline{y}e^{-i<\overline{y},\overline{\eta}>}=i\nabla_{\overline{\eta}}e^{-i<\overline{y},\overline{\eta}>}
$$
and taking into account that $\overline{\rho}_0$ does not depend on $\overline{y}$ we obtain that
\begin{equation}\label{3.79}
 e_{0,0}(x,\xi)=\int_{\mathbb{R}^{2d}}e^{i<\overline{z},\overline{\zeta}>}\overline{\rho}_0(x,0,\overline{z},\xi,0,\overline{\zeta})\overline{f}(x,0,\overline{z},\xi,0,\overline{\zeta})\,d\overline{z}\,\dbar\overline{\zeta},
\end{equation}
\begin{equation}\label{3.80}
 e_{0,1}(x,\xi)=-i\int_{\mathbb{R}^{2d}}e^{i<\overline{z},\overline{\zeta}>}\left<\nabla_{\overline{\eta}},\overline{\rho}_0\nabla_{\overline{y}}\overline{f}\right>(x,0,\overline{z},\xi,0,\overline{\zeta})\,d\overline{z}\,\dbar\overline{\zeta},
\end{equation}
\begin{equation}\label{3.81}
 r(x,\xi)=\int_0^1(1-t)r_t(x,\xi)dt,\quad\text{with}
\end{equation}
\begin{equation}\label{3.82}
 r_t(x,\xi):=-\int_{\mathbb{R}^{4d}}e^{i(<\overline{z},\overline{\zeta}>-<\overline{y},\overline{\eta}>)}\Tr\left\{\nabla^2_{\overline{\eta},\overline{\eta}}\cdot\left[\big(\overline{\rho}_0 \nabla^2_{\overline{y},\overline{y}}\overline{f}\big)(x,t\overline{y},\overline{z},\xi,\overline{\eta},\overline{\zeta})\right]\right\}\,d\overline{y}\,d\overline{z}\,\dbar\overline{\eta}\,\dbar\overline{\zeta}.
\end{equation}

\paragraph{\sf The rest.} We begin with the study of $r_t$ for which we shall obtain estimations that are uniform with respect to $t\in[0,1]$. We shall integrate by parts using the operators $\overline{L}_{\tilde{y}},\overline{L}_{\overline{z}},\overline{L}_{\overline{\eta}},\overline{L}_{\overline{\zeta}}$ defined in \eqref{3.31a} and \eqref{3.31b} and we shall make the decomposition $r_t=r'_t+r''_t$, where under the integral defining $r'_t$ the function $\overline{\rho}_0$ is not differentiated, and thus the function $\overline{f}$ will be at least two times differentiated with respect to $\overline{\eta}$. Then under the integral defining $r''_t$ there are only derivatives of $\overline{\rho}_0$. For the derivatives of $\overline{f}$ we have the following estimations:
\begin{equation}\label{3.84}
 \left|\big(\partial^\alpha_{\overline{z}}\partial^\beta_{\overline{\eta}}\partial^\gamma_{(\overline{y},\overline{\zeta})}\overline{f}\big)(x,\overline{y},\overline{z},\xi,\overline{\eta},\overline{\zeta})\right|\,\leq\,C_f<\xi>^m<\xi-\overline{\eta}>^{m'}<\xi-\overline{\eta}+\overline{\zeta}+\overline{g}_1>^{m''}\times
\end{equation}
$$
\times\left[1+\left(\frac{<\xi>}{<\xi-\overline{\eta}+\overline{\zeta}+\overline{g}_1>}\right)^{|\alpha|}\right]\times\underset{0\leq s\leq1}{\sup}\left[<\xi-s\overline{\eta}>^{-|\beta|}+<\xi-\overline{\eta}>^{-|\beta|}+<\xi-\overline{\eta}+\overline{\zeta}+\overline{g}_1>^{-|\beta|}\right],
$$
for any $\alpha$ and $\beta$ in $\mathbb{N}^d$ and any $\gamma\in\mathbb{N}^{2d}$. It follows that
\begin{equation}\label{3.85}
 \left|r'_t(x,\xi)\right|\leq C_f\int_{\overline{D}_0(x,\xi)}<\overline{y}>^{-2\overline{N}_3}<\overline{z}>^{-2\overline{N}_4}<\overline{\eta}>^{-2\overline{N}_1}<\overline{\zeta}>^{-2\overline{N}_2}<\xi>^m<\xi-\overline{\eta}>^{m'}<\xi-\overline{\eta}+\overline{\zeta}+\overline{g}_1>^{m''}\times
\end{equation}
$$
\times\left[1+\left(\frac{<\xi>}{<\xi-\overline{\eta}+\overline{\zeta}+\overline{g}_1>}\right)^{2\overline{N}_2}\right]\underset{0\leq s\leq1}{\sup}\left[<\xi-s\overline{\eta}>^{-2}+<\xi-\overline{\eta}>^{-2}+<\xi-\overline{\eta}+\overline{\zeta}+\overline{g}_1>^{-2}\right] \,d\overline{y}\,d\overline{z}\,\dbar\overline{\eta}\,\dbar\overline{\zeta},
$$
where
\begin{equation}\label{3.86}
 \overline{D}_0(x,\xi)=\left\{(\overline{y},\overline{z},\overline{\eta},\overline{\zeta})\mid\,<\overline{\eta}>+<\overline{\zeta}+\overline{g}_1-\overline{h}_0>\leq\epsilon|\xi-\overline{\eta}|\right\},
\end{equation}
with $\overline{g}_1$ defined in \eqref{3.68} and $\overline{h}_0$ defined by \eqref{3.13b}. Thus on $ \overline{D}_0(x,\xi)$ we have the following inequality that is satisfied: $|\overline{\eta}|\leq\epsilon|\xi-\overline{\eta}|$ and choosing a small enough $\epsilon_0$ and $\epsilon\in(0,\epsilon_0]$ we shall obtain
\begin{equation}\label{3.87}
 C^{-1}<\xi-\overline{\eta}>\,\leq\,<\xi-s\overline{\eta}>\,\leq\,C<\xi-\overline{\eta}>,\quad\forall s\in[0,1],\,\forall(\xi,\overline{\eta})\in\mathbb{R}^{2d}.
\end{equation}
On $ \overline{D}_0(x,\xi)$ we also have $|\overline{\zeta}+\overline{g}_1-\overline{h}_0|\leq\epsilon|\xi-\overline{\eta}|$ and thus using the identity $\overline{\zeta}+\xi-\overline{\eta}+\overline{g}_1=\overline{\zeta}+\overline{g}_1-\overline{h}_0+\xi-\overline{\eta}+\overline{h}_0$ and Lemma \ref{L1.5}, reducing maybe $\epsilon_0$, we also get
\begin{equation}\label{3.88}
 C^{-1}<\xi-\overline{\eta}>\,\leq\,<\overline{\zeta}+\xi-\overline{\eta}+\overline{g}_1>\,\leq\,C<\xi-\overline{\eta}>.
\end{equation}
Choosing $2\overline{N}_j\geq d+1$ in \eqref{3.85} for $1\leq j\leq 4$ and using \eqref{3.87} and \eqref{3.88} we obtain the estimation
\begin{equation}\label{3.89}
 \left|r'_t(x,\xi)\right|\leq C_f<\xi>^{m+m'+m''-2},\quad\forall(x,\xi)\in\mathbb{T}^*\mathbb{R}^d,\ \forall t\in[0,1].
\end{equation}

We come now back to the variables $\tilde{y}$ and $\tilde{\zeta}$ for the estimation of $r''_t$. We proceed as in the estimation of $e_\infty$ and obtain that for any $p\in\mathbb{N}$ 
\begin{equation}\label{3.90}
 \left|r''_t(x,\xi)\right|\leq C_f<\xi>^{-p},\quad\forall(x,\xi)\in\mathbb{T}^*\mathbb{R}^d,\ \forall t\in[0,1].
\end{equation}
Putting together \eqref{3.81}, \eqref{3.89} and \eqref{3.90} we obtain that
\begin{equation}\label{3.91}
 \left|r(x,\xi)\right|\leq C_f<\xi>^{m+m'+m''-2},\qquad\forall(x,\xi)\in\mathbb{T}^*\mathbb{R}^d.
\end{equation}

Repeating the arguments in the proof of Proposition \ref{P3.2} we can obtain similar estimations for the derivatives of $r$ and conclude that
\begin{equation}\label{3.92}
 r\,\in\,S^{m+m'+m''-2}(\mathbb{R}^d).
\end{equation}

\paragraph{\sf The zero order contribution.} Let us introduce the notations:
\begin{equation}\label{3.93}
 \left\{
\begin{array}{rcccccl}
 \tilde{h}_0&\equiv&\tilde{h}_0(x,\overline{z},\xi)&:=&\overline{h}_0(x,\overline{z},\xi,0)&=&\big(\nabla_x d\big)(x-\overline{z},\xi),\\
 \tilde{g}_1&\equiv&\tilde{g}_1(x,\overline{z},\xi)&:=&\overline{g}_1(x,\overline{z},\xi)&=&\int_0^1\big(\nabla_x d\big)(x-t\overline{z},\xi)\,dt,\\
 \tilde{g}_2&\equiv&\tilde{g}_2(x,\overline{z},\xi)&:=&\overline{g}_2(x,\overline{z},\xi,0)&=&\big(\nabla_\xi d\big)(x-\overline{z},\xi),
\end{array}
\right.
\end{equation}
\begin{equation}\label{3.94}
 \left\{
\begin{array}{rcccccl}
 \tilde{\rho}_0&\equiv&\tilde{\rho}_0(x,\overline{z},\xi,\tilde{\zeta})&:=&\rho_0(x,0,\overline{z},\xi,0,\tilde{\zeta})&=&1-\rho\left(\frac{\epsilon\xi}{1+<\tilde{\zeta}-\tilde{h}_0>}\right),\\
 \bar{\bar{\rho}}_0&\equiv&\bar{\bar{\rho}}_0(x,\overline{z},\xi,\overline{\zeta})&:=&\bar{\rho}_0(x,0,\overline{z},\xi,0,\overline{\zeta})&=&\rho_0(x,-\tilde{g}_2,\overline{z},\xi,0,\overline{\zeta}+\tilde{g}_1),\\
  \tilde{\rho}_\infty&:=&1-\tilde{\rho}_0,&&&& \\
\bar{\bar{\rho}}_\infty&:=&1- \bar{\bar{\rho}}_0,&&&&
\end{array}
\right.
\end{equation}
\begin{equation}\label{3.95}
 \bar{\bar{f}}\equiv\bar{\bar{f}}(x,\overline{z},\xi,\overline{\zeta}):=\bar{f}(x,0,\overline{z},\xi,0,\overline{\zeta})=\tilde{f}(x,-\tilde{g}_2,\overline{z},\xi,0,\overline{\zeta}+\tilde{g}_1).
\end{equation}
Then the equality \eqref{3.79} may be written as
\begin{equation}\label{3.96}
 e_{0,0}(x,\xi)=\int_{\mathbb{R}^{2d}}e^{i<\overline{z},\overline{\zeta}>}\bar{\bar{\rho}}_0\bar{\bar{f}}\,d\overline{z}\,\dbar\overline{\zeta}
\end{equation}
and we decompose this integral as a sum of 3 terms
\begin{equation}\label{3.97}
 e_{0,0}=e'_{0,0}+e''_{0,0}+\int_0^1(1-t)\check{r}_t\,dt,
\end{equation}
associated to the Taylor development of second order with respect to $\overline{\zeta}\in\mathbb{R}^d$ of the function $\bar{\bar{f}}$:
\begin{equation}\label{3.98}
 \bar{\bar{f}}(x,\overline{z},\xi,\overline{\zeta})=\bar{\bar{f}}(x,\overline{z},\xi,0)+\left<\big(\nabla_{\overline{\zeta}}\bar{\bar{f}}\big)(x,\overline{z},\xi,0),\overline{\zeta}\right>+\int_0^1(1-t)\left<\big(\nabla^2_{\overline{\zeta},\overline{\zeta}}\bar{\bar{f}}\big)(x,\overline{z},\xi,t\overline{\zeta})\cdot\overline{\zeta},\overline{\zeta}\right>\,dt.
\end{equation}
In particular
\begin{equation}\label{3.100}
 \check{r}_t(x,\xi)=\int_{\mathbb{R}^{2d}}e^{i<\overline{z},\overline{\zeta}>}\left(\bar{\bar{\rho}}_0\circ\kappa_t\right)(x,\overline{z},\xi,\overline{\zeta})\,d\overline{z}\,\dbar\overline{\zeta}
\end{equation}
with
\begin{equation}\label{3.99}
 \kappa_t(x,\overline{z},\xi,\overline{\zeta}):=\left<\big(\nabla^2_{\overline{\zeta},\overline{\zeta}}\bar{\bar{f}}\big)(x,\overline{z},\xi,t\overline{\zeta})\cdot\overline{\zeta},\overline{\zeta}\right>.
\end{equation}

Let us prove the estimation:
\begin{equation}\label{3.101}
 \left| \check{r}_t(x,\xi)\right|\,\leq\,C_f<\xi>^{m+m'+m''-2},\quad\forall(x,\xi)\in\mathbb{T}^*\mathbb{R}^d,\ \forall t\in[0,1].
\end{equation}
We begin by integrating by parts in \eqref{3.100} using the identities:
\begin{equation}\label{3.102}
 \left\{
\begin{array}{rcl}
 \overline{\zeta}e^{i<\overline{z},\overline{\zeta}>}&=&-i\nabla_{\overline{z}}e^{i<\overline{z},\overline{\zeta}>},\\
<\overline{\zeta}>^{2N_1}e^{i<\overline{z},\overline{\zeta}>}&=&(1-\Delta_{\overline{z}})^{N_1}e^{i<\overline{z},\overline{\zeta}>},\\
<\overline{z}>^{2N_2}e^{i<\overline{z},\overline{\zeta}>}&=&(1-\Delta_{\overline{\zeta}})^{N_2}e^{i<\overline{z},\overline{\zeta}>},
\end{array}
\right.
\end{equation}
with $N_1$ and $N_2$ two natural numbers to be chosen later. We obtain:
\begin{equation}\label{3.103}
  \check{r}_t(x,\xi)=-\int_{\mathbb{R}^{2d}}e^{i<\overline{z},\overline{\zeta}>}<\overline{z}>^{-2N_2}(1-\Delta_{\overline{\zeta}})^{N_2}<\overline{\zeta}>^{-2N_1}(1-\Delta_{\overline{z}})^{N_1}\mathfrak{r}_t(x,\overline{z},\xi,\overline{\zeta})\,d\overline{z}\,\dbar\overline{\zeta}
\end{equation}
with
$$
\mathfrak{r}_t(x,\overline{z},\xi,\overline{\zeta})=\Tr\left[\nabla^2_{\overline{z},\overline{z}}\cdot\left(\bar{\bar{\rho}}_0(x,\overline{z},\xi,\overline{\zeta})\big(\nabla^2_{\overline{\zeta},\overline{\zeta}}\bar{\bar{f}}\big)(x,\overline{z},\xi,t\overline{\zeta})\right)\right].
$$
For any multiindices $\alpha$ and $\beta$ in $\mathbb{N}^d$ we have the inequalities:
\begin{equation}\label{3.104}
 \left|\big(\partial^\alpha_{\overline{z}}\partial^\beta_{\overline{\zeta}}\bar{\bar{f}}\big)(x,\overline{z},\xi,t\overline{\zeta})\right|\leq C_f<\xi>^{m+m'}<t\overline{\zeta}+\xi+\tilde{g}_1>^{m''-|\beta|}\left[1+\left(\frac{<\xi>}{<t\overline{\zeta}+\xi+\tilde{g}_1>}\right)^{|\alpha|}\right],
\end{equation}
\begin{equation}\label{3.105}
 \left|\big(\partial^\alpha_{\overline{z}}\partial^\beta_{\overline{\zeta}}\bar{\bar{\rho}}_0\big)(x,\overline{z},\xi,\overline{\zeta})\right|\leq C\left[1+\left(\frac{<\xi>}{<\overline{\zeta}+\tilde{g}_1-\tilde{h}_0>}\right)^{|\alpha|}\right].
\end{equation}
Moreover, on the support of $\bar{\bar{\rho}}_0$ the following inequality is satisfied:
\begin{equation}\label{3.106}
 1+<\overline{\zeta}+\tilde{g}_1-\tilde{h}_0>\leq\epsilon|\xi|\quad\text{and thus}\quad |\overline{\zeta}+\tilde{g}_1-\tilde{h}_0|\leq\epsilon|\xi|.
\end{equation}
We use the identity $t\overline{\zeta}+\xi+\tilde{g}_1=t(\overline{\zeta}+\tilde{g}_1-\tilde{h}_0)+\xi+t\tilde{h}_0+(1-t)\tilde{g}_1$ and Lemma \ref{L1.5} to conclude that for some small enough $\epsilon_0>0$ and for any $\epsilon\in(0,\epsilon_0]$ we have the estimations:
\begin{equation}\label{3.107}
 C^{-1}<\xi>\,\leq\,<t\overline{\zeta}+\xi+\tilde{g}_1>\,\leq\,C<\xi>,\qquad\forall t\in[0,1].
\end{equation}
On the support of the derivatives of the function $\bar{\bar{\rho}}_0$ we also have the inequality
\begin{equation}\label{3.108}
 1+<\overline{\zeta}+\tilde{g}_1-\tilde{h}_0>\,\geq\,(\epsilon/2)|\xi|.
\end{equation}

It is clear now that \eqref{3.103}, \eqref{3.104}, \eqref{3.105}, \eqref{3.107} and \eqref{3.108} imply \eqref{3.101} if we choose $2N_j\geq d+1$ for $j=1,2$. Estimating the derivatives of $\check{r}_t$ we obtain that
\begin{equation}\label{3.109}
\check{r}:= \int_0^1(1-t)\check{r}_t\,dt\,\in\,S^{m+m'+m''-2}(\mathbb{R}^d).
\end{equation}

Let us consider now the term
\begin{equation}\label{3.110}
 e'_{0,0}(x,\xi):=\int_{\mathbb{R}^{2d}}e^{i<\overline{z},\overline{\zeta}>}\bar{\bar{\rho}}_0(x,\overline{z},\xi,\overline{\zeta})\bar{\bar{f}}(x,\overline{z},\xi,0)\,d\overline{z}\,\dbar\overline{\zeta},
\end{equation}
from \eqref{3.97}. Let us denote by $ e'_{\infty,0}$ the integral similar to the one above in which we replace $\bar{\bar{\rho}}_0$ by $\bar{\bar{\rho}}_\infty$ and notice that
$$
e'_{0,0}(x,\xi)+e'_{\infty,0}(x,\xi)=\int_{\mathbb{R}^{2d}}e^{i<\overline{z},\overline{\zeta}>}\bar{\bar{f}}(x,\overline{z},\xi,0)\,d\overline{z}\,\dbar\overline{\zeta}=\bar{\bar{f}}(x,0,\xi,0)=\tilde{f}(x,-\tilde{g}_2(x,0,\xi),0,\xi,0,\tilde{g}_1(x,0,\xi))=$$
\begin{equation}\label{3.110a}
=f\left(x,x+\big(\nabla_\xi d\big)(x,\xi),x,\xi,\xi,\xi+\big(\nabla_x d\big)(x,\xi)\right).
\end{equation}

We prove now that $ e'_{\infty,0}$ is a symbol of order $-\infty$. For that we consider once again the change of variables $\tilde{\zeta}:=\overline{\zeta}+\tilde{g}_1$, we take into account the equality $d(x-\overline{z},\xi)-d(x,\xi)=-<\overline{z},\bar{g}_1(x,\overline{z},\xi)>$and the fact that $\rho_0$ does not in fact depend on its second variable, so that we obtain:
\begin{equation}\label{3.113}
 e'_{\infty,0}(x,\xi)=\int_{\mathbb{R}^{2d}}e^{i(<\overline{z},\tilde{\zeta}>+d(x-\overline{z},\xi)-d(x,\xi))}\tilde{\rho}_\infty(x,\overline{z},\xi,\tilde{\zeta})g(x,\overline{z},\xi)\,d\overline{z}\,\dbar\tilde{\zeta},
\end{equation}
with $g\equiv g(x,\overline{z},\xi):=f\left(x,x+\tilde{g}_2(x,\overline{z},\xi),x-\overline{z},\xi,\xi,\xi+\tilde{g}_1(x,\overline{z},\xi)\right)$. In this integral we integrate by parts using the operators 
\begin{equation}\label{3.114}
 \check{L}_{\overline{z}}:=<\tilde{\zeta}-\tilde{h}_0>^{-2}\left(1-i\left<\tilde{\zeta}-\tilde{h}_0,\nabla_{\overline{z}}\right>\right),\qquad\check{L}_{\tilde{\zeta}}:=<\overline{z}>^{-2}\left(1-i\left<\overline{z},\nabla_{\tilde{\zeta}}\right>\right).
\end{equation}
This operators have the following formal adjoints:
\begin{equation}\label{3.115}
 ^t\check{L}_{\overline{z}}:=<\tilde{\zeta}-\tilde{h}_0>^{-1}\left(\check{a}_1+\left<\check{b}_1,\nabla_{\overline{z}}\right>\right),\qquad ^t\check{L}_{\tilde{\zeta}}:=<\overline{z}>^{-2}\left(1+i\left<\overline{z},\nabla_{\tilde{\zeta}}\right>\right),
\end{equation}
with the coefficients $\check{a}_1$ and $\check{b}_1$ satisfying the estimations:
\begin{equation}\label{3.116}
 \left|\big(\partial^\alpha_{\overline{z}}\partial^\beta_{\tilde{\zeta}}\check{b}_1\big)(x,\overline{z},\xi,\tilde{\zeta})\right|\leq C\left[1+\left(\frac{<\xi>}{<\tilde{\zeta}-\tilde{h}_0>}\right)^{|\alpha|}\right],
\end{equation}
\begin{equation}\label{3.117}
 \left|\big(\partial^\alpha_{\overline{z}}\partial^\beta_{\tilde{\zeta}}\check{a}_1\big)(x,\overline{z},\xi,\tilde{\zeta})\right|\leq C\left[1+\left(\frac{<\xi>}{<\tilde{\zeta}-\tilde{h}_0>}\right)^{1+|\alpha|}\right],
\end{equation}
for any multiindices $\alpha$ and $\beta$ in $\mathbb{N}^d$. We obtain
\begin{equation}\label{3.118}
 e'_{\infty,0}(x,\xi)=\int_{\check{D}_\infty(x,\xi)}e^{i(<\overline{z},\tilde{\zeta}>+d(x-\overline{z},\xi)-d(x,\xi))}\big(^t\check{L}_{\overline{z}}\big)^{N_1}\big(^t\check{L}_{\tilde{\zeta}}\big)^{N_2}\big(\tilde{\rho}_\infty g\big)\,d\overline{z}\,\dbar\tilde{\zeta}
\end{equation}
with $N_1$ and $N_2$ suitable choosen and with
\begin{equation}\label{3.119}
 \check{D}_\infty(x,\xi):=\left\{(\overline{z},\tilde{\zeta})\in\mathbb{R}^{2d}\mid1+<\tilde{\zeta}-\tilde{h}_0>\geq(\epsilon/2)|\xi|\right\}.
\end{equation}
In order to proceed further we use again Lemma \ref{L1.5} that implies that the weights $<\xi>$ and $<\xi+\tilde{g}_1(x,\overline{z},\xi)>$ are equivalent and deduce that
\begin{equation}\label{3.121}
 \left|\big(\partial^\alpha_{\overline{z}}g\big)(x,\overline{z},\xi)\right|\,\leq\,C_f<\xi>^{m+m'+m''},\qquad\forall\alpha\in\mathbb{N}^d.
\end{equation}
We also have
\begin{equation}\label{3.120}
\left|\big(\partial^\alpha_{\overline{z}}\partial^\beta_{\tilde{\zeta}}\tilde{\rho}_\infty\big)(x,\overline{z},\xi,\tilde{\zeta})\right|\leq C\left[1+\left(\frac{<\xi>}{<\tilde{\zeta}-\tilde{h}_0>}\right)^{|\alpha|}\right],\qquad\forall(\alpha,\beta)\in[\mathbb{N}^d]^2.
\end{equation}
Putting all these results together we conclude that
\begin{equation}\label{3.122}
 \left|e'_{\infty,0}(x,\xi)\right|\leq C_f\int_{\check{D}_\infty(x,\xi)}<\tilde{\zeta}-\tilde{h}_0>^{-N_1}<\overline{z}>^{-N_2}<\xi>^{m+m'+m''}\,d\overline{z}\,\dbar\tilde{\zeta}.
\end{equation}
On $\check{D}_\infty(x,\xi)$ we have that $<\tilde{\zeta}-\tilde{h}_0>\geq C<\xi>$ so that after choosing $N_2\geq d+1$ and $N_1\geq p+m+m'+m''+d+1$ (for any fixed $p\in\mathbb{N}$), we obtain that $\left|e'_{\infty,0}\right|\leq C_f<\xi>^{-p}$ for any $(x,\xi)\in\mathbb{T}^*\mathbb{R}^d$. Repeating this procedure for any derivative of $e'_{\infty,0}$ we conclude that
\begin{equation}\label{3.111}
e'_{\infty,0}\,\in\,S^{-\infty}(\mathbb{R}^d).
\end{equation}

Putting together \eqref{3.110a} and \eqref{3.111} we have thus proved the following result
\begin{equation}\label{3.123}
 e'_{0,0}(x,\xi)-f\left(x,\big(\nabla_\xi U\big)(x,\xi),x,\xi,\xi,\big(\nabla_x U\big)(x,\xi)\right)\,\in\,S^{-\infty}(\mathbb{R}^d).
\end{equation}

Let us consider now the term $e''_{0,0}$ in \eqref{3.97}. We integrate by parts using the identity $\overline{\zeta}e^{i<\overline{z},\overline{\zeta}>}=-i\nabla_{\overline{z}}e^{i<\overline{z},\overline{\zeta}>}$ and obtain
\begin{equation}\label{3.124}
 e''_{0,0}(x,\xi)=i\int_{\mathbb{R}^{2d}}e^{i<\overline{z},\overline{\zeta}>}\left<\nabla_{\overline{z}},\bar{\bar{\rho}}_0(x,\overline{z},\xi,\overline{\zeta})\big(\nabla_{\overline{\zeta}}\bar{\bar{f}}\big)(x,\overline{z},\xi,0)\right>  \,d\overline{z}\,\dbar\overline{\zeta}.
\end{equation}
Taking now into account the evident identity: $\nabla_{\overline{z}}\bar{\bar{\rho}}_0=-\nabla_{\overline{z}}\bar{\bar{\rho}}_\infty$ we obtain by repeating the arguments concerning $ e'_{0,0}$, that
\begin{equation}\label{3.125}
 e''_{0,0}(x,\xi)-i\left(\left<\nabla_{\overline{z}},\nabla_{\overline{\zeta}}\right>\bar{\bar{f}}\right)(x,0,\xi,0)\,\in\,S^{-\infty}(\mathbb{R}^d).
\end{equation}
Let us compute the derivatives of $\bar{\bar{f}}$:
$$
\nabla_{\overline{\zeta}}\bar{\bar{f}}(x,\overline{z},\xi,\overline{\zeta})=\left(\nabla_{\zeta}f \right)\big(x,x+\tilde{g}_2(x,\overline{z},\xi),x-\overline{z},\xi,\xi,\overline{\zeta}+\xi+\tilde{g}_1(x,\overline{z},\xi)\big),
$$
\begin{equation}\label{3.126}
 \left<\nabla_{\overline{z}},\nabla_{\overline{\zeta}}\right>\bar{\bar{f}}=-\left<\nabla_{z},\nabla_{\zeta}\right>f+\Tr\left[\big(\nabla^2_{y,\zeta}f\big)\cdot\big(\nabla_{\overline{z}}\tilde{g}_2\big)\,+\,\big(\nabla^2_{\zeta,\zeta}f\big)\cdot\big(\nabla_{\overline{z}}\tilde{g}_1\big)\right].
\end{equation}
Considering now \eqref{3.93} we remark that
\begin{equation}\label{3.127}
 \left\{
\begin{array}{rcccl}
 \big(\nabla_{\overline{z}}\tilde{g}_2\big)(x,0,\xi)&=&-\big(\nabla^2_{x,\xi}d\big)(x,\xi)&=&1_d\,-\,\big(\nabla^2_{x,\xi}U\big)(x,\xi),\\
\big(\nabla_{\overline{z}}\tilde{g}_1\big)(x,0,\xi)&=&-(1/2)\big(\nabla^2_{x,x}d\big)(x,\xi)&=&-(1/2)\big(\nabla^2_{x,x}U\big)(x,\xi).
\end{array}
\right.
\end{equation}
We can thus conclude that:
\begin{equation}\label{3.128}
  e''_{0,0}(x,\xi)\,+\,i\left(\left<\nabla_{z},\nabla_{\zeta}\right>f\right)\big(x,\nabla_\xi  U(x,\xi),x,\xi,\xi,\nabla_x U(x,\xi)\big)-
\end{equation}
$$
-i\Tr\left[\big(\nabla^2_{y,\zeta}f\big)\big(x,\nabla_\xi  U(x,\xi),x,\xi,\xi,\nabla_x U(x,\xi)\big)\big(\nabla^2_{x,\xi}d\big)(x,\xi)+\right.
$$
$$
\left.+(1/2)\big(\nabla^2_{\zeta,\zeta}f\big)\big(x,\nabla_\xi  U(x,\xi),x,\xi,\xi,\nabla_x U(x,\xi)\big)\big(\nabla^2_{x,x}U\big)(x,\xi)\right]\,\in\,S^{-\infty}(\mathbb{R}^d).
$$

\paragraph{\sf The first order contribution.} Staring from \eqref{3.80} let us compute 
$$
\left<\nabla_{\overline{\eta}},\bar{\rho}_0\nabla_{\overline{y}}\bar{f}\right>=\left<\nabla_{\overline{\eta}}\bar{\rho}_0,\nabla_{\overline{y}}\bar{f}\right>+\bar{\rho}_0\left<\nabla_{\overline{\eta}},\nabla_{\overline{y}}\right>\bar{f}
$$
and observe that as previously, the term containing $\nabla_{\overline{\eta}}\bar{\rho}_0$ in the integral \eqref{3.80} is a symbol of class $S^{-\infty}(\mathbb{R}^d)$. For the integral containing $\bar{\rho}_0$ we notice that
\begin{equation}\label{3.129}
 \left( \left<\nabla_{\overline{\eta}},\nabla_{\overline{y}}\right>\bar{f}\right)(x,0,\overline{z},\xi,0,\overline{\zeta})=\left\{\left<\nabla_{\overline{\eta}},\nabla_{\tilde{y}}\right>\tilde{f}\,-\,\Tr\left[\big(\nabla^2_{\tilde{y},\tilde{y}}\tilde{f}\big)\cdot\big(\nabla_{\overline{\eta}}\bar{g}_2\big)\right]\right\}\big(x,-\bar{g}_2(x,\overline{z},\xi,0),\overline{z},\xi,0,\overline{\zeta}+\bar{g}_1(x,\overline{z},\xi)\big),
\end{equation}
that is of the same form as the formula of $e_{0,0}$ in \eqref{3.79} but with the order $m+m'+m''$ of the symbol replaced by $m+m'+m''-1$. Thus for its estimate we shall only use a first order Taylor development with respect to $\overline{\zeta}$; the remainder will thus be a symbol of order $m+m'+m''-2$ and the main term will be treated like in the case of $e'_{0,0}$ defined by \eqref{3.110}. This leads us to the conclusion
\begin{equation}\label{3.130}
 e_{0,1}(x,\xi)\,+\,i\left( \left<\nabla_{\overline{\eta}},\nabla_{\overline{y}}\right>\bar{f}\right)(x,0,0,\xi,0,0)\,\in\,S^{m+m'+m''-2}(\mathbb{R}^d).
\end{equation}
Let us notice that from \eqref{3.11} we obtain
\begin{equation}\label{3.131}
 \left<\nabla_{\overline{\eta}},\nabla_{\tilde{y}}\right>\tilde{f}\big(x,-\bar{g}_2(x,0,\xi,0),0,\xi,0,\bar{g}_1(x,0,\xi)\big)=
\end{equation}
\begin{equation*}
=
\left(\left<\nabla_{y},\nabla_{\eta}\right>f+\left<\nabla_{y},\nabla_{\zeta}\right>f\right)\big(x,x+\nabla_\xi d(x,\xi),x,\xi,\xi,\xi+\nabla_x d(x,\xi)\big).
\end{equation*}
We also notice that
\begin{equation}\label{3.132}
 \big(\nabla_{\overline{\eta}}\bar{g}_2\big)(x,0,\xi,0)=-(1/2)\nabla^2_{\xi,\xi}d(x,\xi)=-(1/2)\nabla^2_{\xi,\xi}U(x,\xi).
\end{equation}
Putting these remarks together with \eqref{3.129} we deduce that
\begin{equation}\label{3.133}
  e_{0,1}(x,\xi)\,+\,i\left( \left<\nabla_{y},\nabla_{\eta}\right>f+\left<\nabla_{y},\nabla_{\zeta}\right>f\right)\big(x,\nabla_\xi U(x,\xi),x,\xi,\xi,\nabla_x U(x,\xi)\big)+
\end{equation}
$$
+(i/2)\Tr\left[\big(\nabla^2_{y,y}f\big)\big(x,\nabla_\xi U(x,\xi),x,\xi,\xi,\nabla_x U(x,\xi)\big)\cdot\nabla^2_{\xi,\xi}U(x,\xi)\right]\,\in\,S^{m+m'+m''-2}(\mathbb{R}^d).
$$
From \eqref{3.66}, \eqref{3.78} and \eqref{3.97} we see that
\begin{equation}\label{3.134}
 e\,=\,e'_{0,0}+e''_{0,0}+e_{0,1}+e_\infty+r+\check{r}.
\end{equation}
But from \eqref{3.67}, \eqref{3.92} and \eqref{3.109} we deduce that $e_\infty+r+\check{r}$ is a symbol of class $S^{m+m'+m''-2}(\mathbb{R}^d)$ so that \eqref{3.64} follows from \eqref{3.65}, \eqref{3.123}, \eqref{3.128} and \eqref{3.133}.
\end{proof}

We shall now show that we can reduce our general situation to the case $B=0$ that we have studied. We shall suppose the assumptions of Propositions \ref{P3.2} and \ref{P3.3}, making also use of the notations used there. We denote by
\begin{equation}\label{3.135}
 e_B(x,\xi):=\left<M(x,\xi)\cdot\big(\nabla_\zeta f\big)\big(x,\nabla_\xi U(x,\xi),x,\xi,\xi,\nabla_x U(x,\xi)\big),\big(\nabla_\xi d\big)(x,\xi)\right>\,\in\,S^{m+m'+m''-1}(\mathbb{R}^d),
\end{equation}
where
\begin{equation}\label{3.136}
 M(x,\xi):=C\big(x,x,\nabla_\xi U(x,\xi)\big)=\int_0^1(1-t)B\big(x+t\nabla_\xi d(x,\xi)\big)\,dt.
\end{equation}
\begin{proposition}\label{P3.4}
 Under the assumptions of Proposition \ref{P3.2} and use the notations \eqref{3.135} above, we have that
$$
c\,-\,e\,-\,e_B\,\in\,S^{m+m'+m''-2}(\mathbb{R}^d).
$$
\end{proposition}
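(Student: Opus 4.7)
My plan is to Taylor-expand the magnetic factor $\tilde{\Omega^B}$ in the integral \eqref{3.9} defining $c$ and identify the leading magnetic correction with $e_B$ via one integration by parts and Proposition~\ref{P3.3}. By Lemma~\ref{L1.11}(3), $\tilde{F}(x,\tilde{y},\overline{z}):=F(x,x-\overline{z},x-\tilde{y})=-\langle\tilde{C}(x,\tilde{y},\overline{z})\overline{z},\tilde{y}\rangle$ with $\tilde{C}(x,\tilde{y},\overline{z}):=C(x,x-\overline{z},x-\tilde{y})$ of class $BC^\infty$; hence $\tilde{\Omega^B}=e^{i\langle\tilde{C}\overline{z},\tilde{y}\rangle}$ admits the exact Taylor decomposition
\[
\tilde{\Omega^B}\,=\,1\,+\,i\langle\tilde{C}\overline{z},\tilde{y}\rangle\,-\,\int_0^1(1-t)\langle\tilde{C}\overline{z},\tilde{y}\rangle^2 e^{it\langle\tilde{C}\overline{z},\tilde{y}\rangle}\,dt.
\]
Substituting this into \eqref{3.9} yields a splitting $c=e+c_1+c_2$, with $c_1$ carrying the linear piece and $c_2$ the quadratic remainder. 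The goal is to show that $c_2\in S^{m+m'+m''-2}(\mathbb{R}^d)$ while $c_1\equiv e_B$ modulo $S^{m+m'+m''-2}(\mathbb{R}^d)$.

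For the remainder $c_2$, I would exploit the four position factors contained in $|\tilde{y}|^2|\overline{z}|^2$: two integrations by parts in $\tilde{\zeta}$ via $\overline{z}_k e^{i\tilde{\psi}}=-i\partial_{\tilde{\zeta}_k}e^{i\tilde{\psi}}$, and two in $\overline{\eta}$ via
\[
\tilde{y}_j e^{i\tilde{\psi}}\,=\,i\partial_{\overline{\eta}_j}e^{i\tilde{\psi}}\,-\,\partial_{\xi_j}d(x-\overline{z},\xi-\overline{\eta})\,e^{i\tilde{\psi}}.
\]
Uniformly in $t\in[0,1]$ this rewrites $c_2$ as a finite sum of integrals of the form \eqref{3.7}, now with integrands of total symbol order at most $m+m'+m''-2$ (the worst case $S^{m,m',m''-2}$ appears when both $\tilde{y}$ integrations produce the $\partial_\xi d\in S^0$ correction rather than an $\overline{\eta}$-derivative); Proposition~\ref{P3.2} then delivers $c_2\in S^{m+m'+m''-2}(\mathbb{R}^d)$. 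The factor $e^{it\langle\tilde{C}\overline{z},\tilde{y}\rangle}$ has modulus one and its $(\tilde{y},\overline{z})$-derivatives grow only polynomially, which is amply compensated by the $\langle\tilde{y}\rangle^{-2N},\langle\overline{z}\rangle^{-2N}$ decay produced by the $\overline{L}$-type operators already used in the proofs of Propositions~\ref{P3.2}, \ref{P3.3}.

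For the linear term, I perform the same integration by parts only once in each variable and use $\partial_{\tilde{\zeta}_k}\tilde{C}_{jk}=0=\partial_{\overline{\eta}_j}\tilde{C}_{jk}$ to obtain
\[
c_1\,=\,i\sum_{j,k}\int e^{i\tilde{\psi}}\tilde{C}_{jk}\,\partial_{\overline{\eta}_j}\partial_{\tilde{\zeta}_k}\tilde{f}\,d\tilde{y}\,d\overline{z}\,\dbar\overline{\eta}\,\dbar\tilde{\zeta}\,+\,\sum_{j,k}\int e^{i\tilde{\psi}}\tilde{C}_{jk}\,\partial_{\xi_j}d(x-\overline{z},\xi-\overline{\eta})\,\partial_{\tilde{\zeta}_k}\tilde{f}\,d\tilde{y}\,d\overline{z}\,\dbar\overline{\eta}\,\dbar\tilde{\zeta}.
\]
The first integral has integrand of total symbol order $m+m'+m''-2$, hence lies in $S^{m+m'+m''-2}(\mathbb{R}^d)$ by Proposition~\ref{P3.2}. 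The second integrand belongs to $S^{m,m',m''-1}$ (since $\partial_\xi d\in S^0$), so Proposition~\ref{P3.3} identifies its principal part modulo $S^{m+m'+m''-2}(\mathbb{R}^d)$ as the integrand evaluated at the stationary point $(\tilde{y},\overline{z},\overline{\eta},\tilde{\zeta})=(-\nabla_\xi d(x,\xi),0,0,\nabla_x d(x,\xi))$, equivalently $(y,z,\eta,\zeta)=(\nabla_\xi U(x,\xi),x,\xi,\nabla_x U(x,\xi))$. The three factors then become $\partial_{\xi_j}d(x,\xi)$, $C_{jk}(x,x,\nabla_\xi U(x,\xi))=M_{jk}(x,\xi)$ (via the elementary integral computation that justifies formula \eqref{3.136} for $M$), and $(\partial_{\zeta_k}f)(x,\nabla_\xi U,x,\xi,\xi,\nabla_x U)$; summed over $j,k$ they reconstruct exactly $e_B(x,\xi)$ as defined in \eqref{3.135}. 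Combining the two steps proves $c-e-e_B\in S^{m+m'+m''-2}(\mathbb{R}^d)$. The main obstacle will be the uniform-in-$t$ bookkeeping for $c_2$, but this only asks to rerun the estimates of Propositions~\ref{P3.2}, \ref{P3.3} with a bounded parameter-dependent amplitude; the rest reduces to a mechanical application of the stationary-phase expansion already derived in Proposition~\ref{P3.3}.
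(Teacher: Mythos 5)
Your proposal is correct and follows essentially the same route as the paper: Taylor-expand $\tilde{\Omega^B}=e^{-i\tilde{F}}$ to second order with integral remainder (the paper writes this as $\tilde{\Omega^B}=1-i\tilde{F}+\tilde{R}_1$, $\tilde{R}_1=-\tilde{F}^2\int_0^1(1-t)e^{-it\tilde{F}}\,dt$; your $\tilde{C}$ has the opposite sign convention so the linear term appears as $+i\langle\tilde{C}\overline z,\tilde y\rangle$, but the two bookkeepings agree), push the quadratic remainder to order $m+m'+m''-2$ by integrating away the factors $\overline z$ and $\tilde y$ and invoking Proposition~\ref{P3.2} uniformly in $t$, then identify the linear piece with $e_B$ by one integration by parts in each of $\tilde\zeta,\overline\eta$ followed by Proposition~\ref{P3.3}. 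The only cosmetic difference is that you package the $\overline\eta$-integration by parts through the full phase $\tilde\psi$, which makes the $\partial_\xi d$ correction appear explicitly inside the identity $\tilde y_j e^{i\tilde\psi}=i\partial_{\overline\eta_j}e^{i\tilde\psi}-\partial_{\xi_j}d(x-\overline z,\xi-\overline\eta)\,e^{i\tilde\psi}$, whereas the paper integrates by parts against $e^{-i\langle\tilde y,\overline\eta\rangle}$ alone and recovers the $\partial_\xi d$ factor by differentiating the remaining exponential; the two are algebraically equivalent. Your case analysis (the worst term being $(\partial_{\xi_j}d)(\partial_{\xi_{j'}}d)\partial_{\tilde\zeta}^2\tilde f\in S^{m,m',m''-2}$) and the identification $\tilde C(\text{crit.\;pt.})=C(x,x,\nabla_\xi U(x,\xi))=M(x,\xi)$ both match the paper, so the proof is sound.
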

\begin{proof}
 We have 
\begin{equation}\label{3.138}
 \tilde{\Omega^B}(x,\tilde{y},\overline{z})=e^{-i\tilde{F}(x,\tilde{y},\overline{z})},\quad\tilde{F}(x,\tilde{y},\overline{z}):=F(x,x-\overline{z},x-\tilde{y})
\end{equation}
and from \eqref{1.13}
\begin{equation}\label{3.139}
 \tilde{F}(x,\tilde{y},\overline{z})\,=\,\left<\tilde{C}(x,\tilde{y},\overline{z})\overline{z}\,,\,\tilde{y}\right>,
\end{equation}
where
\begin{equation}\label{3.140}
\tilde{C}(x,\tilde{y},\overline{z})\,=\,-C(x,x-\overline{z},x-\tilde{y})\,=\,-\int_0^1ds\int_0^{1-s}dt\,B(x-t\overline{z}-s\tilde{y}).
\end{equation}
We use the equality 
\begin{equation}\label{3.141}
 \tilde{\Omega^B}=1-i\tilde{F}+\tilde{R}_1,\quad\text{with }\tilde{R}_1:=-\tilde{F}^2\int_0^1(1-t)e^{-it\tilde{F}}dt
\end{equation}
and the associated decomposition of $c(x,\xi)$ defined by \eqref{3.9}
\begin{equation}\label{3.142}
 c\,=\,e\,+\,\tilde{e}_B\,+\,\tilde{r}_1,
\end{equation}
with $e$ defined by \eqref{3.63}, $\tilde{e}_B$ obtained by replacing $\tilde{\Omega^B}$ with $-i\tilde{F}$ in the integral \eqref{3.9} and $\tilde{r}_1$ obtained  by replacing $\tilde{\Omega^B}$ with $\tilde{R}_1$ in the integral \eqref{3.9}.

Let us first prove that
\begin{equation}\label{3.143}
 \tilde{r}_1\,\in\,S^{m+m'+m''-2}(\mathbb{R}^d).
\end{equation}
In order to obtain this result we begin by integrating by parts in the integral defining $\tilde{r}_1$, using the identities
$$
\tilde{y}e^{-i<\tilde{y},\overline{\eta}>}=i\nabla_{\overline{\eta}}e^{-i<\tilde{y},\overline{\eta}>},\quad\overline{z}e^{i<\overline{z},\tilde{\zeta}>}=-i\nabla_{\tilde{\zeta}}e^{i<\overline{z},\tilde{\zeta}>}
$$and noticing that $\tilde{F}^2$ contains a factor $\overline{z}^\alpha$ with $|\alpha|=2$, while $d(x-\overline{z},\xi-\overline{\eta})-d(x,\xi)$ does not depend on $\tilde{\zeta}$ and $d(x,\overline{\eta})$ has derivatives with respect to $\overline{\eta}$ that are symbols of order 0. Thus we can write:
$$
\tilde{r}_1(x,\xi)\,=\,\int_0^1(1-t)\tilde{r}_1(t;x,\xi)\,dt,
$$
with $\tilde{r}_1(t;x,\xi)$ defined by an integral of type \eqref{3.9} in which $\tilde{\Omega^B}$ is replaced by $e^{-it\tilde{F}}$ and the factor $\tilde{f}$ is replaced by some symbol from $S^{m,m',m''-2}\big(\mathbb{R}^{3d}\times\mathbb{R}^d\times\mathbb{R}^d\times\mathbb{R}^d\big)$.
We use Proposition \ref{P3.2} obtaining estimations uniform with respect to $t\in[0,1]$ and thus \eqref{3.143}.

We come now to the study of $\tilde{e}_B$. Let us write down its explicit form:
\begin{equation}\label{3.144}
 \tilde{e}_B(x,\xi)\,=\,-i\int_{\mathbb{R}^{4d}}e^{i(<\overline{z},\tilde{\zeta}>-<\tilde{y},\overline{\eta}>)}\left<\tilde{C}(x,\tilde{y},\overline{z})\nabla_{\tilde{\zeta}},\nabla_{\overline{\eta}}\right>\left[e^{i(d(x-\overline{z},\xi-\overline{\eta})-d(x,\xi))}\tilde{f}\right]\,d\tilde{y}\,d\overline{z}\,\dbar\overline{\eta}\,\dbar\tilde{\zeta}.
\end{equation}
Let us notice that in the above expression the operator $\nabla_{\tilde{\zeta}}$ only applies to $\tilde{f}$. Applying $\nabla_{\overline{\eta}}$ produces two types of terms: those with derivatives of $\tilde{f}$ and with the exponential left non-differentiated (these integrals are symbols of class $S^{m+m'+m''-2}(\mathbb{R}^d)$) and the other terms containing derivatives of the exponential. We conclude that $\tilde{e}_B$ is congruent modulo $S^{m+m'+m''-2}(\mathbb{R}^d)$) with
\begin{equation}\label{3.145}
 e'_B(x,\xi)\,:=\,-\int_{\mathbb{R}^{4d}}e^{i\tilde{\Psi}}\left<\tilde{C}(x,\tilde{y},\overline{z})\big(\nabla_{\tilde{\zeta}}\tilde{f}\big),\big(\nabla_{\xi}d\big)(x-\overline{z},\xi-\overline{\eta})\right>\,d\tilde{y}\,d\overline{z}\,\dbar\overline{\eta}\,\dbar\tilde{\zeta}.
\end{equation}
The scalar product under the above integral defines a function of the form $\tilde{\Phi}(x,\tilde{y},\overline{z},\xi,\overline{\eta},\tilde{\zeta})$ with
\begin{equation}\label{3.146}
 \Phi(x,y,z,\xi,\eta,\zeta):=-\left<C(x,z,y)\cdot\big(\nabla_\zeta f\big)(x,y,z,\xi,\eta,\zeta)^{}_{},\big(\nabla_\xi d\big)(z,\eta)\right>
\end{equation}
being a symbol of class $S^{m,m',m''-1}(\mathbb{R}^{3d}\times\mathbb{R}^d\times\mathbb{R}^d\times\mathbb{R}^d\big)$. By the result of Proposition \ref{P3.3} the symbol $e'_B$ is congruent modulo $S^{m+m'+m''-2}(\mathbb{R}^d)$) with
$$
e_B(x,\xi):=-\Phi\big(x,\nabla_\xi U(x,\xi),x,\xi,\xi,\nabla_x U(x,\xi)\big)=
$$
$$
=\left<C\big(x,x,\nabla_\xi U(x,\xi)\big)\cdot\big(\nabla_\zeta f\big)\big(x,\nabla_\xi U(x,\xi),x,\xi,\xi,\nabla_x U(x,\xi)\big),\big(\nabla_\xi d\big)(x,\xi)\right>,
$$
with $C(x,y,z)$ defined by \eqref{1.12}. Using the definitions \eqref{3.135} and \eqref{3.136}we end the proof of the Proposition.
\end{proof}

By putting together the Propositions \ref{P3.3} and \ref{P3.4} we obtain the proof of the following statement.
\begin{theorem}\label{T3.5}
 Suppose $f\in S^{m,m',m''}(\mathbb{R}^{3d}\times\mathbb{R}^d\times\mathbb{R}^d\times\mathbb{R}^d\big)$ and define $c$ by the formula \eqref{3.7}. Then:
$$
c\,-\,(c_0+e_B)\,\in\,S^{m+m'+m''-2}(\mathbb{R}^d)
$$
with $c_0$ defined by \eqref{3.65} and $e_B$ defined by \eqref{3.135} and \eqref{3.136}.
\end{theorem}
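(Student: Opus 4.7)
The plan is to combine the two preceding propositions directly. Proposition \ref{P3.4} asserts that $c - e - e_B \in S^{m+m'+m''-2}(\Rd)$, where $e$ is the oscillating integral \eqref{3.63}, i.e.\ the $B=0$ analogue of $c$. Proposition \ref{P3.3} asserts that $e - c_0 \in S^{m+m'+m''-2}(\Rd)$. Since $S^{m+m'+m''-2}(\Rd)$ is a vector space, adding these two congruences yields
$$
c - c_0 - e_B \;=\; (c - e - e_B) + (e - c_0) \;\in\; S^{m+m'+m''-2}(\Rd),
$$
which is exactly the conclusion. No further computation is needed at this step.

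The main obstacle has already been dealt with inside the two auxiliary propositions, and the argument I have just given is really just bookkeeping. The genuine content sits in two places that I would want to have carefully verified before invoking them. First, Proposition \ref{P3.4} hinges on the quadratic vanishing of $F^B$ on the diagonal coming from point (3) of Lemma \ref{L1.11}: writing $\tilde{\Omega^B} = 1 - i\tilde{F} + \tilde{R}_1$ as in \eqref{3.141}, the remainder $\tilde{R}_1$ carries a factor of $\tilde{F}^2$, which in turn contains a factor $\overline{z}^\alpha \tilde{y}^\beta$ with $|\alpha|=|\beta|=2$; two integrations by parts using $\tilde{y}\,e^{-i<\tilde{y},\overline{\eta}>} = i\nabla_{\overline{\eta}}\,e^{-i<\tilde{y},\overline{\eta}>}$ and $\overline{z}\,e^{i<\overline{z},\tilde{\zeta}>} = -i\nabla_{\tilde{\zeta}}\,e^{i<\overline{z},\tilde{\zeta}>}$ then reduce the remainder to a symbol of order $m+m'+m''-2$ via Proposition \ref{P3.2}, while the linear term $-i\tilde{F}$ is processed via Proposition \ref{P3.3} applied to the auxiliary amplitude $\Phi$ defined in \eqref{3.146}, producing $e_B$.

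Second, Proposition \ref{P3.3} is where the genuine stationary-phase-type analysis on the phase $\tilde{\psi}$ takes place: a second order Taylor expansion of the amplitude around the critical point $(\tilde{y},\overline{\eta})=0$, estimation of the remainder separately on the regions $\mathcal{D}_0$ and $\mathcal{D}_\infty$ from \eqref{3.22a}--\eqref{3.22b}, and explicit evaluation at the critical point after the changes of variable \eqref{3.70}, which is what produces the arguments $\big(\nabla_\xi U\big)(x,\xi)$ and $\big(\nabla_x U\big)(x,\xi)$ appearing in $c_0$. Once both ingredients are in place, the proof of Theorem \ref{T3.5} itself reduces to the one-line addition above.
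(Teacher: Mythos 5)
Your proof is correct and coincides with the paper's own argument: the paper's proof of Theorem \ref{T3.5} is exactly the one-line combination of Propositions \ref{P3.3} and \ref{P3.4} that you give, and your summary of the content of those propositions (quadratic vanishing of $\tilde F$ on the diagonal, the Taylor/integration-by-parts reduction of $\tilde R_1$ via Proposition \ref{P3.2}, and the stationary-phase analysis producing $c_0$) is accurate.
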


The situation that is of main interest for us is the case when $c$ is the symbol of the compound operator $a^A(x,D)\circ\mathfrak{Op}^A_\Phi(b)$. In this case, following \eqref{3.4} and the Propositions \ref{P3.1} and \ref{P3.2}, the symbol $c$ is defined by formula \eqref{3.7} with
$$
f(x,y,z,\xi,\eta,\zeta)\,=\,a(x,\zeta)b(z,\eta);
$$
in particular, $m=0$ and the symbol $f$ does not in fact depend on the variables $y$ and $\xi$. For this particular case our Theorem \ref{T3.5} becomes
\begin{theorem}\label{T3.6}
 Suppose given $a\in S^{m''}(\mathbb{R}^d)$ and $b\in S^{m'}(\mathbb{R}^d)$. Then
$$
a^A(x,D)\circ\mathfrak{Op}^A_\Phi(b)\,=\,\mathfrak{Op}^A_\Phi(c)
$$
with $c\in S^{m'+m''}(\mathbb{R}^d)$ and 
\begin{equation}\label{3.148}
 c\,-\,(c_0+e_B)\,\in\,S^{m+m'+m''-2}(\mathbb{R}^d)
\end{equation}
with
\begin{equation}\label{3.149}
 c_0(x,\xi)\,:=\,a\big(x,\nabla_x U(x,\xi)\big)b(x,\xi)\,-\,i\left<\big(\nabla_\zeta a\big)\big(x,\nabla_x U(x,\xi)\big),\big(\nabla_z b\big)(x,\xi)\right>\,-
\end{equation}
$$
-\,(i/2)\Tr\left[\big(\nabla^2_{\zeta,\zeta}a\big)\big(x,\nabla_x U(x,\xi)\big)\cdot\big(\nabla^2_{x,x}U\big)(x,\xi)\right]b((x,\xi)
$$
and
\begin{equation}\label{3.150a}
e_B(x,\xi)\,:=\,\left<M(x,\xi)\cdot\big(\nabla_\zeta a\big)\big(x,\nabla_x U(x,\xi)\big),\big(\nabla_\xi d\big)(x,\xi)\right>b(x,\xi),
\end{equation}
\begin{equation}\label{3.150b}
 M(x,\xi)\,=\,\int_0^1(1-t)B\big(x+t\nabla_\xi d(x,\xi)\big)\,dt.
\end{equation}
\end{theorem}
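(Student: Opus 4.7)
The plan is to obtain Theorem 3.6 as a direct corollary of Theorem 3.5 applied to a very specific choice of the amplitude $f$. First, by Proposition 3.1 we already know that $a^A(x,D)\circ \mathfrak{Op}^A_\Phi(b) = \mathfrak{Op}^A_\Phi(c)$ for some $c\in S^{m'+m''}(\mathbb{R}^d)$, obtained as the limit in $S^{m'+m''}(\mathbb{R}^d)$ of the symbols $c_\epsilon$ defined in \eqref{3.4}. The key observation is that the unregularized integral formula \eqref{3.4} is literally the one appearing in \eqref{3.7} for the particular choice
$$
f(x,y,z,\xi,\eta,\zeta)\,:=\,a(x,\zeta)\,b(z,\eta),
$$
which depends neither on $y$ nor on $\xi$ and belongs to $S^{0,m',m''}(\mathbb{R}^{3d}\times\mathbb{R}^d\times\mathbb{R}^d\times\mathbb{R}^d)$ (so $m=0$ in the language of Theorem 3.5).

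Next I would invoke Theorem 3.5 directly: it yields that
$$
c\,-\,(c_0+e_B)\,\in\,S^{m'+m''-2}(\mathbb{R}^d),
$$
with $c_0$ given by \eqref{3.65} and $e_B$ by \eqref{3.135}--\eqref{3.136}, evaluated for this particular $f$. The remaining work is purely a substitution exercise, which is where the simplifications occur: since $f$ is independent of $y$ and $\xi$, one has $\nabla_y f=0$, $\nabla^2_{y,y}f=0$ and $\nabla^2_{y,\zeta}f=0$ identically, which kills three of the four first-order terms and two of the three trace terms in the general formula \eqref{3.65}. The surviving contributions are
$$
f\bigl(x,\nabla_\xi U,x,\xi,\xi,\nabla_x U\bigr)=a(x,\nabla_x U(x,\xi))\,b(x,\xi),
$$
$$
\bigl\langle\nabla_z,\nabla_\zeta\bigr\rangle f \;=\; \bigl\langle (\nabla_\zeta a)(x,\zeta),(\nabla_z b)(z,\eta)\bigr\rangle,
$$
$$
\nabla^2_{\zeta,\zeta}f \;=\; (\nabla^2_{\zeta,\zeta}a)(x,\zeta)\,b(z,\eta),
$$
which after evaluation at $(y,z,\eta,\zeta)=(\nabla_\xi U(x,\xi),x,\xi,\nabla_x U(x,\xi))$ produce exactly the three terms in \eqref{3.149}. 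Likewise, for $e_B$ the factor $\nabla_\zeta f$ in \eqref{3.135} specializes to $(\nabla_\zeta a)(x,\nabla_x U(x,\xi))\,b(x,\xi)$, which immediately yields \eqref{3.150a}, while $M(x,\xi)$ is unchanged and is given by \eqref{3.150b} using the definition of $\nabla_\xi U = x + \nabla_\xi d$ in Hypothesis \ref{Hyp-U}.

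Since this is effectively a specialization rather than a fresh computation, no genuinely new obstacle appears; the only point that deserves a brief verification is that the proposed $f$ does lie in $S^{0,m',m''}$, which is straightforward from $a\in S^{m''}$ and $b\in S^{m'}$ together with the independence in the remaining variables. The ``hardest'' part is purely bookkeeping: making sure one reads off the evaluation point $(y,z,\eta,\zeta)=(\nabla_\xi U(x,\xi),x,\xi,\nabla_x U(x,\xi))$ correctly and correctly identifies the two derivatives $(\nabla_z b)(x,\xi)$ and $(\nabla_\zeta a)(x,\nabla_x U(x,\xi))$ in the cross term so that the signs and the positions of the arguments in \eqref{3.149} and \eqref{3.150a} come out as stated.
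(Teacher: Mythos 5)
Your proposal is correct and is essentially identical to the paper's own derivation: the paper obtains Theorem~\ref{T3.6} precisely by specializing Theorem~\ref{T3.5} (together with Propositions~\ref{P3.1} and~\ref{P3.2}) to $f(x,y,z,\xi,\eta,\zeta)=a(x,\zeta)b(z,\eta)\in S^{0,m',m''}$, noting that $f$ is independent of $y$ and $\xi$ so the corresponding terms in \eqref{3.65} and \eqref{3.135} vanish, and then reading off \eqref{3.149}--\eqref{3.150b} by direct substitution at $(y,z,\eta,\zeta)=(\nabla_\xi U(x,\xi),x,\xi,\nabla_x U(x,\xi))$.
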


\section{Product of a FIO with a $\Psi$DO}

Suppose once again given two symbols $a\in S^{m''}(\mathbb{R}^d)$ and $b\in S^{m'}(\mathbb{R}^d)$ and the operators $E:=\mathfrak{Op}^A_\Phi(a)$ and $F:=b^A(x,D)$. Then for any $u\in\mathcal{S}(\mathbb{R}^d)$ and any point $x\in\mathbb{R}^d$ we have
\begin{equation}\label{4.1}
 \big(Eu\big)(x)\,=\,\int_{\mathbb{R}^{2d}}e^{i\big(U(x,\zeta)-<z,\zeta>\big)}\omega^A(x,z)a(x,\zeta)u(z)\,dz\,\dbar\zeta,
\end{equation}
\begin{equation}\label{4.2}
  \big(Fu\big)(x)\,=\,\int_{\mathbb{R}^{2d}}e^{i<x-y,\eta>}\omega^A(x,y)b(x,\eta)u(y)\,dy\,\dbar\eta.
\end{equation}

Then $E\circ F$ is a well defined operator in $\mathbb{B}\big(\mathcal{S}(\mathbb{R}^d)\big)$. We want to prove that it is in fact a FIO and to compute explicitly its principal symbol. For that we proceed as in the previous section. We choose $\chi\in C^\infty_0(\mathbb{R}^{2d})$ with $\chi(0,0)=1$ and for $\epsilon\in(0,1]$ we define
$$
a_\epsilon(x,z,\zeta):=\chi(\epsilon z,\epsilon\zeta)a(x,\zeta),\quad b_\epsilon(y,z,\eta):=\chi(\epsilon y,\epsilon\eta)b(z,\eta).
$$
We still denote by $E_\epsilon$ the operator defined by \eqref{4.1} with $a$ replaced by $a_\epsilon$ and by $F_\epsilon$ the operator defined by \eqref{4.2} with $b$ replaced by $b_\epsilon$. Then, for any $v\in\mathcal{S}(\mathbb{R}^d)$ we have $\underset{\epsilon\rightarrow0}{\lim}\big(E_\epsilon\circ F_\epsilon\big)v=\big(E\circ F\big)v$. Moreover, a direct computation allows to obtain the distribution kernel $K_\epsilon\in\mathcal{S}(\mathbb{R}^{2d})$ of $E_\epsilon\circ F_\epsilon$:
\begin{equation}\label{4.3}
 K_\epsilon(x,y)=\omega^A(x,y)\int_{\mathbb{R}^{3d}}e^{i(<z-y,\eta>-<z,\zeta>+U(x,\zeta))}\Omega^B(x,z,y)a_\epsilon(x,z,\zeta)b_\epsilon(y,z,\eta)\,dz\,\dbar\eta\,\dbar\zeta.
\end{equation}
We use \eqref{2.5} to deduce that
$
 E_\epsilon\circ F_\epsilon=\mathfrak{Op}^A_\Phi(c_\epsilon),\quad\text{for }c_\epsilon\in\mathcal{S}(\mathbb{R}^{2d}),
$
where
\begin{equation}\label{4.5}
c_\epsilon(x,\xi)\,:=\,\int_{\mathbb{R}^{4d}}e^{i\psi(x,y,z,\xi,\eta,\zeta)}\Omega^B(x,z,y)a_\epsilon(x,z,\zeta)b_\epsilon(y,z,\eta)\,dy\,dz\,\dbar\eta\,\dbar\zeta,
\end{equation}
\begin{equation}\label{4.6}
 \psi(x,y,z,\xi,\eta,\zeta)\,:=\,<y,\xi-\eta>+<z,\eta-\zeta>+U(x,\zeta)-U(x,\xi).
\end{equation}

Suppose that we have proved the existence of a symbol $c\in S^{m'+m''}(\mathbb{R}^d)$ such that
\begin{equation}\label{4.7}
\underset{\epsilon\rightarrow0}{\lim}c_\epsilon\,=\,c\quad\text{in }S^{m'+m''}(\mathbb{R}^d).
\end{equation}
It follows that
\begin{proposition}\label{P4.1}
 If $a\in S^{m''}(\mathbb{R}^d)$ and $b\in S^{m'}(\mathbb{R}^d)$, then $\mathfrak{Op}^A_\Phi(a)\circ b^A(x,D)=\mathfrak{Op}^A_\Phi(c)$ with $c\in S^{m'+m''}(\mathbb{R}^d)$ defined by \eqref{4.7}.
\end{proposition}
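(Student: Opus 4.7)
The plan is to mirror the proof of Proposition \ref{P3.1}: I will establish that $c_\epsilon$ defined by \eqref{4.5} converges in $S^{m'+m''}(\mathbb{R}^d)$ as $\epsilon\searrow 0$, so that passing to the limit in the equality $E_\epsilon\circ F_\epsilon=\mathfrak{Op}^A_\Phi(c_\epsilon)$ obtained above, the continuity of $\mathfrak{Op}^A_\Phi$ in the symbol immediately delivers $E\circ F=\mathfrak{Op}^A_\Phi(c)$ for $c\in S^{m'+m''}(\mathbb{R}^d)$.

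The main technical task is to prove an analog of Proposition \ref{P3.2} for the phase function \eqref{4.6}: for any $f\in S^{m,m',m''}(\mathbb{R}^{3d}\times\mathbb{R}^d\times\mathbb{R}^d\times\mathbb{R}^d)$, the oscillatory integral
$$
c(x,\xi):=\int_{\mathbb{R}^{4d}}e^{i\psi(x,y,z,\xi,\eta,\zeta)}\Omega^B(x,z,y)f(x,y,z,\xi,\eta,\zeta)\,dy\,dz\,\dbar\eta\,\dbar\zeta
$$
belongs to $S^{m+m'+m''}(\mathbb{R}^d)$, with $f\mapsto c$ continuous in the corresponding Fr\'echet topologies. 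Granting this, one applies it to $f(x,y,z,\xi,\eta,\zeta)=a_\epsilon(x,z,\zeta)b_\epsilon(y,z,\eta)$ (bounded in $S^{0,m',m''}$ uniformly in $\epsilon\in(0,1]$ and converging to $a(x,\zeta)b(z,\eta)$ as $\epsilon\searrow 0$) to obtain $c_\epsilon\to c$ in $S^{m'+m''}(\mathbb{R}^d)$.

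For the analog statement I would begin by rewriting $U(x,\zeta)=<x,\zeta>+d(x,\zeta)$ in \eqref{4.6}, so that $\psi=<y-x,\xi-\eta>+<z-x,\eta-\zeta>+d(x,\zeta)-d(x,\xi)$, and then change variables $\tilde y:=x-y$, $\overline{z}:=x-z$, $\overline{\eta}:=\xi-\eta$, $\tilde\zeta:=\zeta-\xi$. The resulting phase
$$
\tilde\psi=<\overline{z}-\tilde y,\overline{\eta}>+<\overline{z},\tilde\zeta>+d(x,\xi+\tilde\zeta)-d(x,\xi)
$$
has its critical set at $\tilde y=\overline{z}=\overline{\eta}=\tilde\zeta=0$, structurally analogous to the one treated in Proposition \ref{P3.2}. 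I would then introduce a cutoff $1=\rho_0+\rho_\infty$ that splits the region where $<\overline{\eta}>+<\tilde\zeta-(\nabla_\zeta d)(x,\xi)>$ is small or large compared to $|\xi|$, and write $c=c_0+c_\infty$. The piece $c_\infty$ is handled by repeated integration by parts using operators analogous to \eqref{3.18a}--\eqref{3.18b}, built from the gradients of $\tilde\psi$ with respect to each integration variable, yielding arbitrary decay in $<\xi>$; the piece $c_0$ is reduced via a further translation in $\tilde\zeta$ to a pure Fourier integral and estimated by Lebesgue bounds, with Lemma \ref{L1.5} and Hypothesis \ref{Hyp-U} ensuring the equivalences of the weights $<\xi>$, $<\xi-\overline{\eta}>$ and $<\xi+\tilde\zeta>$ on the relevant region. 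Derivatives of $c$ are controlled by the same strategy, the monomials $\tilde y^\alpha\overline{z}^\beta$ created when $\Omega^B$ is differentiated being absorbed through the identities \eqref{3.24}.

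The principal obstacle I anticipate, beyond routine bookkeeping, is that the $x$-dependence of the phase is now concentrated entirely in $d(x,\cdot)$ at the frequencies $\zeta$ and $\xi$ (rather than in $d(\cdot,\eta)$ as in Proposition \ref{P3.2}). This slightly changes the auxiliary functions playing the role of $\overline{g}_1,\overline{g}_2$ in the low-frequency analysis; one must verify that the analog of \eqref{3.38}, identifying $<\xi+\tilde\zeta>$ with $<\xi>$ on the new low-frequency region (after the appropriate translation eliminating $\nabla_\zeta d$), still holds. Once this is in place, the rest of the estimate proceeds by the same arguments, and the conclusion of Proposition \ref{P4.1} follows as in the last line of the proof of Proposition \ref{P3.1}.
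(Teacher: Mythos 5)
Your high-level plan matches the paper exactly: state a Proposition~\ref{P4.2}-type result that bounds the oscillatory integral with the phase \eqref{4.6} uniformly in the symbol, apply it to $f(x,y,z,\xi,\eta,\zeta)=a_\epsilon(x,z,\zeta)b_\epsilon(y,z,\eta)$ to get convergence of $c_\epsilon$ in $S^{m'+m''}(\mathbb{R}^d)$, then pass to the limit. Where you diverge is in the anticipated proof of the technical estimate, and there you considerably over-engineer. You correctly note that the $x$-dependence of the phase now sits entirely in $d(x,\cdot)$ at frequencies $\zeta$ and $\xi$, but you misread the consequence: this is a dramatic simplification, not a comparable difficulty. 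With the paper's change of variables $\bar y=x-y$, $\bar z=x-z$, $\bar\eta=\eta-\xi$, $\bar\zeta=\zeta-\eta$, the phase becomes $\langle\bar y,\bar\eta\rangle+\langle\bar z,\bar\zeta\rangle+\bar d(x,\xi,\bar\eta,\bar\zeta)$ with $\bar d(x,\xi,\bar\eta,\bar\zeta)=d(x,\xi+\bar\eta+\bar\zeta)-d(x,\xi)$, and this $\bar d$ is independent of $\bar y,\bar z$ and, because $\nabla_\xi d\in S^0$, has all of its $\bar\eta,\bar\zeta$-derivatives in $BC^\infty$. Consequently the elementary identities \eqref{4.15} together with the usual treatment of monomials from $\overline{\Omega^B}$-derivatives give the full estimate directly; there is no need for a partition of unity $1=\rho_0+\rho_\infty$, no shifted anchors $\overline{h}_0$, and no translation step. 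The heavy machinery of Proposition~\ref{P3.2} was forced there precisely because $d$ was evaluated at the shifted spatial point $x-\overline{z}$, making $\overline{z}$-derivatives of the phase grow like $\langle\xi-\overline\eta\rangle$; that obstruction is absent here. Two smaller notes: your dual variable $\tilde\zeta:=\zeta-\xi$ leaves $\overline{z}$ with the mixed linear coefficient $\overline\eta+\tilde\zeta$, which the paper's choice $\bar\zeta:=\zeta-\eta$ decouples into a clean $\langle\bar z,\bar\zeta\rangle$ term, and you write $(\nabla_\zeta d)(x,\xi)$ where you mean $(\nabla_\xi d)(x,\xi)$. Your approach would likely still work as a brute-force argument, but the point of the section is that the $\Psi$DO-on-the-right composition is genuinely easier than the $\Psi$DO-on-the-left one, and you should exploit that.
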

Coming now back to \eqref{4.7}, we notice that it will follow from the next statement that we shall now prove.
\begin{proposition}\label{P4.2}
 Assume $f\in S^{m,m',m''}\big(\mathbb{R}^{3d}\times\mathbb{R}^d\times\mathbb{R}^d\times\mathbb{R}^d\big)$ and $\psi$ is the phase function defined in \eqref{4.6}. Then the oscillatory integral
\begin{equation}\label{4.8}
 c(x,\xi)\,:=\,\int_{\mathbb{R}^{4d}}e^{i\psi(x,y,z,\xi,\eta,\zeta)}\Omega^B(x,z,y)f(x,y,z,\xi,\eta,\zeta)\,dy\,dz\,\dbar\eta\,\dbar\zeta
\end{equation}
is well defined as an element from $S^{m+m'+m''}(\mathbb{R}^d)$ and the map
$$
S^{m,m',m''}\big(\mathbb{R}^{3d}\times\mathbb{R}^d\times\mathbb{R}^d\times\mathbb{R}^d\big)\ni f\mapsto c\in S^{m+m'+m''}(\mathbb{R}^d)
$$
is continuous.
\end{proposition}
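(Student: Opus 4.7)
The strategy is to mirror the argument used for Proposition \ref{P3.2}, adapting it to the new phase function $\psi$ defined in \eqref{4.6}. As before, we regularize by fixing a positive $\chi\in C^\infty_0(\mathbb{R}^{4d})$ with $\chi(0,0,0,0)=1$ and working with $f_\epsilon(x,y,z,\xi,\eta,\zeta):=\chi(\epsilon y,\epsilon z,\epsilon\eta,\epsilon\zeta)f$, whose associated oscillatory integral $c_\epsilon$ is an element of $\mathcal{S}(\mathbb{R}^{2d})$. The whole task reduces to establishing uniform estimates of the form $|\partial_x^\alpha\partial_\xi^\beta c_\epsilon(x,\xi)|\leq C_f<\xi>^{m+m'+m''-|\beta|}$ with constants controlled by a fixed seminorm of $f$, from which both the existence of the limit $c:=\lim_{\epsilon\searrow0}c_\epsilon$ in $S^{m+m'+m''}(\mathbb{R}^d)$ and the continuity of the map $f\mapsto c$ follow at once.

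First I would perform a change of variables analogous to \eqref{3.8}, namely
$$
\tilde{y}:=x-y,\quad \overline{z}:=x-z,\quad \overline{\eta}:=\xi-\eta,\quad \overline{\zeta}:=\xi-\zeta,
$$
which transforms the phase into
$$
\tilde{\psi}(x,\tilde{y},\overline{z},\xi,\overline{\eta},\overline{\zeta})=-\left<\tilde{y}-\overline{z},\overline{\eta}\right>-\left<\overline{z},\overline{\zeta}\right>+d(x,\xi-\overline{\zeta})-d(x,\xi),
$$
and rewrites $\tilde{\Omega}^B(x,\tilde{y},\overline{z})$ as in \eqref{3.12}. Following exactly the pattern of \eqref{3.14}, I would introduce a partition of unity $1=\rho_0+\rho_\infty$ where $\rho_\infty$ is supported where $\epsilon|\xi-\overline{\zeta}|$ is small compared to $<\overline{\eta}>+<\overline{z}+\overline{h}_1>$ (here $\overline{h}_1=\int_0^1(\nabla_\zeta d)(x,\xi-t\overline{\zeta})\,dt$, so that $d(x,\xi-\overline{\zeta})-d(x,\xi)=-<\overline{h}_1,\overline{\zeta}>$), and split $c=c_\infty+c_0$ accordingly. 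For $c_\infty$ I would use integration by parts with the four operators
$$
L_{\tilde{y}}=<\overline{\eta}>^{-2}(1-i\left<\overline{\eta},\nabla_{\tilde{y}}\right>),\quad L_{\overline{\zeta}}=<\overline{z}+\overline{h}_1>^{-2}(1-i\left<\overline{z}+\overline{h}_1,\nabla_{\overline{\zeta}}\right>),
$$
together with $L_{\overline{\eta}}$ and $L_{\overline{z}}$ built from the $\tilde{y}-\overline{z}$ and $\overline{\eta}$ (respectively $\overline{\eta}-\overline{\zeta}$) factors; monomials $\tilde{y}^\alpha\overline{z}^\beta$ produced by derivatives of $\tilde{\Omega}^B$ are absorbed by additional integrations by parts in $\overline{\eta}$ and $\overline{\zeta}$ via \eqref{3.24}, using the boundedness of $\nabla_\zeta d$. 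Lemma \ref{L1.5} applied to $\Lambda(x,\xi,\overline{\zeta}):=\xi-\overline{\zeta}+\overline{h}_1$ ensures equivalences of the type $<\xi-\overline{\zeta}>\simeq<\xi>$ on the support of $\rho_0$ once $\epsilon$ is chosen small enough, so that after enough integrations by parts $|c_\infty(x,\xi)|\leq C_f<\xi>^{-p}$ for any $p\in\mathbb{N}$.

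For $c_0$ the standard trick is a further shift $\overline{\zeta}\mapsto\overline{\zeta}+\overline{h}_1$ (mirroring \eqref{3.70}), which removes the $d(x,\xi-\overline{\zeta})-d(x,\xi)$ term from the phase and leaves the purely non-degenerate quadratic form $-<\tilde{y}-\overline{z},\overline{\eta}>-<\overline{z},\overline{\zeta}>$; integrating by parts with the Laplacian-type operators analogous to \eqref{3.31a}--\eqref{3.31b} yields the desired polynomial bound $|c_0(x,\xi)|\leq C_f<\xi>^{m+m'+m''}$ on $\mathrm{supp}\,\rho_0$, where one uses $|\overline{\eta}|\leq\epsilon|\xi|$, $|\overline{\zeta}+\overline{h}_1-\overline{h}_1|\leq\epsilon|\xi|$ on the support, together with Lemma \ref{L1.5} to compare weights at $\xi$, $\xi-\overline{\eta}$, and $\xi-\overline{\zeta}$. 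The estimation of $\partial_x^\alpha\partial_\xi^\beta c$ is obtained by differentiating under the integral as in Points 3 and 4 of the proof of Proposition \ref{P3.2}, using again point (4) of Lemma \ref{L1.11} to handle $\nabla_x\tilde{F}$ in the factor $\tilde{\Omega}^B$, and noting that each $\xi$-derivative either lands on $f$ (which yields a factor $<\xi>^{-1}$ from one of $m,m',m''$) or produces an extra $\overline{\zeta}$ or $\overline{\eta}$ that is absorbed by integration by parts in $\overline{z}$ or $\tilde{y}$, again gaining $<\xi>^{-1}$.

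The main technical obstacle, as in Proposition \ref{P3.2}, is choosing the cut-off threshold and the family of integration-by-parts operators so that they remain compatible both with the magnetic factor $\tilde{\Omega}^B$ (which produces $\tilde{y}$'s and $\overline{z}$'s upon differentiation) and with the shift $\overline{\zeta}\to\overline{\zeta}+\overline{h}_1$ that linearizes the phase; the asymmetry introduced by the fact that $\overline{\eta}$ in the new phase is coupled to $\tilde{y}-\overline{z}$ rather than to $\tilde{y}$ alone requires a slightly different $L_{\overline{z}}$ than the one used in \eqref{3.18a}, and the smallness parameter $\epsilon_0$ must again be fixed after invoking Lemma \ref{L1.5} to guarantee the weight equivalences \eqref{3.37}--\eqref{3.38} in the new variables.
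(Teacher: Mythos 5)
Your plan reproduces the whole machinery of Proposition~\ref{P3.2} — the dyadic-type cutoff $1=\rho_0+\rho_\infty$, the weighted operators $L_{\tilde y},L_{\bar z},\dots$, the shift linearizing the phase — but this misses the key structural fact that makes Proposition~\ref{P4.2} far easier than Proposition~\ref{P3.2}. In the phase \eqref{3.5} the generating function appears as $U(z,\eta)$, so after the change of variables the term $d(x-\overline z,\xi-\overline\eta)$ survives and its $\overline z$-derivative is $\nabla_x d\in S^1$, an unbounded quantity. The whole $\rho_0/\rho_\infty$ apparatus and the weights $<\tilde\zeta-\overline h_0>$ exist precisely to control that order-$1$ contribution. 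In \eqref{4.6}, however, $U$ enters only as $U(x,\zeta)-U(x,\xi)$, evaluated at the \emph{external} point $x$, never at the integration variable $z$. Consequently, with the right change of variables the residual phase $\bar d(x,\xi,\bar\eta,\bar\zeta)=d(x,\xi+\bar\eta+\bar\zeta)-d(x,\xi)$ is independent of $(\bar y,\bar z)$, and its derivatives in $(\bar\eta,\bar\zeta)$ are $\nabla_\xi d\in S^0$, hence bounded. The proof then reduces to a direct integration by parts with the four Laplacian-based operators \eqref{4.15} and nothing else; no partition of unity, no weighted operators, no smallness parameter $\epsilon_0$, no invocation of Lemma~\ref{L1.5}.

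Two further issues in your plan. First, the change of variables you choose ($\bar\eta:=\xi-\eta$, $\bar\zeta:=\xi-\zeta$) yields the coupled bilinear form $-\langle\tilde y-\overline z,\overline\eta\rangle-\langle\overline z,\overline\zeta\rangle$, whereas the paper's choice ($\bar\eta:=\eta-\xi$, $\bar\zeta:=\zeta-\eta$) gives the fully decoupled $\langle\bar y,\bar\eta\rangle+\langle\bar z,\bar\zeta\rangle$. Your coupling forces the integration-by-parts operators to act in the combination $\tilde y-\overline z$, which creates extra mixed-derivative terms when they hit $\tilde\Omega^B(\tilde y,\overline z)$; the paper's decoupling avoids this entirely. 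Second, your proposed shift $\overline\zeta\mapsto\overline\zeta+\overline h_1$ is in the wrong variable: since $d(x,\xi-\overline\zeta)-d(x,\xi)=-\langle\overline h_1,\overline\zeta\rangle$ combines with $-\langle\overline z,\overline\zeta\rangle$ to give $-\langle\overline z+\overline h_1,\overline\zeta\rangle$, the linearizing substitution would have to be in $\overline z$, not $\overline\zeta$ (analogous to \eqref{3.70}, where the shift is in $\tilde\zeta$, the variable dual to $\overline z$, because $\overline h$ depends on $\overline z$ — here $\overline h_1$ depends on $\overline\zeta$, not $\overline z$). But again, because $\nabla_\xi d$ is already of order $0$, neither this shift nor the cutoff is needed: you are importing machinery designed for a harder problem and simultaneously misapplying it.
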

\begin{proof}
 As in the proof of Proposition \ref{P3.2}, we can work with a function $f$ with compact support with respect to the variables $(y,z,\eta,\zeta)$ and estimate the seminorms of $c\in S^{m+m'+m''}(\mathbb{R}^d)$ with some seminorms of $f\in S^{m,m',m''}\big(\mathbb{R}^{3d}\times\mathbb{R}^d\times\mathbb{R}^d\times\mathbb{R}^d\big)$ (that we shall generically denote by $C_f$).

We make the change of variables:
\begin{equation}\label{4.9}
 \bar{y}:=x-y,\quad\bar{z}:=x-z,\quad\bar{\eta}:=\eta-\xi,\quad\bar{\zeta}:=\zeta-\eta
\end{equation}
and obtain
\begin{equation}\label{4.10}
  c(x,\xi)\,:=\,\int_{\mathbb{R}^{4d}}e^{i\bar{\psi}}\bar{\Omega^B}\bar{f}\,d\bar{y}\,d\bar{z}\,\dbar\bar{\eta}\,\dbar\bar{\zeta}
\end{equation}
with the notations
\begin{equation}\label{4.11}
 \bar{\psi}(x,\bar{y},\bar{z},\xi,\bar{\eta},\bar{\zeta}):=\psi(x,x-\bar{y},x-\bar{z},\xi,\xi+\bar{\eta},\xi+\bar{\eta}+\bar{\zeta})=<\bar{y},\bar{\eta}>+<\bar{z},\bar{\zeta}>+\bar{d}(x,\xi,\bar{\eta},\bar{\zeta}),
\end{equation}
\begin{equation}\label{4.12}
 \bar{d}(x,\xi,\bar{\eta},\bar{\zeta}):=d(x,\xi+\bar{\eta}+\bar{\zeta})-d(x,\xi),
\end{equation}
\begin{equation}\label{4.13}
 \bar{f}(x,\bar{y},\bar{z},\xi,\bar{\eta},\bar{\zeta}):=f(x,x-\bar{y},x-\bar{z},\xi,\xi+\bar{\eta},\xi+\bar{\eta}+\bar{\zeta}),
\end{equation}
\begin{equation}\label{4.14}
 \bar{\Omega^B}(x,\bar{y},\bar{z}):=\Omega^B(x,x-\bar{z},x-\bar{y})=e^{-i\bar{F}(x,\bar{y},\bar{z})},\qquad\bar{F}(x,\bar{y},\bar{z}):=F(x,x-\bar{z},x-\bar{y}).
\end{equation}
We continue by integrating by parts using the identities:
\begin{equation}\label{4.15}
 \big(1-\Delta_{\bar{y}}\big)e^{i<\bar{y},\bar{\eta}>}=<\bar{\eta}>^2e^{i<\bar{y},\bar{\eta}>},\qquad\big(1-\Delta_{\bar{\eta}}\big)e^{i<\bar{y},\bar{\eta}>}=<\bar{y}>^2e^{i<\bar{y},\bar{\eta}>},
\end{equation}
$$
 \big(1-\Delta_{\bar{z}}\big)e^{i<\bar{z},\bar{\zeta}>}=<\bar{\zeta}>^2e^{i<\bar{z},\bar{\zeta}>},\qquad\big(1-\Delta_{\bar{\zeta}}\big)e^{i<\bar{z},\bar{\zeta}>}=<\bar{z}>^2e^{i<\bar{z},\bar{\zeta}>}.
$$
Notice that $\bar{d}$ does not depend on the variables $(\bar{y},\bar{z})$, and its derivatives with respect to the variables $(\bar{\eta},\bar{\zeta})$ are functions of class $BC^\infty(\mathbb{R}^{4d})$. Although the derivatives of $\bar{\Omega^B}$ with respect with $(\bar{y},\bar{z})$ produce monomials in $(\bar{y},\bar{z})$ multiplied with functions of class $BC^\infty(\mathbb{R}^{3d})$, we can get rid of these unbounded monomials by further integrating by parts. Finally we obtain that
$$
\left|c(x,\xi)\right|\leq C_f\int_{\mathbb{R}^{4d}}<\bar{y}>^{-2N_1}<\bar{z}>^{-2N_2}<\bar{\eta}>^{-2N_3}<\bar{\zeta}>^{-2N_4}<\xi>^m<\xi+\bar{\eta}>^{m'}<\xi+\bar{\eta}+\bar{\zeta}>^{m''}\,d\bar{y}\,d\bar{z}\,\dbar\bar{\eta}\,\dbar\bar{\zeta},
$$
with $N_j$ for $1\leq j\leq4$ natural numbers that we can choose so that: $2N_1> d$, $2N_2> d$, $2N_3> d+|m'|+|m''|$ and $2N_4> d+|m''|$ so that we obtain the estimation
\begin{equation}\label{4.16}
 \left|c(x,\xi)\right|\leq C_f<\xi>^{m+m'+m''},\qquad\forall(x,\xi)\in\mathbb{T}^*\mathbb{R}^d.
\end{equation}

Let us estimate now the first order derivatives of the symbol $c$. We have that
\begin{equation}\label{4.17}
 \big(\nabla_x c\big)(x,\xi)=\int_{\mathbb{R}^{4d}}e^{i\bar{\psi}}\bar{\Omega^B}\left[\left(i\nabla_x \bar{d}-i\nabla_x\bar{F}\right)\bar{f}+\nabla_x\bar{f}\right]\,d\bar{y}\,d\bar{z}\,\dbar\bar{\eta}\,\dbar\bar{\zeta}.
\end{equation}
The integrals containing factors $\nabla_x\bar{F}$ and $\nabla_x\bar{f}$ are treated as before and lead to estimations of the form \eqref{4.16}. For the integral containing the factor $\nabla_x\bar{d}$ one has to take into account that
$$
\nabla_x\bar{d}(x,\xi,\bar{\eta},\bar{\zeta})=\left[\int_0^1\big(\nabla^2_{x,\xi}d\big)\big(x,\xi+t(\bar{\eta}+\bar{\zeta})\big)dt\right](\bar{\eta}+\bar{\zeta}),
$$
and $\bar{\eta}$ and $\bar{\zeta}$ may be eliminated by integration by parts. Moreover, the entries of the matrix in paranthesis are function of class $BC^\infty(\mathbb{R}^{4d})$. It follows that
\begin{equation}\label{4.18}
 \left|\nabla_x c(x,\xi)\right|\leq
C_f<\xi>^{m+m'+m''},\qquad\forall(x,\xi)\in\mathbb{T}^*\mathbb{R}^d.
\end{equation}

Similarly we have that
\begin{equation}\label{4.19}
 \big(\nabla_\xi c\big)(x,\xi)=\int_{\mathbb{R}^{4d}}e^{i\bar{\psi}}\bar{\Omega^B}\left[\left(i\nabla_\xi \bar{d}\right)\bar{f}+\nabla_\xi\bar{f}\right]\,d\bar{y}\,d\bar{z}\,\dbar\bar{\eta}\,\dbar\bar{\zeta},
\end{equation}
$$
\nabla_\xi\bar{d}(x,\xi,\bar{\eta},\bar{\zeta})=\left[\int_0^1\big(\nabla^2_{\xi,\xi}d\big)\big(x,\xi+t(\bar{\eta}+\bar{\zeta})\big)dt\right](\bar{\eta}+\bar{\zeta}).
$$
We once again eliminate $\bar{\eta}$ and $\bar{\zeta}$ by integration by parts, use \eqref{4.15} for some more integration by parts in order to get
$$
 \left|\nabla_\xi c(x,\xi)\right|\leq
$$
$$
\leq C_f\int_{\mathbb{R}^{4d}}<\bar{y}>^{-2N_1}<\bar{z}>^{-2N_2}<\bar{\eta}>^{-2N_3}<\bar{\zeta}>^{-2N_4}<\xi>^m<\xi+\bar{\eta}>^{m'}<\xi+\bar{\eta}+\bar{\zeta}>^{m''}\times
$$
$$
\times\left[<\xi>^{-1}+<\xi+\bar{\eta}>^{-1}+<\xi+\bar{\eta}+\bar{\zeta}>^{-1}+\underset{0\leq t\leq1}{\sup}<\xi+t(\bar{\eta}+\bar{\zeta})>^{-1}\right]\,d\bar{y}\,d\bar{z}\,\dbar\bar{\eta}\,\dbar\bar{\zeta}.
$$
We choose $2N_1> d$, $2N_2> d$, $2N_3> d+|m'|+|m''|+1$ and $2N_4> d+|m''|+1$ so that we obtain the estimation
\begin{equation}\label{4.20}
 \left|\nabla_\xi c(x,\xi)\right|\,\leq\,C_f<\xi>^{m+m'+m''-1},\qquad\forall(x,\xi)\in\mathbb{T}^*\mathbb{R}^d.
\end{equation}

The statement of the Proposition follows after similar estimations for any higher order derivative of $c$.
\end{proof}

In order to compute the principal part of the symbol $c$ we consider first the case $B=0$, i.e. $\Omega^B=1$.
\begin{lemma}\label{L4.3}
 Let us denote by
$$
e(x,\xi):=\int_{\mathbb{R}^{4d}}e^{i\bar{\psi}}\bar{f}\,d\bar{y}\,d\bar{z}\,\dbar\bar{\eta}\,\dbar\bar{\zeta}.
$$
Then $c-e\,\in\,S^{m+m'+m''-1}(\mathbb{R}^d)$.
\end{lemma}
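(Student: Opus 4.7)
The approach is modeled on that of Proposition \ref{P3.4}, exploiting the bilinearity of the magnetic flux $\bar{F}$ in the variables $(\bar{y},\bar{z})$ in order to gain orders via integration by parts.

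First I would write
$$
c-e \;=\; \int_{\mathbb{R}^{4d}} e^{i\bar{\psi}}\bigl(\bar{\Omega^B}-1\bigr)\,\bar{f}\; d\bar{y}\,d\bar{z}\,\dbar\bar{\eta}\,\dbar\bar{\zeta}
$$
and decompose $\bar{\Omega^B} = 1 - i\bar{F} + \bar{R}_1$, with $\bar{R}_1 := -\bar{F}^{\,2}\int_0^1(1-t)e^{-it\bar{F}}\,dt$, yielding $c-e = c_{(1)} + c_{(2)}$, where $c_{(1)}$ collects the linear $-i\bar{F}$ contribution and $c_{(2)}$ the remainder. By Lemma \ref{L1.11}(3), one has $\bar{F}(x,\bar{y},\bar{z}) = -\sum_{j,k}\bar{C}_{jk}(x,\bar{y},\bar{z})\,\bar{z}_k\,\bar{y}_j$ with $\bar{C}_{jk}\in BC^\infty(\mathbb{R}^{3d})$. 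The identity that drives the argument is
$$
\bar{y}_j\, e^{i\bar{\psi}} \;=\; -i\,\partial_{\bar{\eta}_j}e^{i\bar{\psi}} \;-\; (\partial_{\bar{\eta}_j}\bar{d})\,e^{i\bar{\psi}},
$$
in which $\partial_{\bar{\eta}_j}\bar{d} = (\partial_{\xi_j}d)(x,\xi+\bar{\eta}+\bar{\zeta})$ is bounded thanks to $d\in S^+$; the analogous identity holds for $\bar{z}_k$ and $\partial_{\bar{\zeta}_k}$.

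For $c_{(1)}$, since $\bar{C}_{jk}$ and $\bar{z}_k$ do not depend on $\bar{\eta}$, applying this identity once reduces $c_{(1)}$ to a sum of integrals of the form \eqref{4.8} in which $\bar{f}$ is replaced either by $\bar{C}_{jk}\bar{z}_k\,\partial_{\bar{\eta}_j}\bar{f}$ or by $(\partial_{\bar{\eta}_j}\bar{d})\bar{C}_{jk}\bar{z}_k\,\bar{f}$. The $\partial_{\bar{\eta}_j}$ drops the combined order of $\bar{f}$ by one (since $\partial_{\bar{\eta}_j}\bar{f} = \bigl(\partial_{\eta_j}f + \partial_{\zeta_j}f\bigr)$ evaluated at the appropriate arguments), and the surviving $\bar{z}_k$-factor is eliminated symmetrically by a single integration by parts in $\bar{\zeta}$. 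Applying Proposition \ref{P4.2} to each resulting integral gives $c_{(1)} \in S^{m+m'+m''-1}(\mathbb{R}^d)$.

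The hardest part of the plan is controlling $c_{(2)}$. The factor $\int_0^1(1-t)e^{-it\bar{F}}\,dt$ has modulus at most $1/2$ but is \emph{not} of class $BC^\infty(\mathbb{R}^{3d})$, since its derivatives in $(\bar{y},\bar{z})$ grow polynomially. The crucial observation is that $e^{-it\bar{F}}$ is independent of $(\bar{\eta},\bar{\zeta})$, so the four integrations by parts in $\bar{\eta},\bar{\zeta}$ that eliminate the monomial $\bar{y}_j\bar{y}_{j'}\bar{z}_k\bar{z}_{k'}$ supplied by $\bar{F}^{\,2}$ leave this factor intact and merely act on $\bar{f}$ (producing one second derivative in $\bar{\eta}$ and one in $\bar{\zeta}$, together with bounded contributions from $\partial\bar{d}$). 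After this step, the polynomial growth in $(\bar{y},\bar{z})$ generated by any subsequent differentiation of $e^{-it\bar{F}}$ is handled exactly as in the proof of Proposition \ref{P4.2}: such monomials are absorbed by additional integrations by parts in $\bar{\eta},\bar{\zeta}$ against $1-\Delta_{\bar{y}}$ and $1-\Delta_{\bar{z}}$, producing bounded coefficients and derivatives of a strictly lower-order symbol, with all estimates uniform in $t\in[0,1]$. This yields $c_{(2)}\in S^{m+m'+m''-2}(\mathbb{R}^d)$, and combining the two bounds concludes $c-e\in S^{m+m'+m''-1}(\mathbb{R}^d)$.
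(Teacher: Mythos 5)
The paper's proof sidesteps the difficulty your plan runs into by the change of variables \eqref{4.27}: after passing to $(\tilde y,\tilde z,\tilde\eta,\tilde\zeta)$ with $\tilde\eta=\bar\eta+\bar\zeta$, $\tilde\zeta=\bar\eta-\bar\zeta$, the phase becomes $<\tilde y,\tilde\eta>+<\tilde z,\tilde\zeta>+\tilde d(x,\xi,\tilde\eta)$ with $\tilde d$ depending \emph{only} on $\tilde\eta$; the integration by parts in $\tilde\zeta$ that eliminates $\tilde z$ therefore hits only $\tilde f$, and the gain of one order is automatic. You instead integrate by parts directly in $(\bar\eta,\bar\zeta)$, and this is where the gap is.

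Each integration by parts against $\bar y_j$ produces two alternatives: either the derivative $\partial_{\bar\eta_j}$ lands on $\bar f$ (good, one order gained), or it produces the multiplicative factor $-\partial_{\bar\eta_j}\bar d=-(\partial_{\xi_j}d)(x,\xi+\bar\eta+\bar\zeta)$, which is bounded but does \emph{not} lower the order of $\bar f$. Taking the second alternative twice, after also eliminating $\bar z_k$ you are left with the residual term
\[
\sum_{j,k}\bar C_{jk}(x,\bar y,\bar z)\,(\partial_{\xi_j}d)(x,\xi+\bar\eta+\bar\zeta)\,(\partial_{\xi_k}d)(x,\xi+\bar\eta+\bar\zeta)\,\bar f,
\]
to which Proposition \ref{P4.2} only assigns class $S^{m+m'+m''}$, not $S^{m+m'+m''-1}$; your estimate would not close. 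This term in fact vanishes identically: $\bar C_{jk}$ is antisymmetric in $(j,k)$ (\eqref{4.26}, via Lemma \ref{L1.11}.3), whereas both $(\partial_{\xi_j}d)(\partial_{\xi_k}d)$ and $\partial_{\xi_j}\partial_{\xi_k}d$ are symmetric in $(j,k)$, so every contraction of the offending type is zero. But your write-up neither states nor uses this cancellation — the phrase ``eliminated symmetrically'' does not make it — and without it the argument has a genuine hole in the second branch of the first integration by parts. The same issue and the same cancellation recur in your estimate of $c_{(2)}$, where among the terms produced by the four integrations by parts against $\bar y_{j_1}\bar y_{j_2}\bar z_{k_1}\bar z_{k_2}$ there is the one in which all four fall on $\bar d$; it again dies because $\sum_{j_1,k_1}\bar C_{j_1k_1}(\partial_{\xi_{j_1}}d)(\partial_{\xi_{k_1}}d)=0$. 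Once you make this antisymmetry argument explicit your route works (and, incidentally, you use a second-order Taylor expansion of $e^{-i\bar F}$ where the paper only needs a first-order one); the paper's change of variables is engineered precisely so that one never has to observe the cancellation.
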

\begin{proof}
 We notice that
$$
\bar{\Omega^B}=e^{-i\bar{F}}=1\,-\,i\bar{F}\int_0^1e^{-it\bar{F}}dt,
$$
so that we can write
\begin{equation}\label{4.23}
 c=e\,-\,i\int_0^1r_t\,dt
\end{equation}
with
\begin{equation}\label{4.24}
 r_t(x,\xi)\,:=\,\int_{\mathbb{R}^{4d}}e^{i(<\bar{z},\bar{\zeta}>+<\bar{y},\bar{\eta}>)}e^{-it\bar{F}}\bar{F}e^{i\bar{d}}\bar{f}\,d\bar{y}\,d\bar{z}\,\dbar\bar{\eta}\,\dbar\bar{\zeta}.
\end{equation}
If we use now formulae \eqref{1.12} and \eqref{1.13} we obtain that
\begin{equation}\label{4.26}
 \bar{F}(x,\bar{y},\bar{z})\,=\,\left<\bar{C}(x,\bar{y},\bar{z})\bar{z},\bar{y}\right>,\quad\bar{C}\in BC^\infty\big(\mathbb{R}^{3d}\big),\text{ skewsymmetric}.
\end{equation}
In the integral appearing in \eqref{4.24} we make the change of variables:
\begin{equation}\label{4.27}
\tilde{y}:=(1/2)(\bar{y}+\bar{z}),\quad\tilde{z}:=(1/2)(\bar{y}-\bar{z}),\quad\tilde{\eta}:=\bar{\eta}+\bar{\zeta},\quad\tilde{\zeta}:=\bar{\eta}-\bar{\zeta}.
\end{equation}
We also introduce some more notations
\begin{equation}\label{4.28}
 \tilde{f}(x,\tilde{y},\tilde{z},\xi,\tilde{\eta},\tilde{\zeta}):=\bar{f}\big(x,\tilde{y}+\tilde{z},\tilde{y}-\tilde{z},\xi,(1/2)(\tilde{\eta}+\tilde{\zeta}),(1/2)(\tilde{\eta}-\tilde{\zeta})\big),
\end{equation}
$$
\tilde{d}(x,\xi,\tilde{\eta}):=d(x,\xi+\tilde{\eta})-d(x,\xi),
$$
$$
\tilde{C}(x,\tilde{y},\tilde{z}):=\bar{C}(x,\tilde{y}+\tilde{z},\tilde{y}-\tilde{z}),
$$
$$
\tilde{F}(x,\tilde{y},\tilde{z}):=\bar{F}(x,\tilde{y}+\tilde{z},\tilde{y}-\tilde{z})=\left<\tilde{C}(x,\tilde{y},\tilde{z})(\tilde{y}-\tilde{z}),\tilde{y}+\tilde{z}\right>=2\left<\tilde{C}(x,\tilde{y},\tilde{z})\tilde{y},\tilde{z}\right>,
$$
where in the last equality we have used the skewsymmetry of $\tilde{C}$. It follows that
\begin{equation}\label{4.29}
 r_t(x,\xi)\,:=\,2\int_{\mathbb{R}^{4d}}e^{i(<\tilde{z},\tilde{\zeta}>+<\tilde{y},\tilde{\eta}>)}e^{-it\bar{F}}\left<\tilde{C}(x,\tilde{y},\tilde{z})\tilde{y},\tilde{z}\right>e^{i\tilde{d}}\tilde{f}\,d\tilde{y}\,d\tilde{z}\,\dbar\tilde{\eta}\,\dbar\tilde{\zeta}.
\end{equation}
We take into account that $\tilde{F}$ does not depend on $(\tilde{\eta},\tilde{\zeta})$ and $\tilde{d}$ does not depend on $\tilde{\zeta}$ and get rid of the linear factors $\tilde{y}$ and $\tilde{z}$ by integration by parts. Then
\begin{equation}\label{4.30}
 r_t(x,\xi)\,:=\,-2\int_{\mathbb{R}^{4d}}e^{i(<\tilde{z},\tilde{\zeta}>+<\tilde{y},\tilde{\eta}>)}e^{-it\bar{F}}\left<\tilde{C}(x,\tilde{y},\tilde{z})\nabla_{\tilde{\eta}},e^{i\tilde{d}}\nabla_{\tilde{\zeta}}\tilde{f}\right>\,d\tilde{y}\,d\tilde{z}\,\dbar\tilde{\eta}\,\dbar\tilde{\zeta}.
\end{equation}
 Application of $\nabla_{\tilde{\zeta}}$ to $\tilde{f}$ lowers the sum $m+m'+m''$ by one unity. Moreover, $\nabla_{\tilde{\eta}}\tilde{d}$ is a function of class $BC^\infty(\mathbb{R}^{3d})$. Thus, using the proof of Proposition \ref{P4.2} we conclude that $r_t\in S^{m+m'+m''-1}(\mathbb{R}^d)$ uniformly with respect to $t\in[0,1]$. In conclusion we have proved that $c-e\in  S^{m+m'+m''-1}(\mathbb{R}^d)$.
\end{proof}
\begin{lemma}\label{L4.4}
 Let us denote by
$$
e_0(x,\xi):=\int_{\mathbb{R}^{2d}}e^{i<\tilde{y},\tilde{\eta}>+\tilde{d}(x,\xi,\tilde{\eta}))}\tilde{f}(x,\tilde{y},0,\xi,\tilde{\eta},0)\,d\tilde{y}\,\dbar\tilde{\eta}.
$$
Then $e-e_0\,\in\,S^{m+m'+m''-1}(\mathbb{R}^d)$.
\end{lemma}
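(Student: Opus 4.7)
The plan is to first apply the change of variables (\ref{4.27}) that was already used in the proof of Lemma \ref{L4.3} to analyze $r_t$. A short computation using $\bar{\eta}+\bar{\zeta}=\tilde{\eta}$ (so that $\bar{d}(x,\xi,\bar{\eta},\bar{\zeta})=\tilde{d}(x,\xi,\tilde{\eta})$) and $<\bar{y},\bar{\eta}>+<\bar{z},\bar{\zeta}>=<\tilde{y},\tilde{\eta}>+<\tilde{z},\tilde{\zeta}>$, together with the fact that the two Jacobians multiply to $1$, transforms the defining expression for $e$ into
\begin{equation*}
e(x,\xi)=\int_{\mathbb{R}^{4d}}e^{i\tilde{\psi}}\,\tilde{f}\,d\tilde{y}\,d\tilde{z}\,\dbar\tilde{\eta}\,\dbar\tilde{\zeta},\qquad \tilde{\psi}:=<\tilde{y},\tilde{\eta}>+<\tilde{z},\tilde{\zeta}>+\tilde{d}(x,\xi,\tilde{\eta}).
\end{equation*}
The crucial feature, which is the reason for using this change of variables, is that $\tilde{d}$ depends neither on $\tilde{z}$ nor on $\tilde{\zeta}$.

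Next I would perform a first order Taylor expansion of $\tilde{f}$ jointly in the pair $(\tilde{z},\tilde{\zeta})$ about $(0,0)$, writing
\begin{equation*}
\tilde{f}(x,\tilde{y},\tilde{z},\xi,\tilde{\eta},\tilde{\zeta})-\tilde{f}(x,\tilde{y},0,\xi,\tilde{\eta},0)=\tilde{z}\cdot G_1+\tilde{\zeta}\cdot G_2,
\end{equation*}
where $G_j$ is the usual integral in a parameter $s\in[0,1]$ of the corresponding first order partial derivative of $\tilde{f}$ evaluated at $(s\tilde{z},s\tilde{\zeta})$. Inserted in the $4d$ integral, the constant part $\tilde{f}(x,\tilde{y},0,\xi,\tilde{\eta},0)$ produces precisely $e_0(x,\xi)$ by the partial Fourier inversion identity $\int e^{i<\tilde{z},\tilde{\zeta}>}\,d\tilde{z}\,\dbar\tilde{\zeta}=1$ (understood as an oscillatory integral in the variables $(\tilde{z},\tilde{\zeta})$). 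Thus
\begin{equation*}
e-e_0=\int_{\mathbb{R}^{4d}}e^{i\tilde{\psi}}\big(\tilde{z}\cdot G_1+\tilde{\zeta}\cdot G_2\big)\,d\tilde{y}\,d\tilde{z}\,\dbar\tilde{\eta}\,\dbar\tilde{\zeta}.
\end{equation*}

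Then, exploiting that $\tilde{\psi}$ depends on $\tilde{z}$ and on $\tilde{\zeta}$ only through the product $<\tilde{z},\tilde{\zeta}>$, I would integrate by parts using the identities $\tilde{z}_je^{i\tilde{\psi}}=-i\partial_{\tilde{\zeta}_j}e^{i\tilde{\psi}}$ and $\tilde{\zeta}_je^{i\tilde{\psi}}=-i\partial_{\tilde{z}_j}e^{i\tilde{\psi}}$. The two contributions to $e-e_0$ become integrals of the same type as in Proposition \ref{P4.2} but with integrands $i\,\nabla_{\tilde{\zeta}}\cdot G_1$ and $i\,\nabla_{\tilde{z}}\cdot G_2$ respectively, both of which are equal (up to sign) to $\sum_j\int_0^1 s(\partial_{\tilde{\zeta}_j}\partial_{\tilde{z}_j}\tilde{f})(x,\tilde{y},s\tilde{z},\xi,\tilde{\eta},s\tilde{\zeta})\,ds$. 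Since under (\ref{4.27}) the variable $\tilde{\zeta}$ enters $\tilde{f}$ only through $\eta=\xi+(\tilde{\eta}+\tilde{\zeta})/2$, the operator $\partial_{\tilde{\zeta}_j}$ acts as $(1/2)\partial_{\eta_j}$ on $f$ and therefore lowers the $\zeta\mapsto m'$ order by one unit (while $\partial_{\tilde{z}_j}=\partial_{y_j}-\partial_{z_j}$ on $f$ leaves the orders unchanged). The resulting $s$-dependent integrand therefore belongs to $S^{m,m'-1,m''}(\mathbb{R}^{3d}\times\mathbb{R}^d\times\mathbb{R}^d\times\mathbb{R}^d)$ uniformly in $s\in[0,1]$.

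Finally I would invoke Proposition \ref{P4.2} (whose proof applies verbatim to integrals with phase $\tilde{\psi}$ in place of $\bar{\psi}$, since $\tilde{d}$ has all its derivatives of order $\geq 1$ with respect to $\tilde{\eta}$ bounded on $\mathbb{R}^{3d}$) for each of the two remainder integrals, obtaining the pointwise estimate $|(e-e_0)(x,\xi)|\leq C_f<\xi>^{m+m'+m''-1}$. Estimates on derivatives of $e-e_0$ are then obtained by the same procedure used in Proposition \ref{P4.2} (differentiating under the integral sign and integrating by parts once more), yielding $e-e_0\in S^{m+m'+m''-1}(\mathbb{R}^d)$. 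The main obstacle I anticipate is the third step: one has to verify carefully that the $s$-dependent integrand obtained after the integration by parts, together with all its derivatives in the remaining variables, satisfies uniform (in $s\in[0,1]$) symbol estimates of the claimed class, so that the gain of one order effectively passes into the final bound provided by Proposition \ref{P4.2}; once this uniformity is verified, the conclusion is automatic.
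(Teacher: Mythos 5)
Your proof is correct and follows essentially the same route as the paper: change of variables \eqref{4.27}, a first-order Taylor expansion whose constant term reproduces $e_0$ via the partial Fourier inversion $\int e^{i\langle\tilde z,\tilde\zeta\rangle}\,d\tilde z\,\dbar\tilde\zeta=1$, an integration by parts trading the linear factor for a mixed $\partial_{\tilde\zeta}\partial_{\tilde z}$ derivative of $\tilde f$, and an appeal to (the argument of) Proposition~\ref{P4.2} to see the remainder lies in $S^{m+m'+m''-1}(\mathbb{R}^d)$. The only real variation is cosmetic: you Taylor-expand jointly in $(\tilde z,\tilde\zeta)$ about $(0,0)$, producing two remainder contributions (which, as you note, coincide after the integrations by parts and carry an extra factor $s$ coming from the evaluation at $(s\tilde z,s\tilde\zeta)$); the paper instead Taylor-expands only in $\tilde z$ about $0$ and lets the $\tilde z$-integral of $e^{i\langle\tilde z,\tilde\zeta\rangle}$ collapse the $\tilde\zeta$-integral onto $\tilde\zeta=0$ automatically, so only the single remainder $\tilde r_t$ of \eqref{4.32} arises, evaluated at $(t\tilde z,\tilde\zeta)$ with no extra factor. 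Both remainders are bounded by the same symbol estimates, and your concern about uniformity in the Taylor parameter is legitimate but routine: since the scaling is by $s\in[0,1]$, the Peetre-type inequalities used in the proof of Proposition~\ref{P4.2} apply with constants independent of $s$, exactly as the paper's estimates are uniform in $t\in[0,1]$.
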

\begin{proof}
 We start from the equality:
$$
\tilde{f}(x,\tilde{y},\tilde{z},\xi,\tilde{\eta},\tilde{\zeta})=\tilde{f}(x,\tilde{y},0,\xi,\tilde{\eta},\tilde{\zeta})+\left<\tilde{z},\int_0^1\big(\nabla_{\tilde{z}}\tilde{f}\big)(x,\tilde{y},t\tilde{z},\xi,\tilde{\eta},\tilde{\zeta})\,dt
\right>
$$
and the corresponding decomposition
\begin{equation}\label{4.31}
 e\,=\,e_0\,+\,i\int_0^1\tilde{r}_tdt,
\end{equation}
\begin{equation}\label{4.32}
 \tilde{r}_t(x,\xi)=\int_{\mathbb{R}^{4d}}e^{i(<\tilde{z},\tilde{\zeta}>+<\tilde{y},\tilde{\eta}>)}e^{i\tilde{d}(x,\xi,\tilde{\eta})}\left(\left<\nabla_{\tilde{\zeta}},\nabla_{\tilde{z}}\right>\tilde{f}\right)(x,\tilde{y},t\tilde{z},\xi,\tilde{\eta},\tilde{\zeta})\,d\tilde{y}\,d\tilde{z}\,\dbar\tilde{\eta}\,\dbar\tilde{\zeta}.
\end{equation}
Using once again the arguments in the proof of Proposition \ref{P4.2} we deduce that $\tilde{r}_t\in S^{m+m'+m''-1}(\mathbb{R}^d)$ uniformly with respect to $t\in[0,1]$ and in conclusion we have proved that $e-e_0\in  S^{m+m'+m''-1}(\mathbb{R}^d)$.
\end{proof}

\begin{proposition}\label{P4.5}
 Assume $f\in S^{m,m',m''}\big(\mathbb{R}^{3d}\times\mathbb{R}^d\times\mathbb{R}^d\times\mathbb{R}^d\big)$ and the symbol $c$ is defined by \eqref{4.8}, then
$$
c(x,\xi)\,-\,f\big(x,\nabla_\xi U(x,\xi),\nabla_\xi U(x,\xi),\xi,\xi,\xi\big)\,\in\,S^{m+m'+m''-1}(\mathbb{R}^d).
$$
\end{proposition}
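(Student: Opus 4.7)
By Lemma \ref{L4.3} and Lemma \ref{L4.4}, we have $c - e_0 \in S^{m+m'+m''-1}(\mathbb{R}^d)$, so it suffices to show
$$e_0(x,\xi)\,\equiv\,f\big(x,\nabla_\xi U(x,\xi),\nabla_\xi U(x,\xi),\xi,\xi,\xi\big)\pmod{S^{m+m'+m''-1}(\mathbb{R}^d)}.$$
Unraveling \eqref{4.28} and \eqref{4.13}, the amplitude in the definition of $e_0$ is
$$\tilde f(x,\tilde y,0,\xi,\tilde\eta,0) = f\big(x,\,x-\tilde y,\,x-\tilde y,\,\xi,\,\xi+\tilde\eta/2,\,\xi+\tilde\eta\big),$$
while the phase is $\langle\tilde y,\tilde\eta\rangle + d(x,\xi+\tilde\eta) - d(x,\xi)$.

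First I would linearize the phase. Setting $h(x,\xi,\tilde\eta) := \int_0^1 (\nabla_\xi d)(x,\xi+t\tilde\eta)\,dt$, one has $d(x,\xi+\tilde\eta)-d(x,\xi) = \langle\tilde\eta,h(x,\xi,\tilde\eta)\rangle$ and $h(x,\xi,0) = (\nabla_\xi d)(x,\xi)$. The change of variable $\tilde y\mapsto y' := \tilde y + h(x,\xi,\tilde\eta)$ (performed at fixed $\tilde\eta$, with unit Jacobian) converts the phase into $\langle y',\tilde\eta\rangle$, giving
$$e_0(x,\xi) = \int_{\mathbb{R}^{2d}} e^{i\langle y',\tilde\eta\rangle}\,F(x,\xi,y',\tilde\eta)\,dy'\,\dbar\tilde\eta,$$
where
$$F(x,\xi,y',\tilde\eta) := f\big(x,\,x-y'+h,\,x-y'+h,\,\xi,\,\xi+\tilde\eta/2,\,\xi+\tilde\eta\big).$$
Note that $F(x,\xi,0,0) = f\big(x,\,x+(\nabla_\xi d)(x,\xi),\,x+(\nabla_\xi d)(x,\xi),\,\xi,\,\xi,\,\xi\big)$, which by $\nabla_\xi U = x + \nabla_\xi d$ is exactly the claimed leading term.

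Next I would extract this leading term by a first-order Taylor expansion of $F$ around $(y',\tilde\eta)=(0,0)$:
$$F(x,\xi,y',\tilde\eta) = F(x,\xi,0,0) + \langle y', G_1(x,\xi,y',0)\rangle + \langle\tilde\eta,G_2(x,\xi,y',\tilde\eta)\rangle,$$
with $G_1,G_2$ given by the standard integral-remainder formulas. The constant term contributes precisely $F(x,\xi,0,0)$, interpreting $\int e^{i\langle y',\tilde\eta\rangle}\,dy'\,\dbar\tilde\eta = 1$ as an oscillatory integral via the regularization procedure already used in Lemma \ref{L2.2}. For each of the two remainder integrals we integrate by parts using $y'_j e^{i\langle y',\tilde\eta\rangle} = -i\partial_{\tilde\eta_j}e^{i\langle y',\tilde\eta\rangle}$ and $\tilde\eta_j e^{i\langle y',\tilde\eta\rangle} = -i\partial_{y'_j}e^{i\langle y',\tilde\eta\rangle}$, turning the $y'$ or $\tilde\eta$ factor into a derivative hitting $G_1$ or $G_2$. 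The resulting amplitudes still fit the framework of Proposition \ref{P4.2}, and inspection shows that each such extra derivative lowers the symbol order in $\xi$ by one: either by producing $\nabla_\eta f$ or $\nabla_\zeta f$ (gaining one power from $m'$ or $m''$), or by producing a derivative of $h$, which, since $\nabla_\xi d\in S^0$, lies in $S^{-1}$. Hence each of the two remainders belongs to $S^{m+m'+m''-1}(\mathbb{R}^d)$, completing the proof.

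The main technical obstacle is the last step: checking carefully that $F$ and its relevant first derivatives in $(y',\tilde\eta)$ obey the symbol estimates required to apply Proposition \ref{P4.2} uniformly, especially controlling the way the $\tilde\eta$-dependence of $h$ interacts with the integrations by parts in $\tilde\eta$.
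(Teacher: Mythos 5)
Your proposal follows the paper's own route: both reduce to $e_0$ via Lemmas \ref{L4.3} and \ref{L4.4}, linearize the phase by exactly the same substitution $\tilde{\bar y}:=\tilde y+h(x,\xi,\tilde\eta)$ with $h=\int_0^1(\nabla_\xi d)(x,\xi+t\tilde\eta)\,dt$, and then peel off $F(x,\xi,0,0)$ by a Taylor expansion, leaving a remainder of one order lower. The one small divergence is that the paper expands only in $\tilde{\bar y}$ and lets the $\dbar\tilde\eta$-integration (the delta at $\tilde\eta=0$) produce $F(0,0)$ from the $\tilde\eta$-dependent constant term $\tilde f(x,-h(\tilde\eta),0,\xi,\tilde\eta,0)$, whereas you expand in $\tilde\eta$ as well; as a result you end up with two remainder integrals, of which the $\langle y',G_1(y',0)\rangle$ one vanishes identically under the $y'\to -i\nabla_{\tilde\eta}$ integration by parts (since $G_1$ does not depend on $\tilde\eta$) — worth noting explicitly so the reader does not wonder why you claim it is lower order rather than zero. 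Your concern at the end is the right one and is precisely what the paper settles by the direct estimate following \eqref{4.39}, with the factor $\underset{0\leq t\leq1}{\sup}<\xi+t\tilde\eta>^{-1}$ carrying the gain of one order and the $<\tilde\eta>^{-2N_2}$ decay compensating for the $\tilde\eta$-dependence of the arguments.
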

\begin{proof}
 We start from the definition of $\tilde{d}$ given in \eqref{4.28} and write:
\begin{equation}\label{4.36}
 \tilde{d}(x,\xi,\tilde{\eta})=\left<\tilde{\eta},h(x,\xi,\tilde{\eta})\right>,\qquad h(x,\xi,\tilde{\eta}):=\int_0^1\big(\nabla_\xi d\big)(x,\xi+t\tilde{\eta})dt.
\end{equation}
In the integral defining $e_0$ in the statement of Lemma \ref{L4.4} we make the change of variables
\begin{equation}\label{4.37}
 \tilde{\bar{y}}:=\tilde{y}+h(x,\xi,\tilde{\eta})
\end{equation}
in order to obtain
\begin{equation}\label{4.38}
 e_0(x,\xi):=\int_{\mathbb{R}^{2d}}e^{i<\tilde{\bar{y}},\tilde{\eta}>}\tilde{f}(x,\tilde{\bar{y}}-h(x,\xi,\tilde{\eta}),0,\xi,\tilde{\eta},0)\,d\tilde{\bar{y}}\,\dbar\tilde{\eta}=
\end{equation}
\begin{equation}\label{4.39}
=\tilde{f}(x,-h(x,\xi,0),0,\xi,0,0)\,+\,i\int_{\mathbb{R}^{2d}}e^{i<\tilde{\bar{y}},\tilde{\eta}>}\left(\int_0^1\left<\nabla_{\tilde{\eta}},\big(\nabla_{\tilde{y}}\tilde{f}\big)(x,s\tilde{\bar{y}}-h(x,\xi,\tilde{\eta}),0,\xi,\tilde{\eta},0)\right>\,ds\right)d\tilde{\bar{y}}\,\dbar\tilde{\eta}.
\end{equation}
But we can write
$$
\tilde{f}(x,\tilde{y},0,\xi,\tilde{\eta},0)=\bar{f}\big(x,\tilde{y},\tilde{y},\xi,(1/2)\tilde{\eta},(1/2)\tilde{\eta}\big)=f\big(x,x-\tilde{y},x-\tilde{y},\xi,\xi+(1/2)\tilde{\eta},\xi+\tilde{\eta}\big).
$$
We integrate now by parts using the operators $<\tilde{\bar{y}}>^{-2N_1}(1-\Delta_{\tilde{\eta}})^{N_1}$ and $<\tilde{\eta}>^{-2N_2}(1-\Delta_{\tilde{\bar{y}}})^{N_2}$ and obtain that the last integral in \eqref{4.39} is bounded from above by
$$
C_f\int_{\mathbb{R}^{2d}}<\tilde{\bar{y}}>^{-2N_1}<\tilde{\eta}>^{-2N_2}<\xi>^m<\xi+(1/2)\tilde{\eta}>^{m'}<\xi+\tilde{\eta}>^{m''}\underset{0\leq t\leq1}{\sup}<\xi+t\tilde{\eta}>^{-1}d\tilde{\bar{y}}\,\dbar\tilde{\eta}
$$
$$
\leq C_f<\xi>^{m+m'+m''-1},\quad\forall(x,\xi)\in\mathbb{T}^*\mathbb{R}^d,
$$
by choosing $2N_1> d$ and $2N_2>d+|m'|+|m''|+1$. Thus
\begin{equation}\label{4.41}
 e_0(x,\xi)\,-\,\tilde{f}(x,-h(x,\xi,0),0,\xi,0,0)
\end{equation}
defines a symbol of class $S^{m+m'+m''-1}(\mathbb{R}^d)$. Further we notice that $h(x,\xi,0)=\big(\nabla_\xi d\big)(x,\xi)$ so that we can write
\begin{equation}\label{4.42}
 \tilde{f}(x,-h(x,\xi,0),0,\xi,0,0)=f\big(x,\nabla_\xi U(x,\xi),\nabla_\xi U(x,\xi),\xi,\xi,\xi\big).
\end{equation}
The proof of the Proposition follows now by using the equalities \eqref{4.39}, \eqref{4.41} and \eqref{4.42} and by applying Lemmas \ref{L4.3} and \ref{L4.4}.
\end{proof}

We come back now to the situation in Proposition \ref{P4.1} with $f(x,y,z,\xi,\eta,\zeta):=a(x,\zeta)b(z,\eta)$ and obtain as a Corollary of Proposition \ref{P4.5} the main result of this section.

\begin{theorem}\label{T4.6}
 Suppose that $a\in S^{m''}(\mathbb{R}^d)$ and $b\in S^{m'}(\mathbb{R}^d)$, then $\mathfrak{Op}^A_\Phi(a)\circ b^A(x,D)=\mathfrak{Op}^A_\Phi(c)$ with $c\in S^{m'+m''}(\mathbb{R}^d)$. Moreover we have that
\begin{equation}\label{4.43}
 c(x,\xi)\,-\,a(x,\xi)b\big(\nabla_\xi U(x,\xi),\xi\big)\,\in\,S^{m'+m''-1}(\mathbb{R}^d).
\end{equation}
\end{theorem}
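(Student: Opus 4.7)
The plan is essentially to observe that Theorem \ref{T4.6} is an immediate corollary of the machinery already developed in Propositions \ref{P4.1} and \ref{P4.5}. The first assertion (that the composition is a FIO with symbol in $S^{m'+m''}(\mathbb{R}^d)$) is literally the content of Proposition \ref{P4.1}, so only the principal-symbol formula (4.43) requires argument.

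First I would identify $c$ with the oscillating integral covered by Proposition \ref{P4.5}. By Proposition \ref{P4.1}, $c = \lim_{\epsilon\searrow 0} c_\epsilon$ in $S^{m'+m''}(\mathbb{R}^d)$, where the $c_\epsilon$ are given by (4.5). Setting
$$
f(x,y,z,\xi,\eta,\zeta) := a(x,\zeta)\, b(z,\eta),
$$
we have $f\in S^{0,m',m''}(\mathbb{R}^{3d}\times\mathbb{R}^d\times\mathbb{R}^d\times\mathbb{R}^d)$, since $a\in S^{m''}(\mathbb{R}^d)$, $b\in S^{m'}(\mathbb{R}^d)$, and $f$ does not actually depend on $(y,\xi)$. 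The approximants $c_\epsilon$ correspond to truncating $f$ with the cut-off $\chi(\epsilon y,\epsilon z,\epsilon\eta,\epsilon\zeta)$ used in the proof of Proposition \ref{P4.2}, and the continuity statement in Proposition \ref{P4.2} guarantees that the limiting $c$ is exactly the oscillatory integral (4.8) attached to this $f$.

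Now I would simply invoke Proposition \ref{P4.5} with $m=0$. It gives
$$
c(x,\xi) - f\bigl(x,\nabla_\xi U(x,\xi),\nabla_\xi U(x,\xi),\xi,\xi,\xi\bigr)\,\in\, S^{m'+m''-1}(\mathbb{R}^d).
$$
Substituting the specific form of $f$, we have
$$
f\bigl(x,\nabla_\xi U(x,\xi),\nabla_\xi U(x,\xi),\xi,\xi,\xi\bigr)\,=\,a(x,\xi)\,b\bigl(\nabla_\xi U(x,\xi),\xi\bigr),
$$
which is precisely the principal symbol appearing in (4.43). This establishes the theorem.

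There is essentially no obstacle here, since all the analytical effort — the existence of the oscillating integral, the symbol estimates, the localization to the regions $\mathcal{D}_0$ and $\mathcal{D}_\infty$, the reduction to the case $B=0$ via expansion of $\bar{\Omega}^B = 1 - i\bar{F} + \cdots$, and the Taylor expansion in the vanishing variables — is already carried out in Lemmas \ref{L4.3}, \ref{L4.4} and Proposition \ref{P4.5}. The only mildly non-trivial point worth checking is that the particular $f$ we use genuinely lies in the symbol space $S^{0,m',m''}$ with the correct assignment of orders to the variables $(\xi,\eta,\zeta)$; this is immediate from Definition \ref{D1.1} since $\partial_x^{\alpha}\partial_\zeta^\gamma a$ is bounded by $\langle\zeta\rangle^{m''-|\gamma|}$, $\partial_z^\beta\partial_\eta^\delta b$ by $\langle\eta\rangle^{m'-|\delta|}$, and the two factors are independent.
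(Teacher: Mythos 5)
Your proposal is correct and follows exactly the paper's own route: the paper derives Theorem \ref{T4.6} by setting $f(x,y,z,\xi,\eta,\zeta)=a(x,\zeta)b(z,\eta)$ and invoking Propositions \ref{P4.1}, \ref{P4.2}, and \ref{P4.5} with $m=0$, which is precisely what you do.
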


\section{Composing a FIO with the adjoint of a FIO}

Suppose that $a\in S^{m''}(\mathbb{R}^d)$ and $b\in S^{m'}(\mathbb{R}^d)$, and let us define $E:=\mathfrak{Op}^A_\Phi(a)$ and $F:=\big[\mathfrak{Op}^A_\Phi(b)\big]^*$. Explicitely we have that for any $u\in\mathcal{S}(\mathbb{R}^d)$ and any point $x\in\mathbb{R}^d$
\begin{equation}\label{5.1}
 \big(Eu\big)(x)\,=\,\int_{\mathbb{R}^{2d}}e^{i\big(U(x,\zeta)-<z,\zeta>\big)}\omega^A(x,z)a(x,\zeta)u(z)\,dz\,\dbar\zeta,
\end{equation}
\begin{equation}\label{5.2}
 \big(Fu\big)(x)\,=\,\int_{\mathbb{R}^{2d}}e^{i\big(<x,\eta>-U(y,\eta)\big)}\omega^A(x,y)\overline{b(y,\eta)}u(y)\,dy\,\dbar\eta.
\end{equation}

We know that $E\circ F$ is an operator in $\mathbb{B}\big(\mathcal{S}(\mathbb{R}^d)\big)$; we shall prove in this section that it is in fact a $\Psi$DO and we shall compute a principal symbol for it.

We proceed as in the previous sections and fix a cut-off function $\chi\in C^\infty_0(\mathbb{R}^{2d})$ with $\chi(0,0)=1$ and for $\epsilon\in(0,1]$ we define $a_\epsilon(x,z,\zeta):=\chi(\epsilon z,\epsilon\zeta)a(x,\zeta)$ and $b_\epsilon(y,\eta):=\chi(\epsilon y,\epsilon\eta)\overline{b(y,\eta)}$. Then we denote by $E_\epsilon$ and $F_\epsilon$ the operators defined respectively by \eqref{5.1} with $a$ replaced by $a_\epsilon$ and by equation \eqref{5.2} with $b$ replaced by $b_\epsilon$. The distribution kernel $K_\epsilon\in\mathcal{S}(\mathbb{R}^{2d})$ of the product $E_\epsilon\circ F_\epsilon$ is given by the following integral
\begin{equation}\label{5.3}
 K_\epsilon(x,y)\,:=\,\omega^A(x,y)\int_{\mathbb{R}^{3d}}e^{i\big(<z,\eta-\zeta>+U(x,\zeta)-U(y,\eta)\big)}\Omega^B(x,z,y)a_\epsilon(x,z,\zeta)b_\epsilon(y,\eta)\,dz\,\dbar\eta\,\dbar\zeta.
\end{equation}
We use now the equation relating the distribution kernel of a $\Psi$DO to its symbol (the special case of \eqref{2.5} when $\Phi=1$, i.e. $U(x,\eta)=<x,\eta>$) and obtain that
\begin{equation}\label{5.4}
 E_\epsilon\circ F_\epsilon\,=\,c^A_\epsilon(x,D),
\end{equation}
with $c_\epsilon\in\mathcal{S}(\mathbb{R}^d)$ given by
\begin{equation}\label{5.5}
 c_\epsilon(x,\xi)=\int_{\mathbb{R}^{4d}}e^{i\psi(x,y,z,\xi,\eta,\zeta)}\Omega^B(x,z,y)a_\epsilon(x,z,\zeta)b_\epsilon(y,\eta)\,dy\,dz\,\dbar\eta\,\dbar\zeta,
\end{equation}
\begin{equation}\label{5.6}
\psi(x,y,z,\xi,\eta,\zeta):=<z,\eta-\zeta>-<x-y,\xi>+U(x,\zeta)-U(y,\eta).
\end{equation}

Once again we notice that once we proved the existence of the following limit
\begin{equation}\label{5.7}
 \underset{\epsilon\searrow0}{\lim}c_\epsilon\,=\,c,\text{ in }S^{m'+m''}(\mathbb{R}^d),
\end{equation}
taking into account that for any $v\in\mathcal{S}(\mathbb{R}^d)$ we have $\underset{\epsilon\searrow0}{\lim}\big(E_\epsilon\circ F_\epsilon\big)v=\big(E\circ F\big)v$, we have that the following statement is true.
\begin{proposition}\label{P5.1}
 If $a\in S^{m''}(\mathbb{R}^d)$ and $b\in S^{m'}(\mathbb{R}^d)$, then $\mathfrak{Op}^A_\Phi(a)\circ\left[\mathfrak{Op}^A_\Phi(b)\right]^*=c^A(x,D)$ with $c\in S^{m'+m''}(\mathbb{R}^d)$ defined by \eqref{5.7}.
\end{proposition}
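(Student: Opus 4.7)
The plan is to reduce Proposition~\ref{P5.1} to an oscillatory-integral estimate in the spirit of Propositions~\ref{P3.2} and~\ref{P4.2}. Since $E_\epsilon \circ F_\epsilon \to E \circ F$ strongly on $\mathcal{S}(\mathbb{R}^d)$ and $E_\epsilon \circ F_\epsilon = c_\epsilon^A(x,D)$ with $c_\epsilon$ given by~\eqref{5.5}, it is enough to prove the following general continuity statement: for any $f \in S^{m,m',m''}(\mathbb{R}^{3d}\times\mathbb{R}^d\times\mathbb{R}^d\times\mathbb{R}^d)$ the oscillatory integral
\begin{equation*}
c(x,\xi) := \int_{\mathbb{R}^{4d}} e^{i\psi(x,y,z,\xi,\eta,\zeta)}\, \Omega^B(x,z,y)\, f(x,y,z,\xi,\eta,\zeta)\,dy\,dz\,\dbar\eta\,\dbar\zeta,
\end{equation*}
with $\psi$ the phase~\eqref{5.6}, is well defined, belongs to $S^{m+m'+m''}(\mathbb{R}^d)$, and depends continuously on $f$. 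Specialising $f(x,y,z,\xi,\eta,\zeta):=a(x,\zeta)\overline{b(y,\eta)}$ then furnishes the limit~\eqref{5.7} and hence the proposition.

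First I would bring the phase into a normal form analogous to~\eqref{3.10} via the change of variables $\bar y := x - y$, $\bar z := x - z$, $\bar\eta := \eta - \xi$, $\bar\zeta := \zeta - \eta$. Writing $U(x,\eta) = \langle x,\eta\rangle + d(x,\eta)$ and collecting terms one finds
\begin{equation*}
\psi \;=\; \langle \bar z, \bar\zeta\rangle + \langle \bar y, \bar\eta\rangle + d(x,\xi+\bar\eta+\bar\zeta) - d(x-\bar y,\xi+\bar\eta),
\end{equation*}
which exhibits the non-degenerate quadratic part $\langle\bar z,\bar\zeta\rangle + \langle\bar y,\bar\eta\rangle$ plus a perturbation whose mixed Hessian norm in conjugate variables is bounded by $\delta<1$ thanks to Hypothesis~\ref{Hyp-U}. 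Up to lower-order corrections, the stationary set collapses to the diagonal in $\mathbb{T}^*\mathbb{R}^d$, confirming that $E\circ F$ should be a $\Psi$DO rather than a FIO associated to some non-trivial symplectomorphism.

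Next I would replicate the partition-of-unity and integration-by-parts analysis of Proposition~\ref{P3.2}. Using cut-offs $\rho_0+\rho_\infty = 1$ of the type~\eqref{3.14} adapted to this phase, I would split $c = c_0 + c_\infty$. On the support of $\rho_\infty$ I would iterate the operators built from $\nabla_{\bar y}\psi,\nabla_{\bar z}\psi,\nabla_{\bar\eta}\psi,\nabla_{\bar\zeta}\psi$ (whose denominators are equivalent to $\langle\bar\eta\rangle$, $\langle\bar\zeta - h\rangle$, etc., by Lemma~\ref{L1.5}) to produce the rapid decay $\langle\xi\rangle^{-p}$ for any $p$; the monomial factors $\bar y^\alpha \bar z^\beta$ generated when differentiating $\Omega^B$ are handled, via point~(4) of Lemma~\ref{L1.11}, by extra integrations by parts in $\bar\eta,\bar\zeta$. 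On the support of $\rho_0$, after a further change of variable $\bar\zeta \mapsto \bar\zeta + \int_0^1 (\nabla_x d)(x-t\bar y,\xi+\bar\eta)\,dt$ that removes the $\bar y$-coupling in the Taylor expansion of $d$, the remaining integral reduces to a standard $2d$-dimensional Fourier-transform estimate producing the bound $|c(x,\xi)| \leq C_f \langle\xi\rangle^{m+m'+m''}$. Bounds for the derivatives of $c$ (with one extra factor of $\langle\xi\rangle^{-1}$ per $\nabla_\xi$) are obtained by the same procedure, exactly as in the proofs of Propositions~\ref{P3.2} and~\ref{P4.2}.

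The main obstacle, distinguishing this composition from those of Sections~3 and~4, is that both endpoints of the phase now carry a non-trivial contribution from the generating function: one has $d(x,\xi+\bar\eta+\bar\zeta)$ \emph{and} $d(x-\bar y,\xi+\bar\eta)$ present simultaneously, rather than a single copy of $d$. Consequently, on the stationary region one must simultaneously enforce inequalities of the form $\langle\xi+\bar\eta\rangle \sim \langle\xi+\bar\eta+\bar\zeta\rangle \sim \langle\xi\rangle$ analogous to~\eqref{3.37}--\eqref{3.38}, which pins down the correct choice of the cut-off parameter $\epsilon_0$ in terms of the $\delta$ of Hypothesis~\ref{Hyp-U} and requires Lemma~\ref{L1.5} to be invoked at two different intermediate points. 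Once this bookkeeping is in place the remaining estimate becomes a straightforward, if laborious, transcription of the argument of Proposition~\ref{P3.2}.
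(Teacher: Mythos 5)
Your high-level strategy (reduce Proposition~\ref{P5.1} to a continuity statement for the oscillatory integral~\eqref{5.8}, analogous to Propositions~\ref{P3.2} and~\ref{P4.2}, and then specialise $f(x,y,z,\xi,\eta,\zeta)=a(x,\zeta)\overline{b(y,\eta)}$) is the one the paper follows; your computation of the normal-form phase after the linear change of variables is also correct. However, there is a genuine gap in the central technical step, and you actually identify the difficulty yourself (``both endpoints of the phase now carry a non-trivial contribution from the generating function'') before proposing a fix that does not resolve it.

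The problematic term in the phase is $d(x-\bar y,\,\xi+\bar\eta)$, which couples the dual pair $(\bar y,\bar\eta)$ simultaneously. Writing $d(x-\bar y,\xi+\bar\eta)=d(x,\xi+\bar\eta)-\langle\bar y,\,h(x,\bar y,\xi,\bar\eta)\rangle$ with $h=\int_0^1(\nabla_x d)(x-t\bar y,\xi+\bar\eta)\,dt$, the phase becomes
\begin{equation*}
\langle\bar z,\bar\zeta\rangle+\bigl\langle\bar y,\ \bar\eta+h(x,\bar y,\xi,\bar\eta)\bigr\rangle+d(x,\xi+\bar\eta+\bar\zeta)-d(x,\xi+\bar\eta).
\end{equation*}
To linearise the $(\bar y,\bar\eta)$ block one must perform the \emph{nonlinear} change of variable $\tilde\eta:=\bar\eta+h(x,\bar y,\xi,\bar\eta)$ in $\bar\eta$; this is exactly the paper's substitution~\eqref{5.15}, whose invertibility and symbolic control require Lemma~\ref{L1.5} (via Hypothesis~\ref{Hyp-U}, $\lVert\nabla_\eta h\rVert\leq\delta<1$), and which produces the inverse map $\lambda(x,\bar y,\cdot)$ together with the Jacobian $D(x,\bar y,\xi+\tilde\eta)\in S^0$ that persists in~\eqref{5.19} and in all the subsequent formulas of the section (it is why the quotient $|\det(\nabla^2_{\xi,x}U)|^{-1}$ appears in Theorem~\ref{T5.7}). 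Your proposed substitution is a $\bar\zeta$-independent shift in $\bar\zeta$, $\bar\zeta\mapsto\bar\zeta+h$. This does not touch $d(x-\bar y,\xi+\bar\eta)$ at all; instead it introduces a new cross term $-\langle\bar z,h\rangle$ coupling $\bar z$ to $(\bar y,\bar\eta)$, making things worse rather than better. Even if this is a slip for ``$\bar\eta\mapsto\bar\eta+h$'', the change cannot be treated as a shift: $h$ depends on $\bar\eta$, so one must go through the global inversion theorem and carry the Jacobian. Without that step, integrating by parts in $\bar\eta$ against the operator $\langle\bar y\rangle^{-2}(1-\Delta_{\bar\eta})$ does not produce the needed $\bar y$-decay, because differentiating $e^{i\langle\bar y,h\rangle}$ in $\bar\eta$ reproduces factors of $\bar y$. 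Finally, the claim that after the change the integral ``reduces to a standard $2d$-dimensional Fourier-transform estimate'' overshoots: in the paper's~\eqref{5.19} there remains a genuine phase perturbation $\tilde d$ (equation~\eqref{5.17}) together with the Jacobian factor, and one still needs the dichotomy $|\bar\zeta|\lesssim\langle\xi+\tilde\eta\rangle$ versus $|\bar\zeta|\gtrsim\langle\xi+\tilde\eta\rangle$ and several further integrations by parts to close the estimate.
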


Now, the existence of the limit \eqref{5.7} follows from the next Proposition.
\begin{proposition}\label{P5.2}
 Suppose $f\in S^{m,m',m''}\big(\mathbb{R}^{3d}\times\mathbb{R}^d\times\mathbb{R}^d\times\mathbb{R}^d\big)$ and $\psi$ is the function defined in \eqref{5.6}. Then the following oscillating integral
\begin{equation}\label{5.8}
 c(x,\xi)\,:=\,\int_{\mathbb{R}^{4d}}e^{i\psi(x,y,z,\xi,\eta,\zeta)}\Omega^B(x,z,y)f(x,y,z,\xi,\eta,\zeta)\,dy\,dz\,\dbar\eta\,\dbar\zeta,
\end{equation}
defines an element in $S^{m+m'+m''}(\mathbb{R}^d)$ and the map
$$
S^{m,m',m''}\big(\mathbb{R}^{3d}\times\mathbb{R}^d\times\mathbb{R}^d\times\mathbb{R}^d\big)\ni f\mapsto c\in S^{m+m'+m''}(\mathbb{R}^d)
$$
is continuous.
\end{proposition}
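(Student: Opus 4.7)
The plan is to follow the template of the proof of Proposition \ref{P3.2}, which is the structurally closest predecessor. As a first step I would reduce to compactly supported integrands via the cutoff $\chi(\epsilon y,\epsilon z,\epsilon\eta,\epsilon\zeta)$ exactly as in Corollary \ref{C3.2}, so that it suffices to bound every seminorm of the resulting $c_\epsilon\in\mathcal{S}(\mathbb{R}^{2d})$ defining the topology of $S^{m+m'+m''}(\mathbb{R}^d)$ by a seminorm of $f$ defining the topology of $S^{m,m',m''}(\mathbb{R}^{3d}\times\mathbb{R}^d\times\mathbb{R}^d\times\mathbb{R}^d)$ with a constant independent of $\epsilon\in(0,1]$. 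The statements about the oscillatory integral \eqref{5.8} and the continuity of the map $f\mapsto c$ then follow.

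The second step is the change of variables $y=x-\tilde{y}$, $z=x-\overline{z}$, $\eta=\xi-\overline{\eta}$, $\zeta=\xi-\overline{\eta}+\tilde{\zeta}$. A direct computation shows that the purely linear contributions in $x$ and $\xi$ cancel and the phase \eqref{5.6} reduces to
$$
\tilde{\psi}(x,\tilde{y},\overline{z},\xi,\overline{\eta},\tilde{\zeta})\,=\,<\overline{z},\tilde{\zeta}>\,-\,<\tilde{y},\overline{\eta}>\,+\,d(x,\xi-\overline{\eta}+\tilde{\zeta})\,-\,d(x-\tilde{y},\xi-\overline{\eta}),
$$
so we are in exactly the structural framework of \eqref{3.10}, the only change being that the nonlinear correction now couples the displacements via two copies of $d$ evaluated at different points. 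Correspondingly $\Omega^B(x,z,y)$ becomes $\tilde{\Omega^B}(x,\tilde{y},\overline{z}):=\Omega^B(x,x-\overline{z},x-\tilde{y})$.

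I would then decompose $\tilde{f}=\rho_0\tilde{f}+\rho_\infty\tilde{f}$ using a cutoff of the form $\rho_\infty:=\rho\bigl(\epsilon(\xi-\overline{\eta})/(<\overline{\eta}>+<\tilde{\zeta}-\overline{h}_0>)\bigr)$ with $\overline{h}_0:=\nabla_x d(x-\tilde{y},\xi-\overline{\eta})$, mirroring \eqref{3.14}, and split $c=c_0+c_\infty$. On the support of $\rho_\infty$, the gradients $\nabla_{\overline{z}}\tilde{\psi}=\tilde{\zeta}$, $\nabla_{\tilde{y}}\tilde{\psi}=-\overline{\eta}+\overline{h}_0$, $\nabla_{\tilde{\zeta}}\tilde{\psi}=\overline{z}+\nabla_\xi d(x,\xi-\overline{\eta}+\tilde{\zeta})$, $\nabla_{\overline{\eta}}\tilde{\psi}=-\tilde{y}-\nabla_\xi d(x,\xi-\overline{\eta}+\tilde{\zeta})+\nabla_\xi d(x-\tilde{y},\xi-\overline{\eta})$ give rise to first-order operators $L_{\tilde{y}},L_{\overline{z}},L_{\overline{\eta}},L_{\tilde{\zeta}}$ preserving $e^{i\tilde{\psi}}$, built as in \eqref{3.18a}--\eqref{3.18b}; the repeated application of their transposes, combined with the trick of removing the monomials $\tilde{y}^\alpha\overline{z}^\beta$ produced by derivatives of $\tilde{\Omega^B}$ by auxiliary integrations using \eqref{3.24} (permissible because $\nabla_\xi d$ is bounded), yields $|c_\infty(x,\xi)|\leq C_f<\xi>^{-p}$ for any $p\in\mathbb{N}$, together with similar estimates for all its derivatives. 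On the complementary support of $\rho_0$ one has $<\overline{\eta}>+<\tilde{\zeta}-\overline{h}_0>\leq\epsilon|\xi-\overline{\eta}|$, so that $<\xi>\sim<\xi-\overline{\eta}>$ for $\epsilon_0$ small enough, and the identity
$$
d(x,\xi-\overline{\eta}+\tilde{\zeta})-d(x-\tilde{y},\xi-\overline{\eta})\,=\,<\overline{g}_1,\tilde{\zeta}>\,+\,<\overline{g}_2,\tilde{y}>,
$$
with $\overline{g}_1:=\int_0^1\nabla_\xi d(x,\xi-\overline{\eta}+t\tilde{\zeta})\,dt$ and $\overline{g}_2:=\int_0^1\nabla_x d(x-t\tilde{y},\xi-\overline{\eta})\,dt$, combined with the substitutions $\overline{\zeta}:=\tilde{\zeta}+\overline{g}_1$ and $\overline{y}:=\tilde{y}-\overline{g}_2$ (well-defined by Lemmas \ref{L1.5}--\ref{L1.6}, since $\|\nabla^2_{x,\xi}d\|\leq\delta<1$ by Hypothesis \ref{Hyp-U}), reduces the phase to $<\overline{z},\overline{\zeta}>-<\overline{y},\overline{\eta}>$; the Laplacian operators \eqref{3.31a}--\eqref{3.31b} then produce the bound $|c_0(x,\xi)|\leq C_f<\xi>^{m+m'+m''}$ after the compatibility estimates analogous to \eqref{3.37}--\eqref{3.38}.

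The derivative bounds $|\partial^\alpha_x\partial^\beta_\xi c(x,\xi)|\leq C_f<\xi>^{m+m'+m''-|\beta|}$ are then obtained by following Points 3 and 4 of the proof of Proposition \ref{P3.2}, distinguishing terms that differentiate $\tilde{f}$ (harmless), terms that produce factors of $\nabla^2_{x,\xi}d\in S^0$ or $\nabla^2_{\xi,\xi}d\in S^{-1}$ (the latter supplying the extra power of $<\xi>^{-1}$ per $\xi$-derivative), and terms involving $\nabla_x\tilde{F}$, which by Lemma \ref{L1.11}(4) decomposes as $D\tilde{y}+E\overline{z}$ and is absorbed into derivatives of $\tilde{f}$ via further integration by parts. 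The main obstacle, as in Proposition \ref{P3.2}, is the careful bookkeeping required to control the cutoff $\rho_0$ and its derivatives through the change of variables $(\tilde{y},\tilde{\zeta})\mapsto(\overline{y},\overline{\zeta})$: one must shrink $\epsilon_0$ so that the Jacobians and the domain estimates analogous to \eqref{3.37}--\eqref{3.38} remain valid, and verify that derivatives of $\underline{\rho}_0$ lead to a piece supported on the intersection of $\mathcal{D}_0$ and $\mathcal{D}_\infty$ so that the $c_\infty$-type rapid-decay estimate can be re-invoked; once this is done, the rest is a routine transcription of Proposition \ref{P3.2}.
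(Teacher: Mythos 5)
The core step of your proof---the change of variables in the third paragraph---does not work as written. You correctly split $d(x,\xi-\overline{\eta}+\tilde{\zeta})-d(x-\tilde{y},\xi-\overline{\eta})=<\overline{g}_1,\tilde{\zeta}>+<\overline{g}_2,\tilde{y}>$, but the substitutions $\overline{\zeta}:=\tilde{\zeta}+\overline{g}_1$ and $\overline{y}:=\tilde{y}-\overline{g}_2$ do not reduce the phase to $<\overline{z},\overline{\zeta}>-<\overline{y},\overline{\eta}>$. The cross-terms pair $\overline{g}_1$ with $\tilde{\zeta}$ and $\overline{g}_2$ with $\tilde{y}$, so to absorb them one must shift the \emph{partners} of $\tilde{\zeta}$ and $\tilde{y}$ under the bilinear form, namely $\overline{z}$ (by $\tilde{z}:=\overline{z}+\overline{g}_1$, Jacobian $1$) and $\overline{\eta}$; shifting $\tilde{\zeta}$ and $\tilde{y}$ instead, as you propose, leaves extra terms such as $<\overline{g}_1,\overline{g}_1>$ and $<\overline{z},\overline{g}_1>$ in the exponent. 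Worse, the change you invoke in the $y$-direction, $\overline{y}:=\tilde{y}-\overline{g}_2$ with $\overline{g}_2=\int_0^1(\nabla_x d)(x-t\tilde{y},\xi-\overline{\eta})\,dt$, has Jacobian $1_d+\int_0^1 t\,(\nabla^2_{x,x}d)(x-t\tilde{y},\xi-\overline{\eta})\,dt$, which involves $\nabla^2_{x,x}d$; since $d\in S^+$ this is a symbol of order $1$ with no smallness bound, while Hypothesis \ref{Hyp-U} bounds only $\nabla^2_{x,\eta}d$. So the justification ``$\|\nabla^2_{x,\xi}d\|\le\delta<1$'' is for the wrong derivative, Lemmas \ref{L1.5}--\ref{L1.6} do not apply, and $\tilde{y}\mapsto\overline{y}$ need not be a diffeomorphism.

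The paper resolves the $\bar{y}$--$\bar{\eta}$ coupling (the genuinely new feature of this proposition compared with Proposition \ref{P3.2}) by a single nonlinear change in the $\eta$-direction only, $\tilde{\eta}:=-\bar{\eta}+\int_0^1(\nabla_x d)(x-t\bar{y},\xi-\bar{\eta})\,dt$, that is $\bar{\eta}=\xi-\lambda(x,\bar{y},\xi+\tilde{\eta})$; the invertibility is exactly Lemma \ref{L1.5} applied to $\nabla^2_{x,\xi}d$, which is what Hypothesis \ref{Hyp-U} controls, and it produces the Jacobian factor $D(x,\bar{y},\xi+\tilde{\eta})\in S^0$. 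Crucially, the paper does \emph{not} also linearize the $\bar{\zeta}$-dependence: the residual $\tilde{d}=d(x,\lambda+\bar{\zeta})-d(x,\lambda)$ stays in the exponent, and the estimate is completed with the Laplacian identities \eqref{4.15} together with a split of the \emph{integration domain} into $\{|\bar{\zeta}|\le\epsilon<\xi+\tilde{\eta}>\}$, on which $e^{i\tilde{d}}$ satisfies the bounded-derivative estimate \eqref{5.21}, and its complement. The smooth cutoff $\rho_0,\rho_\infty$ that you transplant from Proposition \ref{P3.2} is not used here, and in any case its numerator and denominator do not match the phase gradients you computed (for instance $\nabla_{\tilde{y}}\tilde{\psi}=-\overline{\eta}+\overline{h}_0$, not $-\overline{\eta}$). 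You would need to replace your third paragraph by the $\eta$-diffeomorphism and the ensuing domain split to make the argument go through.
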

\begin{proof}
 As before, we can work with $f$ with compact support wih respect to the variables $(y,z,\eta,\zeta)$. We remark that
\begin{equation}\label{5.9}
\psi(x,y,z,\xi,\eta,\zeta)=<x-y,\eta-\xi>+<x-z,\zeta-\eta>+d(x,\zeta)-d(y,\eta)
\end{equation}
and make the change of variables
\begin{equation}\label{5.10}
 \bar{y}:=x-y,\quad\bar{z}:=x-z,\quad\bar{\eta}:=\xi-\eta,\quad\bar{\zeta}:=\zeta-\eta
\end{equation}
and the notations:
\begin{equation}\label{5.11}
 \bar{\psi}\equiv\bar{\psi}(x,\bar{y},\bar{z},\xi,\bar{\eta},\bar{\zeta}):=<\bar{z},\bar{\zeta}>-<\bar{y},\bar{\eta}>+d(x,\xi-\bar{\eta}+\bar{\zeta})-d(x-\bar{y},\xi-\bar{\eta}),
\end{equation}
\begin{equation}\label{5.12}
 \overline{\Omega^B}\equiv\overline{\Omega^B}(x,\bar{y},\bar{z}):=\Omega^B(x,x-\bar{z},x-\bar{y}),
\end{equation}
\begin{equation}\label{5.13}
 \bar{f}\equiv\bar{f}(x,\bar{y},\bar{z},\xi,\bar{\eta},\bar{\zeta}):=f(x,x-\bar{y},x-\bar{z},\xi,\xi-\bar{\eta},\xi-\bar{\eta}+\bar{\zeta}),
\end{equation}
so that we can write
\begin{equation}\label{5.14}
 c(x,\xi)\,:=\,\int_{\mathbb{R}^{4d}}e^{i\bar{\psi}}\overline{\Omega^B}\bar{f}\,d\bar{y}\,d\bar{z}\,\dbar\bar{\eta}\,\dbar\bar{\zeta}.
\end{equation}

Let us consider the identity
$$
d(x,\xi-\bar{\eta}+\bar{\zeta})-d(x-\bar{y},\xi-\bar{\eta})=d(x,\xi-\bar{\eta}+\bar{\zeta})-d(x,\xi-\bar{\eta})+d(x,\xi-\bar{\eta})-d(x-\bar{y},\xi-\bar{\eta})=
$$
$$
=d(x,\xi-\bar{\eta}+\bar{\zeta})-d(x,\xi-\bar{\eta})+\left<\bar{y},\int_0^1\big(\nabla_x d\big)(x-t\bar{y},\xi-\bar{\eta})dt\right>
$$
and make the following change of variables
\begin{equation}\label{5.15}
\tilde{\eta}:=-\bar{\eta}+\int_0^1\big(\nabla_x d\big)(x-t\bar{y},\xi-\bar{\eta})dt,
\end{equation}
using also Lemma \ref{L1.5} and Hypothesis \ref{Hyp-U}. We still denote by
$$
\Lambda(x,\bar{y},\eta):=\eta+\int_0^1\big(\nabla_x d\big)(x-t\bar{y},\eta)dt
$$
and let $\lambda(x,\bar{y},\theta)$ be the inverse of the diffeomorphism
$$
\mathbb{R}^d\ni\eta\mapsto\theta:=\Lambda(x,\bar{y},\eta)\in\mathbb{R}^d.
$$
Due to our Hypothesis \ref{Hyp-U} and to Lemma \ref{L1.5}, $<\eta>$ is equivalent with $<\Lambda(x,\bar{y},\eta)>$ and $<\theta>$ is equivalent with $<\lambda(x,\bar{y},\theta)>$, uniformly with respect to $(x,\bar{y})\in\mathbb{R}^{2d}$. Moreover, we have that $\lambda$ belongs to $S^1\big(\mathbb{R}^{2d}\times\mathbb{R}^d\big)$ and the functional determinant $D(x,\bar{y},\theta)$ of the map
$$
\mathbb{R}^d\ni\theta\mapsto\lambda(x,\bar{y},\theta)\in\mathbb{R}^d
$$
is a symbol of class $S^0\big(\mathbb{R}^{2d}\times\mathbb{R}^d\big)$. Let us write \eqref{5.15} in the form
$$
\tilde{\eta}+\xi\,=\,\Lambda(x,\bar{y},\xi-\bar{\eta})
$$
and deduce that
\begin{equation}\label{5.16}
 \bar{\eta}\,=\,\xi-\lambda(x,\bar{y},\xi+\tilde{\eta}).
\end{equation}
Let us introduce the following notations:
\begin{equation}\label{5.17}
 \tilde{d}\equiv\tilde{d}(x,\bar{y},\xi,\tilde{\eta},\bar{\zeta}):=d\big(x,\lambda(x,\bar{y},\xi+\tilde{\eta})+\bar{\zeta}\big)-d\big(x,\lambda(x,\bar{y},\xi+\tilde{\eta})\big),
\end{equation}
\begin{equation}\label{5.18}
 \tilde{f}\equiv\tilde{f}(x,\bar{y},\bar{z},\xi,\tilde{\eta},\bar{\zeta}):=\bar{f}\big(x,\bar{y},\bar{z},\xi,\xi-\lambda(x,\bar{y},\xi+\tilde{\eta}),\bar{\zeta}\big),
\end{equation}
so that we can write
\begin{equation}\label{5.19}
 c(x,\xi)\,:=\,\int_{\mathbb{R}^{4d}}e^{i(<\bar{z},\bar{\zeta}>+<\bar{y},\tilde{\eta}>+\tilde{d})}\overline{\Omega^B}\tilde{f}D(x,\bar{y},\xi+\tilde{\eta})\,d\bar{y}\,d\bar{z}\,\dbar\tilde{\eta}\,\dbar\bar{\zeta}.
\end{equation}

We shall need to estimate the derivatives of $e^{i\tilde{d}}$ and $\tilde{f}$. It is evident that for any $(\alpha,\beta)\in\mathbb{N}^{2d}\times\mathbb{N}^{3d}$ we have that
\begin{equation}\label{5.20}
 \left|\partial^\alpha_{(x,\bar{y})}\partial^\beta_{(\xi,\tilde{\eta},\bar{\zeta})}e^{i\tilde{d}}\right|\leq C\left(<\bar{\zeta}>+<\xi+\tilde{\eta}>\right)^{|\alpha|},\quad\forall(x,\bar{y},\xi,\tilde{\eta},\bar{\zeta})\in\mathbb{R}^{5d}.
\end{equation}
Moreover, on the domain $|\bar{\zeta}|\leq\epsilon<\xi+\tilde{\eta}>$, for $\epsilon>0$ small enough, this inequality may still be improved. In fact we use the identity
$$
\tilde{d}(x,\bar{y},\xi,\tilde{\eta},\bar{\zeta})=\left<\bar{\zeta},\int_0^1\big(\nabla_\xi d\big)\big(x,\lambda(x,\bar{y},\xi+\tilde{\eta})+t\bar{\zeta}\big)dt\right>
$$
and the fact that on the above domain $<\lambda(x,\bar{y},\xi+\tilde{\eta})+t\bar{\zeta}>$ is equivalent with $<\xi+\tilde{\eta}>$, uniformly with respect to the parameter $t\in[0,1]$. We obtain that for $(\alpha,\beta)\in\mathbb{N}^{2d}\times\mathbb{N}^{3d}$ we have that
\begin{equation}\label{5.21}
 \left|\partial^\alpha_{(x,\bar{y})}\partial^\beta_{(\xi,\tilde{\eta},\bar{\zeta})}e^{i\tilde{d}}\right|\leq C<\bar{\zeta}>^{|\alpha|},\quad\text{if }|\bar{\zeta}|\leq\epsilon<\xi+\tilde{\eta}>.
\end{equation}
Let us consider now
$$
\tilde{f}(x,\bar{y},\bar{z},\xi,\tilde{\eta},\bar{\zeta})=f\big(x,x-\bar{y},x-\bar{z},\xi,\lambda(x,\bar{y},\xi+\tilde{\eta}),\lambda(x,\bar{y},\xi+\tilde{\eta})+\bar{\zeta}\big).
$$
For any $(\alpha,\beta)\in\mathbb{N}^{2d}\times\mathbb{N}^{4d}$ we have that
\begin{equation}\label{5.22}
 \left|\partial^\alpha_{(x,\bar{y})}\partial^\beta_{(\bar{z},\xi,\tilde{\eta},\bar{\zeta})}\tilde{f}\right|\leq C_f<\xi>^m<\xi+\tilde{\eta}>^{m'}<\lambda(x,\bar{y},\xi+\tilde{\eta})+\bar{\zeta}>^{m''}\left(1+\frac{<\xi+\tilde{\eta}>}{<\lambda(x,\bar{y},\xi+\tilde{\eta})+\bar{\zeta}>}\right)^{|\alpha|}.
\end{equation}
This inequality may also be improved on the domein $|\bar{\zeta}|\leq\epsilon<\xi+\tilde{\eta}>$, where we have for any  $(\alpha,\beta)\in\mathbb{N}^{3d}\times\mathbb{N}^{3d}$
\begin{equation}\label{5.23}
 \left|\partial^\alpha_{(x,\bar{y},\bar{z})}\partial^\beta_{(\xi,\tilde{\eta},\bar{\zeta})}\tilde{f}\right|\leq C_f<\xi>^m<\xi+\tilde{\eta}>^{m'+m''}\left(<\xi>^{-|\beta|}+<\xi+\tilde{\eta}>^{-|\beta|}\right),\quad\text{for }|\bar{\zeta}|\leq\epsilon<\xi+\tilde{\eta}>.
\end{equation}
We integrate by parts in \eqref{5.19} using the identities \eqref{4.15} and getting rid of the terms appearing after the derivation of $\overline{\Omega^B}$. Then we decompose the integration domain as the union of the regions $|\bar{\zeta}|\leq\epsilon<\xi+\tilde{\eta}>$ and respectively $|\bar{\zeta}|>\epsilon<\xi+\tilde{\eta}>$, writing $c=c_0+c_\infty$. For these two terms we obtain estimations of the form:
$$
\left|c_0(x,\xi)\right|\leq
$$
$$
\leq C_f\hspace{-0.5cm}\int\limits_{|\bar{\zeta}|\leq\epsilon<\xi+\tilde{\eta}>}\hspace{-0.5cm}<\bar{y}>^{-2N_1+2N_3}<\bar{z}>^{-2N_2+2N_4}<\bar{\zeta}>^{-2N_4+2N_3}<\tilde{\eta}>^{-2N_3}<\xi>^m<\xi+\tilde{\eta}>^{m'+m''}\,d\bar{y}\,d\bar{z}\,\dbar\tilde{\eta}\,\dbar\bar{\zeta}\leq
$$
$$
\leq C_f<\xi>^{m+m'+m''},
$$
for $2N_1\geq2N_3+d+1,\ 2N_2\geq2N_4+d+1,\ 2N_3\geq|m'+m''|+d+1,\ 2N_4\geq2N_3+d+1$;
$$
\left|c_\infty(x,\xi)\right|\leq
$$
$$
\leq C_f\hspace{-0.5cm}\int\limits_{|\bar{\zeta}|>\epsilon<\xi+\tilde{\eta}>}\hspace{-0.5cm}<\bar{y}>^{-2N_1+2N_3}<\bar{z}>^{-2N_2+2N_4}<\bar{\zeta}>^{-2N_4}<\tilde{\eta}>^{-2N_3}\times
$$
$$
\times<\xi>^m<\xi+\tilde{\eta}>^{m'}<\lambda(x,\bar{y},\xi+\tilde{\eta})+\bar{\zeta}>^{m''}\left(<\bar{\zeta}>+<\xi+\tilde{\eta}>\right)^{2N_3}\,d\bar{y}\,d\bar{z}\,\dbar\tilde{\eta}\,\dbar\bar{\zeta}\leq
$$
$$
\leq C_f\int_{\mathbb{R}^{2d}}<\xi>^{m-p}<\tilde{\eta}>^{-2N_3+p}<\bar{\zeta}>^{-2N_4+2N_3+p+m'_++m''_+}\dbar\tilde{\eta}\,\dbar\bar{\zeta}\ \leq\ C_f<\xi>^{m-p},\ \forall p\in\mathbb{N},
$$
for $2N_1\geq2N_3+d+1$, $2N_2\geq2N_4+d+1$, $2N_3\geq p+d+1$, $2N_4\geq2N_3+p+d+1+m'_++m''_+$ (here $m_+:=\max\{m,0\}$).
For $p\geq|m'+m''|$ the estimations for $c_0$ and $c_\infty$ are satisfied simultaneously and we conclude that
\begin{equation}\label{5.24}
 \left|c(x,\xi)\right|\,\leq\,C_f<\xi>^{m+m'+m''},\quad\forall(x,\xi)\in\mathbb{T}^*\mathbb{R}^d.
\end{equation}
Estimating similarly the derivatives of $c$ we obtain the statement of the Proposition.
\end{proof}

We shall now compute a principal part of the symbol $c$. The first step in this direction is the reduction to the case $B=0$.
\begin{lemma}\label{L5.3}
 If we define
$$
e(x,\xi):=\int_{\mathbb{R}^{4d}}e^{i(<\bar{z},\bar{\zeta}>+<\bar{y},\tilde{\eta}>+\tilde{d})}\tilde{f}D(x,\bar{y},\xi+\tilde{\eta})\,d\bar{y}\,d\bar{z}\,\dbar\tilde{\eta}\,\dbar\bar{\zeta},
$$
we have that $c-e\,\in\,S^{m+m'+m''-1}(\mathbb{R}^d)$.
\end{lemma}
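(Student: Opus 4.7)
The plan is to follow exactly the strategy of Lemma~\ref{L4.3}, adapted to the current phase and amplitude. First I write the expansion
$$
\overline{\Omega^B}(x,\bar y,\bar z)\,=\,e^{-i\bar F(x,\bar y,\bar z)}\,=\,1\,-\,i\bar F\int_0^1 e^{-it\bar F}\,dt,
$$
which decomposes $c=e - i\int_0^1 r_t\,dt$, where $r_t(x,\xi)$ is obtained from \eqref{5.19} by substituting $\overline{\Omega^B}$ with $\bar F\,e^{-it\bar F}$. Invoking point (3) of Lemma~\ref{L1.11} I use the representation $\bar F(x,\bar y,\bar z)=\langle\bar C(x,\bar y,\bar z)\bar z,\bar y\rangle$ with $\bar C\in BC^\infty(\mathbb{R}^{3d})$ and skewsymmetric.

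Next I perform the change of variables
$$
\tilde y:=\tfrac12(\bar y+\bar z),\quad \tilde z:=\tfrac12(\bar y-\bar z),\quad \tilde\eta_{\rm new}:=\tilde\eta+\bar\zeta,\quad \tilde\zeta_{\rm new}:=\tilde\eta-\bar\zeta,
$$
which (as in \eqref{4.27}--\eqref{4.29}) preserves the phase, meaning the bilinear part becomes $\langle\tilde y,\tilde\eta_{\rm new}\rangle+\langle\tilde z,\tilde\zeta_{\rm new}\rangle$, while skewsymmetry of $\bar C$ gives $\bar F=2\langle\widetilde C\tilde y,\tilde z\rangle$ for some $\widetilde C\in BC^\infty$. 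Then I integrate by parts to remove the linear factors $\tilde y$ and $\tilde z$ using
$$
\tilde y\,e^{i\langle\tilde y,\tilde\eta_{\rm new}\rangle}=-i\nabla_{\tilde\eta_{\rm new}}e^{i\langle\tilde y,\tilde\eta_{\rm new}\rangle},\qquad \tilde z\,e^{i\langle\tilde z,\tilde\zeta_{\rm new}\rangle}=-i\nabla_{\tilde\zeta_{\rm new}}e^{i\langle\tilde z,\tilde\zeta_{\rm new}\rangle}.
$$
After this step the operator $\langle\widetilde C\nabla_{\tilde\eta_{\rm new}},\nabla_{\tilde\zeta_{\rm new}}\rangle$ acts on the product $e^{-it\bar F}e^{i\tilde d}\,\tilde f\,D(x,\bar y,\xi+\tilde\eta)$.

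The essential observation is that $\bar F$ does not depend on the new momentum variables, while $\tilde d$ and $D$ have derivatives in these variables that are bounded (recalling that $\nabla_\xi d\in S^0$ by the definition of $S^+$, that $D\in S^0$ by point (4) of Lemma~\ref{L1.5}, and that $\lambda$ is of order $1$ with bounded first derivative in $\theta$); hence the only gain of order comes from applying $\nabla_{\tilde\zeta_{\rm new}}$ (equivalently a derivative in $\bar\zeta$) to $\tilde f$, which by inspection of \eqref{5.18} is equivalent to differentiating $f$ in its $\zeta$-slot and thus lowers the joint symbol class from $S^{m,m',m''}$ to $S^{m,m',m''-1}$. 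Applying Proposition~\ref{P5.2} with this reduced class yields $r_t\in S^{m+m'+m''-1}(\mathbb{R}^d)$ with seminorm estimates uniform in $t\in[0,1]$; integrating in $t$ then gives $c-e\in S^{m+m'+m''-1}(\mathbb{R}^d)$, as desired.

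The main technical obstacle (compared with Lemma~\ref{L4.3}) is bookkeeping: one must verify that the extra dependence on $(\bar y,\xi+\tilde\eta)$ coming from the factor $D$ and from $\lambda$ in $\tilde d$ and $\tilde f$ produces only bounded contributions when differentiated in the momentum variables, so that no polynomial growth in $\xi$ is introduced. This is where the estimates \eqref{5.20}--\eqref{5.23} on $\tilde d$ and $\tilde f$ established in the proof of Proposition~\ref{P5.2}, together with $D\in S^0(\mathbb{R}^{2d}\times\mathbb{R}^d)$ from Lemma~\ref{L1.5}(4), do the work and allow the reuse of Proposition~\ref{P5.2} as a black box for the remainder $r_t$.
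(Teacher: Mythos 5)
Your overall plan — expanding $\overline{\Omega^B}=1-i\bar F\int_0^1 e^{-it\bar F}dt$, using $\bar F=\left<\bar C\bar z,\bar y\right>$, removing the linear factors by integration by parts, and feeding the remainder back into the machinery of Proposition~\ref{P5.2} — is the right one and matches the paper. However there are two genuine problems with the way you carry it out.

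First, the change of variables $\tilde y=\frac12(\bar y+\bar z)$, $\tilde z=\frac12(\bar y-\bar z)$, $\tilde\eta_{\rm new}=\tilde\eta+\bar\zeta$, $\tilde\zeta_{\rm new}=\tilde\eta-\bar\zeta$ is imported from Lemma~\ref{L4.3}, but the reason it was introduced there does not survive here. In Lemma~\ref{L4.3} the phase correction was $\bar d(x,\xi,\bar\eta,\bar\zeta)=d(x,\xi+\bar\eta+\bar\zeta)-d(x,\xi)$, a function of $\bar\eta+\bar\zeta$ only, and the change of variables produced a $\tilde d$ that is \emph{independent} of $\tilde\zeta$. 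Then, writing \eqref{4.30} so that $\nabla_{\tilde\zeta}$ acts first, every term carries the gain-bearing factor $\nabla_{\tilde\zeta}\tilde f$. In the present lemma, $\tilde d(x,\bar y,\xi,\tilde\eta,\bar\zeta)=d\big(x,\lambda(x,\bar y,\xi+\tilde\eta)+\bar\zeta\big)-d\big(x,\lambda(x,\bar y,\xi+\tilde\eta)\big)$ depends on $\tilde\eta$ through $\lambda$ and on $\bar\zeta$ separately; after your substitution it depends on \emph{both} $\tilde\eta_{\rm new}$ and $\tilde\zeta_{\rm new}$, so the independence you relied on in Lemma~\ref{L4.3} is lost, and the change of variables buys you nothing. (Also note $\nabla_{\tilde\zeta_{\rm new}}=\frac12(\nabla_{\tilde\eta}-\nabla_{\bar\zeta})$, not a $\bar\zeta$-derivative.) The paper does \emph{not} make this change; it integrates by parts directly in $(\bar\zeta,\tilde\eta)$ and obtains \eqref{5.30}, $\left<\bar C\nabla_{\bar\zeta},\nabla_{\tilde\eta}\right>\big(e^{i\tilde d}\tilde f D\big)$.

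Second, and more seriously, the sentence ``$\tilde d$ and $D$ have derivatives in these variables that are bounded \dots hence the only gain of order comes from applying $\nabla_{\tilde\zeta_{\rm new}}$ to $\tilde f$'' does not close the argument: boundedness is not a gain. In the Leibniz expansion of $\left<\bar C\nabla_{\bar\zeta},\nabla_{\tilde\eta}\right>\big(e^{i\tilde d}\tilde f D\big)$ there are terms in which \emph{both} momentum derivatives land on $e^{i\tilde d}$ (or one on $e^{i\tilde d}$ and one on $D$), with $\tilde f$ left at full order; if one only knows that $\nabla_{\bar\zeta}\tilde d$ and $\nabla_{\tilde\eta}\tilde d$ are bounded, these terms are still of order $m+m'+m''$ and the lemma fails. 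What saves the argument is the finer structure exposed in the proof of Proposition~\ref{P5.2}: writing $\tilde d=\left<\bar\zeta,\,\int_0^1\big(\nabla_\xi d\big)(x,\lambda+t\bar\zeta)\,dt\right>$, on the region $|\bar\zeta|\leq\epsilon<\xi+\tilde\eta>$ one has $|\nabla_{\tilde\eta}\tilde d|\leq C<\bar\zeta><\xi+\tilde\eta>^{-1}$, i.e.\ a genuine order gain with a $<\bar\zeta>$ cost that is absorbed by the $<\bar\zeta>^{-2N}$ weight obtained from integration by parts; similarly for $D\in S^0$ and the higher $\tilde d$-derivatives, while the complementary region is handled exactly as the $c_\infty$ piece in Proposition~\ref{P5.2}. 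Your proof nowhere supplies this estimate, and your stated reason (``only gain from $\nabla\tilde f$'') is simply not true for those terms. Until this is filled in, the remainder $r_t$ is not shown to lie in $S^{m+m'+m''-1}(\mathbb{R}^d)$, so the proposal has a real gap.
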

\begin{proof}
 As in the proof of Lemma \ref{L4.3} we start from the equality
\begin{equation}\label{5.27}
 \bar{\Omega^B}=e^{-i\bar{F}}=1-i\bar{F}\int_0^1e^{-it\bar{F}}dt,\quad\bar{F}(x,\bar{y},\bar{z})=\left<\bar{C}(x,\bar{y},\bar{z})\bar{z},\bar{y}\right>
\end{equation}
with $\bar{C}\in BC^\infty(\mathbb{R}^{3d})$ as in \eqref{3.139}. Using this decomposition and integrating by parts to get rid of the factors $\bar{y}$ and $\bar{z}$ we get
\begin{equation}\label{5.29}
c=e+i\int_0^1r_t(x,\xi)dt,
\end{equation}
\begin{equation}\label{5.30}
 r_t(x,\xi)=\int_{\mathbb{R}^{4d}}e^{i(<\bar{z},\bar{\zeta}>+<\bar{y},\tilde{\eta}>)}e^{-it\bar{F}}\left<\bar{C}\nabla_{\bar{\zeta}},\nabla_{\tilde{\eta}}\right>\big(e^{i\tilde{d}}\tilde{f}D\big)\,d\bar{y}\,d\bar{z}\,\dbar\tilde{\eta}\,\dbar\bar{\zeta}.
\end{equation}
We end the proof with the same arguments as in the proof of Proposition \ref{5.2} obtaining that $r_t\in S^{m+m'+m''-1}(\mathbb{R}^d)$ uniformly with respect with $t\in[0,1]$ and concluding that $c-e\in S^{m+m'+m''-1}(\mathbb{R}^d)$.
\end{proof}

We reduce now further to an integral on $\mathbb{R}^{2d}$.
\begin{lemma}\label{L5.4}
 If we define
$$
e_0(x,\xi):=\int_{\mathbb{R}^{2d}}e^{i<\bar{y},\tilde{\eta}>}\tilde{f}(x,\bar{y},-\big(\nabla_\xi d\big)\big(x,\lambda(x,\bar{y},\xi+\tilde{\eta}),\xi,\tilde{\eta},0\big)D(x,\bar{y},\xi+\tilde{\eta})\,d\bar{y}\,\dbar\tilde{\eta},
$$
then $e-e_0\in S^{m+m'+m''-1}(\mathbb{R}^d)$.
\end{lemma}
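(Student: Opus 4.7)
The plan is to reduce the $4d$-dimensional oscillatory integral defining $e$ to the $2d$-dimensional integral $e_0$, in close analogy with Lemma \ref{L4.4}. The key idea is to absorb the nontrivial $\bar\zeta$-dependence of the phase function $\tilde d$ into a translation of the $\bar z$ variable, and then to Taylor-expand the resulting amplitude in $\bar\zeta$ around the origin.

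First I would write
$$
\tilde d(x,\bar y,\xi,\tilde\eta,\bar\zeta)=\bigl\langle\bar\zeta,\,h(x,\bar y,\xi,\tilde\eta,\bar\zeta)\bigr\rangle,\qquad h:=\int_0^1(\nabla_\xi d)\bigl(x,\lambda(x,\bar y,\xi+\tilde\eta)+t\bar\zeta\bigr)\,dt,
$$
so the total phase in the integrand of $e$ becomes $\langle\bar z+h,\bar\zeta\rangle+\langle\bar y,\tilde\eta\rangle$. The change of variables $\bar w:=\bar z+h$ has unit Jacobian (since $h$ does not depend on $\bar z$), and after relabelling $\bar w\to\bar z$ the integral reads
$$
e(x,\xi)=\int_{\mathbb{R}^{4d}}e^{i(\langle\bar z,\bar\zeta\rangle+\langle\bar y,\tilde\eta\rangle)}\tilde f\bigl(x,\bar y,\bar z-h,\xi,\tilde\eta,\bar\zeta\bigr)D(x,\bar y,\xi+\tilde\eta)\,d\bar y\,d\bar z\,\dbar\tilde\eta\,\dbar\bar\zeta.
$$

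Next I would perform the first-order Taylor expansion in $\bar\zeta$ at $0$:
$$
\tilde f(x,\bar y,\bar z-h,\xi,\tilde\eta,\bar\zeta)=\tilde f\bigl(x,\bar y,\bar z-(\nabla_\xi d)(x,\lambda),\xi,\tilde\eta,0\bigr)+\bigl\langle\bar\zeta,\,G(x,\bar y,\bar z,\xi,\tilde\eta,\bar\zeta)\bigr\rangle,
$$
where $G$ is the integral Taylor remainder. This splits $e=e_{00}+r$. The principal term $e_{00}$ has an amplitude independent of $\bar\zeta$, so the integrations in $\bar\zeta$ and $\bar z$ execute a Fourier inversion on $\mathbb{R}^d$, forcing $\bar z=0$; this yields exactly $e_0$ as stated. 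For the remainder $r$, I rewrite $\bar\zeta\,e^{i\langle\bar z,\bar\zeta\rangle}=-i\nabla_{\bar z}e^{i\langle\bar z,\bar\zeta\rangle}$ and integrate by parts in $\bar z$, transferring a derivative onto $G\cdot D$. Since $G$ is itself built from $\partial_{\bar\zeta}\tilde f$ (of class $S^{m,m',m''-1}$) and from $\partial_{\bar\zeta}h\cdot\partial_{\bar z}\tilde f$ (with $\partial_{\bar\zeta}h$ of class $S^{-1}$ in $\lambda+t\bar\zeta$, by Hypothesis \ref{Hyp-U}), the integrand of $r$ becomes an oscillatory integral with the simplified phase $\langle\bar z,\bar\zeta\rangle+\langle\bar y,\tilde\eta\rangle$ and amplitude whose effective order is decreased by one. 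Rerunning the estimation scheme of Proposition \ref{P5.2} on this modified integral yields $r\in S^{m+m'+m''-1}(\mathbb{R}^d)$, which completes the proof.

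The main obstacle is the careful bookkeeping of symbol classes for $G$ and $\nabla_{\bar z}G$ under the nonlinear substitutions introduced by $\lambda$ and $h$. The necessary ingredients are Lemma \ref{L1.5} (which provides $\lambda\in S^1(\mathbb{R}^{2d}\times\mathbb{R}^d)$ together with the comparability $\langle\lambda(x,\bar y,\xi+\tilde\eta)\rangle\simeq\langle\xi+\tilde\eta\rangle$), the estimates \eqref{5.22}--\eqref{5.23} already established for $\tilde f$, and the fact that differentiation in the base variable $\bar z$ preserves the symbol class. Combining these to verify that the remainder integrand falls under the hypotheses of Proposition \ref{P5.2} with $m''$ replaced by $m''-1$ is routine, but tedious and must be done with care.
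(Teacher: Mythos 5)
Your argument follows the same route as the paper's proof: the change of variables $\tilde{\bar z}:=\bar z + h$ to absorb $\tilde d$ into the quadratic phase, then a first-order Taylor expansion whose zero-order term collapses to $e_0$ by Fourier inversion at $(\tilde{\bar z},\bar\zeta)=(0,0)$, and whose remainder is treated by one integration by parts and the estimation scheme of Proposition~\ref{P5.2}. The only cosmetic difference is that you Taylor-expand in the dual variable $\bar\zeta$ around $0$, whereas the paper expands in $\tilde{\bar z}$ around $0$; this is a mirror image of the same linearization at the stationary point of the phase, and both choices lead to an order-one gain via the same bookkeeping.
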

\begin{proof}
 Starting from \eqref{5.17} we can write
\begin{equation}\label{5.34}
 \tilde{d}(x,\bar{y},\xi,\tilde{\eta},\bar{\zeta})=\left<\bar{\zeta},h\right>
\end{equation}
with
\begin{equation}\label{5.35}
 h\equiv h(x,\bar{y},\xi,\tilde{\eta},\bar{\zeta}):=\int_0^1\big(\nabla_\xi d\big)\big(x,\lambda(x,\bar{y},\xi+\tilde{\eta})+t\bar{\zeta}\big)dt.
\end{equation}
In the integral defining $e$ (in the statement of Lemma \ref{L5.3}) we make the change of variables:
\begin{equation}\label{5.36}
 \tilde{\bar{z}}:=\bar{z}+h(x,\bar{y},\xi,\tilde{\eta},\bar{\zeta})
\end{equation}
and obtain
\begin{equation}\label{5.37}
 e(x,\xi)=\int_{\mathbb{R}^{4d}}e^{i(<\tilde{\bar{z}},\bar{\zeta}>+<\bar{y},\tilde{\eta}>)}\tilde{f}(x,\bar{y},\tilde{\bar{z}}-h,\xi,\tilde{\eta},\bar{\zeta})D(x,\bar{y},\xi+\tilde{\eta})\,d\bar{y}\,d\tilde{\bar{z}}\,\dbar\tilde{\eta}\,\dbar\bar{\zeta}.
\end{equation}
We decompose the above integral as a sum of two terms corresponding to the following equality:
$$
\tilde{f}(x,\bar{y},\tilde{\bar{z}}-h,\xi,\tilde{\eta},\bar{\zeta})=\tilde{f}(x,\bar{y},-h,\xi,\tilde{\eta},\bar{\zeta})
+\left<\tilde{\bar{z}},\int_0^1\big(\nabla_{\bar{z}}\tilde{f}\big)(x,\bar{y},s\tilde{\bar{z}}-h,\xi,\tilde{\eta},\bar{\zeta})ds\right>.
$$
We obtain
\begin{equation}\label{5.38}
 e\,=\,e_0\,+\,i\int_0^1\left[\int_{\mathbb{R}^{4d}}e^{i(<\tilde{\bar{z}},\bar{\zeta}>+<\bar{y},\tilde{\eta}>)}\left<\nabla_{\bar{\zeta}},\big(\nabla_{\bar{z}}\tilde{f}\big)(x,\bar{y},s\tilde{\bar{z}}-h,\xi,\tilde{\eta},\bar{\zeta})\right>D(x,\bar{y},\xi+\tilde{\eta})\,d\bar{y}\,d\tilde{\bar{z}}\,\dbar\tilde{\eta}\,\dbar\bar{\zeta}\right]ds.
\end{equation}
We finish the proof as that of Proposition \ref{P5.2}.
\end{proof}
\begin{lemma}\label{L5.5}
If we denote by
$$
c_0(x,\xi):=f\big(x,x,\nabla_\xi U(x,\lambda(x,0,\xi)),\xi,\lambda(x,0,\xi),\lambda(x,0,\xi)\big)D(x,0,\xi),
$$
then $e_0-c_0\,\in\,S^{m+m'+m''-1}(\mathbb{R}^d)$.
\end{lemma}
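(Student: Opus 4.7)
The phase $<\bar y,\tilde\eta>$ in the definition of $e_0$ is a non-degenerate quadratic form on $\mathbb{R}^{2d}$ whose only critical point is $(\bar y,\tilde\eta)=(0,0)$. My plan is to extract the contribution at this critical point via a first-order Taylor expansion of the amplitude, and to handle the remainder by integration by parts. Setting
\begin{equation*}
F(x,\bar y,\tilde\eta,\xi):=\tilde f\bigl(x,\bar y,-(\nabla_\xi d)(x,\lambda(x,\bar y,\xi+\tilde\eta)),\xi,\tilde\eta,0\bigr)\,D(x,\bar y,\xi+\tilde\eta),
\end{equation*}
one has $e_0(x,\xi)=\int_{\mathbb{R}^{2d}}e^{i<\bar y,\tilde\eta>}F\,d\bar y\,\dbar\tilde\eta$, and a direct computation using (5.18) and the identity $\nabla_\xi U(x,\eta)=x+\nabla_\xi d(x,\eta)$ gives $F(x,0,0,\xi)=c_0(x,\xi)$.

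Next, write the first-order Taylor expansion
\begin{equation*}
F(x,\bar y,\tilde\eta,\xi)-c_0(x,\xi)=\sum_{j=1}^d\bar y_j\,G_{1,j}+\sum_{j=1}^d\tilde\eta_j\,G_{2,j},
\end{equation*}
with $G_{1,j}:=\int_0^1(\partial_{\bar y_j}F)(x,t\bar y,t\tilde\eta,\xi)\,dt$ and $G_{2,j}:=\int_0^1(\partial_{\tilde\eta_j}F)(x,t\bar y,t\tilde\eta,\xi)\,dt$. The constant piece contributes $c_0(x,\xi)\int e^{i<\bar y,\tilde\eta>}d\bar y\,\dbar\tilde\eta=c_0(x,\xi)$, by the oscillatory-integral form of Fourier inversion (regularize with $\chi(\epsilon\bar y,\epsilon\tilde\eta)$, $\chi(0,0)=1$, and let $\epsilon\searrow0$). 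For the remaining summands I would use the identities $\bar y_je^{i<\bar y,\tilde\eta>}=-i\partial_{\tilde\eta_j}e^{i<\bar y,\tilde\eta>}$ and $\tilde\eta_je^{i<\bar y,\tilde\eta>}=-i\partial_{\bar y_j}e^{i<\bar y,\tilde\eta>}$ to transfer each extra monomial onto the amplitude, arriving at
\begin{equation*}
e_0(x,\xi)-c_0(x,\xi)=2i\sum_{j=1}^d\int_0^1 t\,\biggl(\int_{\mathbb{R}^{2d}}e^{i<\bar y,\tilde\eta>}(\partial_{\bar y_j}\partial_{\tilde\eta_j}F)(x,t\bar y,t\tilde\eta,\xi)\,d\bar y\,\dbar\tilde\eta\biggr)dt.
\end{equation*}

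The key point is that $\partial_{\tilde\eta_j}F$, and hence $\partial_{\bar y_j}\partial_{\tilde\eta_j}F$, lies in the same amplitude class as $F$ but with total $\xi$-order reduced by one. Indeed, $\tilde\eta$ enters $F$ only through $\lambda(x,\bar y,\xi+\tilde\eta)$; the chain rule yields contributions of the form $\partial_{\tilde\eta}D\in S^{-1}$, or $\partial_\eta f$ or $\partial_\zeta f$ (which lower $m'$ or $m''$ by one, as exploited in (5.23)), or $\partial_{\tilde\eta}\bar z_0=-\nabla^2_{\xi\xi}d\cdot\partial_{\tilde\eta}\lambda\in S^{-1}$ (since $d\in S^+$ forces $\nabla^2_{\xi\xi}d\in S^{-1}$). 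Additional differentiation in $\bar y_j$ affects only configuration variables and preserves the $\xi$-order by (5.22) and Lemma \ref{L1.5}(3). Applying the integration-by-parts scheme from the proof of Proposition \ref{P5.2} to each $t$-slice (uniformly in $t\in[0,1]$) then gives $|e_0(x,\xi)-c_0(x,\xi)|\leq C_f<\xi>^{m+m'+m''-1}$, and the same scheme applied to $\partial_x^\alpha\partial_\xi^\beta(e_0-c_0)$ completes the symbol estimate $e_0-c_0\in S^{m+m'+m''-1}(\mathbb{R}^d)$.

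The main technical obstacle is this order analysis: one must verify that the chain-rule passage through the implicit variable $\bar z_0=-\nabla_\xi d(x,\lambda)$ really contributes a factor of $S^{-1}$ rather than merely $S^0$. This rests both on the hypothesis $d\in S^+$ (so $\nabla^2_{\xi\xi}d\in S^{-1}$) and on the equivalence of $<\lambda(x,\bar y,\xi+\tilde\eta)>$ with $<\xi+\tilde\eta>$ provided by Lemma \ref{L1.5}(2); without either, the expected gain of one power of $<\xi>^{-1}$ in the remainder could be lost.
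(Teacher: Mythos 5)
Your proposal is correct and follows the same route as the paper's own proof: a Taylor expansion of the amplitude at the critical point of the phase $<\bar y,\tilde\eta>$, a single integration by parts turning the linear remainder into a mixed $\partial_{\bar y_j}\partial_{\tilde\eta_j}$-derivative, and a count showing this mixed derivative gains one order. The only cosmetic differences are that the paper expands solely in $\theta=\xi+\tilde\eta$ (Fourier inversion then collapses the $\bar y$-integral at $\bar y=0$) whereas you expand jointly in $(\bar y,\tilde\eta)$, and that the estimate on $\partial_{\bar y_j}\partial_{\tilde\eta_j}F$ is $\leq C_f<\xi>^m<\xi+t\tilde\eta>^{m'+m''-1}$, so converting the $\tilde\eta$-integral to a $<\xi>^{m+m'+m''-1}$ bound actually needs the split $|\tilde\eta|\lessgtr\frac12|\xi|$ used in the paper's proof of this very lemma rather than the $\bar\zeta$-split of Proposition \ref{P5.2} that you cite.
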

\begin{proof}
 We denote by 
\begin{equation}\label{5.43}
 g(x,\bar{y},\xi,\theta):=f\big(x,x-\bar{y},x+\big(\nabla_x d\big)\big(x,\lambda(x,\bar{y},\theta)\big),\xi,\lambda(x,\bar{y},\theta),\lambda(x,\bar{y},\theta)\big)D(x,\bar{y},\theta)
\end{equation}
and notice that for any $(\alpha,\beta,\gamma)\in\mathbb{N}^{2d}\times\mathbb{N}^d\times\mathbb{N}^d$ we have the inequality
\begin{equation}\label{5.44}
 \left|\big(\partial^\alpha_{(x,\bar{y})}\partial^\beta_\xi\partial^\gamma_\theta g\big)(x,\bar{y},\xi,\theta)\right|\leq C_f<\xi>^{m-|\beta|}<\theta>^{m'+m''-|\gamma|}.
\end{equation}
We write
\begin{equation}\label{5.45}
 g(x,\bar{y},\xi,\xi+\tilde{\eta})=g(x,\bar{y},\xi,\xi)+\left<\tilde{\eta},\int_0^1\big(\nabla_\theta g\big)(x,\bar{y},\xi,\xi+t\tilde{\eta})dt\right>
\end{equation}
and the associated decomposition of the integral defining $e_0$ in Lemma \ref{L5.4}
\begin{equation}\label{5.46}
 e_0(x,\xi)=g(x,0,\xi,\xi)+\rho(x,\xi),\quad\rho(x,\xi):=i\int_0^1\rho_t(x,\xi)dt,
\end{equation}
\begin{equation}\label{5.48}
 \rho_t(x,\xi):=\int_{\mathbb{R}^{2d}}e^{i<\bar{y},\tilde{\eta}>}\left<\nabla_{\bar{y}},\big(\nabla_{\theta}g\big)(x,\bar{y},\xi,\xi+t\tilde{\eta})\right>\,d\bar{y}\,\dbar\tilde{\eta}.
\end{equation}
Usual integration by parts using the operators $<\bar{y}>^{-2N_1}(1-\Delta_{\tilde{\eta}})^{N_1}$ and $<\tilde{\eta}>^{-2N_2}(1-\Delta_{\bar{y}})^{N_2}$ and the use of inequality \eqref{5.44} allow us to obtain the estimation
$$
\left|\rho_t(x,\xi)\right|\,\leq\,C_f\int_{\mathbb{R}^{2d}}<\bar{y}>^{-2N_1}<\tilde{\eta}>^{-2N_2}<\xi>^m<\xi+t\tilde{\eta}>^{m'+m''-1}\,d\bar{y}\,\dbar\tilde{\eta}\,\leq
$$
$$
\leq\,C_f<\xi>^m\left[\int\limits_{|\tilde{\eta}|\leq(1/2)|\xi|}\hspace{-0.5cm}<\tilde{\eta}>^{-2N_2}<\xi+t\tilde{\eta}>^{m'+m''-1}\,\dbar\tilde{\eta}\,+\hspace{-0.5cm}\int\limits_{|\tilde{\eta}|>(1/2)|\xi|}\hspace{-0.5cm}<\tilde{\eta}>^{-2N_2}<\xi+t\tilde{\eta}>^{m'+m''-1}\,\dbar\tilde{\eta}\right]\,\leq
$$
$$
\leq\,C_f<\xi>^{m+m'+m''-1},
$$
choosing $2N_1\geq d+1,\ 2N_2\geq d+1+|m'+m''-1|$.
We proceed in a similar way with the derivatives of $\rho_t$ and deduce that $\rho_t\in S^{m+m'+m''-1}(\mathbb{R}^d)$ uniformly with respect to $t\in[0,1]$ and in conclusion $\rho\in S^{m+m'+m''-1}(\mathbb{R}^d)$.
\end{proof}

\begin{proposition}\label{P5.6}
 If $f\in S^{m,m',m''}\big(\mathbb{R}^{3d}\times\mathbb{R}^d\times\mathbb{R}^d\times\mathbb{R}^d\big)$ and the symbol $c$ is defined by \eqref{5.8}, then
$$
c(x,\xi)-f\big(x,x,\nabla_\xi U(x,\lambda(x,0,\xi)),\xi,\lambda(x,0,\xi),\lambda(x,0,\xi)\big)D(x,0,\xi)\,\in\,S^{m+m'+m''-1}(\mathbb{R}^d).
$$
\end{proposition}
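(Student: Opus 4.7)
The plan is immediate: Proposition \ref{P5.6} follows by chaining together Lemmas \ref{L5.3}, \ref{L5.4}, and \ref{L5.5}, which have just been established. The three memberships
$$c - e \in S^{m+m'+m''-1}(\mathbb{R}^d), \quad e - e_0 \in S^{m+m'+m''-1}(\mathbb{R}^d), \quad e_0 - c_0 \in S^{m+m'+m''-1}(\mathbb{R}^d)$$
add, using that the symbol space is a vector space, to give $c - c_0 \in S^{m+m'+m''-1}(\mathbb{R}^d)$, where $c_0$ denotes the function from Lemma \ref{L5.5}.

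The only remaining task is to recognize that this $c_0$ coincides with the expression appearing in the statement of the proposition. For that I would simply unwind the chain of substitutions. Recall the definition
$$\tilde f(x,\bar y,\bar z,\xi,\tilde\eta,\bar\zeta) \,=\, f\bigl(x,\,x-\bar y,\,x-\bar z,\,\xi,\,\lambda(x,\bar y,\xi+\tilde\eta),\,\lambda(x,\bar y,\xi+\tilde\eta)+\bar\zeta\bigr).$$
Specializing at $\bar y = 0$, $\tilde\eta = 0$, $\bar\zeta = 0$ and $\bar z = -(\nabla_\xi d)\bigl(x,\lambda(x,0,\xi)\bigr)$ — which is exactly the combination selected by the critical phase condition in the reduction of Lemma \ref{L5.4} — yields
$$c_0(x,\xi) = f\bigl(x,\,x,\,x+(\nabla_\xi d)(x,\lambda(x,0,\xi)),\,\xi,\,\lambda(x,0,\xi),\,\lambda(x,0,\xi)\bigr)\,D(x,0,\xi).$$
Using the identity $\nabla_\xi U(x,\eta) = x + (\nabla_\xi d)(x,\eta)$, immediate from the decomposition $U(x,\eta) = \langle x,\eta\rangle + d(x,\eta)$ of Hypothesis \ref{Hyp-U}, the third argument rewrites as $\nabla_\xi U\bigl(x,\lambda(x,0,\xi)\bigr)$ and one recovers exactly the expression displayed in the proposition.

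There is no genuine obstacle at this final stage: all the analytic content has already been absorbed into the three preceding lemmas --- the reduction to the case $B=0$ via the expansion $\overline{\Omega^B} = 1 - i\bar F\int_0^1 e^{-it\bar F}\,dt$ in Lemma \ref{L5.3}, the contraction of the $\bar z$-integration through a first-order Taylor remainder in Lemma \ref{L5.4}, and the final Taylor expansion of the profile $g$ around $(\bar y,\theta) = (0,\xi)$ in Lemma \ref{L5.5}. The only point requiring care in a clean write-up is the bookkeeping of the successive changes of variable $(y,z,\eta,\zeta)\mapsto(\bar y,\bar z,\bar\eta,\bar\zeta)$ from \eqref{5.10} and $\bar\eta \mapsto \tilde\eta$ from \eqref{5.15}, ensuring that the stationary values $\bar y = 0$, $\tilde\eta = 0$ correspond via the inverse diffeomorphism $\lambda$ to the covector $\lambda(x,0,\xi)$ appearing in the statement of the proposition.
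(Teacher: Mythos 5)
Your proof is correct and follows the paper's route: chain Lemmas~\ref{L5.3}, \ref{L5.4}, \ref{L5.5} and observe that the $c_0$ of Lemma~\ref{L5.5} is already, by definition, the expression in the Proposition. The unwinding of the substitution chain that you perform at the end is in fact the content of the proof of Lemma~\ref{L5.5} and is thus slightly redundant here, but it is carried out correctly (including the identity $\nabla_\xi U(x,\eta)=x+(\nabla_\xi d)(x,\eta)$ coming from Hypothesis~\ref{Hyp-U}), so no gap.
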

\begin{proof}
 We use Lemmas \ref{L5.3}, \ref{L5.4} and \ref{L5.5} and the following evident facts:
\begin{itemize}
 \item $\lambda(x,0,\xi)$ is the inverse of the $C^\infty$-diffeomorphism:
\begin{equation}\label{5.49}
 \mathbb{R}^d\ni\xi\mapsto\big(\nabla_x U\big)(x,\xi)\in\mathbb{R}^d;
\end{equation}
\item $D(x,0,\xi)$ is the functional determinant of the transformation 
\begin{equation}\label{5.50}
 \mathbb{R}^d\ni\xi\mapsto\lambda(x,0,\xi)\in\mathbb{R}^d,
\end{equation}
so that
\begin{equation}\label{5.51}
 D(x,0,\xi)\,=\,\left|\det\left(\nabla^2_{\xi,x}U\right)\big(x,\lambda(x,0,\xi)\big)\right|^{-1}.
\end{equation}
\end{itemize}
\end{proof}

Considering now in Proposition \ref{P5.6} the particular case $f(x,y,z,\xi,\eta,\zeta)=a(x,\zeta)\overline{b(y,\eta)}$ and using Propositions \ref{P5.1} and \ref{P5.2} we obtain the main result of this section.
\begin{theorem}\label{T5.7}
  If $a\in S^{m''}(\mathbb{R}^d)$ and $b\in S^{m'}(\mathbb{R}^d)$, then $\mathfrak{Op}^A_\Phi(a)\circ\left[\mathfrak{Op}^A_\Phi(b)\right]^*=c^A(x,D)$ with $c\in S^{m'+m''}(\mathbb{R}^d)$. Moreover
\begin{equation}\label{5.53}
 c(x,\xi)-a\big(x,\lambda(x,0,\xi)\big)\overline{b\big(x,\lambda(x,0,\xi)\big)}\left|\det\left(\nabla^2_{\xi,x}U\right)\big(x,\lambda(x,0,\xi)\big)\right|^{-1}\,\in\,S^{m'+m''-1}(\mathbb{R}^d),
\end{equation}
with $\lambda(x,0,\xi)$ the inverse of the diffeomorphism \eqref{5.49}.
\end{theorem}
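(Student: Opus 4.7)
The proof is essentially a specialisation of Proposition \ref{P5.6} to the symbol tailored to the composition at hand, combined with Proposition \ref{P5.1} to identify the resulting operator as a $\Psi$DO. The plan is therefore to verify that the specific $f$ arising from the product $\mathfrak{Op}^A_\Phi(a)\circ[\mathfrak{Op}^A_\Phi(b)]^*$ falls into the symbol class required by Proposition \ref{P5.6}, then to read off the principal part and rewrite it in the form announced in the statement.

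First I would invoke Proposition \ref{P5.1}: it already tells us that $\mathfrak{Op}^A_\Phi(a)\circ[\mathfrak{Op}^A_\Phi(b)]^*=c^A(x,D)$ for some symbol $c\in S^{m'+m''}(\mathbb{R}^d)$ obtained as the limit \eqref{5.7} of the symbols $c_\epsilon$ of \eqref{5.5}. Moreover, $c$ is given by the oscillating integral \eqref{5.8} with
\begin{equation*}
f(x,y,z,\xi,\eta,\zeta):=a(x,\zeta)\overline{b(y,\eta)}.
\end{equation*}
Since $f$ does not depend on $\xi$, on the variables $(x,y,z)$ only through bounded terms in the sense of Definition \ref{D1.1}, and satisfies the natural bounds $|\partial^\alpha_{(x,y,z)}\partial^{\beta'}_\eta\partial^{\beta''}_\zeta f|\leq C<\eta>^{m'-|\beta'|}<\zeta>^{m''-|\beta''|}$, we have $f\in S^{0,m',m''}(\mathbb{R}^{3d}\times\mathbb{R}^d\times\mathbb{R}^d\times\mathbb{R}^d)$.

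Next I would apply Proposition \ref{P5.6} with this choice of $f$ and the parameters $m=0$, $m'$, $m''$. It yields
\begin{equation*}
c(x,\xi)-f\big(x,x,\nabla_\xi U(x,\lambda(x,0,\xi)),\xi,\lambda(x,0,\xi),\lambda(x,0,\xi)\big)D(x,0,\xi)\in S^{m'+m''-1}(\mathbb{R}^d).
\end{equation*}
Substituting the explicit form of $f$, the arguments $y=x$, $\eta=\lambda(x,0,\xi)$, $\zeta=\lambda(x,0,\xi)$ give
\begin{equation*}
f(\,\cdots\,)=a\big(x,\lambda(x,0,\xi)\big)\,\overline{b\big(x,\lambda(x,0,\xi)\big)},
\end{equation*}
while the arguments $\nabla_\xi U(x,\lambda(x,0,\xi))$ and $\xi$ in the $z$ and $\xi$ slots are inert, since $f$ is independent of those two variables.

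Finally, I would invoke the explicit identification of the Jacobian factor: by construction, $\lambda(x,0,\xi)$ is the inverse of the diffeomorphism $\eta\mapsto (\nabla_x U)(x,\eta)$ (cf. \eqref{5.15} and the discussion following it, specialised to $\bar y=0$ where $\Lambda(x,0,\eta)=\eta+\nabla_x d(x,\eta)=\nabla_x U(x,\eta)$), and $D(x,0,\xi)$ is the functional determinant of $\xi\mapsto\lambda(x,0,\xi)$. Differentiating the identity $\nabla_x U(x,\lambda(x,0,\xi))=\xi$ with respect to $\xi$ yields formula \eqref{5.51}, namely $D(x,0,\xi)=|\det(\nabla^2_{\xi,x}U)(x,\lambda(x,0,\xi))|^{-1}$, which upon substitution gives exactly \eqref{5.53}. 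There is no genuine obstacle here beyond the careful bookkeeping of which slot of $f$ receives which substituted argument; all analytic work has been done in Propositions \ref{P5.2} and \ref{P5.6}.
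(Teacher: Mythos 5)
Your proposal is correct and follows exactly the same route as the paper: specialise Proposition \ref{P5.6} to $f(x,y,z,\xi,\eta,\zeta)=a(x,\zeta)\overline{b(y,\eta)}$, combine with Proposition \ref{P5.1} to identify the composition as $c^A(x,D)$, and use \eqref{5.51} to rewrite $D(x,0,\xi)$ as the inverse Jacobian determinant of $\nabla^2_{\xi,x}U$. The only thing you add beyond the paper's one-line argument is the explicit check that $f\in S^{0,m',m''}$ and the bookkeeping of which slots of $f$ receive which substituted arguments, both of which are correct.
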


\section{Composing the adjoint of a FIO with a FIO}

Suppose that $a\in S^{m''}(\mathbb{R}^d)$ and $b\in S^{m'}(\mathbb{R}^d)$, and let us define $E:=\big[\mathfrak{Op}^A_\Phi(a)\big]^*$ and $F:=\mathfrak{Op}^A_\Phi(b)$. Explicitely we have that for any $u\in\mathcal{S}(\mathbb{R}^d)$ and any point $x\in\mathbb{R}^d$
\begin{equation}\label{6.1}
 \big(Eu\big)(x)\,=\,\int_{\mathbb{R}^{2d}}e^{i\big(<x,\zeta>-U(z,\zeta)\big)}\omega^A(x,z)\overline{a(z,\zeta)}u(z)\,dz\,\dbar\zeta,
\end{equation}
\begin{equation}\label{6.2}
 \big(Fu\big)(x)\,=\,\int_{\mathbb{R}^{2d}}e^{i\big(U(x,\eta)-<y,\eta>\big)}\omega^A(x,y)b(x,\eta)u(y)\,dy\,\dbar\eta.
\end{equation}
We know that $E\circ F$ is an operator in $\mathbb{B}\big(\mathcal{S}(\mathbb{R}^d)\big)$; we shall prove in this section that it is in fact a $\Psi$DO and we shall compute a principal symbol for it.

We proceed as in the previous sections and fix a cut-off function $\chi\in C^\infty_0(\mathbb{R}^{2d})$ with $\chi(0,0)=1$ and for $\epsilon\in(0,1]$ we define $a_\epsilon(z,\zeta):=\chi(\epsilon z,\epsilon\zeta)\overline{a(z,\zeta)}$ and $b_\epsilon(y,z,\eta):=\chi(\epsilon y,\epsilon\eta)b(z,\eta)$. Then we denote by $E_\epsilon$ and $F_\epsilon$ the operators defined respectively by \eqref{6.1} with $a$ replaced by $a_\epsilon$ and by equation \eqref{6.2} with $b$ replaced by $b_\epsilon$. The distribution kernel $K_\epsilon\in\mathcal{S}(\mathbb{R}^{2d})$ of the product $E_\epsilon\circ F_\epsilon$ is given by the following integral
\begin{equation}\label{6.3}
 K_\epsilon(x,y)\,:=\,\omega^A(x,y)\int_{\mathbb{R}^{3d}}e^{i\big(<x,\zeta>-<y,\eta>+U(z,\eta)-U(z,\zeta)\big)}\Omega^B(x,z,y)a_\epsilon(z,\zeta)b_\epsilon(y,z,\eta)\,dz\,\dbar\eta\,\dbar\zeta.
\end{equation}
As in the previous section we use the equation relating the distribution kernel of a $\Psi$DO to its symbol (the special case of \eqref{2.5} when $\Phi=1$, i.e. $U(x,\eta)=<x,\eta>$) and obtain that
\begin{equation}\label{6.4}
 E_\epsilon\circ F_\epsilon\,=\,c^A_\epsilon(x,D),
\end{equation}
with $c_\epsilon\in\mathcal{S}(\mathbb{R}^d)$ given by
\begin{equation}\label{6.5}
 c_\epsilon(x,\xi)=\int_{\mathbb{R}^{4d}}e^{i\psi(x,y,z,\xi,\eta,\zeta)}\Omega^B(x,z,y)a_\epsilon(z,\zeta)b_\epsilon(y,z,\eta)\,dy\,dz\,\dbar\eta\,\dbar\zeta,
\end{equation}
\begin{equation}\label{6.6}
\psi(x,y,z,\xi,\eta,\zeta):=<x,\zeta>-<y,\eta>-<x-y,\xi>+U(z,\eta)-U(z,\zeta).
\end{equation}

Once again we notice that once we proved the existence of the following limit
\begin{equation}\label{6.7}
 \underset{\epsilon\searrow0}{\lim}c_\epsilon\,=\,c,\text{ in }S^{m'+m''}(\mathbb{R}^d),
\end{equation}
taking into account that for any $v\in\mathcal{S}(\mathbb{R}^d)$ we have $\underset{\epsilon\searrow0}{\lim}\big(E_\epsilon\circ F_\epsilon\big)v=\big(E\circ F\big)v$, we have that the following statement is true.
\begin{proposition}\label{P6.1}
 If $a\in S^{m''}(\mathbb{R}^d)$ and $b\in S^{m'}(\mathbb{R}^d)$, then $\left[\mathfrak{Op}^A_\Phi(a)\right]^*\circ\mathfrak{Op}^A_\Phi(b)=c^A(x,D)$ with $c\in S^{m'+m''}(\mathbb{R}^d)$ defined by \eqref{6.7}.
\end{proposition}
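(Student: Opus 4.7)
The plan is to follow almost verbatim the scheme used to establish Proposition \ref{P5.1}, reducing everything to an analog of Proposition \ref{P5.2} adapted to the phase \eqref{6.6}. First I would observe that, as explained in the paragraph preceding the statement, once one knows that the limit \eqref{6.7} exists in $S^{m'+m''}(\mathbb{R}^d)$, the Proposition follows: by Lemma \ref{L2.2} (applied with $\Phi=\mathsf{Id}$) and Lemma \ref{L1.18} the map $c\mapsto c^A(x,D)$ is continuous from $S^{m'+m''}(\mathbb{R}^d)$ into $\mathbb{B}\bigl(\mathcal{S}(\mathbb{R}^d);\mathcal{S}^\prime(\mathbb{R}^d)\bigr)$, hence $c^A_\epsilon(x,D)v\to c^A(x,D)v$ for every $v\in\mathcal{S}(\mathbb{R}^d)$, while Banach--Steinhaus together with the separate convergences $E_\epsilon u\to Eu$ and $F_\epsilon v\to Fv$ (proved as in the remark following \eqref{3.2}) gives $(E_\epsilon\circ F_\epsilon)v\to (E\circ F)v$ in $\mathcal{S}(\mathbb{R}^d)$. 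Identifying the two limits finishes the argument.

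It remains therefore to prove the symbol estimate, that is, to establish the analog of Proposition \ref{P5.2} for our new phase. Concretely, for $f\in S^{m,m',m''}(\mathbb{R}^{3d}\times\mathbb{R}^d\times\mathbb{R}^d\times\mathbb{R}^d)$ I would show that the oscillating integral
\begin{equation*}
c(x,\xi):=\int_{\mathbb{R}^{4d}}e^{i\psi(x,y,z,\xi,\eta,\zeta)}\Omega^B(x,z,y)f(x,y,z,\xi,\eta,\zeta)\,dy\,dz\,\dbar\eta\,\dbar\zeta
\end{equation*}
with $\psi$ as in \eqref{6.6} defines a symbol in $S^{m+m'+m''}(\mathbb{R}^d)$, continuously in $f$; the Proposition will then follow by specializing $f(x,y,z,\xi,\eta,\zeta):=\overline{a(z,\zeta)}b(z,\eta)$ (so that in fact $m=0$ and $f$ is independent of $(x,y,\xi)$).

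To carry out the estimate, I would first rewrite the phase in a form tailored to integration by parts. A short computation gives
\begin{equation*}
\psi(x,y,z,\xi,\eta,\zeta)=\langle x-z,\zeta-\xi\rangle+\langle z-y,\xi-\eta\rangle+d(z,\eta)-d(z,\zeta),
\end{equation*}
so after the change of variables $\bar{y}:=z-y$, $\bar{z}:=x-z$, $\bar\eta:=\xi-\eta$, $\bar\zeta:=\zeta-\xi$ one is reduced to an integral over $\mathbb{R}^{4d}$ whose phase is
\begin{equation*}
\bar\psi=\langle\bar z,\bar\zeta\rangle+\langle\bar y,\bar\eta\rangle+d(x-\bar z,\xi-\bar\eta)-d(x-\bar z,\xi+\bar\zeta).
\end{equation*}
To straighten the $\bar z$-dependence of the phase I would then introduce, as in the proof of Proposition \ref{P5.2}, a diffeomorphism built out of $d$: writing $d(x-\bar z,\xi-\bar\eta)-d(x-\bar z,\xi+\bar\zeta)$ as an inner product through an integrated Hessian and applying Lemma \ref{L1.5} (whose hypotheses are guaranteed by Hypothesis \ref{Hyp-U}) one can absorb the non-linear part into a new momentum variable $\tilde\eta$, picking up the Jacobian $D(x,\bar z,\cdot)\in S^0$ with two-sided bounds on $\langle\tilde\eta\rangle$. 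After this reduction the phase becomes purely quadratic $\langle\bar z,\bar\zeta\rangle+\langle\bar y,\tilde\eta\rangle$ modulo a factor $e^{i\widetilde d}$ whose derivatives obey estimates of the type \eqref{5.20}--\eqref{5.21}. I would then argue exactly as in the proof of Proposition \ref{P5.2}: split the integration region into $|\bar\zeta|\le\epsilon\langle\xi+\tilde\eta\rangle$ and its complement, integrate by parts with the operators $\langle\bar y\rangle^{-2N}(1-\Delta_{\tilde\eta})^N$, $\langle\tilde\eta\rangle^{-2N}(1-\Delta_{\bar y})^N$, $\langle\bar z\rangle^{-2N}(1-\Delta_{\bar\zeta})^N$, $\langle\bar\zeta\rangle^{-2N}(1-\Delta_{\bar z})^N$, eliminating the monomials in $(\bar y,\bar z)$ produced by differentiating $\Omega^B$ via extra integrations by parts (point (4) of Lemma \ref{L1.11}), and choosing the $N$'s large enough to get $|c(x,\xi)|\le C_f\langle\xi\rangle^{m+m'+m''}$. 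Estimates on the derivatives of $c$ follow by differentiating under the integral, using that $\nabla_\xi d\in S^0$ loses one power of $\langle\xi\rangle$ while $\nabla_x d\in S^1$ does not.

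The main obstacle will be precisely this step: the phase $\psi$ does not straighten out as cleanly as in Section 5, because here the $U$--dependence is entirely in the $z$-variable rather than in $(x,y)$, and the resulting interaction between $\bar z$ and $(\bar\eta,\bar\zeta)$ forces the Jacobian-change-of-variable device of Lemma \ref{L1.5} to be applied with $(x,\bar z)$ playing the role of parameters. Once this reduction is in place, all remaining estimates are routine integration-by-parts bookkeeping modeled on Sections 3--5.
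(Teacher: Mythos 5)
Your overall skeleton matches the paper's: once the limit \eqref{6.7} is shown to exist in $S^{m'+m''}(\mathbb{R}^d)$, the separate convergences of $E_\epsilon\circ F_\epsilon$ and of $c^A_\epsilon(x,D)$ give Proposition \ref{P6.1}, and the content is the symbol estimate for the oscillatory integral \eqref{6.8}, i.e.\ the paper's Proposition \ref{P6.2}. But your reduction of that estimate contains two genuine errors.

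The first is arithmetic. The phase \eqref{6.6} rewrites as in \eqref{6.1.9}, i.e.\ $\psi=\langle x-y,\eta-\xi\rangle+\langle x-z,\zeta-\eta\rangle+d(z,\eta)-d(z,\zeta)$; your proposed identity $\psi=\langle x-z,\zeta-\xi\rangle+\langle z-y,\xi-\eta\rangle+d(z,\eta)-d(z,\zeta)$ exceeds this by the non-zero term $2\langle z-y,\xi-\eta\rangle$, and consequently the linear part of $\bar\psi$ after your change of variables should read $\langle\bar z,\bar\zeta\rangle-\langle\bar y,\bar\eta\rangle$, with a minus sign you dropped, so each of your integration-by-parts identities is affected. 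The second error is strategic and more serious: you propose to straighten the phase by a change of momentum variable via Lemma \ref{L1.5}, as in the proof of Proposition \ref{P5.2}, but this is neither what the paper does nor the right tool here. The function $\bar d(x,\bar z,\xi,\bar\eta,\bar\zeta)=d(x-\bar z,\xi+\bar\eta)-d(x-\bar z,\xi+\bar\eta+\bar\zeta)$ of \eqref{6.12} does not depend on $\bar y$ at all, so there is nothing to absorb into a new $\tilde\eta$; and because $\bar d$ compares $d$ only at different \emph{momentum} arguments, $\nabla_{\bar\eta}\bar d$ and $\nabla_{\bar\zeta}\bar d$ are already $BC^\infty$ (they involve only $\nabla_\xi d\in S^0$), which makes the situation structurally \emph{simpler} than in the proof of Proposition \ref{P5.2}, where the $d$-difference involved position arguments and produced an unbounded $\nabla_x d\in S^1$ that genuinely had to be absorbed by the diffeomorphism of Lemma \ref{L1.5}. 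The one unbounded gradient remaining here is $\nabla_{\bar z}\bar d$, and the paper disposes of it not by a change of variables but by the pointwise identity \eqref{6.19}, which together with Hypothesis \ref{Hyp-U} gives the two-sided weight comparison \eqref{6.20}; direct integration by parts with the four operators \eqref{6.16} then yields \eqref{6.22}. A change of variables does eventually appear in this section, namely in the proof of Proposition \ref{P6.5}, but it moves the \emph{position} $\bar z$ rather than a momentum, and relies on Lemma \ref{L1.6} (bounded corrections), not Lemma \ref{L1.5}.
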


Now, the existence of the limit \eqref{6.7} follows from the next Proposition.
\begin{proposition}\label{P6.2}
 Suppose $f\in S^{m,m',m''}\big(\mathbb{R}^{3d}\times\mathbb{R}^d\times\mathbb{R}^d\times\mathbb{R}^d\big)$ and $\psi$ is the function defined in \eqref{6.6}. Then the following oscillating integral
\begin{equation}\label{6.8}
 c(x,\xi)\,:=\,\int_{\mathbb{R}^{4d}}e^{i\psi(x,y,z,\xi,\eta,\zeta)}\Omega^B(x,z,y)f(x,y,z,\xi,\eta,\zeta)\,dy\,dz\,\dbar\eta\,\dbar\zeta,
\end{equation}
defines an element in $S^{m+m'+m''}(\mathbb{R}^d)$ and the map
$$
S^{m,m',m''}\big(\mathbb{R}^{3d}\times\mathbb{R}^d\times\mathbb{R}^d\times\mathbb{R}^d\big)\ni f\mapsto c\in S^{m+m'+m''}(\mathbb{R}^d)
$$
is continuous.
\end{proposition}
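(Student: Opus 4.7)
The plan is to imitate the proofs of Propositions \ref{P3.2}, \ref{P4.2} and \ref{P5.2}. First I would reduce to the case where $f$ is compactly supported in $(y,z,\eta,\zeta)$ through the cut-off $f_\epsilon:=\chi(\epsilon y,\epsilon z,\epsilon\eta,\epsilon\zeta)f$, so that the associated $c_\epsilon$ lies in $\mathcal{S}(\mathbb{R}^{2d})$; every seminorm of $c_\epsilon$ defining the topology of $S^{m+m'+m''}(\mathbb{R}^d)$ will then be bounded by a seminorm of $f_\epsilon$ defining the topology of $S^{m,m',m''}(\mathbb{R}^{3d}\times\mathbb{R}^d\times\mathbb{R}^d\times\mathbb{R}^d)$, uniformly in $\epsilon$, so that both the existence of $\underset{\epsilon\searrow0}{\lim}c_\epsilon=c$ in $S^{m+m'+m''}(\mathbb{R}^d)$ and the continuity statement will follow. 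To set up the estimates, the phase \eqref{6.6} is first simplified by writing $U(z,\cdot)=<z,\cdot>+d(z,\cdot)$ and collecting the terms linear in $z$, which yields
$$
\psi(x,y,z,\xi,\eta,\zeta)\,=\,<x-z,\zeta-\xi>\,+\,<y-z,\xi-\eta>\,+\,d(z,\eta)\,-\,d(z,\zeta).
$$
The unit-Jacobian substitution $\bar z:=x-z$, $\bar y:=y-z$, $\bar\eta:=\xi-\eta$, $\bar\zeta:=\zeta-\xi$ then brings the integral to the form
$$
c(x,\xi)\,=\,\int_{\mathbb{R}^{4d}}e^{i\bar\psi}\,\bar\Omega^B\,\bar f\,d\bar y\,d\bar z\,\dbar\bar\eta\,\dbar\bar\zeta,\qquad\bar\psi\,=\,<\bar z,\bar\zeta>+<\bar y,\bar\eta>+\bar d,
$$
with $\bar d(x,\bar z,\xi,\bar\eta,\bar\zeta):=d(x-\bar z,\xi-\bar\eta)-d(x-\bar z,\xi+\bar\zeta)$, $\bar f(x,\bar y,\bar z,\xi,\bar\eta,\bar\zeta):=f(x,x-\bar z+\bar y,x-\bar z,\xi,\xi-\bar\eta,\xi+\bar\zeta)$, and $\bar\Omega^B(x,\bar y,\bar z):=\Omega^B(x,x-\bar z,x-\bar z+\bar y)$.

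Since $d\in S^+$, Hypothesis \ref{Hyp-U} and the definition of $S^+$ give $\nabla_\xi d\in BC^\infty(\mathbb{R}^{2d})$ and $\nabla^2_{x,\xi}d\in BC^\infty(\mathbb{R}^{2d})$, so $e^{i\bar d}$ together with all its $\bar\eta$- and $\bar\zeta$-derivatives is of class $BC^\infty$, while a Taylor expansion yields the crucial identity
$$
\nabla_{\bar z}\bar d\,=\,\left[\int_0^1\nabla^2_{x,\xi}d\big(x-\bar z,\xi-\bar\eta+s(\bar\eta+\bar\zeta)\big)\,ds\right](\bar\eta+\bar\zeta),
$$
from which $\big|\partial^\alpha_{\bar z}e^{i\bar d}\big|\leq C(<\bar\eta>+<\bar\zeta>)^{|\alpha|}$ follows by induction. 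I would then bound $|c(x,\xi)|$ by repeated integrations by parts using the four standard operators $<\bar\eta>^{-2N_1}(1-\Delta_{\bar y})^{N_1}$, $<\bar\zeta>^{-2N_2}(1-\Delta_{\bar z})^{N_2}$, $<\bar y>^{-2N_3}(1-\Delta_{\bar\eta})^{N_3}$, $<\bar z>^{-2N_4}(1-\Delta_{\bar\zeta})^{N_4}$ employed in the proof of Proposition \ref{P4.2}. As there, the $\bar y,\bar z$-derivatives of $\bar\Omega^B$ produce monomials in $\bar y,\bar z$ that are reabsorbed by further integration by parts in $\bar\eta,\bar\zeta$, and the $\bar y,\bar z$-derivatives of $\bar f$ are bounded by $C_f<\xi>^m<\xi-\bar\eta>^{m'}<\xi+\bar\zeta>^{m''}$. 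The growth $(<\bar\eta>+<\bar\zeta>)^{N_2}$ generated by the $\bar z$-derivatives of $e^{i\bar d}$ is compensated by enlarging $N_1$, and after applying a Peetre-type inequality $<\xi\pm v>^\mu\leq C<\xi>^\mu<v>^{|\mu|}$ to the factors coming from $\bar f$ and taking all $N_j$ sufficiently large one obtains
$$
|c(x,\xi)|\,\leq\,C_f\,<\xi>^{m+m'+m''},\qquad\forall(x,\xi)\in\mathbb{T}^*\mathbb{R}^d.
$$
Higher $(x,\xi)$-derivatives of $c$ are controlled by the same scheme: each $\partial_{\xi_j}$ gains an additional factor $<\xi>^{-1}$ (coming either from $\partial_\xi\bar f$ or from $\partial_\xi\bar d$, the latter being bounded but lowering the order after one more integration by parts in $\bar\eta$ or $\bar\zeta$), while $\partial_{x_j}$ leaves the order unchanged, exactly as in the proofs of Propositions \ref{P4.2} and \ref{P5.2}.

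The main obstacle, compared with Proposition \ref{P4.2}, is precisely the $\bar z$-dependence of $\bar d$, which was absent in that simpler setting and which forces us to track carefully the momentum growth produced by $\partial_{\bar z}e^{i\bar d}$. Compared with Proposition \ref{P5.2}, on the other hand, the situation is strictly simpler: the two copies of $d$ appearing in the phase $\bar\psi$ share the same spatial argument $x-\bar z$, so that the elementary Taylor bound on $\nabla_{\bar z}\bar d$ displayed above is enough, and no analogue of the global inversion from Lemma \ref{L1.5} (i.e.\ no introduction of a $\lambda$-substitution as in the treatment of Proposition \ref{P5.2}) is required to complete the estimate.
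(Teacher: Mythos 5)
Your phase rearrangement, unit-Jacobian substitution, and Taylor bound $|\nabla_{\bar z}\bar d|\leq C|\bar\eta+\bar\zeta|$ are all correct, but the step where you say the resulting growth ``is compensated by enlarging $N_1$'' does not go through. After the integration by parts $<\bar\zeta>^{-2N_2}(1-\Delta_{\bar z})^{N_2}$, the term with all $\bar z$-derivatives falling on $e^{i\bar d}$ carries a factor of size $(<\bar\eta>+<\bar\zeta>)^{2N_2}$; on the region $<\bar\zeta>\geq<\bar\eta>$ this exactly cancels the $<\bar\zeta>^{-2N_2}$ you just earned, so the $\bar z$-integration by parts yields \emph{boundedness} but \emph{no $\bar\zeta$-decay at all}. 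Enlarging $N_1$ only helps the $\bar\eta$-integral; after Peetre the amplitude still contributes $<\bar\zeta>^{|m''|}$ and the resulting $\bar\zeta$-integral $\int_{|\bar\zeta|\geq|\bar\eta|}<\bar\zeta>^{|m''|}\,\dbar\bar\zeta$ diverges, whatever the $N_j$ are.

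The ingredient you are missing is to build $\nabla_{\bar z}\bar d$ into the phase used for the integration by parts, which is precisely the paper's genuine departure from the scheme of Proposition \ref{P4.2}. The paper takes $\bar\zeta:=\zeta-\eta$ (not your $\bar\zeta:=\zeta-\xi$), so that in $\bar d=d(x-\bar z,\xi+\bar\eta)-d(x-\bar z,\xi+\bar\eta+\bar\zeta)$ the two $\xi$-arguments differ by $\bar\zeta$ alone; then $\bar\zeta+\nabla_{\bar z}\bar d=\big(1_d+\int_0^1(\nabla^2_{x,\xi}d)(x-\bar z,\xi+\bar\eta+t\bar\zeta)\,dt\big)\bar\zeta$, and the bound $\|\nabla^2_{x,\eta}d\|\leq\delta<1$ from Hypothesis \ref{Hyp-U} makes $<\bar\zeta+\nabla_{\bar z}\bar d>$ uniformly comparable to $<\bar\zeta>$. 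The operator $L_{\bar z}:=<\bar\zeta+\nabla_{\bar z}\bar d>^{-2}\big(1-i<\bar\zeta+\nabla_{\bar z}\bar d,\nabla_{\bar z}>\big)$ then satisfies $L_{\bar z}e^{i\bar\psi}=e^{i\bar\psi}$, so its formal adjoint lands only on $\bar\Omega^B\bar f$ and produces clean $<\bar\zeta>^{-N_2}$ decay with no compensating growth. With your choice $\bar\zeta=\zeta-\xi$ even this repair is not directly available, since then $\bar\zeta+\nabla_{\bar z}\bar d$ fails to be comparable to $\bar\zeta$ in the region $|\bar\eta|\gg|\bar\zeta|$; you would need an additional dyadic region splitting to recover. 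Your observation that no Lemma \ref{L1.5}-type global inversion or $\lambda$-substitution is required is correct, but the structural input replacing it is this tailored $L_{\bar z}$, not a larger $N_1$.
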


\begin{proof}
 As before, we can work with $f$ with compact support with respect to the variables $(y,z,\eta,\zeta)$. We remark that
\begin{equation}\label{6.1.9}
\psi(x,y,z,\xi,\eta,\zeta)=<x-y,\eta-\xi>+<x-z,\zeta-\eta>+d(z,\eta)-d(z,\zeta)
\end{equation}
and make the change of variables
\begin{equation}\label{6.10}
 \bar{y}:=x-y,\quad\bar{z}:=x-z,\quad\bar{\eta}:=\eta-\xi,\quad\bar{\zeta}:=\zeta-\eta
\end{equation}
and the notations:
\begin{equation}\label{6.11}
 \bar{\psi}\equiv\bar{\psi}(x,\bar{y},\bar{z},\xi,\bar{\eta},\bar{\zeta}):=<\bar{z},\bar{\zeta}>+<\bar{y},\bar{\eta}>+\bar{d}(x,\bar{z},\xi,\bar{\eta},\bar{\zeta}),
\end{equation}
\begin{equation}\label{6.12}
\bar{d}\equiv \bar{d}(x,\bar{z},\xi,\bar{\eta},\bar{\zeta}):=d(x-\bar{z},\xi+\bar{\eta})-d(x-\bar{z},\xi+\bar{\eta}+\bar{\zeta}),
\end{equation}
\begin{equation}\label{6.13}
 \bar{\Omega^B}\equiv\bar{\Omega^B}(x,\bar{y},\bar{z}):=\Omega^B(x,x-\bar{z},x-\bar{y}),
\end{equation}
\begin{equation}\label{6.14}
 \bar{f}\equiv\bar{f}(x,\bar{y},\bar{z},\xi,\bar{\eta},\bar{\zeta}):=f(x,x-\bar{y},x-\bar{z},\xi,\xi+\bar{\eta},\xi+\bar{\eta}+\bar{\zeta}),
\end{equation}
so that we can write
\begin{equation}\label{6.15}
 c(x,\xi)\,:=\,\int_{\mathbb{R}^{4d}}e^{i\bar{\psi}}\bar{\Omega^B}\bar{f}\,d\bar{y}\,d\bar{z}\,\dbar\bar{\eta}\,\dbar\bar{\zeta}.
\end{equation}
We integrate by parts using the operators:
\begin{equation}\label{6.16}
 L_{\bar{y}}:=<\bar{\eta}>^{-2}(1-i<\bar{\eta},\nabla_{\bar{y}}>),\quad L_{\bar{z}}:=\left<\bar{\zeta}+\nabla_{\bar{z}}\bar{d}\right>^{-2}\left(1-i\left<\bar{\zeta}+\nabla_{\bar{z}}\bar{d},\nabla_{\bar{z}}\right>\right),
\end{equation}
$$
L_{\bar{\eta}}:=\left<\bar{y}+\nabla_{\bar{\eta}}\bar{d}\right>^{-2}\left(1-i\left<\bar{y}+\nabla_{\bar{\eta}}\bar{d},\nabla_{\bar{\eta}}\right>\right),\quad L_{\bar{\zeta}}:=\left<\bar{z}+\nabla_{\bar{\zeta}}\bar{d}\right>^{-2}\left(1-i\left<\bar{z}+\nabla_{\bar{\zeta}}\bar{d},\nabla_{\bar{\zeta}}\right>\right).
$$
We notice that $\nabla_{\bar{\eta}}\bar{d}$ and $\nabla_{\bar{\zeta}}\bar{d}$ are functions of class $BC^\infty(\mathbb{R}^{5d})$, so that only the operator $L_{\bar{z}}$ needs some special care. But we can write
\begin{equation}\label{6.17}
 \bar{d}(x,\bar{z},\xi,\bar{\eta},\bar{\zeta})=\left<\bar{h}(x,\bar{z},\xi,\bar{\eta},\bar{\zeta}),\bar{\zeta}\right>,
\end{equation}
with
\begin{equation}\label{6.18}
 \bar{h}\equiv\bar{h}(x,\bar{z},\xi,\bar{\eta},\bar{\zeta}):=-\int_0^1\big(\nabla_\xi d\big)(x-\bar{z},\xi+\bar{\eta}+t\bar{\zeta})dt
\end{equation}
so that
\begin{equation}\label{6.19}
 \bar{\zeta}+\nabla_{\bar{z}}\bar{d}\,=\,\left[1_d\,+\,\int_0^1\big(\nabla^2_{x,\xi}d\big)(x-\bar{z},\xi+\bar{\eta}+t\bar{\zeta})dt\right]\cdot\bar{\zeta}.
\end{equation}
We use Hypothesis \ref{Hyp-U} to deduce that there exists a constant $C>0$ such that
\begin{equation}\label{6.20}
 C^{-1}<\bar{\zeta}>\,\leq\,<\bar{\zeta}+\nabla_{\bar{z}}\bar{d}>\,\leq\,C<\bar{\zeta}>,\forall(x,\bar{z},\xi,\bar{\eta},\bar{\zeta})\in\mathbb{R}^{5d}.
\end{equation}
We conclude from the above analysis that the formal adjoint of $L_{\bar{z}}$ has the form
\begin{equation}\label{6.21}
 ^tL_{\bar{z}}=<\bar{\zeta}+\nabla_{\bar{z}}\bar{d}>^{-1}\left(a+<b,\nabla_{\bar{z}}>\right),
\end{equation}
with $a$ and $b$ functions of class $BC^\infty(\mathbb{R}^{5d})$.

Let us notice that differentiating $\overline{\Omega^B}$ with respect to $\bar{y}$ and $\bar{z}$ produces monomials in these variables and we have to get rid of them by usual integration by parts with respect to $\bar{\eta}$ and $\bar{\zeta}$. Finally with a convenient choice for $N_j\in\mathbb{N}$ ($1\leq j\leq4$) we get
\begin{equation}\label{6.22}
 \left|c(x,\xi)\right|\,\leq
\end{equation}
$$
\leq\,C_f\int_{\mathbb{R}^{4d}}<\bar{\eta}>^{-2N_1}<\bar{\zeta}>^{-2N_2}<\bar{y}>^{-2N_3}<\bar{z}>^{-2N_4}<\xi>^m<\xi+\bar{\eta}>^{m'}<\xi+\bar{\eta}+\bar{\zeta}>^{m''}d\bar{y}d\bar{z}\dbar\bar{\eta}\dbar\bar{\zeta}\,\leq
$$
$$
\leq\,C_f<\xi>^{m+m'+m''},\quad\forall(x,\xi)\in\mathbb{T}^*\mathbb{R}^d.
$$
Proceeding similarly with the derivatives of $c$ we end the proof of the Proposition.
\end{proof}

In order to compute a principal part of the symbol $c$ we proceed as in the previous section.
\begin{lemma}\label{L6.3}
 If we define
$$
e(x,\xi):=\int_{\mathbb{R}^{4d}}e^{i(<\bar{z},\bar{\zeta}>+<\bar{y},\tilde{\eta}>+\tilde{d})}\bar{f}\,d\bar{y}\,d\bar{z}.
$$
we have that $c-e\,\in\,S^{m+m'+m''-1}(\mathbb{R}^d)$.
\end{lemma}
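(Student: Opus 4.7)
The plan is to mirror the scheme of Lemmas \ref{L4.3} and \ref{L5.3}. Starting from the Duhamel expansion
\[ \bar{\Omega^B}\,=\,e^{-i\bar{F}}\,=\,1\,-\,i\bar{F}\int_0^1 e^{-it\bar{F}}\,dt, \]
where by point (3) of Lemma \ref{L1.11} the magnetic flux factorises as $\bar{F}(x,\bar{y},\bar{z})=\langle\bar{C}(x,\bar{y},\bar{z})\bar{z},\bar{y}\rangle$ with $\bar{C}\in BC^\infty(\mathbb{R}^{3d})$ skew-symmetric, I immediately get the splitting $c=e+r$ with $r=-i\int_0^1 r_t\,dt$ and
\[ r_t(x,\xi)\,=\,\int_{\mathbb{R}^{4d}} e^{i\bar{\psi}}\,e^{-it\bar{F}}\,\bar{C}_{jk}(x,\bar{y},\bar{z})\,\bar{z}_k\,\bar{y}_j\,\bar{f}\,d\bar{y}\,d\bar{z}\,\dbar\bar{\eta}\,\dbar\bar{\zeta}. \]

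The decisive step is to convert the linear factors $\bar{y}_j$ and $\bar{z}_k$ into derivatives via the identities $\bar{y}_j e^{i\langle\bar{y},\bar{\eta}\rangle}=-i\partial_{\bar{\eta}_j}(e^{i\langle\bar{y},\bar{\eta}\rangle})$ and $\bar{z}_k e^{i\langle\bar{z},\bar{\zeta}\rangle}=-i\partial_{\bar{\zeta}_k}(e^{i\langle\bar{z},\bar{\zeta}\rangle})$, followed by integration by parts in $\bar{\eta}_j$ and $\bar{\zeta}_k$. The crucial observation is that neither $e^{-it\bar{F}}$ nor $\bar{C}$ depends on $(\bar{\eta},\bar{\zeta})$, so both derivatives land only on $e^{i\bar{d}}\bar{f}$, producing
\[ r_t(x,\xi)\,=\,-\int_{\mathbb{R}^{4d}} e^{i\bar{\psi}}\,e^{-it\bar{F}}\,\bar{C}_{jk}\,e^{-i\bar{d}}\,\partial_{\bar{\zeta}_k}\partial_{\bar{\eta}_j}\bigl(e^{i\bar{d}}\bar{f}\bigr)\,d\bar{y}\,d\bar{z}\,\dbar\bar{\eta}\,\dbar\bar{\zeta}. \]
Expanding the double derivative produces five terms. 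Three of them carry a derivative of $\bar{f}$ in the $\bar{\eta}$ or $\bar{\zeta}$ slot and hence yield amplitudes belonging to $S^{m,m'-1,m''}$, $S^{m,m',m''-1}$, or $S^{m,m'-1,m''-1}$, so that Proposition \ref{P6.2} delivers directly the desired bound $\langle\xi\rangle^{m+m'+m''-1}$. The term containing $(\partial_{\bar{\zeta}_k}\partial_{\bar{\eta}_j}\bar{d})\bar{f}$ is equally harmless, since $\partial_\xi d\in S^0$ implies $\partial^2_{\xi\xi}d\in S^{-1}$, which by itself supplies the missing $\langle\xi\rangle^{-1}$ factor.

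The hard part will be the \emph{pure phase} contribution $(\partial_{\bar{\zeta}_k}\bar{d})(\partial_{\bar{\eta}_j}\bar{d})\bar{f}$, in which no derivative has reached $\bar{f}$ and both coefficients are a priori only bounded. The key remark is that $\bar{d}$ vanishes at $\bar{\zeta}=0$, so the mean value theorem gives
\[ \partial_{\bar{\eta}_j}\bar{d}\,=\,-\sum_l\int_0^1\bigl(\partial^2_{\xi_j\xi_l}d\bigr)\bigl(x-\bar{z},\xi+\bar{\eta}+s\bar{\zeta}\bigr)\,ds\,\bar{\zeta}_l, \]
which simultaneously provides a factor $\bar{\zeta}$ and the $\langle\xi+\bar{\eta}+s\bar{\zeta}\rangle^{-1}$ decay inherited from $\partial^2_{\xi\xi}d\in S^{-1}$. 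The extra factor $\bar{\zeta}$ is absorbed into one additional integration by parts in $\bar{z}$ (equivalently, into one further power of $\langle\bar{\zeta}+\nabla_{\bar{z}}\bar{d}\rangle^{-2}$ in the operator $L_{\bar{z}}$ from the proof of Proposition \ref{P6.2}), while the $\langle\xi\rangle^{-1}$ decay is extracted using the Peetre-type inequalities exactly as in that proof; the final bound is again $\langle\xi\rangle^{m+m'+m''-1}$. After verifying that the same bookkeeping controls each derivative $\partial^\alpha_x\partial^\beta_\xi r_t$ uniformly in $t\in[0,1]$, a piecewise application of Proposition \ref{P6.2} yields $r_t\in S^{m+m'+m''-1}(\mathbb{R}^d)$ uniformly, and integration in $t$ concludes that $c-e\in S^{m+m'+m''-1}(\mathbb{R}^d)$.
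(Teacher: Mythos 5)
Your proposal is correct and follows essentially the same route as the paper's own proof. The integration by parts converting $\bar{y}_j\bar{z}_k$ into $\partial_{\bar{\eta}_j}\partial_{\bar{\zeta}_k}$ acting on $e^{i\bar{d}}\bar{f}$ reproduces the paper's formula (6.29) with the operator $\langle\bar{C}\nabla_{\bar{\zeta}},\nabla_{\bar{\eta}}\rangle(e^{i\bar{d}}\bar{f})$; the classification of the resulting terms (derivatives of $\bar{f}$ versus two derivatives of $\bar{d}$ versus the pure-phase product $(\partial_{\bar{\zeta}_k}\bar{d})(\partial_{\bar{\eta}_j}\bar{d})\bar{f}$) matches the paper's two-class dichotomy; and your mean-value-theorem expansion of $\partial_{\bar{\eta}_j}\bar{d}$ — extracting both the factor $\bar{\zeta}_l$ and the $\langle\xi+\bar{\eta}+s\bar{\zeta}\rangle^{-1}$ decay from $\nabla^2_{\xi\xi}d\in S^{-1}$ — is precisely the paper's key identity for the hard term.
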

\begin{proof}
 As in the proof of Lemma \ref{L4.3} we start from the equality
\begin{equation}\label{6.25}
 \bar{\Omega^B}=e^{-i\bar{F}}=1-i\bar{F}\int_0^1e^{-it\bar{F}}dt,\quad\bar{F}(x,\bar{y},\bar{z})=\left<\bar{C}(x,\bar{y},\bar{z})\bar{z},\bar{y}\right>
\end{equation}
with $\bar{C}\in BC^\infty(\mathbb{R}^{3d})$ as in \eqref{3.139}. Using this decomposition and integrating by parts to get rid of the factors $\bar{y}$ and $\bar{z}$ we get
\begin{equation}\label{6.27}
c=e+i\int_0^1r_t(x,\xi)dt,
\end{equation}
\begin{equation}\label{6.29}
 r_t(x,\xi)=\int_{\mathbb{R}^{4d}}e^{i(<\bar{z},\bar{\zeta}>+<\bar{y},\bar{\eta}>)}e^{-it\bar{F}}\left<\bar{C}\nabla_{\bar{\zeta}},\nabla_{\bar{\eta}}\right>\big(e^{i\bar{d}}\bar{f}\big)\,d\bar{y}\,d\bar{z}\,\dbar\bar{\eta}\,\dbar\bar{\zeta}.
\end{equation}
We develop $\left<\bar{C}\nabla_{\bar{\zeta}},\nabla_{\bar{\eta}}\right>\big(e^{i\bar{d}}\bar{f}\big)$ and notice that we can divide the terms into two classes:
\begin{itemize}
 \item terms containing at least one derivative of $\bar{f}$ or exactly two derivatives of $\bar{d}$ (with respect to $\bar{\eta}$ and $\bar{\zeta}$); the corresponding integrals are estimated as in the proof of Proposition \ref{P6.2} and give symbols of class $S^{m+m'+m''-1}(\mathbb{R}^d)$ uniformly with respect to $t\in[0,1]$;
\item the term $\big(e^{i\bar{d}}\bar{f}\big)\left<\bar{C}\nabla_{\bar{\zeta}}\bar{d},\nabla_{\bar{\eta}}\bar{d}\right>$; in order to estimate the corresponding integral we write the equality:
$$
\left<\bar{C}\nabla_{\bar{\zeta}}\bar{d},\nabla_{\bar{\eta}}\bar{d}\right>=-\left<\bar{C}\nabla_{\xi}d(x-\bar{z},\xi+\bar{\eta}+\bar{\zeta}),\nabla_{\xi}d(x-\bar{z},\xi+\bar{\eta})-\nabla_{\xi}d(x-\bar{z},\xi+\bar{\eta}+\bar{\zeta})\right>=
$$
$$
=\int_0^1\left<\bar{C}\nabla_{\xi}d(x-\bar{z},\xi+\bar{\eta}+\bar{\zeta}),\big(\nabla^2_{\xi,\xi}d\big)(x-\bar{z},\xi+\bar{\eta}+s\bar{\zeta})\cdot\bar{\zeta}\right>\,ds
$$
and use it to estimate the corresponding integral by
$$
C_f\underset{0\leq s\leq1}{\sup}\int_{\mathbb{R}^{4d}}<\bar{\eta}>^{-2N_1}<\bar{\zeta}>^{-2N_2}<\bar{y}>^{-2N_3}<\bar{z}>^{-2N_4}\times
$$
$$
\times<\xi>^m<\xi+\bar{\eta}>^{m'}<\xi+\bar{\eta}+\bar{\zeta}>^{m''}<\bar{\zeta}><\xi+\bar{\eta}+s\bar{\zeta}>^{-1}d\bar{y}d\bar{z}\dbar\bar{\eta}\dbar\bar{\zeta}\,\leq
$$
$$
\leq\,C_f<\xi>^{m+m'+m''-1},\quad\forall(x,\xi)\in\mathbb{T}^*\mathbb{R}^d,\ \text{uniformly in}\ t\in[0,1],
$$
after we choose $2N_3\geq d+1$, $2N_4\geq d+1$, $2N_1\geq d+2+|m'|+|m''|$ and $2N_2\geq d+3+|m''|$. After similar estimations for the derivatives of $r_t$ we deduce that $r_t\in S^{m+m'+m''-1}(\mathbb{R}^d)$ uniformly with respect to $t\in[0,1]$ and conclude that $r=\int_0^1r_tdt$ is a symbol in $S^{m+m'+m''-1}(\mathbb{R}^d)$.
\end{itemize}
\end{proof}

\begin{lemma}\label{L6.4}
 If we define
$$
e_0(x,\xi):=\int_{\mathbb{R}^{2d}}e^{i(<\bar{z},\bar{\zeta}>+\bar{d}(x,\bar{z},\xi,0,\bar{\zeta}))}\bar{f}(x,0,\bar{z},\xi,0,\bar{\zeta})\,d\bar{z}\,\dbar\bar{\zeta},
$$
then $e-e_0\in S^{m+m'+m''-1}(\mathbb{R}^d)$.
\end{lemma}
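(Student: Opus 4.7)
The plan is to perform a single Taylor expansion of $\bar f$ in the variable $\bar y$ around $\bar y = 0$, writing
$$
\bar f(x, \bar y, \bar z, \xi, \bar\eta, \bar\zeta) = \bar f(x, 0, \bar z, \xi, \bar\eta, \bar\zeta) + \langle \bar y, R_1(x, \bar y, \bar z, \xi, \bar\eta, \bar\zeta)\rangle,
$$
where $R_1 := \int_0^1 (\nabla_{\bar y}\bar f)(x, s\bar y, \bar z, \xi, \bar\eta, \bar\zeta)\,ds$, and substituting into the integral defining $e$. Since $\bar d$ is independent of $\bar y$, the contribution of the first term is an oscillatory integral whose integrand is $\bar y$-independent; the elementary identity $\int e^{i\langle\bar y,\bar\eta\rangle}h(\bar\eta)\,d\bar y\,\dbar\bar\eta = h(0)$ (valid on reasonable symbol classes via Fourier inversion) forces both $\bar y = 0$ in $\bar f$ and $\bar\eta = 0$ in $\bar d$ and $\bar f$ simultaneously, thereby reproducing $e_0$.

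For the remainder term containing the factor $\bar y$, I would integrate by parts using $\bar y e^{i\langle\bar y,\bar\eta\rangle} = -i\nabla_{\bar\eta}e^{i\langle\bar y,\bar\eta\rangle}$. Among the remaining factors only $e^{i\bar d}$ and $R_1$ depend on $\bar\eta$, so this yields
$$
e - e_0 \;=\; i\int_{\mathbb{R}^{4d}} e^{i\bar\psi}\Big[\nabla_{\bar\eta}\cdot R_1 \,+\, i\langle\nabla_{\bar\eta}\bar d,\, R_1\rangle\Big]\,d\bar y\,d\bar z\,\dbar\bar\eta\,\dbar\bar\zeta.
$$
The first term is straightforward: $\nabla_{\bar\eta}\bar f = (\nabla_\eta f + \nabla_\zeta f)|_{(\cdots)}$ lowers the total symbol order by one, so the integrand lies in $S^{m,m'-1,m''} + S^{m,m',m''-1}$ and Proposition~\ref{P6.2} delivers a contribution of class $S^{m+m'+m''-1}(\mathbb{R}^d)$.

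The main obstacle is the term $\langle\nabla_{\bar\eta}\bar d,\,R_1\rangle$. Using the explicit form \eqref{6.12} of $\bar d$ one computes
$$
\nabla_{\bar\eta}\bar d \;=\; -\left[\int_0^1 (\nabla^2_{\xi,\xi}d)(x-\bar z,\xi+\bar\eta+t\bar\zeta)\,dt\right]\bar\zeta,
$$
and since $\nabla_\xi d \in S^0$ by Hypothesis~\ref{Hyp-U}, the matrix in brackets lies in $S^{-1}$, providing the crucial one-unit order gain. The accompanying factor $\bar\zeta$ is then absorbed by a further integration by parts via $\bar\zeta e^{i\langle\bar z,\bar\zeta\rangle} = -i\nabla_{\bar z}e^{i\langle\bar z,\bar\zeta\rangle}$; the resulting $\nabla_{\bar z}$ derivative does not raise the symbol order, and the whole expression is once again an oscillatory integral of the type treated in Proposition~\ref{P6.2} with amplitude of total order $m+m'+m''-1$. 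The principal technical difficulty lies in this careful bookkeeping of orders through the successive integrations by parts, in particular verifying that the IBP-generated derivatives of $\bar d$ and of $R_1$ (including those produced when differentiating $\Omega^B$-type weights on the $\bar y$, $\bar z$ side) never force us out of the symbol class $S^{m,m',m''-1}$.
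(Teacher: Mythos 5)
Your overall strategy is the paper's: Taylor expand $\bar f$ in $\bar y$ about $\bar y=0$, recognize the zeroth-order term as $e_0$ (Fourier inversion in $(\bar y,\bar\eta)$, using that $\bar d$ and $\bar f(x,0,\cdot)$ are $\bar y$-independent), and integrate by parts so that the $\bar y$-factor becomes a $\nabla_{\bar\eta}$ falling on $e^{i\bar d}R_1$. The isolation of the dangerous term $\left<\nabla_{\bar\eta}\bar d, R_1\right>$ and the formula expressing $\nabla_{\bar\eta}\bar d$ as an $S^{-1}$ matrix times $\bar\zeta$ are exactly those of the paper.

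The last step, however, does not work as written. The integration by parts $\bar\zeta e^{i<\bar z,\bar\zeta>}=-i\nabla_{\bar z}e^{i<\bar z,\bar\zeta>}$ uses only the $<\bar z,\bar\zeta>$ piece of the phase, so $\nabla_{\bar z}$ must also fall on $e^{i\bar d}$ sitting inside $e^{i\bar\psi}$, and
$$
\nabla_{\bar z}\bar d \;=\; \left[\int_0^1\big(\nabla^2_{x,\xi}d\big)(x-\bar z,\xi+\bar\eta+s\bar\zeta)\,ds\right]\cdot\bar\zeta
$$
is again of size $O(<\bar\zeta>)$, since $\nabla^2_{x,\xi}d$ is merely bounded, not decaying. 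You regenerate the very factor $\bar\zeta$ you were trying to eliminate, so the assertion that ``the resulting $\nabla_{\bar z}$ derivative does not raise the symbol order'' fails on precisely the term you need it for. The paper performs no such integration by parts: following the pattern of Lemma~\ref{L6.3}, it keeps the $\bar\zeta$ factor, runs the standard integrations by parts to build the weights $<\bar\eta>^{-2N_1}<\bar\zeta>^{-2N_2}<\bar y>^{-2N_3}<\bar z>^{-2N_4}$, and then estimates directly, using Peetre's inequality $<\xi+\bar\eta+s\bar\zeta>^{-1}\leq C<\xi>^{-1}<\bar\eta><\bar\zeta>$ to convert the $S^{-1}$ decay of the $\nabla^2_{\xi,\xi}d$ matrix into a genuine $<\xi>^{-1}$; the leftover powers of $<\bar\eta>$, $<\bar\zeta>$ (including the bare $<\bar\zeta>$ from $\nabla_{\bar\eta}\bar d$) are absorbed by choosing $N_1, N_2$ large enough. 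Replace your $\bar z$-integration by parts with this direct estimate and the rest of your argument closes.
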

\begin{proof}
 Using the equality:
$$
\bar{f}(x,\bar{y},\bar{z},\xi,\bar{\eta},\bar{\zeta})=\bar{f}(x,0,\bar{z},\xi,\bar{\eta},\bar{\zeta})+\left<\bar{y},\int_0^1\big(\nabla_{\bar{y}}\bar{f}\big)(x,t\bar{y},\bar{z},\xi,\bar{\eta},\bar{\zeta})dt\right>,
$$
we can write
\begin{equation}\label{6.32}
 e=e_0+i\int_0^1\rho_tdt
\end{equation}
with
\begin{equation}\label{6.34}
 \rho_t(x,\xi):=\int_{\mathbb{R}^{4d}}e^{i(<\bar{z},\bar{\zeta}>+<\bar{y},\bar{\eta}>)}\left<\nabla_{\bar{\eta}},e^{i\bar{d}}\big(\nabla_{\bar{y}}\bar{f}\big)(x,t\bar{y},\bar{z},\xi,\bar{\eta},\bar{\zeta})\right>d\bar{y}\,d\bar{z}\,\dbar\bar{\eta}\,\dbar\bar{\zeta}.
\end{equation}
As in the proof of Lemma \ref{L6.3} there is only one special term in \eqref{6.34} (the other having usual estimations), namely the one containing the factor $\nabla_{\bar{\eta}}\big(e^{i\bar{d}}\big)=ie^{i\bar{d}}\nabla_{\bar{\eta}}\bar{d}$. In order to estimate it we notice that
$$
\big(\nabla_{\bar{\eta}}\bar{d}\big)(x,\bar{z},\xi,\bar{\eta},\bar{\zeta})=\big(\nabla_{\xi} d\big)(x-\bar{z},\xi+\bar{\eta})-\big(\nabla_{\xi}d\big)(x-\bar{z},\xi+\bar{\eta}+\bar{\zeta})=-\left[\int_0^1\big(\nabla^2_{\xi,\xi}d\big)(x-\bar{z},\xi+\bar{\eta}+s\bar{\zeta})ds\right]\cdot\bar{\zeta},
$$
and the proof finishes as in the case of Lemma \ref{L6.3}.
\end{proof}

\begin{proposition}\label{P6.5}
 If $f\in S^{m,m',m''}\big(\mathbb{R}^{3d}\times\mathbb{R}^d\times\mathbb{R}^d\times\mathbb{R}^d\big)$ and the symbol $c$ is defined by \eqref{6.8}, then
$$
c(x,\xi)-f\big(x,x,V(x,\xi),\xi,\xi,\xi\big)\left|\det\left(\nabla^2_{x,\xi}U\right)\big(V(x,\xi),\xi\big)\right|^{-1}\,\in\,S^{m+m'+m''-1}(\mathbb{R}^d)
$$
where $y:=V(x,\xi)$ is the unique solution of the equation $\big(\nabla_\xi U\big)(y,\xi)=x$ and we have that $V(x,\xi)-x\in S^0(\mathbb{R}^d)$.
\end{proposition}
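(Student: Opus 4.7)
By Lemmas~\ref{L6.3} and~\ref{L6.4}, the symbol $c$ is congruent modulo $S^{m+m'+m''-1}(\mathbb{R}^d)$ to
$$
e_0(x,\xi)=\int_{\mathbb{R}^{2d}}e^{i(<\bar z,\bar\zeta>+\bar d(x,\bar z,\xi,0,\bar\zeta))}\bar f(x,0,\bar z,\xi,0,\bar\zeta)\,d\bar z\,\dbar\bar\zeta,
$$
where $\bar d(x,\bar z,\xi,0,\bar\zeta)=d(x-\bar z,\xi)-d(x-\bar z,\xi+\bar\zeta)=-<h(x,\bar z,\xi,\bar\zeta),\bar\zeta>$ with $h:=\int_0^1(\nabla_\xi d)(x-\bar z,\xi+t\bar\zeta)\,dt$. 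The phase thus reduces to $<\bar z-h,\bar\zeta>$, so the plan is to perform the change of variables $\tilde z:=\bar z-h$ to obtain the linear phase $<\tilde z,\bar\zeta>$, and then extract the leading contribution by an asymptotic expansion in the spirit of Lemmas~\ref{L4.4} and~\ref{L5.5}.

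The Jacobian of $\bar z\mapsto\tilde z$ equals $1_d+\int_0^1(\nabla^2_{x,\xi}d)(x-\bar z,\xi+t\bar\zeta)\,dt$, whose norm is bounded below by $1-\delta>0$ thanks to Hypothesis~\ref{Hyp-U}, so Lemma~\ref{L1.6} produces a $C^\infty$-diffeomorphism $\bar z=G(x,\tilde z,\xi,\bar\zeta)$ with $G-\tilde z$ and all its derivatives uniformly bounded on $\mathbb{R}^{5d}$. Justifying the substitution at the level of the oscillatory integral by the usual cut-off approximation and setting $\tilde g(x,\tilde z,\xi,\bar\zeta):=\bar f(x,0,G,\xi,0,\bar\zeta)\,|\det\nabla_{\tilde z}G|$ yields
$$
e_0(x,\xi)=\int_{\mathbb{R}^{2d}}e^{i<\tilde z,\bar\zeta>}\tilde g(x,\tilde z,\xi,\bar\zeta)\,d\tilde z\,\dbar\bar\zeta.
$$
A first order Taylor expansion of $\tilde g$ about $(\tilde z,\bar\zeta)=(0,0)$ splits $e_0=I_0+R$; by Fourier inversion on $\mathbb{R}^{2d}$ one obtains $I_0=\tilde g(x,0,\xi,0)$. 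At the critical point, $\bar z_0:=G(x,0,\xi,0)$ solves $\bar z_0=(\nabla_\xi d)(x-\bar z_0,\xi)$, which is equivalent to $(\nabla_\xi U)(x-\bar z_0,\xi)=x$, so $\bar z_0=x-V(x,\xi)$; the Jacobian evaluates to $|\det(1_d+(\nabla^2_{x,\xi}d)(V(x,\xi),\xi))|^{-1}=|\det(\nabla^2_{x,\xi}U)(V(x,\xi),\xi)|^{-1}$, and $\bar f(x,0,x-V(x,\xi),\xi,0,0)=f(x,x,V(x,\xi),\xi,\xi,\xi)$, producing the claimed principal term. The assertion $V(x,\xi)-x\in S^0(\mathbb{R}^d)$ follows from $\nabla_\xi d\in S^0$ via a standard implicit function argument modeled on Lemma~\ref{L1.5}.

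For the remainder $R$, each term carries a factor $\tilde z_j$ or $\bar\zeta_j$ which, via $\tilde z_je^{i<\tilde z,\bar\zeta>}=-i\partial_{\bar\zeta_j}e^{i<\tilde z,\bar\zeta>}$ (and its dual), is transferred as a derivative onto $\tilde g(\cdot,s\tilde z,\cdot,s\bar\zeta)$, producing after integration in $s\in[0,1]$ an oscillatory integral whose symbol should be one order lower in the momentum scale. The main obstacle is that differentiating $\tilde g$ in $\bar\zeta$ does not automatically reduce the symbolic order: the chain rule produces not only the benign term $\partial_\zeta f$ (which does lower $m''$ by one) but also $(\partial_{\bar z}\bar f)\cdot\partial_{\bar\zeta}G$, in which $\partial_{\bar\zeta}G$ is only a priori bounded. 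A careful analysis analogous to that of Lemmas~\ref{L4.4} and~\ref{L5.5}, exploiting that the effective symbol size of $\tilde g$ is $<\xi>^{m+m'}<\xi+\bar\zeta>^{m''}$ and that $\partial_{\bar\zeta}h=\int_0^1 t(\nabla^2_{\xi,\xi}d)(x-\bar z,\xi+t\bar\zeta)\,dt$ in fact behaves like a symbol of order $-1$ in the combined momentum scale, combined with oscillatory integral estimates as in the proof of Proposition~\ref{P6.2} carried out uniformly in $s\in[0,1]$, yields $R\in S^{m+m'+m''-1}(\mathbb{R}^d)$ and hence the proposition.
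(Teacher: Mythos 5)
Your proposal follows the same route as the paper: reduce to $e_0$ via Lemmas~\ref{L6.3} and~\ref{L6.4}, linearize the phase by the diffeomorphic substitution $\bar z\mapsto\tilde z=\bar z-h$ with inverse $G$ (Lemma~\ref{L1.6}), extract the leading term by Fourier inversion in $(\tilde z,\bar\zeta)$, and evaluate at the critical point $\bar z_0=x-V(x,\xi)$; the identification of the Jacobian factor with $\left|\det\nabla^2_{x,\xi}U\big(V(x,\xi),\xi\big)\right|^{-1}$ and of $\bar f(x,0,x-V,\xi,0,0)$ with $f(x,x,V,\xi,\xi,\xi)$ are both correct.

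Two points need tightening. You Taylor-expand $\tilde g$ jointly in $(\tilde z,\bar\zeta)$ about $(0,0)$; the paper expands only in $\bar\zeta$, which is cleaner because Fourier inversion in $\tilde z$ alone then gives $q(x,0,\xi,0)$ directly and you do not have to argue separately that the linear terms integrate to zero (your ``Fourier inversion on $\mathbb{R}^{2d}$'' step assumes this silently). More importantly, for the remainder you explicitly invoke only that $G-\tilde z$ and its derivatives are uniformly bounded — i.e.\ parts (1) and (2) of Lemma~\ref{L1.6} — but uniform boundedness alone does not make $\nabla_{\bar\zeta}\tilde g$ one symbolic order lower: the chain-rule factor $\partial_{\bar\zeta}G$ is merely bounded in general. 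What the paper actually uses is Lemma~\ref{L1.6}(3): on the region $\{|\bar\zeta|\leq\epsilon|\xi|\}$ the derivatives of $G-\tilde z$ and of the Jacobian satisfy $|\partial^\alpha_{(x,\tilde z)}\partial^\beta_{(\xi,\bar\zeta)}(\cdot)|\leq C<\xi>^{-|\beta|}$ (the estimate \eqref{6.37}), which yields the improved bound \eqref{6.42} for $q:=\tilde g$ on that region, while off it the rapid decay in $\bar\zeta$ takes over. Your remark that $\partial_{\bar\zeta}h$ ``behaves like a symbol of order $-1$ in the combined momentum scale'' gestures at the first half of this, but the precise mechanism is the two-regime split $\{|\bar\zeta|\leq\epsilon|\xi|\}$ versus $\{|\bar\zeta|>\epsilon|\xi|\}$ driven by Lemma~\ref{L1.6}(3), which you should cite and use explicitly.
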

\begin{proof}
In the integral defining $e_0$ in Lemma \ref{L6.4} we consider the change of variables:
\begin{equation}\label{6.36}
 \tilde{z}:=\bar{z}+\bar{h},\quad\bar{h}\equiv\bar{h}(x,\bar{z},\xi,\bar{\zeta}):=-\int_0^1\big(\nabla_\xi d\big)(x-\bar{z},\xi+t\bar{\zeta})dt.
\end{equation}
Lemma \ref{L1.6} implies that for any $(x,\xi,\bar{\zeta})\in\mathbb{R}^{3d}$ the map
$$
\Rd\ni\bar{z}\mapsto\bar{z}+\bar{h}\in\Rd
$$
is a $C^\infty$-diffeomorphism having the inverse that we denote by $\tilde{g}\equiv\tilde{g}(x,\tilde{z},\xi,\bar{\zeta})$. We denote by $\tilde{D}\equiv\tilde{D}(x,\tilde{z},\xi,\bar{\zeta})$ the functional determinant of the map $\Rd\ni\tilde{z}\mapsto\tilde{g}(x,\tilde{z},\xi,\bar{\zeta})\in\Rd$. Then $\tilde{g}-\tilde{z}$ and $\tilde{D}$ are functions of class $BC^\infty(\mathbb{R}^{4d})$ and on the domain $|\bar{\zeta}|\leq\epsilon|\xi|$, for $\epsilon>0$ small enough we have satisfied the following estimations:
\begin{equation}\label{6.37}
 \left|\partial^\alpha_{(x,\tilde{z})}\partial^\beta_{(\xi,\bar{\zeta})}(\tilde{g}-\tilde{z})\right|+\left|\partial^\alpha_{(x,\tilde{z})}\partial^\beta_{(\xi,\bar{\zeta})}\tilde{D}\right|\,\leq\,C<\xi>^{-|\beta|},\quad\forall(\alpha,\beta)\in[\mathbb{N}^{2d}]^2.
\end{equation}
Let us denote by
\begin{equation}\label{6.38}
 \tilde{f}\equiv\tilde{f}(x,\tilde{z},\xi,\bar{\zeta}):=\bar{f}\big(x,0,\tilde{g}(x,\tilde{z},\xi,\bar{\zeta}),\xi,0,\bar{\zeta}\big)=f\big(x,x,x-\tilde{g}(x,\tilde{z},\xi,\bar{\zeta}),\xi,\xi,\xi+\bar{\zeta}\big).
\end{equation}
Using the change of variables \eqref{6.36} we obtain that
\begin{equation}\label{6.39}
 e_0(x,\xi)=\int_{\mathbb{R}^{2d}}e^{i<\tilde{z},\bar{\zeta}>}\tilde{f}(x,\tilde{z},\xi,\bar{\zeta})\tilde{D}(x,\tilde{z},\xi,\bar{\zeta})\,d\tilde{z}\,\dbar\bar{\zeta}.
\end{equation}
We notice that the function
\begin{equation}\label{6.40}
 q\equiv q(x,\tilde{z},\xi,\bar{\zeta}):=\tilde{f}(x,\tilde{z},\xi,\bar{\zeta})\tilde{D}(x,\tilde{z},\xi,\bar{\zeta})
\end{equation}
verifies for any $(\alpha,\beta)\in[\mathbb{N}^{2d}]^2$ the estimations:
\begin{equation}\label{6.41}
 \left|\partial^\alpha_{(x,\tilde{z})}\partial^\beta_{(\xi,\bar{\zeta})}q\right|\,\leq\,C_f<\xi>^{m+m'}<\xi+\bar{\zeta}>^{m''}\ \text{ on }\mathbb{R}^{4d},
\end{equation}
\begin{equation}\label{6.42}
 \left|\partial^\alpha_{(x,\tilde{z})}\partial^\beta_{(\xi,\bar{\zeta})}q\right|\,\leq\,C_f<\xi>^{m+m'+m''-|\beta|},\text{ on }\{|\bar{\zeta}|\leq\epsilon|\xi|\}.
\end{equation}
We write the integral in \eqref{6.39} as a sum of two terms defined using the identity
\begin{equation}\label{6.43}
 q(x,\tilde{z},\xi,\bar{\zeta})=q(x,\tilde{z},\xi,0)+\left<\bar{\zeta},\int_0^1\big(\nabla_{\bar{\zeta}}q\big)(x,\tilde{z},\xi,t\bar{\zeta})dt\right>.
\end{equation}
We obtain then the decomposition:
\begin{equation}\label{6.44}
 e_0(x,\xi)=q(x,0,\xi,0)+\rho(x,\xi),\quad\rho(x,\xi)=i\int_0^1\rho_t(x,\xi)dt,
\end{equation}
with
\begin{equation}\label{6.45}
 \rho_t(x,\xi):=\int_{\mathbb{R}^{2d}}e^{i<\tilde{z},\bar{\zeta}>}\left<\nabla_{\tilde{z}},\big(\nabla_{\bar{\zeta}}q\big)(x,\tilde{z},\xi,t\bar{\zeta})\right>\,d\tilde{z}\,\dbar\bar{\zeta}.
\end{equation}
We integrate by parts using the operators $<\tilde{z}>^{-2N_1}(1-\Delta_{\bar{\zeta}})^{N_1}$ and $<\bar{\zeta}>^{-2N_2}(1-\Delta_{\tilde{z}})^{N_2}$ and choosing then $2N_1\geq d+1$ and $2N_2\geq d+2+|m''|$ and using also \eqref{6.41} and \eqref{6.42} we obtain the estimation
$$
\left|\rho_t(x,\xi)\right|\,\leq\,C_f\left(\int\limits_{|\bar{\zeta}|\leq\epsilon|\xi|}<\bar{\zeta}>^{-2N_2}<\xi>^{m+m'+m''-1}\dbar\bar{\zeta}\,+\,\int\limits_{|\bar{\zeta}|>\epsilon|\xi|}<\bar{\zeta}>^{-2N_2}<\xi>^{m+m'}<\xi+t\bar{\zeta}>^{m''}\dbar\bar{\zeta}\right)\,\leq
$$
$$
\leq\,C_f<\xi>^{m+m'+m''-1}.
$$
After similar estimations for the derivatives of $\rho_t$ we deduce that $\rho_t\in S^{m+m'+m''-1}(\mathbb{R}^d)$ uniformly with respect to $t\in[0,1]$ and in conclusion $\rho\in S^{m+m'+m''-1}(\mathbb{R}^d)$.

Let us notice that
\begin{equation}\label{6.47}
 q(x,0,\xi,0)=f\big(x,x,x-\tilde{g}(x,0,\xi,0),\xi,\xi,\xi\big)\tilde{D}(x,0,\xi,0).
\end{equation}
But
$$
\bar{z}=\tilde{g}(x,\tilde{z},\xi,\bar{\zeta})\quad\Longleftrightarrow\quad\tilde{z}=\bar{z}+\bar{h}(x,\bar{z},\xi,\bar{\zeta}),
$$
\begin{equation}\label{6.48}
 \tilde{D}(x,\tilde{z},\xi,\bar{\zeta})=\left|\det\left(\nabla_{\tilde{z}}\tilde{g}(x,\tilde{z},\xi,\bar{\zeta})\right)\right|=\left|\det\left(1_d\,+\,\big(\nabla_{\bar{z}}\bar{h}\big)\big(x,\tilde{g}(x,\tilde{z},\xi,\bar{\zeta}),\xi,\bar{\zeta}\big)\right)\right|^{-1}
\end{equation}
and we obtain that
\begin{equation}\label{6.49}
 \bar{z}=\tilde{g}(x,0,\xi,0)\ \Leftrightarrow\ 0=\bar{z}+\bar{h}(x,\bar{z},\xi,0)\ \Leftrightarrow\ \bar{z}=\big(\nabla_\xi d\big)(x-\bar{z},\xi)\ \Leftrightarrow\ x=\big(\nabla_\xi U\big)(x-\bar{z},\xi)\ \Leftrightarrow\ 
\end{equation}
$$
\ \Leftrightarrow\ \bar{z}=x-V(x,\xi)\ \Leftrightarrow\ \tilde{g}(x,0,\xi,0)=x-V(x,\xi).
$$
Using this last equality in \eqref{6.37} we deduce that $V(x,\xi)-x\in S^0(\mathbb{R}^d)$ and from \eqref{6.48} we obtain that
\begin{equation}\label{6.50}
 \tilde{D}(x,0,\xi,0)=\left|\det\left(\nabla^2_{x,\xi}U\right)\big(V(x,\xi),\xi\big)\right|^{-1}\in S^0(\mathbb{R}^d).
\end{equation}
The conclusion of the Proposition follows now from \eqref{6.44}, \eqref{6.47}, \eqref{6.49} and \eqref{6.50}.
\end{proof}

Considering now the particular case $f(x,y,z,\xi,\eta,\zeta)=\overline{a(z,\zeta)}b(z,\eta)$ and using Propositions \ref{P6.2} and \ref{P6.5} we obtain the main result of this section.
\begin{theorem}\label{T6.6}
  If $a\in S^{m''}(\mathbb{R}^d)$ and $b\in S^{m'}(\mathbb{R}^d)$, then $\left[\mathfrak{Op}^A_\Phi(a)\right]^*\circ\mathfrak{Op}^A_\Phi(b)=c^A(x,D)$ with $c\in S^{m'+m''}(\mathbb{R}^d)$. Moreover
\begin{equation}\label{6.51}
 c(x,\xi)-\overline{a\big(V(x,\xi),\xi\big)}b\big(V(x,\xi),\xi\big)\left|\det\left(\nabla^2_{x,\xi}U\right)\big(V(x,\xi),\xi\big)\right|^{-1}\,\in\,S^{m'+m''-1}(\mathbb{R}^d),
\end{equation}
with $y:=V(x,\xi)$ the unique solution of the equation $\big(\nabla_\xi U\big)(y,\xi)=x$ and we have that $V(x,\xi)-x\in S^0(\mathbb{R}^d)$.
\end{theorem}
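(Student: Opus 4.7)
\medskip

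\noindent\textbf{Proof proposal for Theorem \ref{T6.6}.}

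The plan is to deduce the theorem directly from Propositions \ref{P6.1} and \ref{P6.5} applied to a specific choice of $f$. First, I would invoke Proposition \ref{P6.1}: given $a\in S^{m''}(\mathbb{R}^d)$ and $b\in S^{m'}(\mathbb{R}^d)$, it asserts that $[\mathfrak{Op}^A_\Phi(a)]^*\circ\mathfrak{Op}^A_\Phi(b)=c^A(x,D)$ for some $c\in S^{m'+m''}(\mathbb{R}^d)$ obtained as the limit \eqref{6.7} of the symbols $c_\epsilon$ defined in \eqref{6.5}. This already settles that the composition is a magnetic $\Psi$DO of the announced order, so only the computation of the principal part remains.

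Next, observe that the symbols $c_\epsilon$ arise from the oscillatory integral \eqref{6.5} with amplitude $a_\epsilon(z,\zeta)b_\epsilon(y,z,\eta)$, where $a_\epsilon$ and $b_\epsilon$ are the obvious compactly supported truncations. Passing to the limit $\epsilon\searrow0$ in $S^{m'+m''}(\mathbb{R}^d)$ we fit exactly into the framework of Proposition \ref{P6.2} and Proposition \ref{P6.5} applied to the amplitude
$$
f(x,y,z,\xi,\eta,\zeta)\,:=\,\overline{a(z,\zeta)}\,b(z,\eta),
$$
which belongs to $S^{0,m',m''}(\mathbb{R}^{3d}\times\mathbb{R}^d\times\mathbb{R}^d\times\mathbb{R}^d)$ (note that $f$ does not actually depend on $(x,y,\xi)$, so the ``$m$'' index is $0$). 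Thus Proposition \ref{P6.5} gives
$$
c(x,\xi)\,-\,f\bigl(x,x,V(x,\xi),\xi,\xi,\xi\bigr)\,\bigl|\det\bigl(\nabla^2_{x,\xi}U\bigr)\bigl(V(x,\xi),\xi\bigr)\bigr|^{-1}\,\in\,S^{m'+m''-1}(\mathbb{R}^d),
$$
with $y=V(x,\xi)$ the unique solution of $(\nabla_\xi U)(y,\xi)=x$, and with $V(x,\xi)-x\in S^0(\mathbb{R}^d)$.

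Finally, a direct substitution into the definition of $f$ gives $f(x,x,V(x,\xi),\xi,\xi,\xi)=\overline{a(V(x,\xi),\xi)}\,b(V(x,\xi),\xi)$, so the congruence above becomes exactly \eqref{6.51}. In short, there is no substantive obstacle at this stage: the entire content of Theorem \ref{T6.6} has been prepared by the two preceding propositions, and the only item to verify is that our concrete $f$ falls into the hypothesis of Proposition \ref{P6.5} (which is immediate since $a\in S^{m''}$ and $b\in S^{m'}$), together with the routine evaluation of $f$ at the stationary point $(x,x,V(x,\xi),\xi,\xi,\xi)$. The potentially delicate step that has already been absorbed in Proposition \ref{P6.5} is controlling, via Lemma \ref{L1.6}, the implicit function $V(x,\xi)$ and showing that the Jacobian factor $|\det(\nabla^2_{x,\xi}U)(V(x,\xi),\xi)|^{-1}$ belongs to $S^0(\mathbb{R}^d)$, so that the product with the evaluated amplitude remains in $S^{m'+m''}(\mathbb{R}^d)$, as required.
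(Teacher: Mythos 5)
Your argument is correct and coincides with the paper's (one-line) proof: the paper likewise takes $f(x,y,z,\xi,\eta,\zeta)=\overline{a(z,\zeta)}\,b(z,\eta)\in S^{0,m',m''}$ and applies Propositions \ref{P6.2} and \ref{P6.5}, with Proposition \ref{P6.1} already giving that the composition is a magnetic $\Psi$DO of order $m'+m''$. The evaluation $f(x,x,V(x,\xi),\xi,\xi,\xi)=\overline{a(V(x,\xi),\xi)}\,b(V(x,\xi),\xi)$ and the identification of the Jacobian factor are exactly as you wrote.
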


\section{The evolution group}\label{Ev-group}

In this section we shall denote by $a$ a symbol on $\mathbb{R}^d$ satisfying the following properties.
\begin{hypothesis}\label{Hyp-a}
 
\begin{enumerate}
 \item $a$ is a real symbol of class $S^1(\mathbb{R}^d)$,
\item $a$ has a homogeneous principal part $a_0$ (in the sense of Definition \ref{D1.7}),
\item the principal part $a_0$ is real and $\underset{(x,\xi)\in\mathbb{T}^*\mathbb{R}^d,|\xi|=1}{\inf}a_0(x,\xi)>0$.
\end{enumerate}
\end{hypothesis}

Under the above assumptions $a$ is a real elliptic symbol of order 1 and due to Lemma \ref{L1.18}.2, $\mathfrak{Op}^A(a)$ defines a self-adjoint operator $P$ in $L^2(\mathbb{R}^d)$ with domain $\mathcal{H}^1_A$. Our aim in this section is to prove that for some $T>0$ suitable choosen the evolution group $G_t:=e^{-itP}$ is of the form $\mathfrak{Op}^A_{\Phi_t}(b_t)$ for any $t\in(-T,T)$, where $\Phi_t:\mathbb{T}^*\mathbb{R}^d\rightarrow\mathbb{T}^*\mathbb{R}^d$ is a symplectomorphism depending only on $a_0$, and $b_t\in S^0(\mathbb{R}^d)$ is an asymptotic sum (in the sense of Lemma \ref{L1.3}) of a series of symbols that are solutions of a specific family of transport equations.

Let us begin by noticing that following Proposition \ref{P1.14}.1 there exists a unique $b\in S^1(\mathbb{R}^d)$ such that
\begin{equation}\label{7.1}
 P\,\equiv\,\mathfrak{Op}^A(a)\,=\,b^A(x,D).
\end{equation}
Moreover, \eqref{1.16} implies that $a_0$ is also the homogeneous principal part (of order 1) of $b$.

Let us now define $\Phi_t$. We fix some real function $\chi\in C^\infty(\mathbb{R}^d)$ such that $\chi(\xi)=0$ for $|\xi|\leq1$ and $\chi(\xi)=1$ for $|\xi|\geq2$. For any $\rho>0$ we denote by $\chi_\rho(\xi):=\chi(\rho^{-1}\xi)$, $\forall\xi\in\mathbb{R}^d$. We define the real symbol $\bar{a}_0\in S^1(\mathbb{R}^d)$ by the following
\begin{equation}\label{7.2}
 \bar{a}_0(x,\xi):=\chi(\xi)a_0(x,\xi),\quad\forall (x,\xi)\in\mathbb{T}^*\mathbb{R}^d.
\end{equation}
We denote by
\begin{equation}\label{7.3}
 b_0:=b-\bar{a}_0\in S^0(\mathbb{R}^d).
\end{equation}

We shall define $\Phi_t$ to be the Hamiltonian flow generated by $\bar{a}_0$, i.e.
\begin{equation}\label{7.3-1}
\Phi_t(Y)=\big(x(t;Y),\xi(t;Y)\big),\text{ for }Y=(y,\eta)\in\mathbb{T}^*\mathbb{R}^d
\end{equation}
is the solution of \eqref{1.24} with $\bar{a}_0$ replacing $a$.

The choice of $T>0$ is subject to two constraints. The first constraint comes from the Remarks \ref{R1.29} and \ref{R1.30} leading to the conclusion that if the following estimation is true:
\begin{equation}\label{7.4}
 \underset{t\in(-T,T)}{\sup}\underset{Y\in\mathbb{T}^*\mathbb{R}^d}{\sup}\left\|1_d-\nabla_yx(t;Y)\right\|\,<\,1,
\end{equation}
then there exists a real function $U_t\in C^\infty\big((-T,T)\times\mathbb{R}^{2d}\big)$, solution of the Hamilton-Jacobi equation \eqref{1.29} with $a$ replaced by $\bar{a}_0$, that is a generating function for $\Phi_t$, $\forall t\in(-T,T)$. In fact Lemma \ref{L1.25}.3 implies that \eqref{7.4} is verified for $T>0$ small enough.

The second restriction on $T$ comes from the requirement that $U_t$ should satisfy Hypothesis \ref{Hyp-U} of section 3 for any $t\in(-T,T)$. Due to Lemma \ref{L1.31} this hapens with a possible reduction of $T>0$. Thus, for $T>0$ small enough we may suppose the existence of a real function $d_t\in C^\infty\big((-T,T);S^1(\mathbb{R}^d)\big)$ such that the following two facts are verified:
\begin{equation}\label{7.5}
 U_t(x,\eta)=<x,\eta>+d_t(x,\eta),\quad\forall(x,\eta)\in\mathbb{R}^{2d},\ \forall t\in(-T,T),
\end{equation}
\begin{equation}\label{7.6}
 \exists\delta\in[0,1),\text{ such that } \left\|\nabla^2_{x,\eta}d_t(x,\eta)\right\|\,\leq\,\delta,\quad\forall(x,\eta)\in\mathbb{R}^{2d},\ \forall t\in(-T,T).
\end{equation}

Thus, the function $U_t$ defined as above can be used in the construction of a magnetic FIO associated to $\Phi_t$ and the magnetic field $B$. The symbol $b_t\in C^1\big((-T,T);S^0(\mathbb{R}^d)\big)$ of this operator is looked upon as an asymptotic sum of the form:
\begin{equation}\label{7.7}
 b_t\,\sim\,\sum_{k=0}^\infty b_{t,k},\quad b_{t,k}\in C^1\big((-T,T);S^{-k}(\mathbb{R}^d)\big),\ \forall k\in\mathbb{N}.
\end{equation}
The symbols $b_{t,k}$ will be defined reccurently as solutions of transport equations associated to the following first order differential operator suggested by the product formula in Theorem \ref{T3.6}:
\begin{equation}\label{7.8}
 \forall\varphi\in\mathcal{S}(\mathbb{R}^{2d}),\qquad\mathfrak{L}^B\varphi:=-i\left<\left(\nabla_\xi\bar{a}_0\right)\left(x,\big(\nabla_x U_t\big)(x,\xi)\right),\nabla_x\varphi\right>\,+\,b_0\big(x,\nabla_xU_t(x,\xi)\big)\varphi\,-
\end{equation}
$$
-\,\left\{\left(\frac{i}{2}\right)\Tr\left[\big(\nabla^2_{\xi,\xi}b\big)\big(x,\nabla_xU_t(x,\xi)\big)\cdot\big(\nabla^2_{x,x}U_t\big)(x,\xi)\big)\right]\,-\,\left<M_t(x,\xi)\cdot\big(\nabla_\xi b\big)\big(x,\nabla_xU_t(x,\xi)\big),\big(\nabla_\xi d_t\big)(x,\xi)\right>\right\}\varphi,
$$
where $M_t$ is defined by \eqref{3.136} with $d$ replaced by $d_t$.

The first transport equation verified by $b_{t,0}$ is:
\begin{equation}\label{7.10}
 \left\{
\begin{array}{rcl}
 i\partial_tb_{t,0}&=&\mathfrak{L}^Bb_{t,0},\quad\forall(x,\xi)\in\mathbb{T}^*\mathbb{R}^d,\ \forall t\in(-T,T),\\
\left.b_{t,0}\right|_{t=0}&=&1.
\end{array}
\right.
\end{equation}

Suppose that for some $k\geq1$ we have already determined the family of symbols 
$$
b_{t,j}\in C^1\big((-T,T);S^{-j}(\mathbb{R}^d)\big),\quad0\leq j\leq k-1.
$$
From the proof of Proposition \ref{P3.1} we notice the existence of the functions $c_{t,j}\in C^0\big((-T,T);S^{1-j}(\mathbb{R}^d)\big)$ such that
\begin{equation}\label{7.11}
 \mathfrak{Op}^A(b)\circ\mathfrak{Op}^A_{\Phi_t}(b_{t,j})\,=\,\mathfrak{Op}^A_{\Phi_t}(c_{t,j}),\quad0\leq j\leq k-1.
\end{equation}

Theorem \ref{T3.6} implies that 
\begin{equation}\label{7.12}
 c_{t,j}-\bar{a}_0b_{t,j}-\mathfrak{L}^Bb_{t,j}\,\in\,C\big((-T,T);S^{-j-1}(\mathbb{R}^d)\big),\quad0\leq j\leq k-1.
\end{equation}
We define then $b_{t,k}\in C^1\big((-T,T);S^{-k}(\mathbb{R}^d)\big)$ as the solution of the following transport problem:
\begin{equation}\label{7.13}
  \left\{
\begin{array}{rcl}
 i\partial_tb_{t,k}&=&\mathfrak{L}^Bb_{t,k}\,+\,f_{t,k},\quad\forall(x,\xi)\in\mathbb{T}^*\mathbb{R}^d,\ \forall t\in(-T,T),\\
\left.b_{t,k}\right|_{t=0}&=&0,
\end{array}
\right.
\end{equation}
where
\begin{equation}\label{7.14}
f_{t,k}\,:=\,  c_{t,k-1}-\bar{a}_0b_{t,k-1}-\mathfrak{L}^Bb_{t,k-1}\,\in\,C\big((-T,T);S^{-k}(\mathbb{R}^d)\big).
\end{equation}

We continue this section with the proof of the existence and uniqueness of the solutions of the transport problems \eqref{7.10} and \eqref{7.13}, using the method of the proof of Lemma IV-29 in \cite{Ro}.

\begin{lemma}\label{L7.1}
 For any $k\in\mathbb{N}$ and $h_t\in C\big((-T,T);S^{-k}(\mathbb{R}^d)\big)$ the problem:
\begin{equation}\label{7.15}
  \left\{
\begin{array}{rcl}
 i\partial_tu_t&=&\mathfrak{L}^Bu_t\,+\,h_{t},\quad\forall(x,\xi)\in\mathbb{T}^*\mathbb{R}^d,\ \forall t\in(-T,T),\\
\left.u_t\right|_{t=0}&=&\delta_{0,k},
\end{array}
\right.
\end{equation}
has a unique solution $u_t\in C^1\big((-T,T);S^{-k}(\mathbb{R}^d)\big)$.
\end{lemma}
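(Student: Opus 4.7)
The natural approach is the method of characteristics, exploiting the fact that the first-order part of $\mathfrak{L}^B$ is precisely the transport along the Hamiltonian flow of $\bar{a}_0$. Indeed, by \eqref{1.30} applied to $\bar{a}_0$ we have $\xi(t;Y)=\nabla_xU_t(x(t;Y),\eta)$, whence, recalling that $\dot{x}(t;Y)=(\nabla_\xi\bar{a}_0)(x(t;Y),\xi(t;Y))$ from \eqref{1.24}, the curves $t\mapsto x(t;y,\xi)$ (at fixed $\xi$) are integral curves of the vector field whose $\nabla_x$-coefficient appears in $\mathfrak{L}^B$. Let us denote by $r_t(x,\xi)\in C\big((-T,T);S^0(\mathbb{R}^d)\big)$ the multiplicative coefficient of $\varphi$ in \eqref{7.8} (the sum of $b_0(x,\nabla_xU_t)$ and the two terms in the last brace), which is a symbol of order $0$ by Hypothesis \ref{Hyp-U}, Lemma \ref{L1.25} and the boundedness of the components of $B$.

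First I would introduce $v_t(y,\xi):=u_t(x(t;y,\xi),\xi)$. A direct computation using the chain rule and the equation \eqref{7.15} shows that the $\nabla_x u_t$ term cancels the contribution of $\dot{x}$, and one is left with the linear ODE (parametrised by $(y,\xi)\in\mathbb{T}^*\mathbb{R}^d$)
\begin{equation*}
i\partial_t v_t(y,\xi)\,=\,r_t\big(x(t;y,\xi),\xi\big)v_t(y,\xi)\,+\,h_t\big(x(t;y,\xi),\xi\big),\qquad v_0(y,\xi)=\delta_{0,k}.
\end{equation*}
Variation of constants gives the explicit formula
\begin{equation*}
v_t(y,\xi)\,=\,\delta_{0,k}E(t,0;y,\xi)\,-\,i\int_0^t E(t,s;y,\xi)\,h_s\big(x(s;y,\xi),\xi\big)\,ds,
\end{equation*}
where $E(t,s;y,\xi):=\exp\Big(-i\int_s^t r_\tau(x(\tau;y,\xi),\xi)\,d\tau\Big)$. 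Then, choosing $T>0$ small enough so that \eqref{7.4} holds, the map $\mathbb{R}^d\ni y\mapsto x(t;y,\xi)\in\mathbb{R}^d$ is a $C^\infty$-diffeomorphism for each $(t,\xi)$, with inverse $y=f(t;x,\xi)$ as in the proof of Proposition \ref{P1.28}; one recovers $u_t(x,\xi):=v_t(f(t;x,\xi),\xi)$, which clearly solves \eqref{7.15} and satisfies the initial condition.

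The remaining work, and the main technical point, is to verify that $u_t\in C^1\big((-T,T);S^{-k}(\mathbb{R}^d)\big)$. Since $r_t\in S^0$, the factor $E(t,s;y,\xi)$ is bounded, and its derivatives with respect to $(y,\xi)$ are controlled via Lemma \ref{L1.25}.3 and the estimate \eqref{1.26} (applied to $\bar{a}_0$): each $\nabla_\xi$ applied either to $r_\tau$ or through $x(\tau;y,\xi)$ produces factors that are at worst bounded, while $\nabla_y$ derivatives introduce factors from $\nabla_y x$ which are also bounded uniformly on $[-T_0,T_0]$. Since $h_s\in S^{-k}$, we have $|h_s(x(s;y,\xi),\xi)|\leq C\langle\xi\rangle^{-k}$, and the chain-rule bookkeeping (once more using Lemma \ref{L1.25} and its consequences for the symbol class of $f(t;x,\xi)-x$ deduced in the proof of Lemma \ref{L1.31}) shows that $\partial_x^\alpha\partial_\xi^\beta[h_s(x(s;f(t;x,\xi),\xi),\xi)]$ is bounded by $C\langle\xi\rangle^{-k-|\beta|}$ uniformly in $s,t\in[-T_0,T_0]$. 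Integration in $s$ preserves these bounds, and the composition with $f$ preserves the symbol class, giving $u_t\in S^{-k}(\mathbb{R}^d)$ uniformly on compacts in $t$. The $C^1$-regularity in $t$ then follows from the equation itself, as $\mathfrak{L}^Bu_t+h_t\in C\big((-T,T);S^{-k}\big)$. Uniqueness is immediate: any solution of \eqref{7.15} with $h_t\equiv0$ and vanishing initial data gives, through the same change of variables, a $v_t$ satisfying a homogeneous linear ODE with zero initial value, hence $v_t\equiv0$ and $u_t\equiv0$. The main obstacle is the symbol-class bookkeeping sketched above; all the tools needed are already packaged in Lemmas \ref{L1.25}, \ref{L1.31} and the Hypothesis \ref{Hyp-U}.
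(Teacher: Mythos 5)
Your proof follows essentially the same route as the paper's: reduce \eqref{7.15} to a linear first-order ODE along the characteristics $t\mapsto x(t;y,\xi)$ of the Hamiltonian flow of $\bar{a}_0$ (using \eqref{1.30} and \eqref{7.19} to identify the transport term of $\mathfrak{L}^B$), solve explicitly by variation of constants, pull back via the inverse diffeomorphism $y=f(t;x,\xi)$, and control symbol classes via Lemma \ref{L1.25} and the estimate \eqref{7.24} from the proof of Lemma \ref{L1.31}. The only point you compress is the explicit verification that the pulled-back $u_t$ actually satisfies the PDE (the paper carries this out in \eqref{7.27}--\eqref{7.30}, where the key step is $\dot f = -[\partial_x f]\cdot(\nabla_\xi\bar{a}_0)(x,\nabla_xU_t)$); that computation is straightforward but worth writing out.
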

\begin{proof}
 Let us consider the Hamiltonian system associated to $\bar{a}_0$:
\begin{equation}\label{7.16}
 \left\{
\begin{array}{rcl}
\dot{x}(t;y,\xi)&=&\big(\nabla_{\xi}\bar{a}_0\big)\big(x(t;y,\xi),\eta(t;y,\xi)\big),\\
\dot{\eta}(t;y,\xi)&=&-\big(\nabla_{x}\bar{a}_0\big)\big(x(t;y,\xi),\eta(t;y,\xi)\big),\\
\big(x(0;y,\xi),\eta(0;y,\xi)\big)&=&(y,\xi),\quad\forall(y,\xi)\in\mathbb{T}^*\mathbb{R}^d,
\end{array}
\right.
\end{equation}
that has a unique global solution as proved in subsection \ref{1.f}. Let us denote this solution
$$
\Phi_t(y,\xi)\equiv\big(x(t;y,\xi),\eta(t;y,\xi)\big).
$$

Suppose now that $u_t(x,\xi)$ is a solution of problem \eqref{7.15} and let us denote by
\begin{equation}\label{7.17}
 z(t)\equiv z(t;y,\xi):=u_t\big(x(t;y,\xi),\xi\big),\qquad g(t)\equiv g(t;y,\xi):=h_t\big(x(t;y,\xi),\xi\big).
\end{equation}
Using the first equation in \eqref{7.16} we deduce that
\begin{equation}\label{7.18}
 \dot{z}(t)\,=\,\big(\partial_tu_t\big)\big(x(t;y,\xi),\xi\big)\,+\,\left<\big(\nabla_{x}u_t\big)\big(x(t;y,\xi),\xi\big),\big(\nabla_{\xi}\bar{a}_0\big)\big(x(t;y,\xi),\eta(t;y,\xi)\big)\right>.
\end{equation}
We notice that \eqref{1.30} implies that
\begin{equation}\label{7.19}
 \big(\nabla_{x}U_t\big)\big(x(t;y,\xi),\xi\big)\,=\,\eta(t;y,\xi).
\end{equation}
Considering the results \eqref{7.15}, \eqref{7.8}, \eqref{7.18} and \eqref{7.19} we deduce that
\begin{equation}\label{7.20}
 \dot{z}(t)\,+\,ib_0\big(x(t;y,\xi),\eta(t;y,\xi)\big)z(t)\,+\,\left(\frac{1}{2}\right)\Tr\left[\big(\nabla^2_{\xi,\xi}b\big)\big(x(t;y,\xi),\eta(t;y,\xi)\big)\cdot\big(\nabla^2_{x,x}U_t\big)\big(x(t;y,\xi),\xi\big)\right]z(t)+
\end{equation}
$$
+\,i\left<M_t\big(x(t;y,\xi),\xi\big)\cdot\big(\nabla_\xi b\big)\big(x(t;y,\xi),\eta(t;y,\xi)\big),\big(\nabla_\xi d_t\big)\big(x(t;y,\xi),\xi\big)\right>z(t)\,=\,-ig(t),
$$
$$
z(0)=\delta_{0,k}.
$$
Let us denote by $K(t;y,\xi)$ the coefficient of $z(t)$ in the equation \eqref{7.20}. Using Lemma \ref{L1.25} and relation \eqref{7.5} we deduce that
\begin{equation}\label{7.21}
 K\in C\big((-T,T);S^0(\mathbb{R}^d)\big),\qquad g\in C\big((-T,T);S^{-k}(\mathbb{R}^d)\big).
\end{equation}
Thus it is evident that the problem \eqref{7.20} has a unique solution on the interval $(-T,T)$ and this is explicitely given by:
\begin{equation}\label{7.22}
 z(t;y,\xi)\,=\,-i\int_0^te^{-\int_s^tK(\sigma;y,\xi)d\sigma}g(s;y,\xi)ds\,+\,\delta_{0,k}e^{-\int_0^tK(s;y,\xi)ds}.
\end{equation}
The relations \eqref{7.21} and the fact that $\delta_{0,k}=0$ for $k\geq1$ imply that
\begin{equation}\label{7.23}
 z\,\in\,C^1\big((-T,T);S^{-k}(\mathbb{R}^d)\big).
\end{equation}

As already noticed in the proof of Proposition \ref{P1.28}, the condition \eqref{7.4} implies that for any $t\in(-T,T)$ and any $\xi\in\mathbb{R}^d$ the map $\mathbb{R}^d\ni y\mapsto x(t;y,\xi)\in\mathbb{R}^d$ is a $C^\infty$-diffeomorphism having an inverse denoted by $f(t;x,\xi)$. Moreover, in the end of the proof of Lemma \ref{L1.31} we have obtained the following estimations:
\begin{equation}\label{7.24}
 \partial^\alpha_x\partial^\beta_\xi f\,\in\,C^\infty\big((-T,T);S^{-|\beta|}(\mathbb{R}^d)\big),\quad\forall(\alpha,\beta)\in[\mathbb{N}^d]^2\text{ with }|\alpha+\beta|\geq1.
\end{equation}
Thus if $u_t$ is a solution of problem \eqref{7.15} and $z(t)$ is defined by \eqref{7.17}, then this $z(t)$ is a solution of \eqref{7.20} and moreover
\begin{equation}\label{7.25}
u_t(x,\xi):=z(t;f(t;x,\xi),\xi).
\end{equation}
Thus uniqueness of the solution of \eqref{7.15} follows from the uniqueness of the solution of \eqref{7.20} and from the fact that the first relation in \eqref{7.17} allows to obtain $u_t$ from $z$.

To prove existence of a solution of problem \eqref{7.15} we start with relation \eqref{7.25} with $z$ the solution of problem \eqref{7.20} and proceed to an explicit verification. The initial condition in \eqref{7.15} is evidently satisfied. Relations \eqref{7.23} and \eqref{7.24} imply that
\begin{equation}\label{7.26}
 u_t\,\in\,C^1\big((-T,T);S^{-k}(\mathbb{R}^d)\big).
\end{equation}
In order to verify also the equation in \eqref{7.15} let us notice that
\begin{equation}\label{7.27}
 \big(\partial_tu_t\big)(x,\xi)\,=\,\dot{z}\big(t;f(t;x,\xi),\xi)\,+\,\left<\big(\nabla_{y}z\big)\big(t;f(t;x,\xi),\xi\big),\dot{f}(t;x,\xi)\right>.
\end{equation}
From \eqref{7.20} and \eqref{7.19} we deduce that
\begin{equation}\label{7.28}
  \dot{z}\big(t;f(t;x,\xi),\xi)\,=\,-K\big(t;f(t;x,\xi),\xi)z\big(t;f(t;x,\xi),\xi)\,-\,ih_t(x,\xi)\,=
\end{equation}
$$
=\,-ib_0\big(x,\nabla_{x}U_t(x,\xi)\big)u_t(x,\xi)\,-\,\left(\frac{1}{2}\right)\Tr\left[\big(\nabla^2_{\xi,\xi}b\big)\big(x,\nabla_{x}U_t(x,\xi)\big)\cdot\big(\nabla^2_{x,x}U_t\big)\big(x,\xi\big)\right]u_t(x,\xi)\,-
$$
$$
-\,i\left<M_t\big(x,\xi\big)\cdot\big(\nabla_\xi b\big)\big(x,\nabla_{x}U_t(x,\xi)\big),\big(\nabla_\xi d_t\big)\big(x,\xi\big)\right>u_t(x,\xi)\,-\,ih_t(x,\xi)\,=
$$
$$
=\,-i\big(\mathfrak{L}^Bu_t\big)(x,\xi)\,-\,ih_t(x,\xi)\,+\,\left<\big(\nabla_{\xi}\bar{a}_0\big)\big(x,\nabla_{x}U_t(x,\xi)\big),\nabla_{x}u_t(x,\xi)\right>.
$$
If we differentiate the first equality in \eqref{7.17} with respect to $y$ we obtain:
\begin{equation}\label{7.29}
\big(\nabla_{y}z\big)\big(t;f(t;x,\xi),\xi)\,=\,\left[\nabla_{x}f\big(t;x,\xi)\right]^{-1}\big(\nabla_{x}u_t\big)(x,\xi).
\end{equation}
Differentiating now the equality $x\big(t;f(t;x,\xi),\xi\big)=x$ with respect to $t$ and respectively with respect to $x$ we obtain
$$
\dot{x}\big(t;f(t;x,\xi),\xi\big)\,+\,\big(\partial_{y}x\big)\big(t;f(t;x,\xi),\xi\big)\cdot\dot{f}(t;x,\xi)\,=\,0,
$$
$$
\big(\partial_{y}x\big)\big(t;f(t;x,\xi),\xi)\cdot\big(\partial_{x}f\big)(t;x,\xi)\,=\,1_d.
$$
Taking into account the first equation in \eqref{7.16} and the equality \eqref{7.19}, the above two equalities imply that
\begin{equation}\label{7.30}
 \dot{f}(t;x,\xi)\,=\,-\left[\partial_{x}f\big(t;x,\xi)\right]\cdot\big(\nabla_{\xi}\bar{a}_0\big)\big(x,\nabla_{x}U_t(x,\xi)\big).
\end{equation}
Let us put together the results \eqref{7.27}, \eqref{7.28}, \eqref{7.29} and \eqref{7.30}, as well as the fact that the transpose of the matrix $\partial_{x}f$ is $\nabla_{x}f$, to conclude that:
$$
i\partial_tu_t\,=\,\mathfrak{L}^Bu_t\,+\,h_t.
$$
\end{proof}

The next statement is the first technical step towards the description of the evolution group $e^{-itP}$ as a magnetic FIO.
\begin{proposition}\label{P7.2}
Suppose given a symbol $a$ verifying Hypothesis \ref{Hyp-a} and let $b\in S^1(\mathbb{R}^d)$, $T>0$ and $\Phi_t$ with $t\in(-T,T)$ be as defined at the begining of this section (i.e. the relations \eqref{7.1}-\eqref{7.6} are verified). Then there exist a family $\{b_t\}_{t\in(-T,T)}$ defining a function of class $C^1\big((-T,T);S^0(\mathbb{R}^d)\big)$ and a family $\{r_t\}_{t\in(-T,T)}$ defining a function of class $C\big((-T,T);S^{-\infty}(\mathbb{R}^d)\big)$ such that the operators $F_t:=\mathfrak{Op}^A_{\Phi_t}(b_t)$ and respectively $R_t:=\mathfrak{Op}^A_{\Phi_t}(r_t)$, define functions of class $C^1\big((-T,T);\mathbb{B}\big(\mathcal{S}(\mathbb{R}^d)\big)\big)$ and respectively $C\big((-T,T);\mathbb{B}\big(\mathcal{S}(\mathbb{R}^d)\big)\big)$ and satisfy:
\begin{equation}\label{7.31}
  \left\{
\begin{array}{l}
 i\partial_tF_t\,-\,b^A(x,D)\circ F_t\,=\,R_t,\\
\left.F_t\right|_{t=0}\,=\,1.
\end{array}
\right.
\end{equation}
\end{proposition}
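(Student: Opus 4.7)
The plan is to produce $b_t$ as an asymptotic sum of the symbols $b_{t,k}$ defined by the transport equations \eqref{7.10} and \eqref{7.13}, and then to verify that with this choice the residual symbol in \eqref{7.31} lies in $S^{-\infty}$.

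The first step is to construct the $b_{t,k}$ inductively using Lemma \ref{L7.1}. For $k=0$, equation \eqref{7.10} is of the form \eqref{7.15} with $h_t\equiv 0$ and initial datum $1$, producing $b_{t,0}\in C^1\big((-T,T);S^0(\mathbb{R}^d)\big)$. Given $b_{t,0},\dots,b_{t,k-1}$, Proposition \ref{P3.1} and Theorem \ref{T3.6} supply the $c_{t,j}$ of \eqref{7.11}--\eqref{7.12}, hence $f_{t,k}\in C\big((-T,T);S^{-k}(\mathbb{R}^d)\big)$ by \eqref{7.14}, and another invocation of Lemma \ref{L7.1} gives $b_{t,k}\in C^1\big((-T,T);S^{-k}(\mathbb{R}^d)\big)$. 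The second step is to assemble the $b_{t,k}$ into a single symbol $b_t$ via \eqref{1.4},
$$
b_t(x,\xi):=\sum_{j=0}^\infty\chi(\xi/t_j)\,b_{t,j}(x,\xi),
$$
choosing $\{t_j\}$ to grow fast enough that the series converges together with its $\partial_t$-derivative in every seminorm of $S^0(\mathbb{R}^d)$ uniformly on compact subsets of $(-T,T)$; this only requires enlarging $t_j$ finitely often to dominate also the seminorms of $\partial_tb_{t,j}$. The outcome is $b_t\in C^1\big((-T,T);S^0(\mathbb{R}^d)\big)$ with $b_t-\sum_{j<N}b_{t,j}\in S^{-N}(\mathbb{R}^d)$ for every $N\in\mathbb{N}$.

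Setting $F_t:=\mathfrak{Op}^A_{\Phi_t}(b_t)$ and differentiating the oscillatory integral \eqref{2.2}, while using the Hamilton--Jacobi equation $\partial_tU_t=-\bar{a}_0(x,\nabla_xU_t)$ of \eqref{1.29}, yields, for any $c_t\in C^1\big((-T,T);S^m(\mathbb{R}^d)\big)$,
$$
i\partial_t\mathfrak{Op}^A_{\Phi_t}(c_t)=\mathfrak{Op}^A_{\Phi_t}\big(i\partial_tc_t+\bar{a}_0(x,\nabla_xU_t)\,c_t\big).
$$
Combined with Theorem \ref{T3.6} applied to $b^A(x,D)\circ\mathfrak{Op}^A_{\Phi_t}(b_{t,j})$ and the splitting $b=\bar{a}_0+b_0$ of \eqref{7.3}, this gives
$$
i\partial_t\mathfrak{Op}^A_{\Phi_t}(b_{t,j})-b^A(x,D)\mathfrak{Op}^A_{\Phi_t}(b_{t,j})=\mathfrak{Op}^A_{\Phi_t}\big(i\partial_tb_{t,j}-\mathfrak{L}^Bb_{t,j}-f_{t,j+1}\big),
$$
with $f_{t,0}\equiv 0$. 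For the truncation $\tilde b_t^N=\sum_{j<N}b_{t,j}$ the right-hand sides telescope, by virtue of \eqref{7.10} and \eqref{7.13}, to $-\mathfrak{Op}^A_{\Phi_t}(f_{t,N})$; adding the contribution from $b_t-\tilde b_t^N\in S^{-N}(\mathbb{R}^d)$ shows that the full residue has a symbol of order $-N+1$ for every $N$, hence in $S^{-\infty}(\mathbb{R}^d)$. This is the $r_t$ of the proposition, and $F_0={\sf Id}$ follows from $U_0(x,\eta)=\langle x,\eta\rangle$, $b_{0,0}=1$, $b_{0,k}=0$ for $k\geq 1$, and Remark \ref{R2.6}.

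The main obstacle I foresee is the identification of the subprincipal part of the composition with the operator $\mathfrak{L}^B$ of \eqref{7.8}: one must carefully match the three contributions to the residue (the $t$-derivative of the symbol, the $\partial_tU_t$ from the phase, and the subprincipal terms of Theorem \ref{T3.6}) against the definition of $\mathfrak{L}^B$, and verify in particular that the $S^{-j-1}$ remainder of the composition formula coincides exactly with $f_{t,j+1}$. A related subtlety is that Theorem \ref{T3.6} provides only two-term accuracy, so the $S^{-\infty}$ conclusion must be extracted iteratively via the telescoping above rather than from a single asymptotic expansion.
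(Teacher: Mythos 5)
Your strategy is the same as the paper's: build $b_{t,k}$ from the transport equations via Lemma~\ref{L7.1}, assemble an asymptotic sum, and use the identity $i\partial_t\mathfrak{Op}^A_{\Phi_t}(c_t)=\mathfrak{Op}^A_{\Phi_t}\big(i\partial_tc_t+\bar{a}_0(x,\nabla_xU_t)\,c_t\big)$ together with Theorem~\ref{T3.6} to telescope the residue into $S^{-\infty}$. All of that is correct and tracks the paper closely.

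There is, however, a genuine gap at the last step. You write that $F_0=\mathsf{Id}$ follows from $U_0(x,\eta)=\langle x,\eta\rangle$, $b_{0,0}=1$ and $b_{0,k}=0$ for $k\geq 1$. But the symbol $b_t$ you construct is only an \emph{asymptotic} sum: with the explicit choice $b_t(x,\xi)=\sum_j\chi(\xi/t_j)\,b_{t,j}(x,\xi)$ you get $b_0(x,\xi)=\chi(\xi/t_0)$, which is not identically~$1$ (it vanishes for $|\xi|\leq t_0$). Consequently $F_0=\mathfrak{Op}^A_{\Phi_0}(b_0)=\big(\chi(\cdot/t_0)\big)^A(x,D)\neq 1$; the discrepancy $1-F_0=\mathfrak{Op}^A\big(1-\chi(\cdot/t_0)\big)$ is a $\Psi$DO with symbol in $S^{-\infty}$ but it is not zero. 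In short, the initial condition is satisfied only modulo a smoothing operator, which is not what \eqref{7.31} asserts.

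The paper handles this by first setting $\tilde F_t:=\mathfrak{Op}^A_{\Phi_t}(\tilde b_t)$ with $\tilde b_t$ an asymptotic sum (this is your $F_t$), observing that $\tilde F_0=1+\mathfrak{Op}^A(\tilde\rho)$ for some $\tilde\rho\in S^{-\infty}(\mathbb{R}^d)$, and then defining $F_t:=\tilde F_t-\mathfrak{Op}^A(\tilde\rho)$. This fixes the initial condition exactly, and one then checks that the corrected $F_t$ still satisfies \eqref{7.31}: the extra term $b^A(x,D)\circ\mathfrak{Op}^A(\tilde\rho)$ has a symbol in $S^{-\infty}$, hence (Lemma~\ref{L2.9}) is itself of the form $\mathfrak{Op}^A_{\Phi_t}(\cdot)$ with a symbol in $S^{-\infty}$, so it merely modifies $R_t$ within the allowed class; moreover $F_t$ is still of the required form $\mathfrak{Op}^A_{\Phi_t}(b_t)$, again by Lemma~\ref{L2.9}. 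You need this correction step to close the argument.
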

\begin{proof}
 Let us recall the transport equations \eqref{7.10} and \eqref{7.13} and their solutions $b_{t,k}\in C^1\big((-T,T);S^{-k}(\mathbb{R}^d)\big)$, for any $k\in\mathbb{N}$. We denote by $\tilde{b}_t\in C^1\big((-T,T);S^{0}(\mathbb{R}^d)\big)$ an asymptotic sum of the series $\sum_{k\in\mathbb{N}}b_{t,k}$ and by $s_{t,N}:=\sum_{0\leq k\leq N}b_{t,k}$ the partial sums of the same series. Then $\rho_{t,N}:=\tilde{b}_t-s_{t,N}\in C^1\big((-T,T);S^{-(N+1)}(\mathbb{R}^d)\big)$. We shall also use the notations: $\tilde{F}_t:=\mathfrak{Op}^A_{\Phi_t}(\tilde{b}_t)$ and $\tilde{F}_{t,N}=\mathfrak{Op}^A_{\Phi_t}(s_{t,N})$ for any $N\in\mathbb{N}$.

Let us notice the following general fact: Given any $m\in\mathbb{R}$ and any family $\{e_t\}_{t\in(-T,T)}$ defining a function of class $C^1\big((-T,T);S^m(\mathbb{R}^d)\big)$, Theorem \ref{T3.6} and Proposition \ref{P1.28} imply that
\begin{equation}\label{7.32}
 i\partial_t\left[\mathfrak{Op}^A_{\Phi_t}(e_t)\right]\,-\,b^A(x,D)\circ\mathfrak{Op}^A_{\Phi_t}(e_t)\,=\,\mathfrak{Op}^A_{\Phi_t}(g_t),
\end{equation}
with
\begin{equation}\label{7.33}
g_t:=-\big(\partial_tU_t\big)e_t+i\partial_te_t-h_t\in C\big((-T,T);S^m(\mathbb{R}^d)\big), 
\end{equation}
and $h_t\in C\big((-T,T);S^{m+1}(\mathbb{R}^d)\big)$ being the symbol of the magnetic FIO obtained as the product $b^A(x,D)\circ\mathfrak{Op}^A_{\Phi_t}(e_t)$. From \eqref{1.29} we deduce that
\begin{equation}\label{7.34}
 \partial_tU_t(x,\xi)\,=\,-\bar{a}_0\big(x,\nabla_xU_t(x,\xi)\big),
\end{equation}
and from \eqref{3.149} and \eqref{7.8} we get that
\begin{equation}\label{7.35}
 f_t\,:=\,h_t-\bar{a}_0e_t-\mathfrak{L}^Be_t\,\in\,C\big((-T,T);S^{m-1}(\mathbb{R}^d)\big).
\end{equation}
Finally from \eqref{7.33}-\eqref{7.35} it follows that
\begin{equation}\label{7.36}
 g_t\,=\,i\partial_te_t-\mathfrak{L}^Be_t-f_t.
\end{equation}

We shall use the above remark in order to compute the following expresion:
$$
i\partial_t\tilde{F}_{t,N}\,-\,b^A(x,D)\circ \tilde{F}_{t,N}\,=\,
$$
$$
=\,\sum_{k=0}^N\left\{i\partial_t\left[\mathfrak{Op}^A_{\Phi_t}(b_{t,k})\right]-b^A(x,D)\circ \mathfrak{Op}^A_{\Phi_t}(b_{t,k})\right\}\,=\,\sum_{k=0}^N\mathfrak{Op}^A_{\Phi_t}(g_{t,k}),\quad\forall N\in\mathbb{N},
$$
where using \eqref{7.36}, \eqref{7.35}, \eqref{7.14}, \eqref{7.10} and \eqref{7.13} we have
$$
g_{t,k}\,:=\,i\partial_tb_{t,k}-\mathfrak{L}^Bb_{t,k}-f_{t,k+1}=f_{t,k}-f_{t,k+1}.
$$
From the fact that $f_{t,0}=0$ it follows that
\begin{equation}\label{7.37}
 i\partial_t\tilde{F}_{t,N}\,-\,b^A(x,D)\circ \tilde{F}_{t,N}\,=\,-\mathfrak{Op}^A_{\Phi_t}(f_{t,N+1}).
\end{equation}
For any fixed $N\in\mathbb{N}$ we take into account \eqref{7.32} and \eqref{7.37} and make the following computation:
$$
i\partial_t\tilde{F}_{t}\,-\,b^A(x,D)\circ \tilde{F}_{t}\,=
$$
$$
=\,i\partial_t\tilde{F}_{t,N}\,-\,b^A(x,D)\circ \tilde{F}_{t,N}\,+\,i\partial_t\left[\mathfrak{Op}^A_{\Phi_t}(\rho_{t,N})\right]-b^A(x,D)\circ \mathfrak{Op}^A_{\Phi_t}(\rho_{t,N})\,=\,\mathfrak{Op}^A_{\Phi_t}(r_{t,N}),
$$
where $r_{t,N}\in\,C\big((-T,T);S^{-N-1}(\mathbb{R}^d)\big)$. But from \eqref{7.32} we deduce that $i\partial_t\tilde{F}_{t}-b^A(x,D)\circ \tilde{F}_{t}$ is a magnetic FIO with a symbol that does not depend on $N$. We conclude that $r_{t,0}=r_{t,1}=\ldots=r_{t,N}$, $\forall N\in\mathbb{N}$. If we denote by $\tilde{r}_t$ the common value of these symbols $r_{t,N}$, we conclude that $\tilde{r}_t\in\,C\big((-T,T);S^{-\infty}(\mathbb{R}^d)\big)$. Let us denote by $\tilde{R}_t:=\mathfrak{Op}^A_{\Phi_t}(\tilde{r}_t)$ and notice that we have the equality:
\begin{equation}\label{7.38}
 i\partial_t\tilde{F}_{t}\,-\,b^A(x,D)\circ \tilde{F}_{t}\,=\,\tilde{R}_t.
\end{equation}

For $t=0$, for any $N\in\mathbb{N}$ we have that $\tilde{F}_0=\mathfrak{Op}^A_{\Phi_0}(s_{0,N}+\rho_{0,N})$. Considering our initial conditions in \eqref{7.10} and \eqref{7.13} we deduce that $s_{0,N}=1$ for any $N\in\mathbb{N}$. Moreover, $\Phi_0$ is the identity and thus any corresponding magnetic FIO is a magnetic $\Psi$DO. Also, we notice that for any $N\in\mathbb{N}$ the symbol $\rho_{0,N}$ belongs to $S^{-N-1}(\mathbb{R}^d)$ and we deduce as before that there exists a symbol $\tilde{\rho}\in S^{-\infty}(\mathbb{R}^d)$ that satisfies the equality:
\begin{equation}\label{7.39}
 \tilde{F}_0\,=\,1\,+\,\mathfrak{Op}^A(\tilde{\rho}).
\end{equation}

We finish the proof by defining
$$
F_t\,:=\,\tilde{F}_t\,-\,\mathfrak{Op}^A(\tilde{\rho}),
$$
and noticing that due to \eqref{7.38}, \eqref{7.39} and Lemma \ref{L2.9} it is an operator verifying all the conditions of the Proposition.
\end{proof}

The next step is to eliminate the free term $R_t$ in equation \eqref{7.31} using an idea from \cite{Ku} and our Theorem \ref{T1.19}. Let us notice first that, following Proposition \ref{P1.14} and Lemma \ref{L2.9}, our operator $R_t$ is in fact a magnetic $\Psi$DO with a symbol of class $C\big((-T,T);S^{-\infty}(\mathbb{R}^d)\big)$. Next let us use this remark and prove the following technical step.
\begin{definition}\label{D7.3}
 We define the following sequence of magnetic $\Psi$DO of order $-\infty$:
\begin{equation}\label{7.40}
 W_1(t)\,:=\,iR_t,\qquad W_k(t)\,:=\,\int_0^tW_1(t-\tau)\,W_{k-1}(\tau)\,d\tau,\quad \forall k\geq2,\,\forall t\in(-T,T).
\end{equation}
We denote their symbols by $w_k$ (for $k\geq1$).
\end{definition}

\begin{lemma}\label{L7.3}
 With the above notations and definitions, we have that evidently $w_k\in C\big((-T,T);S^{-\infty}(\mathbb{R}^d)\big)$ for any $k\in\mathbb{N}$ and:
\begin{itemize}
 \item the series $\sum_{k=1}^\infty w_k$ converges in $C\big((-T,T);S^{-\infty}(\mathbb{R}^d)\big)$; let $w$ be its sum;
\item the operator $W(t):=\mathfrak{Op}^A(w(t))$ satisfies the equality:
\begin{equation}\label{7.41}
 W(t)\,=\,W_1(t)\,+\,\int_0^tW_1(t-\tau)\,W(\tau)\,d\tau,\quad\forall t\in(-T,T).
\end{equation}
\end{itemize}
\end{lemma}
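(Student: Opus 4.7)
The three assertions assemble into a standard Picard/Volterra iteration, with the magnetic Moyal product $\sharp^B$ playing the role of operator composition. My plan has three steps.

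First I would verify by induction on $k$ that $w_k \in C((-T,T); S^{-\infty}(\mathbb{R}^d))$. The case $k=1$ is given. For the inductive step, the composition $W_1(t-\tau) \circ W_{k-1}(\tau)$ is the magnetic $\Psi$DO with symbol $w_1(t-\tau) \sharp^B w_{k-1}(\tau)$: this is just the $\Phi = \mathrm{Id}$ specialization of the composition analysis of Section~3 (Theorem~\ref{T3.6} reduces to the magnetic Moyal formula). Since $\sharp^B$ maps $S^{-\infty} \times S^{-\infty}$ continuously into $S^{-\infty}$, and $\tau \mapsto (w_1(t-\tau), w_{k-1}(\tau))$ is continuous into this product space, integration in $\tau$ over $[0,t]$ preserves both the Fréchet space $S^{-\infty}$ and the continuous $t$-dependence.

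The crux is Step 2: establishing, for each $T_0 \in (0,T)$ and each Fréchet seminorm $\nu$ on $S^{-\infty}(\mathbb{R}^d)$, an estimate of Volterra type
\begin{equation*}
\sup_{|t|\le T_0} \nu\big(w_k(t)\big) \;\le\; \frac{M^k |t|^{k-1}}{(k-1)!},\qquad k \ge 1.
\end{equation*}
This reduces to a \emph{submultiplicative} bound $\nu(f \sharp^B g) \le C\, \nu^\ast(f)\,\nu^\ast(g)$ for a suitable equivalent seminorm $\nu^\ast$ on $S^{-\infty}$. The natural route is via Theorem~\ref{T1.19}: it identifies the topology of $S^{-m}_0$ with the operator-norm seminorms $\|p_{-m} \sharp^B (\mathrm{ad}^B_{X_1}\cdots\mathrm{ad}^B_{X_N}[\,\cdot\,])\|_{\mathcal{B}^B}$. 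Since each $\mathrm{ad}^B_X$ is a derivation for $\sharp^B$ and the $\mathcal{B}^B$-norm is the $L^2$-operator norm, applying the Leibniz rule splits $\mathrm{ad}^B_{X_1}\cdots\mathrm{ad}^B_{X_N}[f \sharp^B g]$ into $2^N$ terms, each a $\sharp^B$-product of $\mathrm{ad}^B$-derivatives of $f$ and $g$; the submultiplicativity of $\|\cdot\|_{\mathcal{B}^B}$ then gives the desired inequality, up to replacing $\nu$ by a larger seminorm in the Fréchet hierarchy (absorbing the $p_{-m}$ weights and the commutator counts). Feeding this estimate into the defining recursion and induction on $k$ yields the factorial bound. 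Normal convergence of $\sum_k w_k$ in $C([-T_0,T_0]; S^{-\infty})$ follows, and since $T_0$ and $\nu$ were arbitrary, convergence holds in $C((-T,T); S^{-\infty}(\mathbb{R}^d))$. Define $w$ as the sum.

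Finally, the Volterra identity \eqref{7.41} follows by interchanging sum and integral, justified by the uniform convergence from Step 2:
\begin{equation*}
\int_0^t W_1(t-\tau)\, W(\tau)\, d\tau \;=\; \sum_{k \ge 1} \int_0^t W_1(t-\tau)\, W_k(\tau)\, d\tau \;=\; \sum_{k \ge 2} W_k(t) \;=\; W(t) - W_1(t).
\end{equation*}
The main obstacle is the submultiplicative seminorm estimate in Step 2: the magnetic Moyal product is genuinely subtler than its non-magnetic counterpart because of the phase factor $\Omega^B$, and the derivation property of $\mathrm{ad}^B_X$ with respect to $\sharp^B$ must be handled with some care. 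Once Theorem~\ref{T1.19} is invoked, however, the whole issue is pushed onto the submultiplicativity of the $L^2$-operator norm, which is automatic.
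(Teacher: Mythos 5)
Your strategy---translate the seminorms of $S^{-\infty}=S^{-\infty}_0$ into the operator-norm seminorms of Theorem~\ref{T1.19}, exploit the derivation property of $\text{\rm ad}^B_X$ with respect to $\sharp^B$ together with the submultiplicativity of $\|\cdot\|_{\mathcal{B}^B}$, obtain a factorial Volterra bound, and pass to the limit in the recursion---is exactly the route the paper takes. The one cosmetic difference is that the paper unrolls the recursion into an explicit integral of the $k$-fold Moyal product $P_k=w_1(t-t_1)\sharp^B\cdots\sharp^B w_1(t_{k-1})$ over the $(k-1)$-simplex (which yields the $T_0^{k-1}/(k-1)!$ factor from the simplex volume) and applies the Leibniz rule once to the $k$-fold product, giving $k^N$ terms; iterating your binary bound through the recursion gives $2^{N(k-1)}$ terms, which is a worse constant but still crushed by the factorial, so the convergence argument survives.

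There is one loose thread you should tie off, and it is not merely cosmetic. You write that the submultiplicative estimate holds \emph{``up to replacing $\nu$ by a larger seminorm in the Fr\'{e}chet hierarchy''}---but if the seminorm $\nu^*$ controlling the right-hand side of $\nu(f\sharp^B g)\le C\,\nu^*(f)\,\nu^*(g)$ were genuinely strictly larger than $\nu$, iterating the recursion would require an escalating family $\nu^*,\nu^{**},\dots$ depending on $k$, and the factorial bound would then fail to give convergence in a fixed Fr\'{e}chet topology. The point that makes the argument close, and that you should state explicitly, is that the seminorm $\|a\|_{m,X_1,\ldots,X_N}$ of Theorem~\ref{T1.19} is defined as a maximum over \emph{all subsets} of $\{1,\ldots,N\}$. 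When the Leibniz rule distributes $\text{\rm ad}^B_{X_1}\cdots\text{\rm ad}^B_{X_N}$ over $f\sharp^B g$, each factor receives some sub-collection of the commutators, the $p_{-m}$ weight sits on exactly one factor, and operator-norm submultiplicativity then bounds each of the $2^N$ terms by $\|f\|_{m,X_1,\ldots,X_N}\|g\|_{0,X_1,\ldots,X_N}$: the \emph{same} commutator index on both sides, with only the order ($m$ versus $0$) changing. This self-closing structure of the Theorem~\ref{T1.19} seminorms, not an unspecified "larger seminorm," is what stabilizes the induction on $k$.
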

\begin{proof}
 From Remark \ref{R1.2} we deduce that $S^{-\infty}(\mathbb{R}^d)=S^{-\infty}_0(\mathbb{R}^d)$ and thus for the first statement it is enough that we prove that $\sum_{k=1}^\infty w_k$ converges in $C\big((-T,T);S^{-\infty}_0(\mathbb{R}^d)\big)$. 

Let us notice that for any $k\geq2$ we have in fact the formula:
\begin{equation}\label{7.42}
 w_k(t)\,=\,\int_0^t\int_0^{t_1}\ldots\int_0^{t_{k-2}}P_k(t,t_1,\ldots,t_{k-1})\,dt_{k-1}\ldots dt_1,\ t_0:=t,
\end{equation}
where
\begin{equation}\label{7.43}
 P_k(t,t_1,\ldots,t_{k-1}):=w_1(t-t_1)\sharp^Bw_1(t_1-t_2)\sharp^B\cdots\sharp^Bw_1(t_{k-2}-t_{k-1})\sharp^Bw_1(t_{k-1}).
\end{equation}

Let us fix some $m\leq0$ and notice that $w_1(t)\in S^{-\infty}(\mathbb{R}^d)\subset S^{m}_0(\mathbb{R}^d)\subset S^{0}_0(\mathbb{R}^d)$ for any $t\in(-T,T)$ and thus using the magnetic composition of symbols theorem we deduce that $P_k(t,t_1,\ldots,t_{k-1})\in S^{m}_0(\mathbb{R}^d)$. Moreover, denoting by $t_0:=t$ and $t_k=0$ and using the notations from Theorem \ref{T1.19} we notice that
$$
p_{-m}\sharp^B\text{\rm ad}^B_{X_1}\cdots\text{\rm ad}^B_{X_N}[P_k]
$$
is a sum of $k^N$ terms of the form
$$
p_{-m}\sharp^B\prod_{l=0}^{k-1}\text{\rm ad}^B_{X_{j^l_1}}\cdots\text{\rm ad}^B_{X_{j^l_q}}\big[w_1(t_l-t_{l+1})\big],
$$
where $\{j^l_1,\ldots,j^l_q\}$ is any subset of $\{1,2,\ldots,N\}$, with $0\leq q \leq N$, and the product is understood as a magnetic Moyal product $\sharp^B$. Now we fix some $T_0\in(0,T)$ and suppose (just for fixing the ideas) that $t\in[0,T_0]$, so that in this case we have $0=t_k\leq t_{k-1}\leq\ldots\leq t_1\leq t_0=t\leq T_0$. We obtain then the inequality
$$
\left\|P_k(t,t_1,\ldots,t_{k-1})\right\|_{m,X_1,\ldots,X_N}\leq k^N\underset{|\tau|\leq T_0}{\sup}\|w_1(\tau)\|_{m,X_1,\ldots,X_N}\left[\underset{|\tau|\leq T_0}{\sup}\|w_1(\tau)\|_{0,X_1,\ldots,X_N}\right]^{k-1}.
$$
Denoting the paranthesis by $M$ and using the equality \eqref{7.42} we deduce that
\begin{equation}\label{7.44}
\underset{0\leq t\leq T_0}{\sup}\|w_k(t)\|_{m,X_1,\ldots,X_N}\leq \underset{|\tau|\leq T_0}{\sup}\|w_1(\tau)\|_{m,X_1,\ldots,X_N}\frac{k^NT_0^{k-1}M^{k-1}}{(k-1)!}.
\end{equation}
This inequality, and the similar one associated to the interval $[-T_0,0]$ imply evidently the convergence of the sum $\sum_{k=1}^\infty w_k(t)$ in $C\big([-T_0,T_0];S^{m}_0(\mathbb{R}^d)\big)$, for any $m\leq0$ and any $T_0\in(0,T)$. In conclusion the series also converges in $C\big((-T,T);S^{-\infty}_0(\mathbb{R}^d)\big)$. By definition we have that for any $q\geq1$:
$$
\sum_{k=1}^{q+1}w_k(t)\,=\,w_1(t)\,+\,\int_0^tw_1(t-\tau)\sharp^B\left[\sum_{k=1}^qw_k(\tau)\right]\,d\tau,\quad\forall t\in(-T,T).
$$
Letting $q\rightarrow\infty$ we obtain that
\begin{equation}\label{7.45}
 w(t)\,=\,w_1(t)\,+\,\int_0^tw_1(t-\tau)\sharp^Bw(\tau)\,d\tau,\quad\forall t\in(-T,T),
\end{equation}
and this implies \eqref{7.41}.
\end{proof}

The next statement is the main result of this section and also the most important result of our paper.
\begin{theorem}\label{T7.4}
 Suppose given a symbol $a\in S^1(\mathbb{R}^d)$ verifying Hypothesis \ref{Hyp-a} and let $P$ be the self-adjoint operator in $L^2(\mathbb{R}^d)$ defined by $\mathfrak{Op}^A(a)$ and $\Phi_t$ for $t\in\mathbb{R}$ be the Hamiltonian flow defined by $\bar{a}_0$, the principal part of $a$ defined by \eqref{7.2}. Then there exist $T>0$ and $g_t\in C^1\big((-T,T);S^0(\mathbb{R}^d)\big)$ such that:
\begin{enumerate}
 \item $G_t\,:=\,e^{-itP}\,=\,\mathfrak{Op}^A_{\Phi_t}(g_t),\qquad\forall t\in(-T,T)$.
\item $g_t\,\sim\,\sum_{k=0}^\infty b_{t,k}$, where $b_{t,k}\in C^1\big((-T,T);S^{-k}(\mathbb{R}^d)\big)$, $\forall k\in\mathbb{N}$, are the solutions of the transport equations \eqref{7.10} and \eqref{7.13}.
\end{enumerate}
\end{theorem}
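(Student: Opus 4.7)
The strategy is to correct the parametrix $F_t$ from Proposition \ref{P7.2} by a smoothing term, obtain a family of operators that solves the Schrödinger Cauchy problem for $P$ exactly, and identify it with $e^{-itP}$ via uniqueness. Recall that Proposition \ref{P7.2} provides $F_t=\mathfrak{Op}^A_{\Phi_t}(b_t)$ with $b_t\in C^1\big((-T,T);S^0(\mathbb{R}^d)\big)$, $b_t\sim\sum_k b_{t,k}$, $F_0=1$, and $i\partial_tF_t - PF_t = R_t$ with $R_t$ smoothing, while Lemma \ref{L7.3} produces the Volterra sum $W(t)=\mathfrak{Op}^A(w(t))$, $w\in C\big((-T,T);S^{-\infty}(\mathbb{R}^d)\big)$, satisfying $W(t)=iR_t+\int_0^t iR_{t-\tau}W(\tau)\,d\tau$.

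I would define
$$
G_t \,:=\, F_t \,+\, \int_0^t F_{t-s}\circ W(s)\,ds,\qquad t\in(-T,T),
$$
which is a $C^1$ family in $\mathbb{B}\big(\mathcal{S}(\mathbb{R}^d)\big)$ with $G_0=1$. A direct computation gives
$$
i\partial_t G_t-PG_t \,=\, R_t+iW(t)+\int_0^t R_{t-s}W(s)\,ds,
$$
and the right-hand side vanishes identically by the Volterra identity from Lemma \ref{L7.3}. Hence, for any $u\in\mathcal{S}(\mathbb{R}^d)\subset\mathcal{H}^1_A$, the curve $t\mapsto G_tu$ is $C^1$ in $L^2(\mathbb{R}^d)$ and solves $i\partial_t v=Pv$ with $v(0)=u$; the elementary energy estimate $\partial_t\|G_tu-e^{-itP}u\|^2=0$ (which uses the self-adjointness of $P$ from Lemma \ref{L1.18}) forces $G_tu=e^{-itP}u$, and density of $\mathcal{S}(\mathbb{R}^d)$ in $L^2(\mathbb{R}^d)$ yields the operator identity $G_t=e^{-itP}$.

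It remains to recognize $G_t$ as a magnetic FIO associated to $\Phi_t$. For each fixed $s\in[0,t]$, Theorem \ref{T4.6} (after rewriting $W(s)$ in the left quantization via Proposition \ref{P1.14}) identifies $F_{t-s}\circ W(s)$ with $\mathfrak{Op}^A_{\Phi_{t-s}}(c_{t,s})$ for some $c_{t,s}\in S^{-\infty}(\mathbb{R}^d)$ continuous in $s$; by Lemma \ref{L2.9} this equals a magnetic $\Psi$DO with a smoothing symbol $\tilde c_{t,s}\in S^{-\infty}(\mathbb{R}^d)$. Integrating over $s$ produces a magnetic $\Psi$DO with symbol $\int_0^t\tilde c_{t,s}\,ds\in S^{-\infty}(\mathbb{R}^d)$, and applying Lemma \ref{L2.9} in the reverse direction expresses this as $\mathfrak{Op}^A_{\Phi_t}(\rho_t)$ with $\rho_t\in S^{-\infty}(\mathbb{R}^d)$ of class $C^1$ in $t$. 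Setting $g_t:=b_t+\rho_t\in C^1\big((-T,T);S^0(\mathbb{R}^d)\big)$ yields $G_t=\mathfrak{Op}^A_{\Phi_t}(g_t)$; since $\rho_t$ is smoothing, it does not affect the asymptotic expansion, so $g_t\sim\sum_k b_{t,k}$.

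The principal obstacle lies in this last step. Neither the phase $\Phi_{t-s}$ nor the FIO symbol $c_{t,s}$ is uniform in $s$, yet the aggregated operator must be presentable as an FIO with the fixed phase $\Phi_t$. What makes this work is that Lemma \ref{L2.9} provides complete freedom to change the phase function of an FIO whose symbol is in $S^{-\infty}(\mathbb{R}^d)$, because both representations then amount to smoothing operators with Schwartz-class integral kernels multiplied by the magnetic phase $\omega^A(x,y)$. The $C^1$ regularity in $t$ of $\rho_t$ follows by dominated convergence in the Fréchet topology of $S^{-\infty}(\mathbb{R}^d)$, using the continuity of $s\mapsto\tilde c_{t,s}$ and the smooth dependence of the composition formulas of Section 4 on the parameters of the phase function.
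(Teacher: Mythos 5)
Your proposal is correct and follows essentially the same route as the paper's proof: you form $H(t)=F_t+\int_0^t F_{t-\tau}W(\tau)\,d\tau$, observe that the Schr\"odinger residual $i\partial_t H - P H$ vanishes via the Volterra identity of Lemma \ref{L7.3}, and then use Lemma \ref{L2.9} (together with Proposition \ref{P1.14} and Theorem \ref{T4.6}) to recast the smoothing correction as $\mathfrak{Op}^A_{\Phi_t}(\rho_t)$ with $\rho_t\in C^1\big((-T,T);S^{-\infty}(\mathbb{R}^d)\big)$, concluding $g_t=b_t+\rho_t$. The one stylistic difference is the uniqueness step: the paper sets $S_t:=G_{-t}H(t)-1$ and shows $\partial_tS_t=0$ at the operator level by commuting $P$ with $G_{-t}$, while you apply the equivalent energy-conservation argument $\partial_t\|G_tu-e^{-itP}u\|^2=0$ for $u\in\mathcal{S}(\mathbb{R}^d)$; both hinge on self-adjointness of $P$ and on $H(0)=1$.
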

\begin{proof}
 The positive parameter $T$ is chosen as in the begining of this section, limited by the same restrictions (see \eqref{7.4}-\eqref{7.6}). Let  $F_t$ and $R_t$ be the operators defined in Proposition \ref{P7.2} and $W(t)$ the family of operators from Lemma \ref{L7.3}; we also have then that $F_t\in C^1\big((-T,T);\mathbb{B}\big(\mathcal{S}(\mathbb{R}^d)\big)\big)$ and the operators $R_t$ and $W(t)$ belong to $C\big((-T,T);\mathbb{B}\big(\mathcal{S}(\mathbb{R}^d)\big)\big)$. Moreover, in Lemma 7.40 from \cite{IMP2} it was proved that $G_t\in C^1\big((-T,T);\mathbb{B}\big(\mathcal{S}(\mathbb{R}^d)\big)\big)$. Let us denote by:
\begin{equation}\label{7.46}
 H(t)\,:=\,F_t\,+\,\int_0^tF_{t-\tau}W(\tau)\,d\tau\,\in\,C^1\big((-T,T);\mathbb{B}\big(\mathcal{S}(\mathbb{R}^d)\big)\big).
\end{equation}
We consider \eqref{7.31} and \eqref{7.41} and deduce that
$$
i\partial_tH(t)-b^A(x,D)\circ H(t)\,=\,i\partial_tF_t-b^A(x,D)\circ F_t\,+\,iW(t)\,+\,\int_0^t\left[\left(i\partial_t-b^A(x,D)\right)F_{t-\tau}\right]W(\tau)\,d\tau=
$$
\begin{equation}\label{7.46a}
 =\,R_t\,+\,iW(t)\,+\,\int_0^tR_{t-\tau}W(\tau)\,d\tau\,=\,-iW(t)+iW(t)\,=\,0.
\end{equation}
In conclusion we have
\begin{equation}\label{7.47}
 i\partial_tH(t)-b^A(x,D)\circ H(t)=0,\ \forall t\in(-T,T),\quad H(0)=1.
\end{equation}
But we also have that
\begin{equation}\label{7.48}
 \partial_tG_t\,=\,-ib^A(x,D)G_t,\ \forall t\in(-T,T),\quad G_0=1.
\end{equation}
We denote by
$$
S_t\,:=\,G_{-t}H(t)\,-\,1\,\in\,C^1\big((-T,T);\mathbb{B}\big(\mathcal{S}(\mathbb{R}^d)\big)\big).
$$
Due to the fact that $P$ and $G_t$ commute, we deduce that
$$
\partial_tS_t\,=\,ib^A(x,D)G_{-t}H(t)-iG_{-t}b^A(x,D)H(t)\,=\,0
$$
and $S_0=0$. We conclude that $S_t=0$ and
\begin{equation}\label{7.49}
 G_t\,=\,H(t)\,=\,F_t\,+\,\int_0^tF_{t-\tau}W(\tau)\,d\tau,\quad\forall t\in(-T,T).
\end{equation}
Taking into account that $W(\tau)=\mathfrak{Op}^A(w(\tau))$ with $w\in C\big((-T,T);S^{-\infty}(\mathbb{R}^d)\big)$ and $F_{t-\tau}=\mathfrak{Op}^A_{\Phi_{t-\tau}}(b_{t-\tau})$ with $b_t\in  C^1\big((-T,T);S^{0}(\mathbb{R}^d)\big)$ we deduce that $F_{t-\tau}W(\tau)=\mathfrak{Op}^A_{\Phi_{t-\tau}}(\alpha_{t,\tau})$ where the map
$$
\left\{(t,\tau)\in\mathbb{R}^2;|\tau|\leq|t|\right\}\ni(t,\tau)\mapsto\alpha_{t,\tau}\in S^{-\infty}(\mathbb{R}^d)
$$
is continuous and continuously differentiable with respect to $t$. Thus, using Lemma \ref{L2.9} we deduce that it exists a family of symbols $\{\beta_{t,\tau}\}$ having the same properties as $\alpha_{t,\tau}$ and such that
$$
F_{t-\tau}W(\tau)=\mathfrak{Op}^A_{\Phi_{t-\tau}}(\alpha_{t,\tau})=\mathfrak{Op}^A(\beta_{t,\tau}).
$$
Thus the integral in \eqref{7.49} is of the form $\mathfrak{Op}^A(\gamma_t)$ with $\gamma_t\in C^1\big((-T,T);S^{-\infty}(\mathbb{R}^d)\big)$ and using Lemma \ref{L2.9} we deduce that it is in fact of the form $\mathfrak{Op}^A_{\Phi_t}(\delta_t)$ with $\delta_t\in C^1\big((-T,T);S^{-\infty}(\mathbb{R}^d)\big)$. In conclusion we proved that
$$
G_t=\mathfrak{Op}^A_{\Phi_t}(g_t),\quad\text{with }g_t:=b_t+\delta_t\in C^1\big((-T,T);S^{0}(\mathbb{R}^d)\big)
$$
\end{proof}

\begin{remark}\label{R7.5}
 Let us suppose that for a symbol $a$ that satisfies Hypothesis \ref{Hyp-a}, the homogeneous principal part $a_0$ does only depend on the variable $\xi$. Then $\bar{a}_0=\chi a_0$ also depends only on the variable $\xi$. As noticed in Remarks \ref{R1.26} and \ref{R1.29}, in this case we can take $T=\infty$ and $\Phi_t(y,\eta)=<y+t\nabla\bar{a}_0(\eta),\eta>$, with a generating function $U_t(x,\eta)=<x,\eta>-t\bar{a}_0(\eta)$.

This situation appears for both operators that may be considered as {\it relativistic Schr\"{o}dinger operators}:
\begin{enumerate}
 \item $P:=\mathfrak{Op}^A(<\xi>)$ i.e. $a(\xi)=<\xi>$ and $a_0(\xi)=|\xi|$.
\item $P:=\sqrt{\mathfrak{Op}^A(<\xi>^2)}$. In this case Theorem 6.33 in \cite{IMP2} implies the existence of $a\in S^1(\mathbb{R}^d)$ real, such that $P=\mathfrak{Op}^A(a)$. Then we have $P^2=\mathfrak{Op}^A(<\xi>^2)$ and the Theorem 2.2 in \cite{IMP1} shows that $b:=a^2-<\xi>^2\in S^1(\mathbb{R}^d)$. There exists $R>0$ such that $a^2\geq(1/4)<\xi>^2$ for any $\xi\in\mathbb{R}^d$ with $<\xi>\geq R$. As $d\geq 2$ it follows then that $a$ has a constant sign and thus we have either $a\geq(1/2)<\xi>$ or $-a\geq(1/2)<\xi>$  for any $\xi\in\mathbb{R}^d$ with $<\xi>\geq R$. Due to the G\aa rding inequality for magnetic $\Psi$DO (Theorem 4.3 in \cite{IMP1}; see also Corollary 4.4), in the second case the operator $P\geq0$ would be bounded! As $a$ is bounded for $<\xi>\leq R$, it exists $M>0$ such that $a+M\geq(1/2)<\xi>$. From the equality $a^2-<\xi>^2=b$ we deduce that
$$
a(x,\xi)\,-\,<\xi>\,=\,\frac{b(x,\xi)+M(a(x,\xi)-<\xi>)}{a(x,\xi)+M+<\xi>}\,\in\,S^0(\mathbb{R}^d).
$$
We conclude that as in the first case above the operator $P$ admites as principal symbol $<\xi>$ and thus as homogeneous principal part $|\xi|$.
\end{enumerate}
\end{remark}

\section{Applications}

\subsection{Continuity of a magnetic FIO between magnetic Sobolev spaces}

\begin{theorem}\label{T8.1}
 Suppose given a symplectomorphism $\Phi:\mathbb{T}^*\mathbb{R}^d\rightarrow\mathbb{T}^*\mathbb{R}^d$ defined by a generating function $U$ that satisfies Hypothesis \ref{Hyp-U}. Then for any $a\in S^m(\mathbb{R}^d)$ and for any $s\in\mathbb{R}$ we have that
$$
\mathfrak{Op}^A_{\Phi}(a)\in\mathbb{B}\big(\mathcal{H}^{s+m}_A;\mathcal{H}^s_A\big).
$$
\end{theorem}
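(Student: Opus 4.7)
The strategy is to reduce the $\mathcal{H}^{s+m}_A\to\mathcal{H}^s_A$ boundedness of $F := \mathfrak{Op}^A_\Phi(a)$ to the $L^2$-boundedness of a magnetic FIO of order zero; this order-zero boundedness, in turn, will follow from Theorem~\ref{T5.7} together with the $TT^*$-trick and the $L^2$-continuity of magnetic $\Psi$DOs of order zero supplied by Lemma~\ref{L1.18}(1).

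The first step is to prove that any $G = \mathfrak{Op}^A_\Phi(c)$ with $c\in S^0(\mathbb{R}^d)$ extends to a bounded operator on $L^2(\mathbb{R}^d)$. Lemma~\ref{L2.8} provides a formal adjoint $G^*\in\mathbb{B}\big(\mathcal{S}(\mathbb{R}^d)\big)$, and by Theorem~\ref{T5.7} one has $G\circ G^* = {c'}^A(x,D)$ with $c'\in S^0(\mathbb{R}^d)$; Lemma~\ref{L1.18}(1) applied with $m = 0$ then gives $GG^*\in\mathbb{B}(L^2)$. Computing $\|G^*u\|^2 = \langle u,GG^*u\rangle$ on $\mathcal{S}$ and applying Cauchy--Schwarz delivers $\|G^*u\|\le\|GG^*\|^{1/2}\|u\|$, which extends $G^*$ and hence $G$ to $L^2$, with norm controlled by a fixed seminorm of $c$ in $S^0$.

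With the order-zero case in hand, the general case is handled by sandwiching $F$ between the Sobolev weights $P_t := \mathfrak{Op}^A(p_t)$ and their elliptic parametrices. Proposition~\ref{P1.14} puts $P_s$ in left quantization and Theorem~\ref{T3.6} then yields $P_s\circ F = \mathfrak{Op}^A_\Phi(c_1)$ with $c_1\in S^{s+m}$. Since $p_{s+m}$ is real and elliptic, the usual elliptic recursion produces $Q_{s+m}\in\mathfrak{Op}^A\big(S^{-(s+m)}\big)$ with $Q_{s+m}P_{s+m} = I + R$ and $R\in\mathfrak{Op}^A(S^{-\infty})$. Theorem~\ref{T4.6} then gives $\mathfrak{Op}^A_\Phi(c_1)\circ Q_{s+m} = \mathfrak{Op}^A_\Phi(c_0)$ with $c_0\in S^0$ (bounded on $L^2$ by Step~1), and $\mathfrak{Op}^A_\Phi(c_1)\circ R$ is a FIO with symbol in $S^{-\infty}\subset S^0$ (again $L^2$-bounded). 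For $u\in\mathcal{S}$, writing $P_sFu = \mathfrak{Op}^A_\Phi(c_0)\,P_{s+m}u - \mathfrak{Op}^A_\Phi(c_1)\,Ru$ controls $\|P_sFu\|_{L^2}$ by $\|P_{s+m}u\|_{L^2} + \|u\|_{L^2}\le C\|u\|_{s+m,A}$. The same scheme applied to $F$ in place of $P_sF$ (using Theorem~\ref{T4.6} to turn $F\circ Q_{s+m}$ into a FIO of order $-s$, embedded into $S^0$ when $s\ge 0$) controls $\|Fu\|_{L^2}$, giving $\|Fu\|_{s,A}\le C\|u\|_{s+m,A}$ on $\mathcal{S}$ in the range $s\ge 0$, extended by density.

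The remaining ranges, where one or both indices are negative and $\mathcal{H}^{\cdot}_A$ is defined by duality, are closed by invoking Theorem~\ref{T6.6}: $F^*F = b^A(x,D)$ with $b\in S^{2m}$, so by Lemma~\ref{L1.18}(1) the operator $F^*F$ maps $\mathcal{H}^{s+m}_A$ into $\mathcal{H}^{s-m}_A$ boundedly; the identity $\|Fu\|_{L^2}^2 = \langle F^*Fu,u\rangle_{L^2}$, interpreted through the $\mathcal{H}^r_A$--$\mathcal{H}^{-r}_A$ duality pairing and combined with the preceding positive-$s$ estimates applied to the adjoint, closes the argument. The main obstacle of the whole proof is really the very first step, the $L^2$-boundedness of order-zero magnetic FIOs; once this corollary of Theorem~\ref{T5.7} is in place, the rest is bookkeeping with the composition formulas of Sections~4--7 and the elliptic parametrix of $P_t$, the only real subtlety being the adaptation to negative Sobolev indices through duality and the $\Psi$DO-identities of Sections~6 and~7.
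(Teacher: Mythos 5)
Your first step — the $TT^*$ reduction of $L^2$-boundedness of an order-zero magnetic FIO to the Calderon--Vaillancourt theorem for order-zero magnetic $\Psi$DOs, via Theorem~\ref{T5.7} (resp.\ Theorem~\ref{T6.6}) — is exactly the engine of the paper's proof, and the idea of conjugating $F=\mathfrak{Op}^A_\Phi(a)$ by Sobolev weights to reduce to order zero is also the same. Where you differ is in the choice of weight, and this is where a genuine gap appears. The paper conjugates by $Q^{\pm s}$, the real powers of the self-adjoint operator $Q$ associated to $\mathfrak{Op}^A(\langle\xi\rangle)$: by the result $Q\ge 1$ of \cite{IMP3} and Theorem~6.33 of \cite{IMP2}, $Q^s=\mathfrak{Op}^A(q_s)$ with $q_s\in S^s(\mathbb{R}^d)$ for \emph{every} $s\in\mathbb{R}$, and since $Q^{-s}Q^s=I$ exactly, Lemma~\ref{L1.18}(1) turns $Q^{s+m}:\mathcal{H}^{s+m}_A\to L^2$ and $Q^{-s}:L^2\to\mathcal{H}^s_A$ into bounded maps for all real $s,m$ simultaneously, and the exact factorisation $F=Q^{-s}\,(Q^sFQ^{-s-m})\,Q^{s+m}$ closes the argument uniformly in $s$. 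You instead use the operators $P_t=\mathfrak{Op}^A(\langle\xi\rangle^t)$ plus an elliptic parametrix with a smoothing remainder; this is workable for $s\ge 0$ and $s+m\ge 0$ (the extra remainder terms you write out are harmless), but it forces you into a case analysis.

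The gap is in your treatment of the negative-index cases. The last paragraph does not go through as written. First, for $u\in\mathcal{H}^{s+m}_A$ with $s+m<0$ the quantity $\|Fu\|_{L^2}$ is not a priori defined, so $\|Fu\|_{L^2}^2=\langle F^*Fu,u\rangle$ cannot be the starting point; what one must estimate is the dual norm $\|Fu\|_{s,A}=\sup\{|\langle Fu,v\rangle|:\|v\|_{-s,A}\le 1\}$. Second, and more seriously, the plan to ``apply the positive-$s$ estimates to the adjoint'' presupposes that $F^*=[\mathfrak{Op}^A_\Phi(a)]^*$ is itself a magnetic FIO of the class $\mathfrak{Op}^A_{\Phi'}(\cdot)$ to which your Step~2 applies; this has not been established anywhere in the paper (the adjoint is, informally, attached to $\Phi^{-1}$, but the paper never proves that $[\mathfrak{Op}^A_\Phi(a)]^*=\mathfrak{Op}^A_{\Phi^{-1}}(\tilde a)$, and Sections~4--7 deliberately treat the adjoint only through its compositions). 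The mapping property of $F^*F$ that you quote from Theorem~\ref{T6.6} and Lemma~\ref{L1.18}(1) is correct, but it controls the $\mathcal{H}^{s-m}_A$-norm of $F^*Fu$, not the $\mathcal{H}^s_A$-norm of $Fu$, and the pairing $\langle F^*Fu,u\rangle$ with $F^*Fu\in\mathcal{H}^{s-m}_A$ and $u\in\mathcal{H}^{s+m}_A$ is a bona fide duality pairing only when $s\ge 0$. To repair the argument along your lines you would have to either establish that $F^*$ is a magnetic FIO in the sense of Section~3, or replace $P_s$ and its parametrix by a two-sided exact isomorphism $\mathcal{H}^{s+m}_A\to L^2$ valid for all real exponents — which is precisely what $Q^{s+m}$ provides and what the paper's proof uses.
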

\begin{proof}
 Let $Q$ be the self-adjoint operator in $L^2(\mathbb{R}^d)$ associated by Lemma \ref{L1.18}.2 to $\mathfrak{Op}^A(<\xi>)$. We can now either adapt an argument from \cite{Ic} or use a result in \cite{IMP3} stating that $e^{-t(Q-1)}$ is a contraction for any $t\geq0$ so that $Q\geq1$ follows. Due to Theorem 6.33 in \cite{IMP2} it follows that for any $s\in\mathbb{R}$ there exists a symbol $q_s\in S^s(\mathbb{R}^d)$ such that $Q^s=\mathfrak{Op}^A(q_s)$. Using Proposition \ref{P1.14}.1 we deduce the existence of a symbol $r_s\in S^s(\mathbb{R}^d)$ such that $Q^s=r_s^A(x,D)$. Let us introduce the notation $M:=Q^s\mathfrak{Op}^A_{\Phi}(a)Q^{-s-m}$; using theorems \ref{T3.6} and \ref{T4.6} we deduce the existence of a symbol $b\in S^0(\mathbb{R}^d)$ such that $M=\mathfrak{Op}^A_{\Phi}(b)$. From Theorem \ref{T6.6} it follows then the existence of a symbol $c\in S^0(\mathbb{R}^d)$ such that $M^*M=c^A(x,D)$. We use now Proposition \ref{P1.14}.2 to deduce the existence of a symbol $e\in S^0(\mathbb{R}^d)$ such that $M^*M=\mathfrak{Op}^A(e)$. Using the Calderon-Vaillancourt type Theorem (Theorem 3.1 in \cite{IMP1}) we see that $M^*M\in\mathbb{B}\big(L^2(\mathbb{R}^d)\big)$. But that means that $M\in\mathbb{B}\big(L^2(\mathbb{R}^d)\big)$, and Lemma \ref{L1.18}.1 and the equality $\mathfrak{Op}^A_{\Phi}(a)=Q^{-s}MQ^{s+m}$ imply the result of the Theorem.
\end{proof}

\subsection{An Egorov type theorem}\label{Eg-T}

\begin{theorem}\label{T8.2}
 Suppose given a symplectomorphism $\Phi:\mathbb{T}^*\mathbb{R}^d\rightarrow\mathbb{T}^*\mathbb{R}^d$ defined by a generating function $U$ that satisfies Hypothesis \ref{Hyp-U}. Let $s\in\mathbb{R}$ and let $b_\pm\in S^{\pm s}(\mathbb{R}^d)$ be two symbols such that there exists a symbol $e\in S^{-1}(\mathbb{R}^d)$ satisfying
\begin{equation}\label{8.2}
 \mathfrak{Op}^A_{\Phi}(b_+)\circ\left[\mathfrak{Op}^A_{\Phi}(b_-)\right]^*\,-\,1\,=\,e^A(x,D),
\end{equation}
i.e. the formal adjoint of $\mathfrak{Op}^A_{\Phi}(b_-)$ is an approximate right inverse of $\mathfrak{Op}^A_{\Phi}(b_+)$. Then, for any $m\in \mathbb{R}$ and any $a\in S^m(\mathbb{R}^d)$ there exists a symbol $\tilde{a}\in S^m(\mathbb{R}^d)$ such that
\begin{equation}\label{8.3}
 \mathfrak{Op}^A_{\Phi}(b_+)\circ a^A(x,D)\circ\left[\mathfrak{Op}^A_{\Phi}(b_-)\right]^*\,=\,\tilde{a}(x,D),
\end{equation}
\begin{equation}\label{8.4}
 \tilde{a}\,-\,a\circ\Phi^{-1}\,\in\,S^{m-1}(\mathbb{R}^d).
\end{equation}
\end{theorem}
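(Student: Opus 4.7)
The plan is to decompose the triple product on the left of \eqref{8.3} into two successive binary compositions and apply the results of Sections 4 and 6. First, I would apply Theorem \ref{T4.6} to $\mathfrak{Op}^A_\Phi(b_+) \circ a^A(x,D)$: this is a FIO composed with a $\Psi$DO, so it equals $\mathfrak{Op}^A_\Phi(c)$ for some $c \in S^{s+m}(\mathbb{R}^d)$ with
\[
c(x,\eta) \,-\, b_+(x,\eta)\, a\big(\nabla_\eta U(x,\eta),\eta\big) \;\in\; S^{s+m-1}(\mathbb{R}^d).
\]
Then I would apply Theorem \ref{T5.7} to $\mathfrak{Op}^A_\Phi(c) \circ \big[\mathfrak{Op}^A_\Phi(b_-)\big]^*$: this is a FIO composed with the adjoint of a FIO, so it is a $\Psi$DO $\tilde{a}^A(x,D)$ with $\tilde{a} \in S^m(\mathbb{R}^d)$ satisfying
\[
\tilde{a}(x,\xi) \;-\; c(x,\lambda)\, \overline{b_-(x,\lambda)}\, \big|\det(\nabla^2_{\xi,x}U)(x,\lambda)\big|^{-1} \;\in\; S^{m-1}(\mathbb{R}^d),
\]
where $\lambda = \lambda(x,0,\xi)$ denotes the inverse of the diffeomorphism $\eta \mapsto \nabla_x U(x,\eta)$. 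Substituting the formula for $c$ (using $\langle\lambda\rangle \asymp \langle\xi\rangle$ from Lemma \ref{L1.5} to track the symbol class under the substitution $\eta \mapsto \lambda(x,0,\xi)$) yields an explicit principal-symbol identity for $\tilde{a}$ modulo $S^{m-1}(\mathbb{R}^d)$.

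The decisive step is to exploit hypothesis \eqref{8.2}. Applying Theorem \ref{T5.7} directly to $\mathfrak{Op}^A_\Phi(b_+) \circ \big[\mathfrak{Op}^A_\Phi(b_-)\big]^*$ shows that this product is itself a $\Psi$DO whose principal symbol (modulo $S^{-1}$) is exactly
\[
b_+(x,\lambda)\, \overline{b_-(x,\lambda)}\, \big|\det(\nabla^2_{\xi,x}U)(x,\lambda)\big|^{-1}.
\]
But by \eqref{8.2} this same product equals $1 + e^A(x,D)$ with $e \in S^{-1}(\mathbb{R}^d)$, a $\Psi$DO whose left-quantization symbol is $\equiv 1 \pmod{S^{-1}}$. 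Proposition \ref{P1.14} and the uniqueness of the principal symbol then force the bracketed factor above to be $\equiv 1 \pmod{S^{-1}}$. Plugging this identity into the formula for $\tilde{a}$ collapses the scalar prefactor to $1$ modulo $S^{-1}$, and multiplication of $a(\ldots)$ by a symbol in $S^{-1}$ is absorbed into the $S^{m-1}$ remainder, giving
\[
\tilde{a}(x,\xi) \;\equiv\; a\big(\nabla_\eta U(x,\lambda(x,0,\xi)),\,\lambda(x,0,\xi)\big) \pmod{S^{m-1}(\mathbb{R}^d)}.
\]

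To finish, I would identify the right-hand side with $a \circ \Phi^{-1}$ by reading $\Phi^{-1}$ off the generating function: by equation \eqref{1.22} combined with Lemma \ref{L1.24}, the relation $(x,\xi)=\Phi(y,\eta)$ is equivalent to $\xi = \nabla_x U(x,\eta)$ and $y = \nabla_\eta U(x,\eta)$, so $\Phi^{-1}(x,\xi) = \bigl(\nabla_\eta U(x,\lambda(x,0,\xi)),\,\lambda(x,0,\xi)\bigr)$. That $a \circ \Phi^{-1}$ actually belongs to $S^m(\mathbb{R}^d)$ is a routine chain-rule verification using $\lambda \in S^1$ (Lemma \ref{L1.5}) and $\nabla_\eta U(x,\eta) - x = \nabla_\eta d(x,\eta) \in S^0$ (Hypothesis \ref{Hyp-U}). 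The only nontrivial part of the whole argument is the passage from the operator identity \eqref{8.2} to the pointwise congruence for the scalar factor via uniqueness of the principal symbol; everything else is bookkeeping that chains together Theorems \ref{T4.6} and \ref{T5.7}.
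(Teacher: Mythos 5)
Your proof is correct and follows essentially the same route as the paper: decompose the triple product into two binary compositions via Theorems \ref{T4.6} and \ref{T5.7}, obtain the pointwise congruence \eqref{8.5}, apply it with $a=1$ together with hypothesis \eqref{8.2} to force the scalar prefactor to be $\equiv 1 \pmod{S^{-1}}$, and finally read off $\Phi^{-1}$ from the generating function $U$. The only cosmetic difference is that you phrase the key step as applying Theorem \ref{T5.7} directly to $\mathfrak{Op}^A_\Phi(b_+)\circ[\mathfrak{Op}^A_\Phi(b_-)]^*$, whereas the paper words it as specializing \eqref{8.5} to $a=1$; these are the same calculation.
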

\begin{proof}
 The equality \eqref{8.3} follows from the Theorems \ref{T4.6} and \ref{T5.7} from which we also deduce that if we denote by $\lambda(x,0,\xi)$ the inverse of the $C^\infty$-diffeomorphism
$$
\mathbb{R}^d\ni\xi\mapsto\nabla_xU(x,\xi)\in\mathbb{R}^d,
$$
the following expression
\begin{equation}\label{8.5}
 \tilde{a}(x,\xi)\,-\,b_+\big(x,\lambda(x,0,\xi)\big)a\big((\nabla_\xi U)\big(x,\lambda(x,0,\xi)\big),\lambda(x,0,\xi)\big)\overline{\,b_-\big(x,\lambda(x,0,\xi)\big)}\left|\det\big(\nabla^2_{\xi,x}U\big)\big(x,\lambda(x,0,\xi)\big)\right|^{-1},
\end{equation}
defines a symbol of class $S^{m-1}(\mathbb{R}^d)$.

We use \eqref{8.5} with $a=1$ and \eqref{8.2} to deduce that
\begin{equation}\label{8.6}
 1\,-\,\,b_+\big(x,\lambda(x,0,\xi)\big)\overline{\,b_-\big(x,\lambda(x,0,\xi)\big)}\left|\det\big(\nabla^2_{\xi,x}U\big)\big(x,\lambda(x,0,\xi)\big)\right|^{-1}\,\in\,S^{-1}(\mathbb{R}^d).
\end{equation}
Now from \eqref{8.5} and \eqref{8.6} we conclude that
\begin{equation}\label{8.7}
 \tilde{a}(x,\xi)\,-\,a\big((\nabla_\xi U)\big(x,\lambda(x,0,\xi)\big),\lambda(x,0,\xi)\big)\,\in\,S^{m-1}(\mathbb{R}^d).
\end{equation}
We notice that by definition $\eta=\lambda(x,0,\xi)$ if and only if $\xi=\nabla_x U(x,\eta)$ so that the following equality
$$
\Phi^{-1}\big(x,\nabla_x U(x,\eta)\big)=\big(\nabla_{\eta}U(x,\eta),\eta\big)
$$
and \eqref{8.7} imply the conclusion \eqref{8.4}.
\end{proof}

\subsection{An estimation for the distribution kernel of the evolution group}\label{Ev-Gr}

We consider the assumptions from Section \ref{Ev-group}. We consider a symbol $a\in S^1(\mathbb{R}^d)$ satisfying Hypothesis \ref{Hyp-a}; thus $a$ has a homogeneous principal part $a_0$. Let $\bar{a}_0$ be defined by \eqref{7.2}. We denote by 
$$
\Phi_t(y,\eta):=\big(x(t;y,\eta),\xi(t;y,\eta)\big), \quad\text{respectively}\quad\Phi^0_t(y,\eta):=\big(x_0(t;y,\eta),\xi_0(t;y,\eta)\big)
$$
the Hamiltonian flows defined by $\bar{a}_0$, respectively by $a_0$, whose properties have been described earlier in Lemmas \ref{L1.25} and respectively \ref{L1.27}. We choose $T>0$ in agreement with Remark \ref{R1.30} and Proposition \ref{P1.32}, so that for any $t\in(-T,T)$ the flows $\Phi_t$ and $\Phi^0_t$ are generated by the generating functions $U_t$ and respectively $U^0_t$, that satisfy Hypothesis \ref{Hyp-U}.

We shall use the following notations:
\begin{equation}\label{8.8}
 d(t;x,y)\,:=\,\underset{|\eta|=1}{\inf}\left|x-x_0(t;y,\eta)\right|,
\end{equation}
\begin{equation}\label{8.9}
 D_\epsilon\,:=\,\left\{(t,x,y)\in(-T,T)\times\mathbb{R}^d\times\mathbb{R}^d\ ;\ d(t;x,y)>\epsilon\right\},\quad\epsilon>0,
\end{equation}
\begin{equation}\label{8.10}
 D\,:=\,\underset{\epsilon>0}{\bigcup}D_\epsilon\,=\,\left\{(t,x,y)\in(-T,T)\times\mathbb{R}^d\times\mathbb{R}^d\ ;\ x_0(t;y,\eta)\ne x\ \forall\eta\in\mathbb{R}^d,\,|\eta|=1\right\}.
\end{equation}
Evidently the sets $D_\epsilon$ and $D$ are open subsets of $(-T,T)\times\mathbb{R}^d\times\mathbb{R}^d$.

From Proposition \ref{P1.32}.5 we deduce that
\begin{equation}\label{8.11}
 U^0_t(x,\eta)\,=\,<x,\eta>+d^0_t(x,\eta),\quad\forall t\in(-T,T),\,\forall(x,\eta)\in\mathbb{T}^*\mathbb{R}^d,
\end{equation}
where $d^0_t\in C^\infty\big((-T,T)\times\mathbb{R}^d\times(\mathbb{R}^d\setminus\{0\})\big)$ is positive homogeneous of degree 1 in $\eta$ and for any $T_0\in(0,T)$ and any $k\in\mathbb{N}$ and $(\alpha,\beta)\in[\mathbb{N}^d]^2$ the function $\partial^k_t\partial^\alpha_x\partial^\beta_\eta d^0_t$ is bounded on $[-T_0,T_0]\times\mathbb{R}^d\times\{|\eta|=1\}$.

Let us still denote by
\begin{equation}\label{8.12}
 \varphi_t(x,y,\eta)\,:=\,U^0_t(x,\eta)-<y,\eta>\,=\,<x-y,\eta>+d^0_t(x,\eta),
\end{equation}
\begin{equation}\label{8.13}
 d_{\varphi}(t;x,y)\,:=\,\underset{|\eta|=1}{\inf}\left|\nabla_{\eta}\varphi_t(x,y,\eta)\right|.
\end{equation}

\begin{lemma}\label{L8.3}
 There exists a constant $C\geq1$ such that
$$
C^{-1}d(t;x,y)\,\leq\,d_{\varphi}(t;x,y)\,\leq\,Cd(t;x,y),\quad\forall t\in(-T,T),\,\forall(x,y)\in\big[\mathbb{R}^d\big]^2.
$$
\end{lemma}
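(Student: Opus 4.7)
The plan is to exhibit a pointwise comparison, uniformly in the unit vector $\eta$, between the vector $\nabla_\eta\varphi_t(x,y,\eta)$ and the displacement $x-x_0(t;y,\eta)$, and then take the infimum over $|\eta|=1$. Taking the gradient of \eqref{8.12} gives
\begin{equation*}
\nabla_\eta\varphi_t(x,y,\eta)\,=\,\big(\nabla_\eta U^0_t\big)(x,\eta)\,-\,y.
\end{equation*}
Now by Proposition \ref{P1.32}(4) applied to $Y=(y,\eta)$, one has $y=\big(\nabla_\eta U^0_t\big)\big(x_0(t;y,\eta),\eta\big)$. Therefore, writing $x_0\equiv x_0(t;y,\eta)$ for brevity,
\begin{equation*}
\nabla_\eta\varphi_t(x,y,\eta)\,=\,\big(\nabla_\eta U^0_t\big)(x,\eta)\,-\,\big(\nabla_\eta U^0_t\big)(x_0,\eta)\,=\,M(t;x,y,\eta)\cdot(x-x_0),
\end{equation*}
where the $d\times d$ matrix $M$ is defined by the fundamental theorem of calculus as
\begin{equation*}
M(t;x,y,\eta)\,:=\,\int_0^1\big(\nabla^2_{x,\eta} U^0_t\big)\big(x_0+s(x-x_0),\eta\big)\,ds.
\end{equation*}

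The next step is to control $M$ and $M^{-1}$ uniformly. Since by the choice of $T$ the generating function $U^0_t=\langle x,\eta\rangle+d^0_t$ satisfies Hypothesis \ref{Hyp-U}, there exists $\delta\in[0,1)$ with $\|\nabla^2_{x,\eta}d^0_t(z,\eta)\|\le\delta$ for all $z$, $\eta$ and all $t\in(-T,T)$. Writing $\nabla^2_{x,\eta}U^0_t=1_d+\nabla^2_{x,\eta}d^0_t$ and averaging over $s\in[0,1]$ yields
\begin{equation*}
\|M(t;x,y,\eta)-1_d\|\,\le\,\delta,\qquad\text{so that}\qquad\|M\|\le 1+\delta,\quad\|M^{-1}\|\le\frac{1}{1-\delta},
\end{equation*}
by a standard Neumann series argument. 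Consequently,
\begin{equation*}
(1-\delta)\,|x-x_0(t;y,\eta)|\,\le\,\big|\nabla_\eta\varphi_t(x,y,\eta)\big|\,\le\,(1+\delta)\,|x-x_0(t;y,\eta)|,
\end{equation*}
valid for every $t\in(-T,T)$, every $(x,y)\in[\mathbb{R}^d]^2$ and every $\eta\in\mathbb{R}^d\setminus\{0\}$ (in particular $|\eta|=1$).

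Taking the infimum over $|\eta|=1$ and recalling the definitions \eqref{8.8} and \eqref{8.13}, one obtains the stated double inequality with $C:=\max\big(1+\delta,\,(1-\delta)^{-1}\big)\ge 1$. There is no real obstacle here: the only point to verify carefully is that the constant $\delta$ provided by Hypothesis \ref{Hyp-U} can be chosen uniformly in $t\in(-T,T)$, which is already built into the choice of $T$ made at the beginning of Section \ref{Ev-Gr} (see Proposition \ref{P1.32}(5) and the bounds on $\partial^k_t\partial^\alpha_x\partial^\beta_\eta d^0_t$ on $[-T_0,T_0]\times\mathbb{R}^d\times\{|\eta|=1\}$ together with the homogeneity of $d^0_t$ in $\eta$).
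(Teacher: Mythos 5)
Your proof is correct, and it takes a slightly different (and more direct) route than the paper's. Both arguments use Proposition \ref{P1.32}(4) to rewrite $y=\nabla_\eta U^0_t(x_0(t;y,\eta),\eta)$ and then apply the fundamental theorem of calculus to express $\nabla_\eta\varphi_t(x,y,\eta)$ as a matrix acting on $x-x_0(t;y,\eta)$. You do this in one stroke, applying the FTC to the map $x\mapsto\nabla_\eta U^0_t(x,\eta)$; the matrix you obtain is (up to a transpose in the paper's $\nabla$ vs.\ $\partial$ convention, which does not affect operator norms) $\int_0^1\nabla^2_{\eta,x}U^0_t\,ds$ along the segment, and you bound it via Hypothesis \ref{Hyp-U} applied to $U^0_t$. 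The paper instead introduces the inverse map $f_0(t;x,\eta)$ of $y\mapsto x_0(t;y,\eta)$, rewrites $\nabla_\eta\varphi_t=f_0(t;x,\eta)-y$, and then applies the FTC twice --- once to $f_0$ for the upper bound and once to $x_0$ for the lower bound --- invoking the uniform estimate $\|1_d-\nabla_y x_0\|\le\delta$ from \eqref{8.15}. The two smallness conditions are in fact equivalent, since $\nabla^2_{\eta,x}U^0_t$ and $\partial_y x_0$ are mutually inverse matrices along the flow (differentiate the identity $\nabla_\eta U^0_t(x_0(t;y,\eta),\eta)=y$ in $y$); and both conditions are guaranteed by the choice of $T$ made at the start of the subsection. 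Your version is a bit cleaner in that it avoids introducing $f_0$ and gives both bounds from a single matrix and its Neumann inverse. One small caveat, which you already acknowledge at the end: Hypothesis \ref{Hyp-U} as stated fixes $t$, so the constant $\delta$ a priori could depend on $t$; the uniform-in-$t$ version is what is actually built into the choice of $T$ (cf.\ \eqref{7.5}--\eqref{7.6} and Proposition \ref{P1.32}(5)), and the paper's proof relies on the same uniformity through \eqref{8.15}. That is a feature of the setup rather than a gap in either proof.
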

\begin{proof}
 The choice of $T>0$ is restricted by a condition of type \eqref{7.4} for $x(t;y,\eta)$ and for $x_0(t;y,\eta)$, i.e. it has to exist $\delta\in[0,1)$ such that
\begin{equation}\label{8.14}
 \left\|1_d\,-\,\nabla_y x(t;y,\eta)\right\|\,\leq\,\delta,\quad\forall t\in(-T,T),\,\forall(y,\eta)\in\mathbb{T}^*\mathbb{R}^d,
\end{equation}
\begin{equation}\label{8.15}
 \left\|1_d\,-\,\nabla_y x_0(t;y,\eta)\right\|\,\leq\,\delta,\quad\forall t\in(-T,T),\,\forall(y,\eta)\in\mathbb{R}^d\times\big(\mathbb{R}^d\setminus\{0\}\big).
\end{equation}
From this last inequality it follows that
\begin{equation}\label{8.16}
 1-\delta\,\leq\,\left\|\nabla_y x_0(t;y,\eta)\right\|\,\leq\,1+\delta,\quad\forall t\in(-T,T),\,\forall(y,\eta)\in\mathbb{R}^d\times\big(\mathbb{R}^d\setminus\{0\}\big).
\end{equation}
Let us denote by $y=f_0(t;x,\eta)$ the unique solution of the equation $x_0(t;y,\eta)=x$, for $\eta\ne0$; thus the map $\mathbb{R}^d\ni x\mapsto f_0(t;x,\eta)\in\mathbb{R}^d$ is the inverse of the $C^\infty$-diffeomorphism $\mathbb{R}^d\ni y\mapsto x_0(t;y,\eta)\in\mathbb{R}^d$, $\forall t\in(-T,T)$ and for $\eta\ne0$. We deduce that $f_0\in C^\infty\big((-T,T)\times\mathbb{R}^d\times(\mathbb{R}^d\setminus\{0\})\big)$ and $f_0$ is positive homogeneous of degree 0 with respect to $\eta$. Noticing that
$$
\partial_xf_0(t;x,\eta)\,=\,\left[\big(\partial_yx_0\big)\big(t;f_0(t;x,\eta),\eta\big)\right]^{-1},
$$
and using \eqref{8.16} we deduce that
\begin{equation}\label{8.17}
 \frac{1}{1+\delta}\,\leq\,\left\|\partial_xf_0(t;x,\eta)\right\|\,\leq\,\frac{1}{1-\delta},\quad\forall t\in(-T,T),\,\forall(x,\eta)\in\mathbb{R}^d\times\big(\mathbb{R}^d\setminus\{0\}\big).
\end{equation}

From Proposition \ref{P1.32}.4 it follows that
\begin{equation}\label{8.18}
 \nabla_\eta\varphi_t(x,y,\eta)\,=\,\nabla_\eta U^0_t(x,\eta)-y\,=\,f_0(t;x,\eta)-y.
\end{equation}
But $y=f_0\big(t;x_0(t;y,\eta),\eta\big)$, so that we have
$$
\nabla_\eta\varphi_t(x,y,\eta)\,=\,\left[\int_0^1\big(\partial_xf_0\big)\big(t;sx+(1-s)x_0(t;y,\eta),\eta\big)ds\right]\cdot\big(x-x_0(t;y,\eta)\big).
$$
Using the second inequality in \eqref{8.17} we conclude that
\begin{equation}\label{8.19}
\left|\nabla_\eta\varphi_t(x,y,\eta)\right|\,\leq\,\frac{1}{1-\delta}\left|x-x_0(t;y,\eta)\right|.
\end{equation}

On the other side, we have the equality $x=x_0\big(t;f_0(t;x,\eta),\eta\big)$, so that
$$
x-x_0(t;y,\eta)\,=\,x_0\big(t;f_0(t;x,\eta),\eta\big)-x_0(t;y,\eta)\,=\,\left[\int_0^1\big(\partial_yx_0\big(t;sf_0(t;x,\eta)+(1-s)y,\eta\big)ds\right]\cdot\big(f_0(t;x,\eta)-y)\big).
$$

We use \eqref{8.18} and the second inequality in \eqref{8.16} to deduce that
\begin{equation}\label{8.20}
 \left|\nabla_\eta\varphi_t(x,y,\eta)\right|\,\geq\,\frac{1}{1+\delta}\left|x-x_0(t;y,\eta)\right|.
\end{equation}
The inequalities \eqref{8.19} and \eqref{8.20} allow us to finish the proof.
\end{proof}
\begin{theorem}\label{T8.4}
 Let $K_t(x,y)$ be the distribution kernel of the operator $G_t:=e^{-itP}$ for some $t\in(-T,T)$, as defined in Section \ref{Ev-group}. The following properties are true:
\begin{enumerate}
 \item For $j\in\{0,1\}$ and for any $(\alpha,\beta)\in[\mathbb{N}^d]^2$, the distributions $\partial^j_t\partial^\alpha_x\partial^\beta_yK_t$ are continuous functions on the domain $D$ defined in \eqref{8.10}.
\item There exists $q\in\mathbb{N}$ such that for any $T_0\in(0,T)$ and any $\epsilon>0$ there exists a constant $C>0$ such that the following estimation is verified:
\begin{equation}\label{8.21}
 \left|K_t(x,y)\right|\,\leq\,Cd(t;x,y)^{-q},\quad\forall(t,x,y)\in D\setminus D_\epsilon,\ |t|\leq T_0.
\end{equation}
\item For any $k\in\mathbb{N}$, any $T_0\in(0,T)$ and any $\epsilon>0$ there exists a constant $C>0$ such that the following estimation is verified:
\begin{equation}\label{8.22}
  \left|K_t(x,y)\right|\,\leq\,Cd(t;x,y)^{-k},\quad\forall(t,x,y)\in D_\epsilon,\ |t|\leq T_0.
\end{equation}
\item All the derivatives of $\omega^B(y,x)K_t(x,y)$ of the type considered in point (1) of this Theorem verify estimations of the type \eqref{8.21} and \eqref{8.22}.
\end{enumerate}
\end{theorem}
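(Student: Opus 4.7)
The plan is to apply Theorem \ref{T7.4} to represent $K_t$ as an explicit oscillatory integral and then to use non-stationary phase (integration by parts in $\eta$) on the domain $D$, where by Lemma \ref{L8.3} the phase has no critical points. By Theorem \ref{T7.4}, $G_t=\mathfrak{Op}^A_{\Phi_t}(g_t)$ with $g_t\in C^1((-T,T);S^0(\mathbb{R}^d))$, so
\begin{equation*}
K_t(x,y)=\omega^A(x,y)\int_{\mathbb{R}^d}e^{i(U_t(x,\eta)-\langle y,\eta\rangle)}g_t(x,\eta)\,\dbar\eta.
\end{equation*}
Fixing $\chi\in C^\infty(\mathbb{R}^d)$ with $\chi(\eta)=0$ for $|\eta|\leq R$ and $\chi(\eta)=1$ for $|\eta|\geq 2R$, where $R$ is so large that Lemma \ref{L1.34} yields $U_t=U^0_t$ on $\{|\eta|\geq R\}$ for $|t|<T$, I would split $K_t=K_t^{\mathrm{low}}+K_t^{\mathrm{high}}$ according to $(1-\chi)+\chi$. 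The low-frequency piece is an integral over a compact $\eta$-region whose phase derivative $\nabla_\eta U_t(x,\eta)-y=(x-y)+\nabla_\eta d_t(x,\eta)$ is bounded below by $C^{-1}\langle x-y\rangle$ outside a fixed compact set in $(x,y)$; standard IBP then produce decay of $K_t^{\mathrm{low}}$ like $\langle x-y\rangle^{-N}$ for every $N$, and since $\langle x-y\rangle\sim d(t;x,y)$ as $d\to\infty$ and $\langle x-y\rangle$ stays bounded when $d$ does, this piece satisfies all the claims on its own. The substantive analysis concerns
\begin{equation*}
I_t(x,y):=\int_{\mathbb{R}^d}e^{i\varphi_t(x,y,\eta)}\chi(\eta)g_t(x,\eta)\,\dbar\eta,\qquad\varphi_t:=U^0_t(x,\eta)-\langle y,\eta\rangle,
\end{equation*}
with $\varphi_t$ the homogeneous phase of \eqref{8.12}.

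Next I would carry out the non-stationary phase argument on $I_t$. Since $d^0_t(x,\cdot)$ is positive homogeneous of degree $1$, the vector field $\nabla_\eta\varphi_t=(x-y)+\nabla_\eta d^0_t(x,\eta)$ is homogeneous of degree $0$ in $\eta$, hence constant along rays, and Lemma \ref{L8.3} upgrades to
\begin{equation*}
|\nabla_\eta\varphi_t(x,y,\eta)|\geq C^{-1}d(t;x,y),\qquad\forall(t,x,y)\in D,\ \eta\neq 0.
\end{equation*}
Setting $L:=-i|\nabla_\eta\varphi_t|^{-2}\langle\nabla_\eta\varphi_t,\nabla_\eta\rangle$, one has $Le^{i\varphi_t}=e^{i\varphi_t}$, and the formal adjoint can be written as ${}^tL=\langle b,\nabla_\eta\rangle+a$ with $|b|\leq|\nabla_\eta\varphi_t|^{-1}$ and $|a|\leq C|\eta|^{-1}|\nabla_\eta\varphi_t|^{-2}$ (the extra $|\eta|^{-1}$ coming from differentiating a degree-$0$ rational function of $\eta$). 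Iterating the identity $I_t=\int e^{i\varphi_t}({}^tL)^N(\chi g_t)\,\dbar\eta$ and bookkeeping carefully one obtains
\begin{equation*}
|({}^tL)^N(\chi g_t)(x,\eta)|\leq C_N\langle\eta\rangle^{-N}\,d(t;x,y)^{-2N},
\end{equation*}
because each application of ${}^tL$ trades one power of $\langle\eta\rangle^{-1}$ for at most two new powers of $d^{-1}$.

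Choosing $N>d$ makes the $\eta$-integral absolutely convergent and yields $|I_t(x,y)|\leq C_Nd(t;x,y)^{-2N}$ uniformly on $D\cap\{|t|\leq T_0\}$; taking the smallest such $N$ gives claim (2) with $q=2N$ fixed, while taking $N$ as large as desired gives (3) for any prescribed $k$. For (1), the same majorization provides locally uniform convergence of the $\eta$-integral on compact subsets of $D$, so dominated convergence yields continuity of $I_t$ and allows differentiation under the integral: a $\partial_y$ brings down $-i\eta$, a $\partial_x$ brings down $i\nabla_xd^0_t$ (a symbol of order $1$) and $\partial_xg_t\in S^0$, and $\partial_t$ brings down $i\partial_tU^0_t=-ia_0(x,\nabla_xU^0_t)$ together with $\partial_tg_t\in S^0$. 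In each case the resulting finite-order loss in $\eta$ is absorbed by choosing $N$ correspondingly larger, so estimates of type (2) and (3) hold for every $\partial_t^j\partial_x^\alpha\partial_y^\beta K_t$ with $j\in\{0,1\}$ (with a possibly larger $q$ in (2)).

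Finally, for (4) I would use Lemma \ref{L1.11}(1) together with Lemma \ref{L1.16} to observe that $\omega^A(y,x)K_t(x,y)=L_t^w(x,y)$ is gauge invariant (which justifies the notation $\omega^B$) and equals the pure oscillatory integral with the prefactor $\omega^A$ removed; the preceding analysis then applies verbatim, now without any loss from the polynomial growth of derivatives of the vector potential. The hard part will be step 2: a naive iteration of ${}^tL$, whose coefficients already carry $|\nabla_\eta\varphi_t|^{-2}$, would produce after $N$ applications a factor $|\nabla_\eta\varphi_t|^{-2^N}$ and an estimate useless for large $N$. Getting the correct linear-in-$N$ growth of the exponent requires the observation that every new factor of $|\nabla_\eta\varphi_t|^{-1}$ beyond the Leibniz budget is born from $\eta$-differentiating a degree-$0$ rational expression in $\nabla_\eta\varphi_t$ and is therefore always paired with an extra $|\eta|^{-1}$; this pairing, together with the homogeneity of $\nabla_\eta\varphi_t$, is what keeps the exponent of $d^{-1}$ linear in $N$ and makes the arbitrarily fast decay in (3) genuinely valid.
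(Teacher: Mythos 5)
Your proposal follows essentially the same route as the paper's proof: represent $G_t$ as $\mathfrak{Op}^A_{\Phi_t}(g_t)$ via Theorem~\ref{T7.4}, split into low/high frequency with a cutoff so that Lemma~\ref{L1.34} lets you replace $U_t$ by the homogeneous $U^0_t$ on the high-frequency piece, and then run non-stationary phase on $D$ with the operator $M=-i|\nabla_\eta\varphi_t|^{-2}\langle\nabla_\eta\varphi_t,\nabla_\eta\rangle$, using Lemma~\ref{L8.3} to translate $|\nabla_\eta\varphi_t|^{-1}$ bounds into $d(t;x,y)^{-1}$ bounds. The one observation you flag as the "hard part" -- that every surplus power of $|\nabla_\eta\varphi_t|^{-1}$ comes paired with a power of $|\eta|^{-1}$ because $\nabla_\eta\varphi_t$ is homogeneous of degree $0$ -- is exactly what the paper encodes in the coefficient estimate~\eqref{8.29}, so you have identified the genuine crux. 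Two small imprecisions that do not affect the argument: your worry about a $|\nabla_\eta\varphi_t|^{-2^N}$ blowup is unfounded (iterating a first-order operator whose coefficients are $O(|\nabla_\eta\varphi_t|^{-1})$ produces at worst linear, never exponential, growth of the exponent, since each $\eta$-derivative of $|\nabla_\eta\varphi_t|^{-p}$ only increments $p$ by one); and the stated bound $|({}^tL)^N(\chi g_t)|\lesssim\langle\eta\rangle^{-N}d^{-2N}$ should really be $\langle\eta\rangle^{-N}\sum_{l=N}^{2N}d^{-l}$, which matters when $d\geq 1$ -- though this is immaterial for claims (2) and (3) as you apply them.
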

\begin{proof}
 From Theorem \ref{T7.4} we know that $G_t=\mathfrak{Op}^A_{\Phi_t}(g_t)$ with a symbol $g_t\in C^1\big((-T,T);S^0(\mathbb{R}^d)\big)$. Thus we have that
\begin{equation}\label{8.23}
 K_t(x,y)\,=\,\omega^B(x,y)L_t(x,y),
\end{equation}
\begin{equation}\label{8.24}
 L_t(x,y)\,:=\,\int_{\mathbb{R}^d}e^{i\big(U_t(x,\eta)-<y,\eta>\big)}g_t(x,\eta)\,\dbar\eta,
\end{equation}
as an oscillating integral defining a distribution of class $\mathcal{S}^\prime(\mathbb{R}^{2d})$. Considering once again the function $\chi_\rho$ introduced at the begining of section \ref{Ev-group} (with $\rho>0$ to be fixed later), we decompose the integral defining $L_t$ as a sum of the two terms $L^0_t$ and $L^\infty_t$ defined by the two symbols $g^0_t:=\chi_\rho g_t$ and $g^\infty_t:=(1-\chi_\rho)g_t$.

Let us consider first 
$$
L^\infty_t(x,y)\,:=\,\int_{\mathbb{R}^d}e^{i<x-y,\eta>}e^{id_t(x,\eta)}(1-\chi_\rho(\eta))g_t(x,\eta)\,\dbar\eta
$$
and notice that the factor $r_t:=e^{id_t}(1-\chi_\rho)g_t$ is a function of class $C^1\big((-T,T);S^{-\infty}(\mathbb{R}^d)\big)$. We conclude that the integral above defines a function $L^\infty_t$ of class $C^1\big((-T,T);C^\infty(\mathbb{R}^{2d}\big)$. Integrating then by parts using the identity $(x-y)e^{i<x-y,\eta>}=-i\nabla_\eta\big(e^{i<x-y,\eta>}\big)$, we obtain that for any $\alpha\in\mathbb{N}^d$ and any $T_0\in(0,T)$ the functions $(x-y)^\alpha L^\infty_t(x,y)$ are of class $L^\infty\big((-T_0,T_0)\times\mathbb{R}^{2d}\big)$.

For the second term let us notice that $g^0_t(x,\eta)=0$ for $|\eta|\leq\rho$ and that Lemma \ref{L1.34} implies that there exists $R>0$ such that $U_t(x,\eta)=U^0_t(x,\eta)$ for any $t\in(-T,T)$ and any $(x,\eta)\in\mathbb{R}^{2d}$ with $|\eta|\geq R$. Thus, choosing $\rho=R$ we can replace $U_t$ by $U^0_t$ in the integral defining $L^0_t$:
\begin{equation}\label{8.26}
 L^0_t(x,y)\,:=\,\int_{\mathbb{R}^d}e^{i\varphi_t(x,y,\eta)}g^0_t(x,\eta)\,\dbar\eta,
\end{equation}
where $\varphi_t$ is defined by \eqref{8.12}, $g^0_t$ is of class $C^1\big((-T,T);S^0(\mathbb{R}^d)\big)$ and $g^0_t(x,\eta)=0$ for $|\eta|\leq R$. The continuity on $D$ of the derivatives $\partial^j_t\partial^\alpha_x\partial^\beta_yL^0_t$, for $j\in\{0,1\}$ and for any $(\alpha,\beta)\in[\mathbb{N}^d]^2$ is now a standard result. In fact, for $(t,x,y)\in D$ we integrate by parts in \eqref{8.26} using the operator
\begin{equation}\label{8.27}
 M\,:=\,-i\frac{<\nabla_\eta\varphi_t,\nabla_\eta>}{|\nabla_\eta\varphi_t|^2},
\end{equation}
that is well-defined due to the result of Lemma \ref{L8.3} and satisfies the identity: $Me^{i\varphi_t}=e^{i\varphi_t}$. The formal adjoint of $M$ is given by:
\begin{equation}\label{8.28}
 ^tM\,:=\,|\nabla_\eta\varphi_t|^{-1}(u+<v,\nabla_\eta>),
\end{equation}
with the functions $u$ and $v$ being continuous functions on $D\times\big(\mathbb{R}^d\setminus\{0\}\big)$ and satsifying the following properties:
\begin{itemize}
 \item For $j\in\{0,1\}$ and for any $(\alpha,\beta,\gamma)\in[\mathbb{N}^d]^3$ the functions $\partial^j_t\partial^\alpha_x\partial^\beta_y\partial^\gamma_\eta u$ and $\partial^j_t\partial^\alpha_x\partial^\beta_y\partial^\gamma_\eta v$ are continuous functions on the domain $D\times\big(\mathbb{R}^d\setminus\{0\}\big)$.
\item For $j\in\{0,1\}$, for any $(\alpha,\beta,\gamma)\in[\mathbb{N}^d]^3$ and for any $T_0\in(0,T)$ there exist a constant $C>0$ and $n\in\mathbb{N}$ such that:
\begin{equation}\label{8.29}
 |\eta|^{1+|\gamma|}\left|\partial^j_t\partial^\alpha_x\partial^\beta_y\partial^\gamma_\eta u(t,x,y,\eta)\right|+|\eta|^{|\gamma|}\left|\partial^j_t\partial^\alpha_x\partial^\beta_y\partial^\gamma_\eta v(t,x,y,\eta)\right|\,\leq\,C\sum_{l=0}^n|\nabla_\eta\varphi_t(x,y,\eta)|^{-l}
\end{equation}
for $(t,x,y)\in D$ with $|t|\leq T_0$, $|\eta|\geq R$. For a suitable choice of $k\in\mathbb{N}$ we can write:
\begin{equation}\label{8.30}
 L^0_t(x,y)\,=\,\int_{\mathbb{R}^d}e^{i\varphi_t}\big(^tM\big)^kg^0_t\,\dbar\eta,\quad\forall(t,x,y)\in D.
\end{equation}
\end{itemize}
It clearly follows that for $j\in\{0,1\}$ and for any $(\alpha,\beta)\in[\mathbb{N}^d]^2$, the distributions $\partial^j_t\partial^\alpha_x\partial^\beta_yL^0_t$ are continuous functions on the domain $D$. Moreover, using Lemma \ref{L8.3} we can also deduce inequalities of the form \eqref{8.21} and \eqref{8.22} for $L^0_t$. In order to verify similar inequalities for $K_t$ we use the estimations 
$$
\underset{|t|\leq T_0}{\sup}\ \underset{(x,y)\in\mathbb{R}^{2d}}{\sup}\left|(x-y)^\alpha L^\infty_t(x,y)\right|<\infty
$$ 
that we obtained above, noticing that for any $T_0\in(0,T)$ we have that
\begin{equation}\label{8.31}
 |x-y|\,-\,C_0\,\leq\,|\nabla_\eta\varphi_t(x,y,\eta)|\,\leq\,|x-y|\,+\,C_0,\quad\forall(x,y)\in\mathbb{R}^{2d},\,\forall\eta\in\mathbb{R}^d\setminus\{0\},\,\forall t\in[-T_0,T_0]
\end{equation}
with 
\begin{equation}\label{8.32}
C_0\,:=\,\underset{x\in\mathbb{R}^d}{\sup}\ \underset{\eta\in\mathbb{R}^d\setminus\{0\}}{\sup}\ \underset{t\in[-T_0,T_0]}{\sup}\left|\nabla_\eta d^0_t(x,\eta)\right|.
\end{equation}
The derivatives of $L_t(x,y)=\omega^B(y,x)K_t(x,y)$ can be estimated in a very similar way.
\end{proof}

\begin{remark}\label{R8.5}
 Let $T_0\in(0,T)$ and $C_0$ defined by \eqref{8.32}. Due to the first inequality in \eqref{8.31} we deduce that for $|x-y|\geq2C_0$ and for any $|t|\leq T_0$ we have that
\begin{equation}\label{8.33a}
 d_\varphi(t;x,y)\,\geq\,(1/4)|x-y|\,+\,(1/2)C_0\,\geq\,C_0,
\end{equation}
and thus using Lemma \ref{L8.3} we conclude that $(t,x,y)\in D_{C^{-1}C_0}\subset D$. From \eqref{8.22} we deduce then that for any $k\in\mathbb{N}$ there exists $C_k>0$ such that for any $(x,y)\in\mathbb{R}^{2d}$ with $|x-y|\geq2C_0$:
\begin{equation}\label{8.33}
 \left|K_t(x,y)\right|\,\leq\,C_k<x-y>^{-k},\quad\forall t\in[-T_0,T_0].
\end{equation}
\end{remark}

\begin{remark}\label{R8.6}
 In the case when $a_0=a_0(\xi)$, the last two conditions in Hypothesis \ref{Hyp-a} imply that $\nabla a_0(\xi)\ne0$ for any $\xi\in\mathbb{R}^d\setminus\{0\}$. In this case we evidently can choose $T=\infty$ and we have $x_0(t;y,\eta)=y+t\big(\nabla a_0\big)(\eta)$ and also
\begin{equation}\label{8.34}
 D\,=\,\left\{(t,x,y)\in\mathbb{R}^{2d+1}\ ;\ x-y\ne t\nabla a_0(\eta),\ \forall|\eta|=1\right\},\quad d(t;x,y)=\underset{|\eta|=1}{\inf}|x-y-t\nabla a_0(\eta)|.
\end{equation}
In particular, for the relativistic Schr\"{o}dinger operator we have $a_0(\xi)=|\xi|$ so that
\begin{equation}\label{8.35}
 D\,=\,\left\{(t,x,y)\in\mathbb{R}^{2d+1}\ ;\ |x-y|^2\ne t^2\right\},\quad d(t;x,y)=\underset{|\eta|=1}{\inf}|x-y-t\eta|.
\end{equation}
\end{remark}

\subsection{Propagation of singularities under the evolution group}

We keep the hypothesis and notations from Subsection \ref{Ev-Gr}. In order to formulate the main result of this subsection let us recall the notion of {\it wave front set} of a distribution, microlocalized form of the {\it singular support}.
\begin{definition}\label{D8.7}
 For a distribution $u\in\mathcal{D}^\prime(\mathbb{R}^d)$, the {\it wave front set} $WFu$ is the closed conic subset of $\mathbb{T}^*\mathbb{R}^d\setminus0$ defined by:\\
{\it a point $(x_0,\xi_0)\in\mathbb{T}^*\mathbb{R}^d\setminus0$ does not belong to $WFu$ if and only if there exists an open neighborhood $V$ of $x_0$ in $\mathbb{R}^d$ and an open cone $\Gamma\subset\mathbb{R}^d\setminus\{0\}$ containing $\xi_0$ such that for any $\varphi\in C^\infty_0(V)$ and any $n\in\mathbb{N}$ the function $<\xi>^n\widehat{\varphi u}(\xi)$ is bounded on $\Gamma$.}
\end{definition}

If $\Pi:\mathbb{T}^*\mathbb{R}^d\setminus0\rightarrow\mathbb{R}^d$ is the canonical projection, we have that $\Pi\big(WFu\big)=\text{\sf sing supp }u$.

We still recall the following things. In Lemma 7.40 of \cite{IMP2} it is proved that for any $t\in\mathbb{R}$ the operator $G_t:=e^{-itP}$ belongs to $\mathbb{B}\big(\mathcal{S}(\mathbb{R}^d)\big)$. By duality we deduce that we also have that $G_t\in\mathbb{B}\big(\mathcal{S}^\prime(\mathbb{R}^d)\big)$. The Hamiltonian flow $\Phi^0_t$ defined by $a_0$ is well defined for any $t\in\mathbb{R}$.
\begin{theorem}\label{T8.8}
 Suppose that $a$ is a symbol satisfying Hypothesis \ref{Hyp-a} and $P$ is the self-adjoint operator in $L^2(\mathbb{R}^d)$ defined by $\mathfrak{Op}^A(a)$. Then for any $t\in\mathbb{R}$ and any $u\in\mathcal{S}^\prime(\mathbb{R}^d)$ the following equality holds:
$$
WF\big(e^{-itP}u\big)\,=\,\Phi^0_t\big(WFu\big).
$$
\end{theorem}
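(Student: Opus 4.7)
The plan is to reduce the general claim to the FIO representation from Theorem \ref{T7.4}, which is only available for small $|t|<T$, using the group structure of $\{G_t\}_{t\in\mathbb{R}}$. First I would observe that both $\{G_t\}_{t\in\mathbb{R}}$ (by the functional calculus for $P$) and $\{\Phi^0_t\}_{t\in\mathbb{R}}$ (by the autonomy of the Hamiltonian system of Lemma \ref{L1.27}) are one-parameter groups. Given $T>0$ from Theorem \ref{T7.4}, it therefore suffices to establish $WF(G_{t_0}v)=\Phi^0_{t_0}(WFv)$ for $v\in\mathcal{S}^\prime(\mathbb{R}^d)$ and $|t_0|<T$: writing an arbitrary $t\in\mathbb{R}$ as $t=Nt_0$ with $|t_0|<T$ and $N\in\mathbb{Z}$, a finite induction on $|N|$ would then yield the general equality from $WF(G_{Nt_0}u)=\Phi^0_{t_0}(WF(G_{(N-1)t_0}u))=\cdots=\Phi^0_{Nt_0}(WFu)$.

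For $|t|<T$, Theorem \ref{T7.4} gives $G_t=\mathfrak{Op}^A_{\Phi_t}(g_t)$ with $g_t\in S^0(\mathbb{R}^d)$, and Lemma \ref{L1.34} ensures $\Phi_t=\Phi^0_t$ on $\{|\eta|\geq R\}$, so both induce the same transformation on conic subsets of $\mathbb{T}^*\mathbb{R}^d\setminus 0$. To prove $WF(G_tu)\subseteq\Phi^0_t(WFu)$, I would fix $(x_0,\xi_0)\notin\Phi^0_t(WFu)$, set $(y_0,\eta_0):=\Phi^0_{-t}(x_0,\xi_0)\notin WFu$, and choose open conic neighborhoods $\Gamma'\subset\Gamma$ of $(y_0,\eta_0)$ with $\overline{\Gamma}\cap WFu=\emptyset$ such that $\Phi^0_{-t}$ maps some conic neighborhood $V$ of $(x_0,\xi_0)$ into $\Gamma'$. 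Picking $c\in S^0(\mathbb{R}^d)$ elliptic at $(x_0,\xi_0)$ with essential support in $V$, the exact identity $G_tG_{-t}=G_t[\mathfrak{Op}^A_{\Phi_t}(g_t)]^*=1$ triggers Theorem \ref{T8.2} (applied with $\Phi=\Phi_{-t}$, $b_\pm=g_{-t}$, the adjoint relation $G_{-t}G_t=1$ being exact), giving
\begin{equation*}
G_{-t}\circ\mathfrak{Op}^A(c)\circ G_t\,=\,\widetilde c^A(x,D),\qquad\widetilde c-c\circ\Phi_t\in S^{-1}(\mathbb{R}^d).
\end{equation*}
Iterating this Egorov step on the remainder (each iteration lowering the order by one) and summing asymptotically via Lemma \ref{L1.3} would upgrade the conclusion to full microlocal control: $\widetilde c$ has essential support, modulo $S^{-\infty}(\mathbb{R}^d)$, in $\Phi_{-t}(V)\subset\Gamma$. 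Since $\Gamma\cap WFu=\emptyset$, this yields $\widetilde c^A(x,D)u\in C^\infty(\mathbb{R}^d)$, and because $G_t$ preserves the entire magnetic Sobolev scale (Theorem \ref{T8.1}), the identity $\mathfrak{Op}^A(c)G_tu=G_t\widetilde c^A(x,D)u$ shows this function to be smooth near $x_0$; ellipticity of $c$ at $(x_0,\xi_0)$ then gives $(x_0,\xi_0)\notin WF(G_tu)$. The reverse inclusion would follow at once by applying the same statement with $u$ replaced by $G_tu$ and $t$ by $-t$: $WFu=WF(G_{-t}G_tu)\subseteq\Phi^0_{-t}(WF(G_tu))$, whence $\Phi^0_t(WFu)\subseteq WF(G_tu)$.

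The main obstacle will be the bootstrap from the principal-symbol form of Theorem \ref{T8.2} to the microlocal statement about essential supports required by the wave-front argument: one must carry out the Egorov iteration on lower-order remainders and asymptotically sum the resulting symbols within the magnetic calculus, while using Lemma \ref{L1.34} at each step to identify the non-homogeneous $\Phi_t$ with the homogeneous $\Phi^0_t$ on the relevant conic region. A secondary but non-trivial point is that $G_t$ must propagate $C^\infty$ regularity; I would obtain this by intersecting the conclusions of Theorem \ref{T8.1} across all magnetic Sobolev spaces $\mathcal{H}^s_A$ and identifying the intersection with $C^\infty_{\text{loc}}(\mathbb{R}^d)$ through standard Sobolev embedding, combined with the pseudolocal character of the magnetic FIO kernel derived in Theorem \ref{T8.4}.
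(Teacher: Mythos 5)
Your overall strategy (reduce to small $|t|$ via the group property, then identify $\Phi_t$ with $\Phi^0_t$ on $\{|\eta|\geq R\}$ via Lemma \ref{L1.34}) matches the paper. But for the small-time argument you take a genuinely different route: you conjugate by $G_t$ and invoke the Egorov-type Theorem \ref{T8.2}, whereas the paper bypasses Egorov entirely and instead (i) splits $g_t=g_t^0+g_t^\infty$ with $\chi_R$, discarding the low-frequency part whose kernel is $C^\infty$ in both variables; (ii) introduces the spatial cutoffs $\chi_\rho,\chi_r$ and uses the off-diagonal decay estimate of Remark \ref{R8.5} to reduce to $u$ compactly supported; and (iii) recognizes $L^0_t$ as the distribution kernel of a \emph{standard, non-magnetic} homogeneous FIO and imports H\"{o}rmander's Theorems 8.1.9 and 8.2.13 directly. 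That route buys a one-shot wave-front-set inclusion with no need for any microlocal symbolic calculus beyond the kernel estimates already proved.

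Your approach has two genuine gaps, both of which you flag but do not close. First, Theorem \ref{T8.2} as stated only controls the \emph{principal} symbol: it gives $\widetilde c - c\circ\Phi_t\in S^{-1}(\mathbb{R}^d)$ with no information about where the essential support of the $S^{-1}$ remainder sits. Your "iterating the Egorov step on the remainder and summing asymptotically" does not actually produce this control: the remainder is not an operator you can re-conjugate, and to track the essential support of the full asymptotic expansion you would have to reopen the composition proofs of Sections 5--7 and verify that each term of the expansion is evaluated at points pushed by the flow -- i.e.\ you would need a genuinely microlocal Egorov theorem that the paper does not prove. Second, the deduction from "$\widetilde c^A(x,D)u\in C^\infty(\mathbb{R}^d)$" to "$\mathfrak{Op}^A(c)G_tu$ smooth near $x_0$" does not go through by intersecting the Sobolev scale: a $C^\infty$ tempered distribution need not lie in \emph{any} $\mathcal{H}^s_A$ (it may have arbitrary polynomial growth), so Theorem \ref{T8.1} gives nothing. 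This is exactly the difficulty the paper resolves by the $\chi_\rho,\chi_r$ cutoff argument and the kernel decay of Remark \ref{R8.5}, which separate a compactly supported piece (where Sobolev arguments apply) from a far-field piece (where the kernel is Schwartz). Without an analogue of that localization your commutation identity does not close the argument.
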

\begin{proof}
 We begin by considering $t$ fixed in the interval $(-T,T)$ for $T>0$ defined in Theorem \ref{T7.4}. Then $e^{-itP}=G_t=\mathfrak{Op}^A_{\Phi_t}(g_t)$ with $g_t\in S^0(\mathbb{R}^d)$. As in the proof of Theorem \ref{T8.4} we write $g_t=g^0_t+g^\infty_t$ with $g^0_t(x,\eta):=\chi_R(\eta)g_t(x,\eta)$ and $g^\infty_t(x,\eta):=(1-\chi_R(\eta))g_t(x,\eta)$, and $R>0$ given by Lemma \ref{L1.34}. We have the associated decomposition $G_t=G^0_t+G^\infty_t$.

Using the evident fact that $g^\infty_t\in S^{-\infty}(\mathbb{R}^d)$ and the Lemmas \ref{L1.16} and \ref{L2.9} we deduce that the distribution kernel $K^\infty_t$ of $G^\infty_t$ is of class $C^\infty\big(\mathbb{R}^d;\mathcal{S}(\mathbb{R}^d)\big)$. It follows that the map
$$
\mathbb{R}^d\ni x\mapsto\big(G^\infty_tu\big)(x)\,=\,<u(\cdot),K^\infty_t(x,\cdot)>\in\mathbb{C}
$$
is of class $C^\infty(\mathbb{R}^d)$ and in consequence $WF\big(G^\infty_tu\big)=\emptyset$. We conclude that
\begin{equation}\label{8.37}
 WF\big(G_tu\big)\,=\,WF\big(G^0_tu\big).
\end{equation}

We fix now an arbitrary $\rho>0$ and choose $r>0$ such that $r>2\rho+2C_0$ with $C_0$ defined by \eqref{8.32}. Due to the support condition for $\chi_\rho$ we have
\begin{equation}\label{8.38}
 \left.WF\big(G^0_tu\big)\right|_{\{|x|<\rho\}}\,=\, \left.WF\big((1-\chi_\rho)G^0_tu\big)\right|_{\{|x|<\rho\}}.
\end{equation}
Moreover, with our choice for $R>0$ we have 
\begin{equation}\label{8.39}
 G^0_t\,=\,\mathfrak{Op}^A_{\Phi_t}(g^0_t)\,=\,\mathfrak{Op}^A_{\Phi^0_t}(g^0_t).
\end{equation}
It follows that the distribution kernel of $(1-\chi_\rho)G^0_t\chi_r$ is given by
\begin{equation}\label{8.40}
 N(x,y)\,:=\,\omega^B(x,y)\big(1-\chi_\rho(x)\big)L^0_t(x,y)\chi_r(y),\quad\forall(x,y)\in\mathbb{R}^{2d},
\end{equation}
with $L^0_t$ defined by \eqref{8.26}. On the support of $N$ we have that $|x|\leq2\rho$ and $|y|\geq r$ so that $|x-y|>2C_0$. The Remark \ref{R8.5} implies that $N$ is a distribution of class $C^\infty_0\big(\mathbb{R}^d;\mathcal{S}(\mathbb{R}^d)\big)$ and thus $(1-\chi_\rho)G^0_t\chi_ru$ belongs to $C^\infty_0(\mathbb{R}^d)$. In conclusion
\begin{equation}\label{8.41}
 WF\big((1-\chi_\rho)G^0_tu\big)\,=\,WF\big((1-\chi_\rho)G^0_t(1-\chi_r)u\big),
\end{equation}
and $(1-\chi_r)u\in\mathcal{E}^\prime(\mathbb{R}^d)$ (the distributions with compact support).

Due to \eqref{8.26}, the distribution $L^0_t$ is the kernel of a standard FIO associated to the homogeneous canonical transformation $\Phi^0_t$. Using Theorem 8.1.9 in \cite{Ho} and the notation \eqref{1.prime}, we deduce that 
$$
WF^\prime\big(L^0_t\big)\,\subset\,\text{\sf graph }\Phi^0_t,\qquad WF^\prime\big((1-\chi_\rho)\omega^BL^0_t\big)\,\subset\,\text{\sf graph }\Phi^0_t.
$$
But $(1-\chi_\rho)\omega^BL^0_t$ is the distribution kernel of the operator $(1-\chi_\rho)G^0_t$, so that due to Theorem 8.2.13 in \cite{Ho} we also have that
\begin{equation}\label{8.42}
 WF\big((1-\chi_\rho)G^0_t(1-\chi_r)u\big)\,\subset\,\Phi^0_t\big[WF\big((1-\chi_r)u\big)\big]\,\subset\,\Phi^0_t\big(WFu\big).
\end{equation}

Putting together \eqref{8.37}, \eqref{8.38}, \eqref{8.41} and \eqref{8.42} we deduce that
$$
\left.WF\big(G_tu\big)\right|_{\{|x|<\rho\}}\,\subset\,\Phi^0_t\big(WFu\big),\quad\forall\rho>0,
$$
and we may conclude that
\begin{equation}\label{8.43}
 WF\big(G_tu\big)\,\subset\,\Phi^0_t\big(WFu\big),\quad\forall t\in(-T,T).
\end{equation}
In order to obtain the inverse inclusion we use the identity $u=G_{-t}\big(G_tu\big)$ and deduce that $WFu\,\subset\,\Phi^0_{-t}\big(WF\big(G_tu\big)\big)$, i.e. $\Phi^0_t\big(WFu\big)\,\subset\,WF\big(G_tu\big)$. We obtain that the equality in the conclusion of the Theorem is true for any $t\in(-T,T)$. To prove it for an arbitrary $t\in\mathbb{R}$ we choose $n\in\mathbb{N}$ such that $t/n\in(-T,T)$ and use the group property of the family $\{\Phi^0_t\}_{t\in\mathbb{R}}$:
$$
WF\big(e^{-itP}u\big)\,=\,WF\big(e^{-i(t/n)P}\cdot\ldots\cdot e^{-i(t/n)P}u\big)\,=\,\Phi^0_{(t/n)}\circ\ldots\circ\Phi^0_{(t/n)}\big(WFu\big)\,=\,\Phi^0_t\big(WFu\big).
$$
\end{proof}

\subsubsection*{Acknowledgements}
R. Purice has been supported by CNCSIS
under the Ideas Programme, PCCE project no. 55/2008 {\it “Sisteme
diferentiale in analiza neliniara si aplicatii”}.

E-mail:Viorel.Iftimie@imar.ro, Radu.Purice@imar.ro

\end{document}